\newif\iflncs
\newcommand{\remove}[1]{}
\newcommand{\snote}[1]{[{\footnotesize \color{blue}{\bf Sagnik:} { {#1}}}]}
\newcommand{\nnote}[1]{[{\footnotesize \color{teal}{\bf Nikolaj:} { {#1}}}]}
\newcommand{\pnote}[1]{[{\footnotesize \color{red}{\bf Prashant:} { {#1}}}]}
\newcommand{\prashant}[1]{{\small \authornote{Prashant}{#1}{red}}}
\newcommand{\shweta}[1]{{\small \authornote{Shweta}{#1}{blue}}}
\newcommand{\sagnik}[1]{{\small \authornote{Sagnik}{#1}{brown}}}
\newcommand{\akhil}[1]{{\small \authornote{Akhil}{#1}{teal}}}
\newcommand{\nikolaj}[1]{{\small \authornote{Nikolaj}{#1}{ForestGreen}}}
\newcommand{\authornote}[3]{\textcolor{#3}{[\textsc{#1:} {#2}]}}
\newcommand{\snote}[1]{}
\newcommand{\nnote}[1]{}
\newcommand{\pnote}[1]{}
\newcommand{\prashant}[1]{}
\newcommand{\shweta}[1]{}
\newcommand{\sagnik}[1]{}
\newcommand{\akhil}[1]{}
\newcommand{\nikolaj}[1]{}
\newtheorem{theorem}{Theorem}[section]
\newtheorem{lemma}[theorem]{Lemma}
\newtheorem{claim}{Claim}[theorem]
\newtheorem{conjecture}[theorem]{Conjecture}
\newtheorem{definition}[theorem]{Definition}
\newtheorem{corollary}[theorem]{Corollary}
\newtheorem{remark}[theorem]{Remark}
\Crefname{myclaim}{Claim}{Claims}
\crefname{conjecture}{Conjecture}{Conjectures}
\renewcommand{\paragraph}[1]{\;\newline \noindent \textbf{#1}}
\newtheoremstyle{infthm}
{}                
{}                
{\itshape}        
{}                
{}                
{.}               
{ }               
{}                
\theoremstyle{infthm}
\newenvironment{proofof}[1]{\begin{proof}[\textit{Proof of #1}]}{\end{proof}}
\crefname{claim}{Claim}{Claims}
\crefname{proposition}{Proposition}{Propositions}
\crefname{assumption}{Assumption}{Assumptions}
\def\ddefloop#1{\ifx\ddefloop#1\else\ddef{#1}\expandafter\ddefloop\fi}
\def\ddef#1{\expandafter\def\csname bb#1\endcsname{\ensuremath{\mathbb{#1}}}}
\newcommand{\N}{\bbN}
\def\ddef#1{\expandafter\def\csname c#1\endcsname{\ensuremath{\mathcal{#1}}}}
\def\ddef#1{\expandafter\def\csname v#1\endcsname{\ensuremath{\boldsymbol{#1}}}}
\newcommand{\algfont}[1]{\mathsf{#1}}
\def\ddef#1{\expandafter\def\csname alg#1\endcsname{\ensuremath{\algfont{#1}}}}
\newcommand{\langfont}[1]{\mathnormal{#1}}
\def\ddef#1{\expandafter\def\csname lang#1\endcsname{\ensuremath{\langfont{#1}}}}
\def\ddef#1{\expandafter\def\csname v#1\endcsname{\ensuremath{\boldsymbol{\csname #1\endcsname}}}}
\newcommand{\paren}[1]      {\left( #1 \right)}
\newcommand{\eps}           {\epsilon}
\DeclareMathOperator{\negl}{negl}
\DeclareMathOperator{\supp}{Supp}
\newcommand{\bitset}{\{0,1\}}
\newcommand{\Nat}           {\mathbb{N}}
\newcommand{\Int}           {\mathbb{Z}}
\newcommand{\F}             {\GF}
\DeclareMathOperator*{\Exp} {\bbE}
\newcommand{\norm}[1]       {\left\| #1\right\|}
\newcommand{\set}[1]        {\left\{ #1 \right\}}
\newcommand{\abs}[1]        {\left| #1\right|}
\newcommand{\size}[1]       {\left| #1\right|}
\newcommand{\floor}[1]      {\left\lfloor #1 \right\rfloor}
\newcommand{\ceil}[1]       {\left\lceil #1 \right\rceil}
\newcommand{\pr}[1]         {\Pr\left[ #1 \right]}
\newcommand{\prob}[2]       {\Pr_{#1}\left[ #2 \right]}
\renewcommand{\exp}[1]      {\Exp\left[ #1 \right]}
\renewcommand{\choose}[2]   {{\dbinom{#1}{#2}}}
\newcommand{\pfrac}[2]      {\paren{\frac{#1}{#2}}}
\newcommand{\secp}{r} 
\newcommand{\ksum}{k\text{-}\mathrm{SUM}}
\newcommand{\keygen}{\mathsf{KeyGen}}
\newcommand{\enc}{\mathsf{Enc}}
\newcommand{\dec}{\mathsf{Dec}}
\newcommand{\one}{\mathbbm{1}}
\newcommand{\Otilde}{\widetilde{{\O}}}
\newcommand{\sample}{\overset{\$}{\gets}}
\renewcommand{\A}{\mathcal{A}}
\newcommand{\B}{\mathcal{B}}
\renewcommand{\E}{\mathbb{E}}
\renewcommand{\F}{\mathbb{F}}
\renewcommand{\O}{\mathcal{O}}
\newcommand{\Var}{\mathrm{Var}}
\newcommand{\Cov}{\mathrm{Cov}}
\newcommand{\Det}{\mathrm{Det}}
\renewcommand{\th}{\mathrm{th}}
\renewcommand{\det}{\mathrm{det}}
\newcommand{\obf}{\mathrm{obf}}
\newcommand{\rec}{\mathrm{rec}}
\newcommand{\amp}{\mathrm{amp}}
\newcommand{\Ber}{\mathrm{Ber}}
\newcommand{\Std}{\mathrm{Std}}
\renewcommand{\polylog}{\mathrm{polylog}}
\renewcommand{\poly}{\mathrm{poly}}
\renewcommand{\ceil}[1]{\left\lceil#1\right\rceil}
\renewcommand{\supp}{\mathrm{supp}}
\newcommand{\KeyGen}{\mathsf{KeyGen}}
\newcommand{\Enc}{\mathsf{Enc}}
\newcommand{\Dec}{\mathsf{Dec}}
\def\ddef#1{\expandafter\def\csname bb#1\endcsname{\ensuremath{\mathbb{#1}}}}
\def\ddef#1{\expandafter\def\csname sf#1\endcsname{\ensuremath{\mathsf{#1}}}}
\def\ddef#1{\expandafter\def\csname c#1\endcsname{\ensuremath{\mathcal{#1}}}}
\def\ddef#1{\expandafter\def\csname vec#1\endcsname{\ensuremath{\mathbf{#1}}}}
\def\ddef#1{\expandafter\def\csname mat#1\endcsname{\ensuremath{\mathbf{#1}}}}
\newcommand{\sampleU}{\overset{\$}{\gets}} 
\newcommand{\pk}{\mathsf{pk}}
\newcommand{\sk}{\mathsf{sk}}
\newcommand{\ct}{\mathsf{ct}}
\newcommand{\pke}{\mathsf{PKE}}
\newcommand{\lwe}{{{\sf LWE}}}
\newcommand{\LWE}{{\lwe }}
\newcommand{\kSUM}{{k\text{-SUM}}}
\newcommand{\tvkSUM}{{targeted vector $k$-SUM}}
\newcommand{\vkSUM}{{vector $k$-SUM }}
\newcommand{\kXOR}{{k\text{-XOR}}}
\newcommand{\indcpa}{\mathsf{IND}\mbox{-}\mathsf{CPA}}
\newcommand{\pkeGame}{\mathsf{PKEGame}}
\newcommand{\Z}{\mathbb{Z}}
\renewcommand{\G}{\mathsf{G}}
\renewcommand{\R}{\mathbb{R}}
\def\ddefloop#1{\ifx\ddefloop#1\else\ddef{#1}\expandafter\ddefloop\fi}
\def\ddef#1{\expandafter\def\csname bb#1\endcsname{\ensuremath{\mathbb{#1}}}}
\def\ddef#1{\expandafter\def\csname sf#1\endcsname{\ensuremath{\mathsf{#1}}}}
\def\ddef#1{\expandafter\def\csname c#1\endcsname{\ensuremath{\mathcal{#1}}}}
\def\ddef#1{\expandafter\def\csname vec#1\endcsname{\ensuremath{\mathbf{#1}}}}
\def\ddef#1{\expandafter\def\csname mat#1\endcsname{\ensuremath{\mathbf{#1}}}}
\def\ddef#1{\expandafter\def\csname vec#1\endcsname{\ensuremath{\boldsymbol{\csname #1\endcsname}}}}
\renewcommand{\exp}[1]      {\Exp\left[ #1 \right]}
\renewcommand{\choose}[2]   {{\dbinom{#1}{#2}}}
\begin{document}

\title{$k$-SUM in the Sparse Regime}
\iftrue
\iflncs

\author{Shweta Agrawal \and Sagnik Saha \and Nikolaj I. Schwartzbach\thanks{This project is funded by VILLUM FONDEN under the Villum Kann Rasmussen Annual Award in Science and Technology under grant agreement no 17911.}\inst{1} \and Akhil Vanukuri \and Prashant Nalini Vasudevan\thanks{. Supported by the National Research Foundation, Singapore, under its NRF Fellowship programme, award no. NRF-NRFF14-2022-0010. Part of the work was done when Sagnik and Nikolaj were visiting the National University of Singapore, and were also supported by this award.}\inst{2}\orcidlink{0000-0001-6880-795X}}
\authorrunning{Saha, Schwartzbach, Vasudevan}
\index{Saha, Sagnik}
\index{Schwartzbach, Nikolaj I.}
\index{Vasudevan, Prashant Nalini}
\institute{
  Department of Computer Science, Aarhus University\\ \email{nis@cs.au.dk}
  \and
  Department of Computer Science, National University of Singapore \\ \email{prashant@comp.nus.edu.sg}
}

\else
\author{Shweta Agrawal\thanks{IIT Madras. This work was supported in part by the DST “Swarnajayanti” fellowship, Cybersecurity Center of Excellence, IIT Madras, National Blockchain Project and the Algorand Centres of Excellence programme managed by Algorand Foundation.} \and Sagnik Saha\thanks{Computer Science Department, Carnegie Mellon University.} \and Nikolaj I. Schwartzbach\thanks{
CIFRA Institute, Bocconi University. Supported by the European Research Council (ERC) under the European Union’s Horizon 2020 research and innovation programme (Grant agreement No. 101019547). Part of this work was done while Nikolaj was a Ph.D. student at Aarhus University and was funded by VILLUM FONDEN under the Villum Kann Rasmussen Annual Award in Science and Technology under grant agreement no 17911.
} \and Akhil Vanukuri${}^{*}$ \and Prashant Nalini Vasudevan\thanks{
Department of Computer Science, National University of Singapore. 
Supported by the National Research Foundation, Singapore, under its NRF Fellowship programme, award no. NRF-NRFF14-2022-0010. Part of the work on this project was done when Sagnik and Nikolaj were visiting the National University of Singapore, and were also supported by this award.}}

\fi
\else
\author{}
\date{}
\fi

\pagenumbering{roman}

\maketitle
\thispagestyle{empty}
\begin{abstract}
  
    
      In the average-case $k$-SUM problem, given $r$ integers chosen uniformly at random from $\set{0,\dots,M-1}$, the objective is to find a ``solution'' set of $k$ numbers that sum to $0$ modulo $M$. In the \emph{dense} regime of $M \leq r^k$, where solutions exist with high probability, the complexity of these problems is well understood. Much less is known in the \emph{sparse} regime of $M\gg r^k$, where solutions are unlikely to exist. 
      
      In this work, we initiate the study of the \emph{sparse} regime for $k$-SUM and its variant $k$-XOR, especially their \emph{planted} versions, where a random solution is planted in a randomly generated instance and has to be recovered. We provide evidence for the hardness of these problems and suggest new applications to cryptography. Our contributions are summarized below. \smallskip

\begin{itemize}[align=left,leftmargin=*]
    \item[\textbf{Complexity.}] First we study the complexity of these problems in the sparse regime and show: 
        \begin{itemize} 
        \item[$\bullet$] \textit{Conditional Lower Bounds.} Assuming established conjectures about the hardness of average-case (non-planted) $k$-SUM/$k$-XOR when $M = r^k$, we provide non-trivial lower bounds on the running time of algorithms for planted $k$-SUM when $r^k\leq M\leq r^{2k}$. \smallskip 
         \item[$\bullet$] \textit{Hardness Amplification.} We show that for any $M \geq r^k$, if an algorithm running in time $T$ solves planted $k$-SUM/$k$-XOR with success probability $\Omega(1/\polylog(r))$, then there is an algorithm running in time $\Otilde(T)$ that solves it with probability $(1-o(1))$.  This in particular implies hardness amplification for 3-SUM over the integers, which was not previously known. \smallskip  
         
         Technically, our approach departs significantly from existing approaches to hardness amplification, and relies on the locality of the solution together with the group structure inherent in the problem. 
        \item[$\bullet$] \textit{New Reductions and Algorithms.} We provide reductions for $k$-SUM/$k$-XOR from search to decision, as well as worst-case and average-case reductions to the Subset Sum problem from $k$-SUM. Additionally, we present a new algorithm for average-case $k$-XOR that is faster than known worst-case algorithms at low densities. 
        
    \end{itemize}
     
   \item[\textbf{Cryptography.}] 
   We show that by additionally assuming mild hardness of $k$-XOR, we can construct Public Key Encryption (PKE) from a weaker variant of the Learning Parity with Noise (LPN) problem than was known before. In particular, such LPN hardness does not appear to imply PKE on its own --  this suggests that $k$-XOR/$k$-SUM can be used to bridge ``minicrypt'' and ``cryptomania'' in some cases, and may be applicable in other settings in cryptography. 
   
  \end{itemize}

  
\end{abstract}


\newpage
\tableofcontents
\thispagestyle{empty}
\newpage

\pagenumbering{arabic}
\newcommand { \romannumeralcaps }[1]{ \MakeUppercase { \romannumeral #1}}
\section{Introduction}
\label{sec:intro}

In the $k$-SUM problem, given a set of $r$ numbers from $\set{0,\dots,M-1}$, the task is to find a set of $k$ of them that sum to $0$ (modulo $M$), if such a set exists.\footnote{The $k$-SUM problem is usually defined with the sum being over the integers. These variants are equivalent in complexity in the worst-case, and also in the average-case in certain regimes of parameters (see~\cite{BSV21,dinur_keller_klein}).} This problem has been central to studying the complexity of important problems in a variety of domains such as computational geometry, data structures and graph theory ~\cite{abboud_william_lower_bounds,3sum_dynamic_bounds,3sum_comp_geom,3sum_comp_geom_2,3sum_comp_geom_3,3sum_problems_lower_bound}. It also has several applications in cryptanalysis~\cite{wagner,BCJ11}. The naïve algorithm of iterating through all $k$-sets takes time $\mathcal{O}(r^k)$. The ``meet-in-the-middle'' algorithm that computes the sums of all $\lceil k/2 \rceil$-sets and looks for collisions runs in time $\Otilde(r^{\ceil{k/2}})$~\cite{HS74}. Better algorithms are known that are faster than $r^{\ceil{k/2}}$ by a few polylog factors~\cite{3sum_subquadratic,love_triangles,Chan20}. There are also FFT-based algorithms that run in time $\Otilde(M+r)$~\cite{Bringmann17,JW19}, which is faster if $M \ll r^{\ceil{k/2}}$.

The $3$-SUM hypothesis of Gajentaan and Overmars~\cite{3sum_comp_geom} states that it is not possible to do much better than the above -- that any algorithm for the $3$-SUM problem in general takes time at least $r^{2-o(1)}$. This hypothesis has been instrumental in establishing conditional lower bounds on the complexities of a variety of interesting problems. More generally, it is conjectured that any algorithm for $k$-SUM takes time at least $r^{\ceil{k/2}-o(1)}$~\cite{AL13}.

\paragraph{Average-Case $k$-SUM.} In the \emph{average-case} $k$-SUM problem, the $r$ numbers in the input are each chosen uniformly at random from $\set{0,\dots,M-1}$. The characteristics of the problem now change depending on the relative sizes of $M$, $r$, and $k$. In analogy to subset sum~\cite{LO85}, we define the \emph{density} of average-case $k$-SUM as the following ratio:
\begin{align*}
  \Delta = \displaystyle\frac{\log{\choose{r}{k}}}{\log{M}}.
\end{align*}
When this density $\Delta$ is $1$, the expected number of $k$-SUM solutions in an instance is also $1$. In general, the expected number of solutions is approximately $r^{k\,\left(1-1/{\Delta}\right)}$. For values of $\Delta$ larger than $1$ (the \emph{dense} regime), this number is polynomial in $r$, and for $\Delta$ smaller than 1 (the \emph{sparse} regime), the number of solutions is vanishing with $r$. When $k$ is small, the density can be approximated by the ratio $(k\log{r}/\log{M})$ and we will use this simplification in the remainder of this paper\footnote{Please see \cref{rem:density-approx} for a discussion of the accuracy of this approximation.}. 

In the dense regime, several non-trivial algorithms are known for average-case $k$-SUM that are more efficient than the worst-case algorithms. For instance, the ``birthday'' algorithm that computes the sum of random sets of $k/2$ numbers and looks for collisions runs in expected time $\mathcal{O}(\sqrt{M}) = \mathcal{O}(r^{k/(2\Delta)})$ which is $r^{O(1)}$ for densities larger than $k$. For densities larger than $\approx k/(1+\log_2{k})$, Wagner's $k$-tree algorithm~\cite{wagner} solves the problem in time $\Otilde(r)$. 

However, at density $\Delta=1$, when there is only a single solution in expectation, no algorithms are known that outperform the best worst-case algorithms. The average-case $k$-SUM problem at this density is believed to be as hard as the worst-case variant, although no worst-case to average-case reduction is currently known~\cite{Pet15,LLW19,dinur_keller_klein}.

\begin{conjecture}[Average-Case $k$-SUM Conjecture]
\label{conj:ksum-intro}
  Any algorithm for average-case $k$-SUM at density $1$ with constant probability of success has running time at least $\Omega\left(r^{\ceil{k/2}-o(1)}\right)$.
\end{conjecture}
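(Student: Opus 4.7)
The plan is to attempt a worst-case to average-case reduction: given a hypothetical algorithm $A$ solving average-case $k$-SUM at density $1$ in time $T$ with constant success probability, construct a worst-case algorithm $B$ running in time $\Otilde(T)$ for arbitrary $k$-SUM instances. Combined with the worst-case $k$-SUM conjecture of~\cite{AL13}, this would immediately yield the stated $\Omega\paren{r^{\ceil{k/2}-o(1)}}$ lower bound. The natural first step is to randomize a worst-case instance $S = \set{s_1, \ldots, s_r} \subseteq \set{0, \ldots, M-1}$ into a distribution indistinguishable from the uniform one, for example by applying a random affine map $x \mapsto ax + b \pmod{M}$, permuting positions, or embedding $S$ into a larger universe with uniform padding. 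For such a reduction to succeed, the randomization must (i) produce a distribution statistically close to uniform on $\set{0,\dots,M-1}^r$, and (ii) preserve the set of $k$-SUM solutions (or at least a known one) in an efficiently invertible way so that any solution found on the randomized instance can be mapped back.

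An alternative approach is to prove the lower bound unconditionally in a restricted model of computation, such as the $k$-linear decision tree model, where strong worst-case lower bounds for $k$-SUM are already known via topological arguments. The hope here is to leverage the observation that at density $1$, a generic random instance contains (in expectation) exactly one $k$-SUM solution, and any algorithm must still identify \emph{which} $k$-subset sums to zero. This suggests an information-theoretic obstruction: the algorithm must localize a specific $k$-tuple out of $\choose{r}{k}$, and a birthday-style argument would indicate that $r^{\ceil{k/2}}$ linear tests are needed to distinguish the planted tuple from the $r^k / M \approx 1$ random coincidences.

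The main obstacle is fundamental and currently unresolved: worst-case to average-case reductions for problems in or near $\mathrm{P}$ are notoriously elusive, in sharp contrast to the situation for lattice problems or $\#\mathrm{P}$-complete problems, essentially because problems like $k$-SUM lack the rich algebraic self-reducibility properties that enable Ajtai-style or Goldreich--Levin-style reductions. The randomization step for $k$-SUM is particularly delicate because the constraint $\sum_{i \in T} s_i \equiv 0 \pmod{M}$ is not preserved by independent random shifts, and any linear transformation that preserves it also preserves enough additive structure to leave the randomized instance correlated with the original. Sidestepping this barrier will likely require either a genuinely new reduction paradigm, a reduction that passes through intermediate average-case hard problems (for instance via the planted variants and hardness amplification theorems developed later in this paper, which partially address the probability-of-success gap but not the worst-case gap), or new unconditional techniques for lower bounds against algorithms that exploit random structure.
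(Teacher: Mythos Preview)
The statement you are attempting to prove is a \emph{conjecture}, not a theorem: the paper does not prove it, and explicitly states that no worst-case to average-case reduction for $k$-SUM is currently known (see the discussion preceding the conjecture, citing~\cite{Pet15,LLW19,dinur_keller_klein}, and Open Problem~4 in the paper). It is stated as a folklore hardness assumption and then \emph{used} as a hypothesis throughout the paper to derive conditional lower bounds (e.g., \cref{thm:lb-1}, \cref{cor:lb-1}).

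Your proposal is not wrong so much as misdirected: you correctly identify the natural reduction strategy, correctly pinpoint why it fails (additive randomization destroys $k$-SUM structure; structure-preserving maps leave the instance correlated with the original), and correctly note that this barrier is unresolved. That analysis is accurate and matches the state of the art. But the upshot is simply that the conjecture is open, which is exactly the status the paper assigns it. There is no ``paper's own proof'' to compare against.
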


Building on this conjecture, Dinur, Keller and Klein~\cite{dinur_keller_klein} showed lower bounds on the complexity of $k$-SUM at densities in the range $(1,2)$. In particular, their results implied that Wagner's algorithm is optimal for $k=3, 4$, and $5$, for certain densities in this range. LaVigne, Lincoln and Williams~\cite{LLW19} used a decision version of a weaker form of this conjecture to construct fine-grained One-Way Functions.

\paragraph{The Sparse Regime.} In this work, we investigate the average-case complexity of $k$-SUM at densities at most $1$, where random instances are unlikely to have solutions. Unlike the dense regime, i.e. with densities $1$ and higher~\cite{wagner,LLW19,BSV21,dinur_keller_klein}, very little is known about the complexity of average-case $k$-SUM in the sparse regime. Addressing this basic gap in our understanding of $k$-SUM is the goal of the present work. 


In the sparse regime where solutions are unlikely, a meaningful variant of the $k$-SUM problem which we introduce is the \emph{planted} $k$-SUM problem -- here a randomly chosen set of $k$ numbers that sum to $0$ is planted at random locations in a random $k$-SUM instance. There are two problems that arise naturally in this setting:
\begin{itemize}
  \item The \emph{planted search} $k$-SUM problem is to recover a $k$-SUM solution given such an instance (at low densities, with high probability, the planted solution is the only one).
  \item The \emph{planted decision} $k$-SUM problem is to distinguish a random instance with a planted solution from a random instance without a planted solution.
\end{itemize}


Aside from being interesting problems that warrant study in their own right, we are also interested in these problems from the standpoint of applications to cryptography. 



\subsection{Our Results}
\label{sec:results}
In this work, we initiate the study of $k$-SUM and its variants, namely the $k$-XOR and vector $k$-SUM problems, in the sparse regime. In vector $k$-SUM, the elements in the input are vectors from $\bbZ_q^m$ for some $m$, and addition is done over this vector space; $k$-XOR is the special case of $q = 2$. Our results are described below. Please also see \cref{fig:density} for a summary. 


\paragraph{Complexity.}

To begin with, we provide conditional lower bounds on the complexity of the planted $k$-SUM problem in the sparse regime via the following theorem, assuming the hardness of the regular (non-planted) $k$-SUM problem at density $1$.
\begin{theorem}[\cref{cor:lb-1}, \cref{sec:lower_bound}]
  \label{infthm:lb}
  Assuming the average-case $k$-SUM conjecture (\cref{conj:ksum-intro}), any algorithm that solves planted search $k$-SUM at density $\Delta \in \left(\frac12, 1\right]$ with constant success probability has to take time $\Omega\left(r^{k \,\left(1-\frac{1}{2\Delta}\right)-o(1)}\right)$. This generalizes to the $k$-SUM problem defined over any abelian group.
\end{theorem}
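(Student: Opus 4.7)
The strategy is a direct reduction from unplanted average-case search $k$-SUM at density $1$ (over the same abelian group) to planted search $k$-SUM at density $\Delta$, via random subsampling together with mild repetition. Given any algorithm $A$ for planted density-$\Delta$ $k$-SUM on $r$ elements that runs in time $T(r)$ with constant success probability, the goal is to build a density-$1$ solver on $r_0$ elements whose total running time is $\Otilde\bigl(r_0^{k(1-\Delta)}\,T(r)\bigr)$, then invoke \cref{conj:ksum-intro}. The natural parameter choice is $r_0 := r^{1/\Delta}$, so that the density-$1$ modulus $M_0 := r_0^k$ coincides with the planted modulus $M := r^{k/\Delta}$; both instances then live over the same abelian group $G=\bbZ_M$. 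This coincidence is exactly what makes the argument group-agnostic, since the reduction never performs arithmetic beyond checking $k$-sum relations, and so generalizes verbatim to any finite abelian group.

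\noindent\textbf{Reduction and analysis.} Given a density-$1$ input $(a_1,\ldots,a_{r_0})\in G^{r_0}$, repeat $N=\Theta(r_0^{k(1-\Delta)})$ times: draw a uniformly random subset $T\subseteq[r_0]$ of size $r$, run $A$ on $(a_i)_{i\in T}$, and, whenever $A$ returns a $k$-tuple of indices whose elements sum to $0$, output the corresponding indices of the original instance. The analysis breaks into three pieces: (i) a density-$1$ instance admits some $k$-SUM solution $S^{*}$ with constant probability (since the expected solution count is $\binom{r_0}{k}/M_0=\Theta(1)$); (ii) conditioned on $S^{*}$, a uniformly random size-$r$ subset $T$ satisfies $S^{*}\subseteq T$ with probability $\binom{r}{k}/\binom{r_0}{k}\approx(r/r_0)^{k}=r_0^{-k(1-\Delta)}$; and (iii) conditioned on $S^{*}\subseteq T$, the subsample $(a_i)_{i\in T}$ is distributed within $o(1)$ total-variation distance of a planted density-$\Delta$ instance, with the image of $S^{*}$ inside $T$ uniformly random after relabeling. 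Combining (ii)-(iii), $A$ returns the correct $k$-subset on each trial with probability $\Omega(r_0^{-k(1-\Delta)})$, so $N$ trials yield constant overall success whenever the input admits a solution.

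\noindent\textbf{Conclusion and main obstacle.} Putting these bounds together, the constructed density-$1$ algorithm runs in time $\Otilde(r_0^{k(1-\Delta)}\,T(r))$ with constant success probability; \cref{conj:ksum-intro} then forces $r_0^{k(1-\Delta)}\,T(r)\ge r_0^{\lceil k/2\rceil-o(1)}$, i.e.\ $T(r)\ge r_0^{\lceil k/2\rceil-k(1-\Delta)-o(1)}\ge r^{k(1-1/(2\Delta))-o(1)}$ after substituting $r_0=r^{1/\Delta}$. The exponent is positive exactly on the stated range $\Delta\in(1/2,1]$, matching the theorem. The principal technical hurdle is item (iii): arguing that conditioning on $S^{*}\subseteq T$ really does yield a distribution close to the genuine planted density-$\Delta$ one. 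This requires (a) absorbing the constant-probability conditioning on ``a solution exists'' in the density-$1$ input, (b) controlling the $o(1)$ contributions from the rare events of multiple coexisting solutions in the density-$1$ instance and of spurious $k$-SUM solutions appearing in the density-$\Delta$ subsample (both sub-constant for $\Delta>1/2$), and (c) symmetrizing over the random embedding of $S^{*}$ inside $T$ so that, after relabeling, the planted subset is genuinely uniform. Once this total-variation gap is shown to be $o(1)$ uniformly over $\Delta\in(1/2,1]$, the guarantee on $A$ transfers cleanly to the reduced algorithm and the counting bound above goes through as stated.
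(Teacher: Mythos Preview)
Your subsampling reduction is exactly the paper's core idea, and the parameter arithmetic (set $r_0=r^{1/\Delta}$, repeat $\Theta(r_0^{k(1-\Delta)})$ times, invoke the conjecture) matches. The difference is in where you start: the paper first establishes that non-planted and \emph{planted} search $k$-SUM are equivalent at density $1$ (their Theorem~4.1, a non-trivial R\'enyi-divergence argument), and only then runs the subsampling reduction from \emph{planted} density $1$ to planted density $\Delta$. Starting from planted makes your step (iii) trivial: conditioned on the planted set $S^*\subseteq T$, the subsample is \emph{exactly} $D_1$ at density $\Delta$---no total-variation estimate, no multiple-solution analysis, and the ``first trial with $S^*\subseteq T$'' argument factorizes cleanly because $\Pr[S^*\subseteq T]$ is the same for every planted input.

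By going straight from non-planted density $1$, you have pushed the entire content of Theorem~4.1 into (iii), and your justification there does not go through. Your claim in (b) that ``multiple coexisting solutions in the density-$1$ instance'' is a rare $o(1)$ event is false: at density $1$ the expected number of solutions is $\Theta(1)$ and the probability of two or more is a positive constant (this is precisely why the paper's equivalence proof needs the hybrid/R\'enyi machinery rather than a TV bound). Relatedly, your claimed $o(1)$ TV gap in (iii) fails at $\Delta=1$, where $D_0$ conditioned on having a solution and $D_1$ are at constant TV distance. Finally, even for $\Delta<1$ you have not addressed the correlation across trials: all $N$ subsamples come from the \emph{same} $X$, so knowing that the per-trial success probability averages to $\Omega(\epsilon)\,r_0^{-k(1-\Delta)}$ over $X$ does not by itself give constant overall success---you would still need a second-moment argument on $c(X)$ (essentially reproving the equivalence). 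The paper's two-step decomposition sidesteps all of this: once you are in the planted world at density $1$, the subsampling step is clean and the correlation issue disappears.
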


To establish the above, we first show a reduction from non-planted search $k$-SUM to planted search $k$-SUM at density $1$. Then, for any $\Delta\in (1/2,1)$,  we reduce planted $k$-SUM at density $1$ to planted $k$-SUM at density $\Delta$. We demonstrate two such reductions, one of which additionally lets us show lower bounds for $k$-SUM assuming the hardness of $k'$-SUM for a different $k'$. Please see \cref{sec:planted_nonplanted_equivalence,sec:lower_bound} for details. Following these lower bounds, our current understanding of the complexity of average-case $k$-SUM at various densities is depicted in \cref{fig:graph}. 

We then connect the complexity of the search $k$-SUM problem to related problems such as the subset sum problem and the decision variant of the $k$-SUM problem. We show an average-case reduction from very sparse planted $k$-SUM over integers to the average-case subset sum problem at low densities, as well as a search-to-decision reduction for planted $k$-SUM that, in particular, carries over the above conditional lower bounds to the decision $k$-SUM problem. Please see \cref{sec:search_to_decision,sec:subset-sum-reduction} for details.

Finally, we show an algorithm for the $k$-XOR problem that, at densities less than $O(1/\sqrt{r})$, is faster than the best known worst-case algorithms. More precisely, we show that the planted search $k$-XOR problem can be solved in time $\Otilde\left(r^k\, \left(\frac{1}{\Delta}\right)^{3-k}\right)$ for any $\Delta \geq 1/r$. Please see \cref{sec:algo} for details.




\definecolor{viridis1}{HTML}{440154}
\definecolor{viridis2}{HTML}{414487}
\definecolor{viridis3}{HTML}{2A788E}
\definecolor{viridis4}{HTML}{22A884}
\definecolor{viridis5}{HTML}{7AD151}
\definecolor{viridis6}{HTML}{FDE725}

\usepgflibrary{plotmarks}
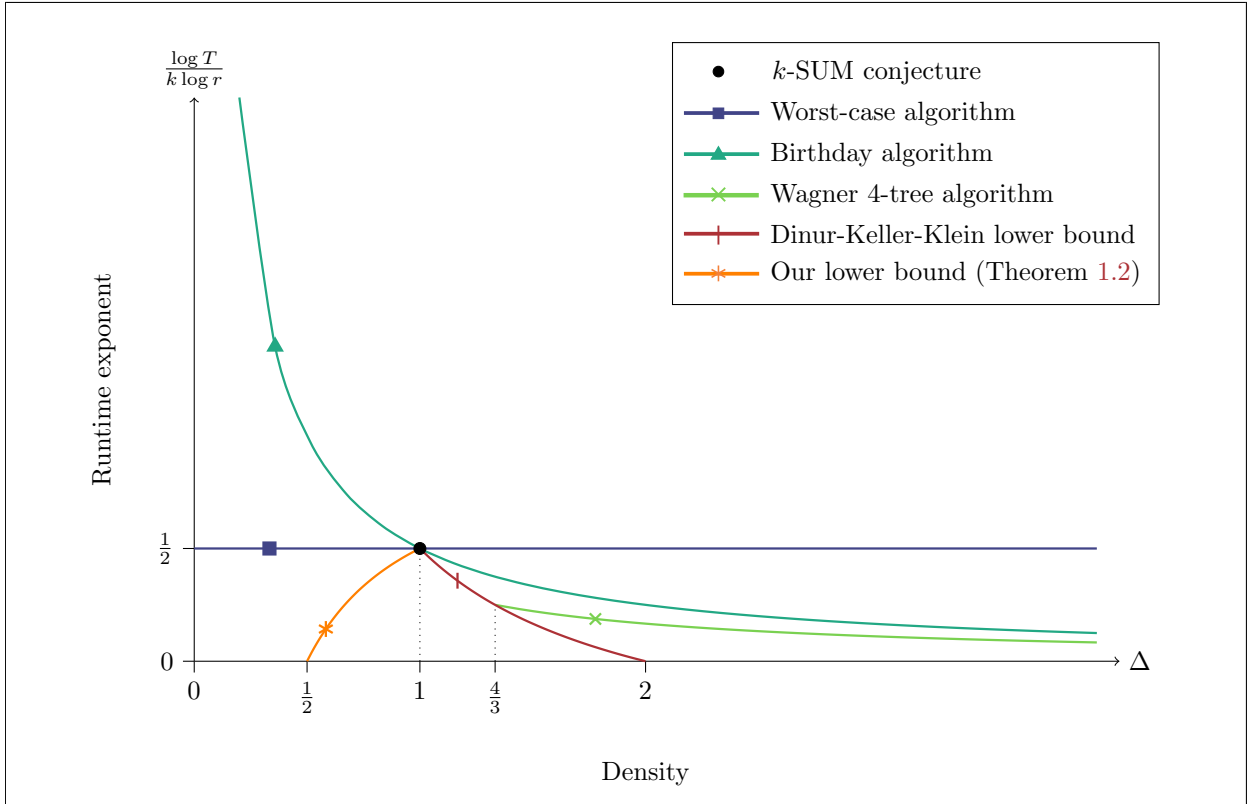
\begin{figure}[h!]
\begin{mdframed}
    \centering
    \vspace{1em}
    \begin{tikzpicture}[scale=3]
      \draw[->] (0, 0) -- (4.1, 0) node[right] {$\Delta$};
      \draw[->] (0, 0) -- (0, 2.5)   node[above] {$\frac{\log T}{k \log r}$};
      \node[align=center] at (2,-0.5) {Density};
      \node[align=center,rotate=90] at (-0.4,1.25) {Runtime exponent};
      
      \draw[-] (0,0)   -- (0,-0.05)    node[below] {$0$};
      \draw[-] (0.5,0) -- (0.5,-0.05)  node[below] {$\frac12$};
      \draw[-] (1,0)   -- (1,  -0.05)  node[below] {$1$};
      \draw[-] (4/3,0) -- (4/3,-0.05)  node[below] {$\frac43$};
      \draw[dotted] (4/3,0)   -- (4/3,0.25);
      \draw[-] (2,0)   -- (2,-0.05)    node[below] {$2$};
      
      \draw[dotted] (1,0)   -- (1,0.5);
      
      \draw[-] (0,0.5)   -- (-0.05,0.5)    node[left]  {$\frac12$};
      \draw[-] (0,0)     -- (-0.05,0)      node[left]  {$0$};
      
      \draw[domain=0:4,line width=0.3mm, smooth, variable=\x, viridis2 , mark=square*, mark size=0.75pt, mark repeat=100, mark phase=3] plot ({\x}, {0.5}); 
      \draw[domain=0.2:4,line width=0.3mm, smooth, variable=\x, viridis4 , mark=triangle*, mark size=1pt, mark repeat=100, mark phase=2] plot ({\x}, {1 / (2*\x)});  
      \draw[domain=4/3:4,line width=0.3mm, smooth, mark=x, mark size=1pt, mark repeat=100, mark phase=5, variable=\x, viridis5] plot ({\x}, {1 / (3*\x)});  
      
      \draw[domain=1:2,line width=0.3mm, smooth, variable=\x, Maroon , mark=|, mark size=1pt, mark repeat=100, mark phase=5] plot ({\x}, {1/\x - 0.5});  
      
      
      \draw[domain=0.5:1,line width=0.3mm, smooth, variable=\x, orange , mark=asterisk, mark size=1pt, mark repeat=100, mark phase=5] plot ({\x}, {1 - 1/(2*\x)});  
      
      \draw[fill](1,0.5) circle[radius=0.75pt];
      
      \tikzstyle{line} = [rectangle,inner sep=0pt, minimum width=1cm,minimum height=0pt,draw]
      \matrix [draw,below left] at (current bounding box.north east) {
          \node [circle,minimum size=4pt,inner sep=0pt,white,fill,label=right:$k$-SUM conjecture] {}; \\
          \node [line,line width=0.5mm,viridis2,label=right:Worst-case algorithm] {}; \\
          \node [line,line width=0.5mm,viridis4,label=right:Birthday algorithm] {}; \\
          \node [line,line width=0.6mm,viridis5,label=right:Wagner 4-tree algorithm] {}; \\
          \node [line,line width=0.5mm,Maroon,label=right:Dinur-Keller-Klein lower bound] {}; \\
          \node [line,line width=0.5mm,orange,label=right:Our lower bound (\cref{infthm:lb})] {}; \\
        };
    \node[black] at (2.323,2.615) {\pgfuseplotmark{*}};
    \node[viridis2] at (2.323,2.43) {\pgfuseplotmark{square*}};
    \node[viridis4,scale=1.5] at (2.323,2.245) {\pgfuseplotmark{triangle*}};
    \node[viridis5,scale=1.8] at (2.323,2.07) {\pgfuseplotmark{x}};
    \node[Maroon,scale=1.8] at (2.323,1.89) {\pgfuseplotmark{|}};
    \node[Orange,scale=1.5] at (2.323,1.72) {\pgfuseplotmark{asterisk}};
    \end{tikzpicture}
    \vspace{0.5em}
    \hrule
    \vspace{1em}
    \caption{Landscape of the known bounds on the complexity of average-case (planted) $k$-SUM problems as $k\rightarrow \infty$ (except for the Wagner algorithm, which is depicted for $k=4$). The $x$-axis represents the density $\Delta = \frac{k \log r}{m}$ of the instances, and the $y$-axis represents the runtime $T$, with $y=\frac{\log{T}}{k\log{r}}$. More specifically, the $y$-axis is the exponent of the runtime, such that if the runtime of an algorithm at density $\Delta$ is $r^{k\alpha}$, we plot the point $(\Delta,\alpha)$.  We have omitted the  algorithm that works at density $\O(\log(r)/r^2)$ presented in \cref{sec:subset-sum-reduction} since its runtime is independent of $k$. Similarly, we have omitted the algorithm for $k$-XOR that works at density $\O(1/r^{0.5+\epsilon})$ presented in \cref{sec:algo}.}
    \vspace{-0.5em}
    \label{fig:graph}
\end{mdframed}
\end{figure}


\paragraph{Hardness Amplification.} For the $k$-SUM and vector $k$-SUM problems (including $k$-XOR) at density $1$ or less, we show that the success probability of an algorithm for the planted search problem can be amplified. We do this using a random walk over instances that preserves the planted solution (with non-trivial probability) and is also rapidly mixing. This amplification also extends to $k$-SUM over general groups, albeit for density slightly smaller than $1$.


\begin{theorem}[\cref{thm: k-Msum success amplification}, \cref{cor:kxor-amp-1}]\label{thm:informal_rec_det_equiv}
  At any density $\Omega\left(\frac{1}{\polylog(r)}\right) \leq \Delta \leq 1$, for any constant $k\geq 3$, suppose there is an algorithm that runs in time $T$ and solves planted search $k$-SUM (resp. vector $k$-SUM) with success probability $\Omega(1/\polylog(r))$. Then, there is an algorithm that runs in time $\Otilde\left(T\right)$ and solves planted search $k$-SUM (resp. vector $k$-SUM) at the same density with success probability $\left(1-o\left(\frac{1}{\log r}\right)\right)$.
\end{theorem}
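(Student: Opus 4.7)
The plan is to amplify $A$'s success probability by invoking it on many re-randomized versions of the input, where the re-randomization is realized by a random walk on the instance space that exploits the locality of the planted solution together with the additive group structure of $\mathbb{Z}_M$. Concretely, I would define a Markov chain on $\mathbb{Z}_M^r$: at each step, sample a uniformly random ordered pair of distinct indices $i, j \in [r]$ and a uniformly random $\delta \in \mathbb{Z}_M$, and update $(a_i, a_j) \gets (a_i + \delta, a_j - \delta)$. This step preserves $\sum_{t \in T} a_t$ whenever $|T \cap \{i,j\}| \in \{0, 2\}$; in particular, it preserves the planted sum $\sum_{t \in S^*} a_t = 0$ with probability $(\binom{k}{2}+\binom{r-k}{2})/\binom{r}{2} \geq 1 - O(k/r)$. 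Conditioned on this ``good'' event, the step acts as uniform translation within the $(r{-}1)$-dimensional subspace of planted instances with planted set $S^*$, and hence preserves the uniform distribution on that subspace.

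The amplified algorithm, on input $x$, repeats $L = \polylog(r)$ iterations. In each iteration it starts from $x$, takes $\tau$ fresh walk steps to reach $y$, invokes $A(y)$ to obtain a candidate $k$-subset $S$, and outputs $S$ whenever $\sum_{t \in S} x_t \equiv 0 \pmod M$. Choosing $\tau = \polylog(r)$ keeps the total runtime at $\Otilde(T)$. Correctness combines two ingredients. First, a union bound over all $L\tau$ walk steps yields that every step is good with probability $1 - O(L\tau k/r) = 1 - o(1/\log r)$ provided $L\tau = o(r/(k \log r))$; conditional on this event, each endpoint $y$ is marginally a uniformly random planted instance with the same $S^*$ as $x$, so $A$ succeeds on it with probability at least $p = \Omega(1/\polylog r)$ by hypothesis. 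Second, fast mixing of the walk implies that the $L$ endpoints behave as essentially independent samples, so the probability that $A$ fails on every iteration is at most $(1-p)^L + o(1/\log r)$, which is $o(1/\log r)$ for $L = \Theta(p^{-1}\log\log r)$.

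The main technical obstacle is establishing that $\tau = \polylog(r)$ steps suffice for the endpoints to be ``uniform enough'' to drive the amplification, since the naive coupon-collector bound gives only $\Omega(r \log r)$ for the mixing time. My approach is a Fourier-analytic argument on the Cayley graph of $\mathbb{Z}_M^{r-1}$ generated by the good-step moves: the slowly-decaying eigenvectors correspond to characters supported on few coordinates, and a careful accounting shows that such characters contribute only negligibly to any bounded test that $A$ might implement. This exploits the locality of the solution (only $k \ll r$ coordinates matter) together with the group structure of $\mathbb{Z}_M$ in an essential way. The argument extends to vector $k$-SUM and $k$-XOR by replacing $\mathbb{Z}_M$ with the appropriate finite abelian group; for $k$-SUM over arbitrary abelian groups, requiring density bounded away from $1$ ensures that the planted solution is w.h.p.\ the unique $k$-SUM solution of $x$, so that any candidate $S$ which verifies against $x$ must coincide with the planted solution.
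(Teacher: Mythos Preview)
Your proposal has a genuine gap at the mixing step. A walk of length $\tau=\polylog(r)$ touches at most $2\tau=\polylog(r)$ coordinates, so for almost every starting point $x$ and every one of your $L$ iterations, the endpoint $y$ agrees with $x$ on all but $\polylog(r)$ coordinates. Your Fourier hand-wave (``slowly-decaying characters are supported on few coordinates and contribute negligibly to any bounded test $A$ might implement'') is exactly backwards: the \emph{adversarial} $A$ can be chosen to depend on a single untouched coordinate. Concretely, let $A'$ run a correct $r^{\lceil k/2\rceil}$-time solver but output its answer only when the most significant bit of $x_1$ is $0$, and output $\bot$ otherwise. Then $A'$ has success probability $\approx 1/2$ over $D_1$, yet for every starting instance $x$ with that bit equal to $1$ (half of all instances), each of your $L$ endpoints has $y_1=x_1$ with probability $1-O(\polylog(r)/r)$, so $A'$ fails on all $L$ calls and your amplifier outputs nothing. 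The ``marginal uniformity of each endpoint'' observation is correct but useless here: what you need is a \emph{per-instance} guarantee that most starting points reach the success set, and that forces $\tau=\Omega(r)$.

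The paper's proof differs from yours in two essential ways. First, it prepends an \emph{obfuscation} step $\A^\obf$ that (i) applies a uniformly random permutation to the coordinates and (ii) adds to each coordinate a uniformly random element from a fixed $k$-tuple $(e_1,\dots,e_k)$ summing to $0$; with probability $\ge k!/k^k$ a distinct $e_i$ lands on each solution element, fully re-randomizing both the location and the values of the planted solution. This kills the counterexample above and reduces the problem to randomizing only the $r-k$ non-solution coordinates. Second, the random walk then simply resamples a single uniformly random coordinate per step (not your paired $\pm\delta$ move) and is run for $\tau=r\log(1/\gamma)=\Theta(r\,\polylog r)$ steps. This walk is the standard Hamming-graph walk, whose spectral gap is $\Theta(1/r)$, so $\tau$ steps genuinely mix; the planted solution survives all $\tau$ steps with probability $(1-k/r)^\tau\approx\gamma^{2k}$, and the outer loop repeats $\Theta(\gamma^{-(2k+2)}\log r)$ times to compensate. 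The expander mixing lemma then gives that all but an $o(1/\log r)$ fraction of permissible starting points have an $\Omega(\gamma)$ fraction of walk endpoints in $\A^\obf$'s success set. Finally, the extension from density $\Delta\le 1-\Theta(\log\log r/\log r)$ up to $\Delta=1$ uses an additional density-lowering reduction specific to $\mathbb{Z}_{2^m}$ (respectively $\mathbb{Z}_q^m$), which your proposal does not address.
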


The above also extends to super-constant values of $k$, with some additional loss in the running time. This hardness amplification, together with a search-to-decision reduction, enables us to use relatively mild hardness of $k$-SUM (or its variants) in applications -- for instance in the public key encryption scheme we construct, it is sufficient for us to assume hardness of solving search $k$-XOR with success probability $(1-o(1/\log{r}))$. Without the hardness amplification, we would have had to assume the hardness of solving it with some $\Omega(1)$ success probability.

\paragraph{Cryptography.}

Next, we show that somewhat mild hardness of planted search $k$-XOR at sufficiently low densities can be used to construct Public-Key Encryption (PKE) assuming weaker hardness of the Learning Parity with Noise (LPN) problem than was known before. Previously, it has been shown how to construct PKE assuming either that LPN with $m$-bit secrets at noise rate $\mathcal{O}(1/\sqrt{m})$ is hard for $\poly(m)$-time algorithms~\cite{Alekhnovich03}, or that LPN with constant noise rate is hard for $2^{m^{0.5}}$-time algorithms~\cite{lpn_pke}. In contrast, adding $k$-XOR enables us to use just $2^{m^{c}}$ hardness (for any constant $c>0$) of LPN with constant noise rate. 

Intriguingly, the level of hardness needed from LPN in our construction does not appear to imply public-key encryption by itself. This suggests the possibility of the $k$-SUM family of problems serving as a bridge for problems from the world of ``Minicrypt'' (where one way functions exist) to the world of ``Cryptomania'' (where public-key encryption exists) --- see also \cite{impagliazzo1995personal}. Qualitatively, our technique allows to interpret the $k$-SUM family of problems as a computational variant of the famous Leftover Hash Lemma \cite{HILL}, which provides statistical guarantees and is used ubiquitously in cryptography \cite{leftover}\footnote{Technically, we are using $k$-XOR as a substitute for a specific strong extractor -- the family of all linear functions $\matA \vecx + \vecb$. Indeed, LHL is more general -- it says any pairwise independent hash family is a strong extractor, but we only replace this specific family with $k$-SUM. However, this family suffices for most applications in cryptography.}. Looking ahead, this can help to not only weaken the required hardness from the ``core'' problem being used in the cryptographic construction, but may also improve overall efficiency of the construction. We demonstrate this phenomenon in two PKE schemes, one based on LPN and another (with lesser improvement) based on its large-field analog Learning With Errors (LWE) \cite{regev05}. Please see \cref{sec:pke} and \cref{sec:pke-lwe} for details. We are optimistic that this technique will find other applications in cryptography.  


\begin{theorem}[\cref{thm:pke_from_lpn}, \cref{sec:pke}]
\label{thm:informal-pke}
  Suppose that constant-noise LPN with an $m$-bit secret is $2^{m^c}$-hard for some constant $c$, and that any algorithm for planted-search $k$-XOR at densities $\Delta = 1/\polylog(r)$ with success probability $(1-o(1/\log{r}))$ has running time at least $r^{\ceil{k/2}-o(1)}$. Then, there is a PKE scheme which is secure against adversaries running in time $r^{\ceil{k/2}-\Omega(1)}$. 
\end{theorem}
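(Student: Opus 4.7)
The plan is to construct a PKE that mirrors Alekhnovich's LPN-based scheme, but with the sparse masking vectors used in encryption replaced by indicator vectors of uniformly random $k$-subsets. Heuristically, once the public key $(\matA, \vecy = \matA \vecs \oplus \vece)$ is swapped for uniform via LPN pseudorandomness, the ``hash'' $\matA^\top \one_S$ for a random $k$-subset $S$ should look uniform by decision planted $k$-XOR, enabling a clean hybrid argument. This is exactly the ``computational LHL'' perspective from the introduction: $k$-XOR plays the role of a pairwise-independent hash family, and LPN provides the noise that hides the plaintext.

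Concretely, I would analyze the following scheme. $\KeyGen$ samples an LPN secret $\vecs \in \{0,1\}^m$, a matrix $\matA \in \{0,1\}^{r \times m}$, and a constant-rate noise vector $\vece \in \{0,1\}^r$, outputting $\pk = (\matA, \vecy := \matA \vecs \oplus \vece)$ and $\sk = \vecs$. To encrypt a bit $b$, the encryptor repeats $L$ times: sample $S_j \subset [r]$ uniformly with $|S_j|=k$ and emit $(\vecz_j, \tau_j) := (\matA^\top \one_{S_j},\; \vecy^\top \one_{S_j} \oplus b)$. To decrypt, compute $b_j := \langle \vecz_j, \vecs\rangle \oplus \tau_j = b \oplus E_j$ where $E_j = \bigoplus_{i \in S_j} e_i$, and output the majority of $\{b_j\}$. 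Since the piling-up identity gives $\Pr[E_j=1] = \tfrac12(1-(1-2\mu)^k) < \tfrac12$, a choice of $L = \Otilde((1-2\mu)^{-2k})$ suffices for correctness by Chernoff.

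Security proceeds by two hybrids. First, replace $\vecy$ with a uniform string; indistinguishability follows from decisional LPN, which is implied by the assumed $2^{m^c}$-hardness of search LPN. Second, replace each pair $(\vecz_j, \vecy^\top \one_{S_j})$ by a uniform $(m+1)$-bit string; this is indistinguishable by decision planted $k$-XOR, obtained by applying the search-to-decision reduction of \cref{sec:search_to_decision} to the hardness-amplified search $k$-XOR guaranteed by \cref{thm:informal_rec_det_equiv}. In the final hybrid $\tau_j$ is uniform and independent of $b$, so the ciphertext distribution is independent of the plaintext.

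The main obstacle is the simultaneous calibration of $m$, $r$, $k$. For the $k$-XOR step to apply at density $1/\polylog(r)$, we need $m \approx k \log(r) \cdot \polylog(r)$, hence $r = 2^{\tilde{\Omega}(m/k)}$; for the LPN step to cover all time-$r^{\lceil k/2 \rceil}$ attackers, we need $r^{\lceil k/2\rceil} \leq 2^{m^c}$, which I would arrange by choosing $r$ sub-exponential in $m$ (e.g., $r = 2^{m^{c'}}$ for a suitable constant $c' < c$). A second subtlety is that the $L$-fold hybrid loss on the $k$-XOR side must be absorbed by the per-slot decision advantage: because \cref{thm:informal_rec_det_equiv} upgrades success probability all the way to $1-o(1/\log r)$, the corresponding decision advantage is $o(1/\log r)$, leaving exactly enough slack for $L = \polylog(r)$ ciphertext slots. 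This is precisely why the strong form of the hardness amplification theorem is required, and calibrating this budget against the tight $r^{\lceil k/2\rceil - \Omega(1)}$ adversary bound claimed in the theorem is the most delicate part of the argument.
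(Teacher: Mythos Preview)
Your construction is a legitimate \emph{dual} of the paper's scheme, but the roles of the two assumptions are swapped, and this introduces a subtlety you have glossed over. In the paper, the public key is itself a planted $k$-XOR instance $pk \in \F_2^{m \times r}$ and the secret key is the location $sk$ of the planted solution; encryption of $1$ outputs $\ell$ noisy linear combinations $S \cdot pk + E$ (i.e.\ LPN samples with $pk$ as the matrix), while encryption of $0$ outputs a uniform matrix, and decryption multiplies the ciphertext by the $k$-sparse vector $sk$. The security hybrids are: (i) replace $pk$ by a uniform matrix --- this is \emph{exactly} decision $k$-XOR and is invoked once; (ii) replace the $\ell$ LPN rows by uniform one at a time. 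Thus the paper has a single $k$-XOR hybrid and $\ell$ LPN hybrids, so the coarse $o(1)$ bound on the decision-$k$-XOR advantage (\cref{cor:decision kxor advantage}) already suffices, and the $\ell$-fold hybrid loss lands on LPN, where the per-sample advantage is $o(1/T(m))$ anyway.

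In your scheme, LPN sits in key generation and the $k$-sparse combinations sit in the ciphertext. Your second hybrid asks to replace $B^\top \one_{S_j}$ by uniform \emph{while the adversary still holds $B = [A \mid y]$}; this is \emph{not} the decision $k$-XOR problem of \cref{def:decision-ksum} (which asks whether a given matrix is planted). What you actually need is the ``targeted'' variant: given $r$ uniform vectors and a target $t$, decide whether $t$ is the XOR of a random $k$-subset of them or uniform. The paper does treat this variant, but only in \cref{sec:targeted-ksum}, and the reduction from standard decision vector $k$-SUM costs an extra factor of $r$ in running time. On top of that, you incur an $L$-fold hybrid loss on the $k$-XOR side rather than the LPN side, so you need a per-slot advantage of $o(1/L)$ rather than merely $o(1)$; this is obtainable by rerunning the search-to-decision plus amplification chain with $\epsilon = 1/\polylog(r)$, but it is not what \cref{cor:decision kxor advantage} states. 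Both routes can be pushed through to the theorem; the paper's arrangement --- $k$-XOR in the key, LPN in the ciphertext --- is cleaner precisely because it uses decision $k$-XOR directly and only once.
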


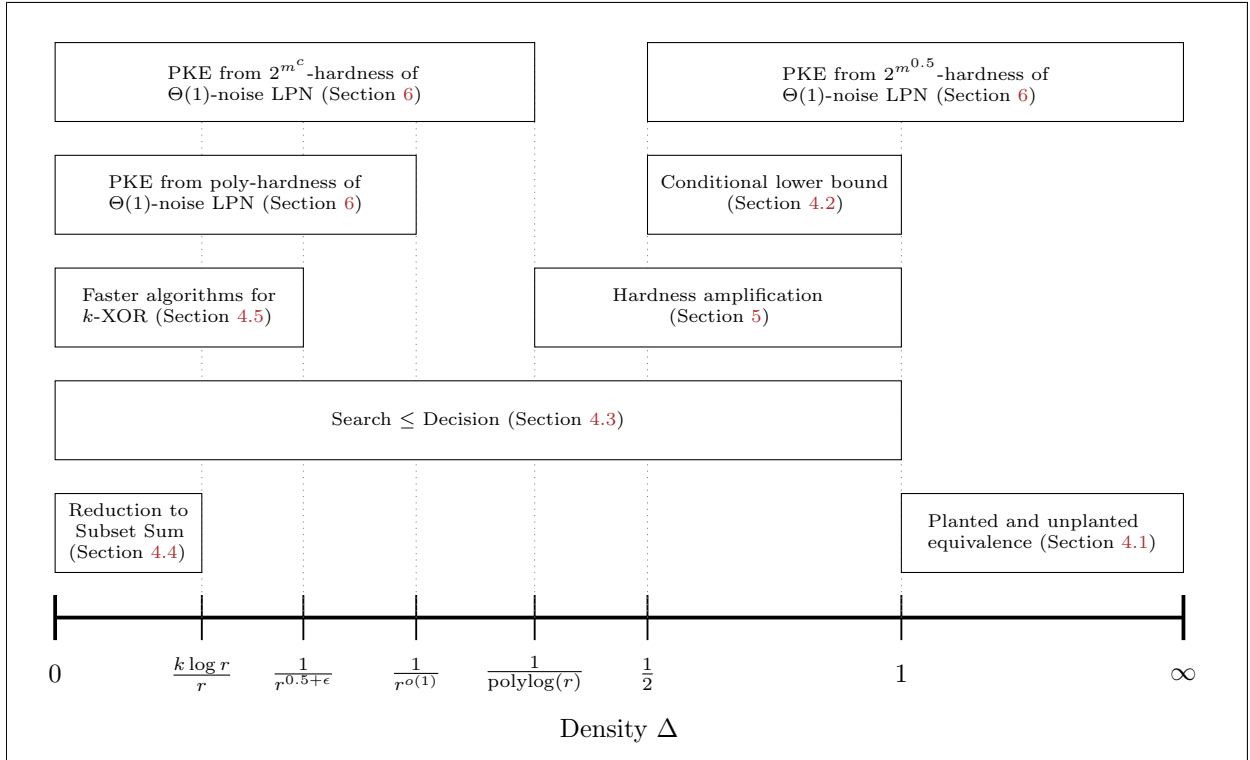
\begin{figure}
\begin{mdframed}
    \centering
    \vspace{1em}
    \begin{tikzpicture}[scale=1.5]
        \node at (5,-1) {Density $\Delta$};
    
        \draw[line width=0.5mm] (0,0) -- (10,0);
        \draw[line width=0.5mm] (0,-0.2) -- (0,0.2);
        \draw[line width=0.5mm] (10,-0.2) -- (10,0.2);
        
        \draw[line width=0.25mm] (7.5,-0.2) -- (7.5,0.2);
        \draw[dotted,color=gray] (7.5,0) -- (7.5,5);
        \node at (7.5,-0.5) {1};
        
        \draw[line width=0.25mm] (5.25,-0.2) -- (5.25,0.2);
        \draw[dotted,color=gray] (5.25,0) -- (5.25,4.4);
        \node at (5.25,-0.5) {$\frac12$};
        
        \draw[line width=0.25mm] (4.25,-0.2) -- (4.25,0.2);
        \draw[dotted,color=gray] (4.25,0) -- (4.25,5);
        \node at (4.25,-0.5) {$\frac1{\polylog(r)}$};
        
        \draw[line width=0.25mm] (3.2,-0.2) -- (3.2,0.2);
        \draw[dotted,color=gray] (3.2,0) -- (3.2,5);
        \node at (3.2,-0.5) {$\frac1{r^{o(1)}}$};

        \draw[line width=0.25mm] (2.2,-0.2) -- (2.2,0.2);
        \draw[dotted,color=gray] (2.2,0) -- (2.2,5);
        \node at (2.2,-0.5) {$\frac1{r^{0.5+\epsilon}}$};
        
        \draw[line width=0.25mm] (1.3,-0.2) -- (1.3,0.2);
        \draw[dotted,color=gray] (1.3,0) -- (1.3,5);
        \node at (1.3,-0.5) {$\frac{k \log r}{r}$};

        \node at (0,-0.5) {0};
        \node at (10,-0.5) {$\infty$};
        
        \draw[draw,fill=white] (7.5,1.1) rectangle node[align=left,font=\scriptsize] {Planted and unplanted \\equivalence (\cref{sec:planted_nonplanted_equivalence})} (10,0.4);
        
        \draw[draw,fill=white] (0,1.1) rectangle node[align=left,font=\scriptsize] {Reduction to \\ \ Subset Sum\\ (\cref{sec:subset-sum-reduction})} (1.3,0.4);
        
        \draw[draw,fill=white] (0,2.1) rectangle node[align=left,font=\scriptsize] {Search $\leq$ Decision (\cref{sec:search_to_decision})} (7.5,1.4);
        
        \draw[draw,fill=white] (4.25,3.1) rectangle node[align=left,font=\scriptsize] {Hardness amplification\\ \quad\quad\ \ (\cref{sec:success amplification})} (7.5,2.4);
        
        \draw[draw,fill=white] (5.25,4.1) rectangle node[align=left,font=\scriptsize] {Conditional lower bound \\\hspace{2.8em} (\cref{sec:lower_bound})} (7.5,3.4);
        
        \draw[draw,fill=white] (5.25,5.1) rectangle node[align=left,font=\scriptsize] {PKE from $2^{m^{0.5}}$-hardness of\\ $\Theta(1)$-noise LPN (\cref{sec:pke})} (10,4.4);
        
        \draw[draw,fill=white] (0,5.1) rectangle node[align=left,font=\scriptsize] {PKE from $2^{m^{c}}$-hardness of\\ $\Theta(1)$-noise LPN (\cref{sec:pke})} (4.25,4.4);
        
        \draw[draw,fill=white] (0,4.1) rectangle node[align=left,font=\scriptsize] {PKE from poly-hardness of\\ $\Theta(1)$-noise LPN (\cref{sec:pke})} (3.2,3.4);

        \draw[draw,fill=white] (0,3.1) rectangle node[align=left,font=\scriptsize] {Faster algorithms for\\ $k$-XOR (\cref{sec:algo})} (2.2,2.4);
    \end{tikzpicture}
    \vspace{0.5em}
    \hrule
    \vspace{1em}
    \caption{Overview of our results in terms of the density of the generated instances (excluding $\Delta \in \{0,\infty\}$), where $r$ is size (number of elements) of the instance. The $x$-axis is not to scale. The PKE schemes for $\Delta \leq 1$ can be constructed from the specified hardness of LPN under the \emph{assumption} that $k$-XOR is hard at the densities depicted on the plot. By contrast, the PKE scheme for $\Delta>1$ works under no assumptions on the hardness of $k$-XOR, and for $\Delta \in (\frac12,1]$ under the standard average-case $k$-SUM conjecture.}
    \label{fig:density}
    \vspace{-0.5em}
\end{mdframed}
\end{figure}

\subsection{Technical Overview}
In this section, we give a high-level overview of some of our results and the techniques we use to show them. 

\paragraph{Relating Planted and Non-Planted $k$-SUM.} We show that any algorithm for planted search $k$-SUM at density $1$ also works for non-planted search $k$-SUM with a small loss in success probability. This follows from showing that the planted and non-planted distributions at densities $1$ and higher are statistically close. At densities somewhat larger than $1$, it is not hard to show that these are, in fact, very close in total variation distance. At density $1$, however, their total variation distance is some constant. Nevertheless, we still show that any algorithm that has any constant success probability over planted instances also has some constant success probability over non-planted instances. We sketch our approach below.

Let $D_0$ denote the distribution of uniformly random $k$-SUM instances, and $D_1$ the distribution of such instances with random planted $k$-SUM solutions. We construct a class of hybrid distributions $D^{(\ell)}$ that ``interpolates'' between these. For $\ell\in[4,r^k]$, the distribution $D^{(\ell)}$ has the following properties:
\begin{itemize}
    \item $D^{(\ell)}$ has total variation distance at most $2/(\ell-3)^2$ from $D_1$
    \item $D^{(\ell)}$ has Rényi divergence\footnote{To be precise, this bound is on the maximum ratio of probability values between the distributions. Technically, the Rényi divergence of order $\infty$ is actually the log of this quantity, but for simplicity, throughout the paper we use the term to refer to this maximum ratio itself instead.} (of order $\infty$) at most $c_k\,(\ell+1)$ relative to $D_0$, where $c_k$ is a constant that depends only on $k$.
\end{itemize}
So if an algorithm succeeds with probability $\eps$ on $D_1$, then it succeeds with probability at least $\eps' = (\eps-2/(\ell-3)^2)$ on $D^{(\ell)}$, and thus with probability at least $\frac{\eps'}{c_k(\ell+1)}$ on $D_0$. Picking an appropriate $\ell=\Theta(1/\sqrt{\eps})$ then gives us what we want. 

 It remains to construct the distribution $D^{(\ell)}$, which is defined as follows. It samples an instance from $X$ from $D_1$, checks if the instance at most $\ell$ solutions: if so, it outputs $X$, and otherwise it outputs a fresh sample from $D_0$. The above bounds on the distances are then shown by expressing the probability mass functions of $D^{(\ell)}$ and $D_1$ in terms of that of $D_0$, and using bounds on the probability of an instance from $D_0$ having more than $\ell$ solutions. 
 
 We briefly mention the relevance of this reduction to cryptography. Previously, it was shown by LaVigne, Lincoln and Williams \cite{LLW19} that the hardness of (a variant of) the planted $k$-SUM problem yields a fine-grained one way function. Our aforementioned reduction (\cref{thm:planted_nonplanted_equivalence}) shows that this can rely on the hardness of the standard {\it non-planted} $k$-SUM problem instead. For more details, please refer to \cref{sec:planted_nonplanted_equivalence}.

\paragraph{Conditional Lower Bounds.} Following the above reduction, the average-case $k$-SUM conjecture implies that planted search $k$-SUM at density $1$ needs at least $r^{\ceil{k/2}-o(1)}$ time. Assuming this, we show lower bounds for lower densities. In more detail, suppose there is an algorithm $\A$ that runs in time $T(r)$ and solves planted search $k$-SUM at some density $\Delta<1$ with constant probability. The idea is, given an instance $X$ at density $1$, to convert $X$ into an instance of density $\Delta$ that still contains the planted solution, and then use $\A$ to recover the solution. We implement this in two different ways.

In the first approach, given an instance $X$ consisting of $r$ elements sampled at density $1$, we choose a random subset $X'$ of $X$ of size $r^\Delta$ (thus reducing the density to $\Delta$) and run $\A$ on $X'$. If we condition on all elements of the solution planted in $X$ being copied to $X'$ in this process, then $X'$ is distributed identically to a planted $k$-SUM instance of size $r^\Delta$ sampled at density $\Delta$. In this case, $\A$ will find this solution with constant probability (over $X$). The event we conditioned on happens with probability at least $\Omega(1 / r^{k\,(1-\Delta)})$, and so if we repeat this process $\mathcal{O}(r^{k\,(1-\Delta)})$ times, it happens at least once with constant probability and we can find a solution in $X$. The $k$-SUM conjecture now implies that $r^{k\,(1-\Delta)}\cdot T(r^\Delta) \geq r^{k/2}$. This, in turn, implies that $T(r) \geq r^{k\,\left(1-\frac{1}{2\Delta}\right)}$, which is the bound we show.  

The second approach is to reduce the density by combining elements in the input and reducing the $k$ in the $k$-SUM problem being considered, not unlike the Wagner $k$-tree algorithm. For example, given instance $X$ for planted $k$-SUM at density $1$, randomly choose $r/4$ disjoint pairs of elements to remove from the instance, compute their sum and put the result back in to get instance $X'$. If it happened that, out of the $k$ elements in the solution in $X$, two were picked as one of these pairs to be combined and the remaining were left untouched, then this leads to a set of $(k-1)$ elements in $X'$ that sum to $0$. Seen as an instance of $(k-1)$-SUM, $X'$ has density $\approx(1-1/k)$, and an algorithm solving it can be used to solve $X$. Computing the probability of this happening then leads to a similar lower bound of $r^{k\,\left(1-\frac{1}{2\Delta}\right)}$ for density $\Delta$, with two important differences. First, it translates between different values of $k$, inferring lower bounds on $k'$-SUM from the hardness of $k$-SUM for $k' \neq k$. This allows us to e.g. establish that solving 4-SUM at density $\Delta=\frac45$ requires $r^{2-o(1)}$ time, assuming that 5-SUM is hard to solve at density $\Delta=1$ (see \cref{cor:lb-2}). Second, the lower bound only works for a discrete set of densities for a given value of $k$, whereas the first lower bound works ``continuously'' as depicted in \cref{fig:graph}. Please see \cref{sec:lower_bound} for details.



\paragraph{Hardness Amplification.} We show that an algorithm that solves planted search $k$-SUM (resp. $k$-XOR) at density in the range $\left({1}/{\polylog(r)},1\right]$ with probability $\Omega(1/\polylog(r))$ in time $T$ implies an algorithm that solves it at the same density with probability $(1-o(1))$ in time $\Otilde(T)$. Our procedure also works for $k$-SUM over general abelian groups, though in this case it only works for densities slightly less than $1$. We briefly describe our approach here, using the specific case of $k$-XOR for illustration; all the steps described below except the final reduction from density $1$ can be applied with any abelian group. 

For simplicity, we will start with the stronger assumption that there is an algorithm $\A$ that solves planted search $k$-XOR with probability $\Omega(1)$ and, further, is deterministic. Let $T_\A\subseteq(\F_2^m)^r$ be the set of $k$-XOR instances for which $\A$ correctly finds a solution; note that $T_\A$ consists of an $\Omega(1)$ fraction of \emph{planted} instances. Our approach, given an instance $X\in(\F_2^m)^r$ with a planted solution, is to find an $X'\in T_\A$ such that a solution for $X$ can be recovered from $\A(X')$. If there is an efficient procedure that finds such an $X'$ given $X$ and fails for at most a $o(1)$-fraction of $X$'s, then this would prove the required amplification. We do this using the following process.

\vspace{1em}
\noindent\underline{$\text{Walk}(X,t)$:}
\begin{itemize}
    \item[1.] Set $X^0\gets X$
    \item[2.] For $i$ from $1$ to $t$:
    \begin{itemize}
        \item[2.1.] Sample $j\gets[r]$
        \item[2.2.] Replace the $j^{\text{th}}$ element $X^{i-1}[j]$ with a random element $x\gets\F_2^m$ such that $x\neq X^{i-1}[j]$
        \item[2.3.] Set $X^i$ to be the resulting instance
    \end{itemize}
    \item[3.] Output $X^t$
\end{itemize}

Consider a graph where each vertex corresponds to an instance in $(\F_2^m)^r$, with an edge between two vertices iff they differ in exactly one column. This is a well-studied graph known as the Hamming graph, here defined over length-$r$ strings and alphabet of size $2^m$. The above process is a $t$-step random walk on this graph starting from the vertex corresponding to $X$. The expansion properties of the Hamming graph imply that random walks of length $\omega(r)$ mix quite well. In other words, with $t=\omega(r)$, for any sets $S$ and $T$ that each contain an $\Omega(1)$ fraction of instances, at least an $\Omega(1)$ fraction of $t$-step random walks that start from $S$ end in $T$. 

With $T = T_\A$, this is reminiscent of what we want -- by the above property, if $T_\A$ contains an $\Omega(1)$-fraction of instances, then the set of instances $X$ from which a constant fraction of walks \emph{do not} lead to $T_\A$ has to be of relative size $o(1)$. There are some issue here, though -- first, this random walk does not preserve solutions so it is not clear how to use it to solve $X$; and second, this graph mostly consists of \emph{non-planted} instances, and $T_\A$ does not actually contain a constant fraction of these. We deal with both of these by considering a conditioning of this random walk.

For simplicity, we restrict our attention to planted instances $X$ that have a unique solution. Denote by $\text{span}(X)$ the set of all instances $X'$ such that the solution in $X'$ appears at the same locations and consists of exactly the same elements as that in $X$. Now, conditioning on all the $X^i$'s being contained in $\text{span}(X)$, the process $\text{Walk}(X,t)$ is again a $t$-step random walk over a Hamming graph, this time defined over $\text{span}(X)$. This conditioned random walk does preserve the solution of $X$, as $X^t$ is now in $\text{span}(X)$. Further, for densities less than $1$, with high probability no additional solutions are introduced during this walk. 

Suppose the fraction of instances in $\text{span}(X)$ that are contained in $T_\A$ is at least $\Omega(1)$. Then the set of $X'\in \text{span}(X) = \text{span}(X')$ for which an $\Omega(1)$-fraction of conditioned $t$-step random walks starting from $X'$ \emph{do} end in $T_\A$ is of relative size at least $(1-o(1))$. For each such $X'$, the event we conditioned on happens with probability at least $(1-k/r)^t$. So for all but a $o(1)$ fraction of $X'\in \text{span}(X)$, the unconditioned $t$-step random walk starting from $X'$ ends in $T_\A$ with probability at least $\Omega\left((1-k/r)^t\right)$. We can set $t= r\cdot \log\log{r}$ so that it is large enough for the walk to mix, and also $(1-k/r)^t = 1/\polylog(r)$ is not too small so that success can then be amplified by repetition. 

It remains to show that the fraction of instances in $\text{span}(X)$ contained in $T_\A$ is at least $\Omega(1)$. We show that this property can be achieved for all but a $o(1)$-fraction of planted instances $X$ by obfuscating the planted solution. Our obfuscation works by sampling a random set of $k$ vectors $E$ that sum to $0$, and then adding a random element from this set to each column of the given instance $X$. With probability at least $1/k^k$, a distinct element from $E$ is added to each element of the solution planted in $X$, and thus the existence and location of the solution are preserved, while the set of vectors that form the solution is fully randomized. The columns of the modified $X$ are then randomly permuted. This ensures that the location of the solution is also randomized.

By repeating the above obfuscation process (and the entire reduction) $\O(k^k)$ times, we can ensure that a solution in $X$ is preserved in at least one of the iterations with high probability. This process hides most properties of the solution and ensures that for most instances $X$, the fraction of $\text{span}(X)$ that is solved correctly by the algorithm is the same, and hence is at least $\Omega(1)$. This entire argument works at every density $\leq 1-\frac{\log \log r}{\log r}$, and in fact works for $k$-SUM over any abelian group. For the cases of $k$-SUM over integers and vector $k$-SUM, we can further extend the result to density $1$ using a couple of other reductions. Please see \cref{sec:success amplification} for details.

\paragraph{Public Key Encryption.} 
Finally, we demonstrate an application of the planted search $k$-XOR problem to cryptography. We construct a Public-Key Encryption (PKE) scheme whose security is based on the hardness of the planted search $k$-XOR problem at low densities together with the hardness of the Learning Parity with Noise (LPN) problem. The hardness required from LPN here is weaker than what was previously known to imply PKE. At a high level, this is possible because the hardness of (decision) $k$-XOR serves as a computational analogue of the leftover hash lemma -- this allows us to set the LPN parameters to result in public keys that are only computationally close to random, rather than statistically close to random, allowing us to weaken the hardness needed from LPN. 

In our construction, we simply generate an instance of planted $k$-XOR and use the result as the public key, with the location of the planted solution being used as the secret key. Note that the secret key can be interpreted as a $k$-sparse vector. The security parameter is the number $r$ of vectors which must be generated. Such an instance can be interpreted as a matrix $X \in \F_2^{m \times r}$ where $m = {k \lg r}/{\Delta}$. We then encrypt a bit as follows. To encrypt zero, we sample a uniform random vector of length $r$. To encrypt one, we take a random linear combination of the rows of the public key, i.e. we sample a random vector $s \gets \F_2^{m}$ and output the ciphertext $s^\top X$. Our hope is that only a recipient who knows the location of planted vector can distinguish $s^\top X$ from a random vector. Unfortunately, this transformation preserves the kernel of $X$ which makes distinguishing between an encryption of zero and one easy. To circumvent this issue, we add i.i.d. noise to each entry of the ciphertext, i.e. we sample $e \gets \Ber_\eta^r$  where $\eta \in (0,1)$ is some {noise parameter}. Distinguishing such a noisy linear combination from a random vector is now hard by LPN, implying indistinguishability of ciphertexts. Decryption follows by using the location of the planted solution to annihilate the large term $s^\top X$ in the encryption of one. Sparsity of the secret key vector ensures that the added noise does not blow up too much.  

The reason the hardness of $k$-XOR helps weaken the assumption on LPN is as follows. Suppose we wish to work with LPN with some constant noise rate $\eta$. In order to be able to decrypt correctly in the above construction, we would need to plant a set of fewer than $k = (1/\eta)$ vectors in the public-key matrix that sum to $0$. Doing so might alter the distribution of the public matrix, whereas the hardness of LPN is only with respect to a public matrix that is uniformly random. If we want the distribution of the planted matrix to be close to uniform, then it needs to at least have enough rows so that sets of $k$ vectors that sum to $0$ occur naturally in the uniform distribution. This ends up requiring around $2^{m^{0.5}}$ rows, and so LPN had to be hard for algorithms running in this time. If decision $k$-XOR was hard, we would not need to rely on this statistical closeness to uniform, and the number of rows in the public matrix can be much smaller while keeping it computationally indistinguishable from uniform. This lets us weaken the hardness required from constant-noise LPN. Please see \cref{sec:pke} for details.

\subsection{Related Work}
\label{sec:related}

The worst-case complexity of the $k$-SUM problem has been studied extensively in the field of fine-grained complexity due to its reductions to a large number of other interesting problems~\cite[\dots]{3sum_comp_geom,3sum_comp_geom_2,3sum_comp_geom_3,3sum_subquadratic,3sum_dynamic_bounds,abboud_william_lower_bounds,3sum_problems_lower_bound,DSW18,Chan20}. We refer the reader to the survey by Williams~\cite{Wil18} for details. The complexity of the $k$-SUM problem in other computational models has also been studied, and it is known to have non-trivial decision trees~\cite{ksum_decision_tree,love_triangles}, non-deterministic algorithms~\cite{nseth}, and lower bounds in some of these models~\cite{Erickson95,AC05}. Questions regarding data structures for it have also been studied~\cite{KP19,GGHPV20,CL23}.

Some conditional bounds for worst-case $k$-SUM are known in certain settings. For super-constant $k$, an algorithm that runs in time $r^{o(k)}$ would contradict the Exponential Time Hypothesis (ETH)~\cite{PW10}. Additionally, an algorithm for $k$-SUM with numbers in the range $\set{0,\dots,M-1}$ that runs in time $M^{1-\Omega(1)}$ would contradict the Strong Exponential Time Hypothesis (SETH)~\cite{ABHS19}.

\paragraph{Average-Case $k$-SUM.} Average-case $k$-SUM and $k$-XOR in the dense regime have several applications in cryptanalysis and has been the subject of substantial work in the area, most involving better algorithms and applications~\cite{wagner,minder_sinclair,NS15,Nandi15,Dinur19,LS19,BDJ21}. 

More recently, different conditional lower bounds have been shown in this regime. Brakerski, Stephens-Davidowitz and Vaikuntanathan~\cite{BSV21} show that Wagner's algorithm is near-optimal for $k$-SUM at large densities as $k$ tends to infinity, using reductions from worst-case lattice problems. Dinur, Keller and Klein~\cite{dinur_keller_klein}, as discussed above, show lower bounds at densities in $(1,2)$ assuming the $k$-SUM conjecture at density $1$. Dalirrooyfard, Lincoln and Williams~\cite{DLW20} show the average-case hardness of counting solutions in a ``factored'' version of $k$-SUM assuming SETH. They also show search-to-decision reductions for the average-case Zero-$k$-Clique problem. 

The study of average-case fine-grained complexity in general has proliferated in the past few years~\cite{average_case_fine_grained_hardness,DLW20}. Of particular relevance here is line of work on worst-case to average-case reductions for counting $k$-cliques, which focuses on reducing from and to the same problem~\cite{GR18,BBB19}. The general paradigm of looking for small hidden solutions in random instances is common in problems studied in statistical inference, such as planted clique, Sparse PCA, etc.~\cite{Jerrum92,sparse_pca1,sparse_pca2,GZ19}. Worst-case versions of these problems have also been subjects of interest in fine-grained complexity~\cite{Wil18,GV21}. 

\paragraph{Hardness Amplification.} Approaches similar to ours for hardness amplification have been used to prove direct product theorems in the past~\cite{IJKW10}, but its use in amplifying the hardness of a fixed natural problem is new. In concurrent independent work, Hirahara and Shimizu~\cite{HS23} use a similar framework to show hardness amplification for the planted clique problem, triangle counting, matrix multiplication, and online matrix-vector multiplication. We briefly describe below the high-level similarities and differences in our approaches. 

Our approach to amplifying the hardness of planted search $k$-SUM/$k$-XOR is as follows. Given an instance, we perform a random walk over instances of the same size where each step consists of adding some noise to the instance and then randomizing it in a way that preserves solutions. We show that the graph defined over the instances by these steps has sufficient expansion properties for the random walk to mix well before the noise added destroys the initial solution. Then, for most instances as starting point, with a large enough probability, the random walk leads to an instance that still has the original solution and at which the weak average-case algorithm is correct.

Hirahara and Shimizu's approach, roughly, is to embed the given instance in a randomized instance of larger size -- note that this never destroys the original solution. They then show, in each of their reductions, that the bipartite graph that captures this random embedding has sufficient expansion properties that again, with most instances as starting point, with a large enough probability, taking a random edge on the bipartite graph leads to a larger instance at which the weak average-case algorithm is correct. This approach is closer to that of Impagliazzo, Jaiswal, Kabanets and Wigderson~\cite{IJKW10}, who also relied similarly on bipartite graphs with expansion properties.

\paragraph{Fine-Grained Cryptography.} The question of constructing cryptographic primitives with fine-grained security guarantees assuming fine-grained hardness conjectures has been studied alongside average-case fine-grained complexity~\cite{BRSV18,LLW19,BC22}. LaVigne, Lincoln and Williams ~\cite{LLW19} use an assumption about the hardness of decision $k$-SUM to construct a fine-grained One-Way Function. They also construct fine-grained Public-Key Encryption (with quadratic security) assuming the average-case hardness of the Zero-$k$-Clique problem. Juels and Peinado~\cite{cliques_owf} similarly constructed One-Way Functions from the conjectured hardness of planted clique for certain parameters. 

Structured problems where a hidden solution can be planted have also been used to construct cryptography in~\cite{ABW10, LLW19}. An immediately relevant illustration of this may be seen in the case of the subset sum problem, which is the unparametrized version of $k$-SUM where the size of the solution is not restricted. The average-case hardness of the planted subset sum problem at very low densities has been used to construct Public Key Encryption by Lyubashevsky, Palacio and Segev~\cite{LPS10}. We stress that our PKE construction based on LPN and $k$-XOR is not a simple modification of this construction. In fact, the appropriate adaptation of their construction to $k$-XOR would be insecure\footnote{Briefly, the construction by \cite{LPS10} relies on the hardness of subset sum (or possibly $k$-SUM) at a density where the number of bits in each element is roughly equal to the number of elements in an instance. At this very low density, $k$-XOR (unlike subset sum or $k$-SUM) can be easily solved using Gaussian elimination (see \cref{sec:algo}).}.

\subsection{Open Problems}

Our work raises multiple interesting questions, some of which we state below. 
\begin{enumerate}
  \item Are there algorithms for planted $k$-XOR at densities in $\left(1 / r^{0.5},1\right)$ that are better than the worst-case algorithms?
  \item Can our conditional lower bounds be improved? In particular, could similar bounds be shown for densities smaller than $1/2$?
  \item Similarly, can conditional lower bounds for search $k$-SUM be shown for densities larger than $2$?
  \item Is there a fine-grained reduction from worst-case $k$-SUM to average-case $k$-SUM at any density?
  \item Can our approach to hardness amplification be applied to other problems in fine-grained complexity?
  \item Can the hardness of $k$-SUM or $k$-XOR help to weaken assumptions made for other cryptographic constructions?
\end{enumerate}

\setcounter{theorem}{0}

\section{Preliminaries}
We denote by $\log x$ the base-2 logarithm of $x$. We denote by $[n] = \{1,2,\ldots,n\}$ the set containing the first $n$ positive integers. We use the notation $X \sample G$ to denote that $X$ is sampled uniformly from $G$ when $G$ is finite. We let $\one[\cdot]$ be the indicator variable for the validity of the statement in the brackets, with 1 denoting true and 0 denoting false. If $A,B$ are two sets, we denote by $A \Delta B = (A \cup B) \setminus (A \cap B)$ the symmetric difference between $A,B$.

We use standard notation for asymptotics, $\mathcal{O}(\cdot), o(\cdot), \Omega(\cdot), \omega(\cdot)$, and use a subscript $\mathcal{O}_k(\cdot)$ to hide factors that only depend on $k$. Similarly, we use the tilde $\Tilde{\mathcal{O}}(\cdot)$ to hide polylogaritmic factors in the main parameter (usually $r$). We say a function $f(\cdot)$ is \emph{negligible} if it grows slower than the inverse of any polynomial, i.e. if for any constant $c$, it holds that $f(x)=o(x^c)$. We denote by $\negl$ a generic negligible function. 

\paragraph{Probability Theory.} If $D$ is a probability distribution on a countable set $\Omega$, and $X \in \Omega$, we denote by $D(X)$ the probability mass of $D$ on $X$. We use the notation $X \sim D$ to denote that $X$ is sampled according to $D$. If $D, D'$ are two probability distributions, we denote by $SD(D,D')$ the total variation distance, defined as,
$$
    SD(D,D') = \frac12 \sum_{X \in \Omega} |D(X) - D(X')|.
$$ 
The total variation distance gives an upper bound on the advantage of any algorithm in distinguishing between the two probability distributions $D, D'$.

\begin{lemma}[Rényi Divergence, \cite{renyi}]\label{lemma:renyi}
    Let $P,Q$ be two probability distributions, with $\supp(P) \subseteq \supp(Q)$, and let $E \subseteq \supp(Q)$ be an event. Then,
    $$
        Q(E) \geq P(E)/R(P\lVert Q),
    $$
    where $R(P\lVert Q)$ is the Rényi divergence (of order $\infty$), defined as, 
    $$
        R(P \lVert Q) = \underset{x \in \supp(P)}{\max} \frac{P(x)}{Q(x)}.
    $$
\end{lemma}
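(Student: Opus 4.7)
The plan is to push the divergence bound inside the sum defining $P(E)$ and then rearrange. Concretely, I would first observe that since $\supp(P) \subseteq \supp(Q)$, the event $E \cap \supp(P)$ is contained in $\supp(Q)$, so the ratio $P(x)/Q(x)$ is well-defined for every $x$ that contributes to $P(E)$. This lets me write
\[
P(E) = \sum_{x \in E \cap \supp(P)} P(x) = \sum_{x \in E \cap \supp(P)} \frac{P(x)}{Q(x)} \cdot Q(x).
\]

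The key step is to bound each ratio $P(x)/Q(x)$ uniformly by $R(P \lVert Q) = \max_{y \in \supp(P)} P(y)/Q(y)$, pulling this constant out of the sum. This gives
\[
P(E) \leq R(P \lVert Q) \cdot \sum_{x \in E \cap \supp(P)} Q(x) \leq R(P \lVert Q) \cdot Q(E),
\]
where the last inequality uses $E \cap \supp(P) \subseteq E$ together with non-negativity of $Q$. Dividing through by $R(P \lVert Q)$ (which is positive, since $P$ is a probability distribution and hence some ratio is at least $1$) yields the claimed bound.

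The argument is entirely routine once one unpacks the definition; there is no genuine obstacle. The only small subtlety worth noting in the writeup is the support condition — one must justify that restricting the sum from $E$ to $E \cap \supp(P)$ does not change $P(E)$ (immediate from the definition of support) and that the remaining points lie in $\supp(Q)$ so the ratios are well-defined. I would keep the proof to three or four lines of display math in the paper.
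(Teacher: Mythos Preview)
Your proof is correct and is the standard argument for this elementary inequality. Note, however, that the paper does not actually prove this lemma: it is stated as a cited preliminary result (from \cite{renyi}) without proof, so there is no ``paper's own proof'' to compare against. Your three-line derivation is exactly what one would expect if a proof were to be included.
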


\noindent The Rényi divergence between two distributions can be used to obtain multiplicative bounds on the success probabilities of average-case algorithms whose inputs are sampled from those distributions.

We denote by $\Ber_\eta$ the Bernoulli distribution on support $\F_2$ with parameter $\eta$. In a similar vein, we let $\Ber_\eta^r$ be the distribution of $r$ i.i.d. Bernoulli distributions with support $\F_2^r$ where $X \sim \Ber_\eta^r$ means that $X_i \sim \Ber_\eta$ and that $X_i$ and $X_j$ are independent for $i\neq j$, and likewise for $\Ber_\eta^{m\times r}$ with support $\F_2^{m \times r}$.

\paragraph{Concentration Bounds.}
We will make use of a variety of concentration bounds that we include here for the purpose of self-containment. Markov's inequality gives concentration of a non-negative random variable in terms of its first moment.
\begin{lemma}[Markov's Inequality, \cite{real_analysis}]\label{lemma:markov}
    Let $X$ be a non-negative random variable. Then for every $\varepsilon>0$, 
    $$
        \Pr[X > \varepsilon\,\E[X]] < \frac{1}{\varepsilon}.
    $$
\end{lemma}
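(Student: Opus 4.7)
The plan is to prove the inequality via the standard indicator argument, being careful to track the strict inequality claimed in the statement. First I would dispense with the trivial case $\E[X] = 0$: since $X \geq 0$ and has zero expectation, $X = 0$ almost surely, hence $\Pr[X > \varepsilon \E[X]] = \Pr[X > 0] = 0 < 1/\varepsilon$.

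For the main case $\E[X] > 0$, set $a = \varepsilon\,\E[X] > 0$ and decompose the expectation by splitting on whether $X$ exceeds $a$:
$$\E[X] \;=\; \E\bigl[X \cdot \one[X > a]\bigr] + \E\bigl[X \cdot \one[X \leq a]\bigr].$$
Both summands are non-negative since $X \geq 0$. On the event $\{X > a\}$, the inequality $X > a$ holds pointwise, so $X \cdot \one[X > a] \geq a \cdot \one[X > a]$, which on taking expectations yields $\E[X \cdot \one[X > a]] \geq a \cdot \Pr[X > a]$. Combining, $\E[X] \geq a \cdot \Pr[X > a]$, and dividing by $a$ gives the familiar non-strict bound $\Pr[X > a] \leq 1/\varepsilon$.

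To promote this to the strict inequality of the statement, I would split one more time. If $\Pr[X > a] = 0$ then trivially $0 < 1/\varepsilon$. Otherwise the event $\{X > a\}$ has positive probability, and on it $X > a$ strictly, so $\E[X \cdot \one[X > a]] > a \cdot \Pr[X > a]$; combined with the non-negativity of $\E[X \cdot \one[X \leq a]]$ this yields the strict inequality $\E[X] > a \cdot \Pr[X > a]$, i.e., $\Pr[X > a] < 1/\varepsilon$.

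There is no genuine obstacle in this argument; the only point that warrants a moment of care is justifying the \emph{strict} inequality, which follows from the observation that the bound $X \cdot \one[X > a] \geq a \cdot \one[X > a]$ is pointwise strict on the event in question, and this event is assumed to have positive measure in the non-trivial case.
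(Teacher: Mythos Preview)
The paper does not supply a proof of this lemma; it is stated as a standard preliminary with a citation to \cite{real_analysis}. Your argument is correct and complete, including the careful case analysis that upgrades the usual non-strict Markov bound to the strict inequality actually claimed in the statement.
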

\noindent Chebyshev's inequality bounds it in terms of its second moment.
\begin{lemma}[Chebyshev's Inequality, \cite{chebyshev}]\label{lemma:chebyshev}
    Let $X$ be a random variable with finite variance. Then for every $\varepsilon>0$, 
    $$
        \Pr[|X - \E[X]| > \varepsilon\,\Std[X]] < \frac{1}{\varepsilon^2}.
    $$
    where $\Std[X]=\sqrt{\Var[X]}$ is the standard deviation of $X$.
\end{lemma}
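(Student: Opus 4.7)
The plan is to derive Chebyshev's inequality by a direct reduction to Markov's inequality (\cref{lemma:markov}), which has just been established and applies to any non-negative random variable. The key observation is that while $X - \E[X]$ can be negative, the squared deviation $Y := (X - \E[X])^2$ is non-negative and has expectation exactly $\Var[X] = \Std[X]^2$, so Markov applied to $Y$ will produce a bound in terms of the variance.

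Concretely, I would proceed as follows. First, note that the event $|X - \E[X]| > \varepsilon \Std[X]$ is identical to the event $Y > \varepsilon^2 \Std[X]^2 = \varepsilon^2 \E[Y]$, because squaring is monotone on non-negative reals and both sides of the original inequality are non-negative. Second, apply \cref{lemma:markov} to the non-negative random variable $Y$ with parameter $\varepsilon^2$, which yields
\[
    \Pr[Y > \varepsilon^2 \E[Y]] < \frac{1}{\varepsilon^2}.
\]
Combining the two gives the claimed bound $\Pr[|X - \E[X]| > \varepsilon \Std[X]] < 1/\varepsilon^2$.

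There is essentially no obstacle here: the only subtlety is ensuring that $\Var[X]$ is finite (which is given by hypothesis, so $\E[Y]$ exists and the Markov step is valid) and handling the degenerate case $\Std[X] = 0$, where $X$ is almost surely constant and the probability on the left-hand side is $0$, trivially satisfying the inequality. Thus the whole proof is a two-line application of the preceding lemma.
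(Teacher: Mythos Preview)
Your proof is correct and is the standard textbook derivation of Chebyshev from Markov. Note, however, that the paper does not actually prove this lemma: it is stated as a cited preliminary fact (with reference \cite{chebyshev}) and used without proof, so there is no ``paper's own proof'' to compare against.
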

\noindent The Paley-Zygmund inequality gives an anti-concentration bound it in terms of its first two moments.
\begin{lemma}[Paley-Zygmund Inequality, \cite{paley_zygmund_1932}]\label{lemma:paley_zygmund}
    Let $X$ be a non-zero random variable with finite variance. Then for every $\varepsilon\in[0,1]$,
    $$
        \Pr[X > \varepsilon\,\E[X]] \geq (1-\varepsilon)^2\, \frac{\E[X]^2}{\E[X^2]}.
    $$
    A slightly stronger (and rewritten) version of the inequality is as follows.
    $$
        \Pr[X > \varepsilon\,\E[X]] \geq \frac{(1-\varepsilon)^2\, \E[X]^2}{\Var[X] + (1-\varepsilon)^2 \,\E[X]^2}
    $$
\end{lemma}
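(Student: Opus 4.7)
The plan is to establish the stronger (second) form of the inequality by a Cauchy--Schwarz argument, and then derive the first form as an immediate corollary. The starting point is the identity
\begin{equation*}
  \E[X - \varepsilon\,\E[X]] = (1-\varepsilon)\,\E[X],
\end{equation*}
which we split according to whether $X$ exceeds the threshold $\varepsilon\,\E[X]$ or not. On the event $\{X \leq \varepsilon\,\E[X]\}$ the integrand $(X - \varepsilon\,\E[X])$ is non-positive, so dropping that part gives
\begin{equation*}
  \E\!\left[(X - \varepsilon\,\E[X])\cdot \one[X > \varepsilon\,\E[X]]\right] \;\geq\; (1-\varepsilon)\,\E[X].
\end{equation*}

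Next I would apply Cauchy--Schwarz to the left-hand side, viewing it as an inner product of $(X - \varepsilon\,\E[X])$ against the indicator $\one[X > \varepsilon\,\E[X]]$. Using $\one[\cdot]^2 = \one[\cdot]$, this yields
\begin{equation*}
  (1-\varepsilon)^2\,\E[X]^2 \;\leq\; \E\!\left[(X - \varepsilon\,\E[X])^2\right] \cdot \Pr[X > \varepsilon\,\E[X]].
\end{equation*}
A short algebraic computation (expanding the square and using $\E[X^2] = \Var[X] + \E[X]^2$) shows that $\E[(X - \varepsilon\,\E[X])^2] = \Var[X] + (1-\varepsilon)^2\,\E[X]^2$. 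Rearranging gives the stronger form of the bound directly.

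For the first (weaker) inequality, I would simply observe that since $\varepsilon \in [0,1]$ we have $(1-\varepsilon)^2 \leq 1$, and hence $\Var[X] + (1-\varepsilon)^2\,\E[X]^2 \leq \Var[X] + \E[X]^2 = \E[X^2]$; replacing the denominator in the stronger form by this upper bound loses nothing and produces exactly $(1-\varepsilon)^2\,\E[X]^2/\E[X^2]$. There is no real obstacle here — the only mildly delicate step is the sign argument for discarding the contribution from $\{X \leq \varepsilon\,\E[X]\}$, which silently uses that $X$ may be negative in general (the conclusion is still valid; one only needs the threshold $\varepsilon\,\E[X]$ to be well-defined, which requires $\E[X]$ to exist, guaranteed by the finite-variance assumption).
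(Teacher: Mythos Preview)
The paper does not prove this lemma; it appears in the preliminaries as a cited result with no proof. Your Cauchy--Schwarz argument is the standard one and is correct.

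One caveat on your closing remark, however: squaring the inequality
\[
  \E\!\left[(X-\varepsilon\E[X])\,\one[X>\varepsilon\E[X]]\right] \;\geq\; (1-\varepsilon)\E[X]
\]
to obtain $(1-\varepsilon)^2\E[X]^2$ as a lower bound requires the right-hand side to be non-negative, i.e., $\E[X]\geq 0$. Without any non-negativity hypothesis the stated inequality is simply false: take $X\equiv -1$ and $\varepsilon=0$, so that $\Pr[X>0]=0$ while the claimed lower bound is $\E[X]^2/\E[X^2]=1$. The paper's phrase ``non-zero random variable'' is almost certainly a slip for ``non-negative random variable'', which is the standard Paley--Zygmund hypothesis and is satisfied in the paper's sole application of the lemma (to the non-negative solution count $c(X)$ in the proof of \cref{thm:stat_indist}). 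With that assumption in place, your proof goes through exactly as written.
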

\noindent The Chernoff bounds gives strong concentration for the mean of $n$ i.i.d. 0-1 random variables.
\begin{lemma}[Chernoff Bound, \cite{chernoff}]\label{lemma:chernoff}
    Let $X_1,X_2, \ldots, X_n$ be i.i.d random variables on $\{0,1\}$, and let $X = \sum_{i=1}^n X_i$. Then for every $\varepsilon>0$,
    \begin{align*}
        &\Pr[X > (1+\varepsilon)\,\E[X]] < \mathsf{exp}\left({-\frac{\varepsilon^2\,\E[X]}2}\right),\\
        \mathrm{and,} \quad &\Pr[X < (1-\varepsilon)\, \E[X]] < \mathsf{exp}\left({-\frac{\varepsilon^2\,\E[X]}2}\right).
    \end{align*}
    Similarly, the following also holds,
    \begin{align*}
        &\Pr\left[\frac{1}n X > \frac1n\E[X] + \varepsilon\right] < \mathsf{exp}\left(-2\varepsilon^2\,n\right),\\
        \mathrm{and,} \quad &\Pr\left[\frac{1}n X < \frac1n\E[X] - \varepsilon\right] < \mathsf{exp}\left(-2\varepsilon^2\,n\right).
    \end{align*}
\end{lemma}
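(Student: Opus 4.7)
The plan is to prove the Chernoff bound via the standard exponential moment method (``Chernoff's trick''). First, for any $t > 0$, I would apply Markov's inequality (\cref{lemma:markov}) to the non-negative random variable $e^{tX}$, obtaining
$$\Pr[X \geq a] \;=\; \Pr[e^{tX} \geq e^{ta}] \;\leq\; \frac{\E[e^{tX}]}{e^{ta}}.$$
Then I would use independence of the $X_i$ to factor $\E[e^{tX}] = \prod_{i=1}^n \E[e^{tX_i}]$, and for each Bernoulli $X_i$ with mean $p_i$ I would bound the individual moment generating function by $\E[e^{tX_i}] = 1 - p_i + p_i e^t \leq \exp\bigl(p_i(e^t - 1)\bigr)$, using the standard inequality $1 + z \leq e^z$. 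Multiplying these factors gives $\E[e^{tX}] \leq \exp\bigl(\mu(e^t - 1)\bigr)$, where $\mu = \E[X]$.

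Next, for the upper multiplicative tail I would set $a = (1+\varepsilon)\mu$ and choose the optimizing value $t = \ln(1+\varepsilon)$, which yields the classical bound
$$\Pr[X \geq (1+\varepsilon)\mu] \;\leq\; \left(\frac{e^{\varepsilon}}{(1+\varepsilon)^{1+\varepsilon}}\right)^{\!\mu}.$$
Converting this to the stated $\exp(-\varepsilon^2 \mu / 2)$ form is then a calculus exercise: one verifies the elementary inequality $(1+\varepsilon)\ln(1+\varepsilon) - \varepsilon \geq \varepsilon^2/2$ in the appropriate range of $\varepsilon$ by comparing derivatives. For the lower multiplicative tail, a symmetric argument with $t < 0$ (equivalently, applying the same method to the variables $1 - X_i$) yields $\Pr[X \leq (1-\varepsilon)\mu] \leq \bigl(e^{-\varepsilon}/(1-\varepsilon)^{1-\varepsilon}\bigr)^{\mu}$, and the same kind of elementary inequality $-\ln(1-\varepsilon) - \varepsilon \geq \varepsilon^2/2$ reduces this to $\exp(-\varepsilon^2 \mu / 2)$.

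Finally, for the additive (Hoeffding-type) bounds on $\tfrac{1}{n}X$, I would reuse the exponential moment framework, but replace the Bernoulli-specific bound on $\E[e^{tX_i}]$ with Hoeffding's lemma for $[0,1]$-bounded random variables: $\E[e^{t(X_i - \E[X_i])}] \leq e^{t^2/8}$. Summing over the $n$ independent coordinates, applying Markov, and optimizing the free parameter $t$ at $t = 4\varepsilon n$ produces the bound $\exp(-2\varepsilon^2 n)$ on both the upper and lower additive tails. The main (and only real) obstacle is the handful of elementary analytic inequalities used to pass from the tight Chernoff expressions to the clean $\varepsilon^2 \mu / 2$ and $2\varepsilon^2 n$ forms; the rest is completely mechanical and can be cited from a standard reference.
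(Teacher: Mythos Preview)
The paper does not prove this lemma at all: it is stated in the preliminaries with a citation to \cite{chernoff} and no accompanying proof, as is standard for such well-known concentration inequalities. Your proposal is the textbook exponential-moment argument and is entirely appropriate if a proof were required, but there is nothing in the paper to compare it against.
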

\noindent Finally, the Hoeffding also bounds the probability with which a sum exceeds a certain threshold.
\begin{lemma}[Hoeffding's Inequality, \cite{hoeffding}]\label{lemma:hoeffding}
    Let $X_1,X_2,\ldots,X_n$ be independent random variables on $\{0,1\}$, and let $X = \sum_{i=1}^n X$. Then for every $\varepsilon>0$,
    $$
        \Pr\left[X > \E[X] + \varepsilon\right] < \mathsf{exp}\left(-\frac{2 \varepsilon^2}{n}\right).
    $$
\end{lemma}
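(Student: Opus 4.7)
The plan is to use the standard exponential-moment (Chernoff) method. First, for any parameter $t > 0$, I would rewrite the tail event as a tail event for the exponentiated, centered sum:
\[ \Pr\left[X - \E[X] > \varepsilon\right] = \Pr\left[e^{t(X - \E[X])} > e^{t\varepsilon}\right] \leq e^{-t\varepsilon}\,\E\!\left[e^{t(X - \E[X])}\right], \]
where the inequality is Markov's inequality (\cref{lemma:markov}) applied to the non-negative random variable $e^{t(X - \E[X])}$. Because the $X_i$ are independent, the moment generating function factorizes:
\[ \E\!\left[e^{t(X - \E[X])}\right] = \prod_{i=1}^{n} \E\!\left[e^{t(X_i - \E[X_i])}\right]. \]

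The key technical step is bounding each factor. I would prove the following auxiliary inequality (Hoeffding's lemma): for any random variable $Y$ supported in an interval $[a,b]$ with $\E[Y] = 0$, one has $\E[e^{tY}] \leq \exp(t^2(b-a)^2/8)$. Since $X_i - \E[X_i]$ is centered and takes values in an interval of length $1$ (as $X_i \in \{0,1\}$), this yields $\E[e^{t(X_i - \E[X_i])}] \leq e^{t^2/8}$, and hence the product is bounded by $e^{nt^2/8}$. Combining with the Markov step gives
\[ \Pr\!\left[X - \E[X] > \varepsilon\right] \leq \exp\!\left(-t\varepsilon + \frac{nt^2}{8}\right). \]
Optimizing over $t > 0$ (the minimum is achieved at $t = 4\varepsilon/n$) produces the claimed bound $\exp(-2\varepsilon^2/n)$.

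The main obstacle is the auxiliary lemma on the moment generating function, which is not purely algebraic. To prove it, I would use convexity: since $e^{ty}$ is a convex function of $y$, for $y \in [a,b]$ one can write $e^{ty} \leq \frac{b - y}{b - a}\,e^{ta} + \frac{y - a}{b - a}\,e^{tb}$. Taking expectations and using $\E[Y] = 0$ reduces the problem to showing that the function $\varphi(s) := \log\!\big(\tfrac{b}{b-a}e^{sa} + \tfrac{-a}{b-a}e^{sb}\big) - \tfrac{s^2(b-a)^2}{8}$ is non-positive for $s \geq 0$. This follows by verifying $\varphi(0) = 0$, $\varphi'(0) = 0$, and $\varphi''(s) \leq 0$ for all $s$ (the second derivative is the variance of a Bernoulli-type auxiliary random variable, which is bounded by $(b-a)^2/4$). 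Once this convexity computation is in place, everything else is a routine optimization.
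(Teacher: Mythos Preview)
Your proof is correct and follows the standard Chernoff--Hoeffding exponential-moment argument. The paper does not actually prove this lemma; it is stated in the preliminaries as a classical concentration bound with a citation to Hoeffding's original paper, so there is no proof in the paper to compare against.
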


\paragraph{Spectral Graph Theory.}

We will analyze our construction for the hardness amplification by representing it as a graph and obtain bounds on its edge expansion to argue correctness (see \cref{sec:success amplification}. Formally, an undirected graph $G=(V,E)$ consists of a set of $n$ vertices $V$, with $|V|=n$, and a set of $m$ edges $E \subseteq V \times V$, such that $(u,v) \in E$ iff $(v,u) \in E$. Let $n$ denote the number of nodes, and $m$ the number of edges. If $S,T \subseteq V$, we denote by $E(S,T)$ the set of edges connecting $S$ and $T$, i.e. $(u,v) \in E(S,T)$ iff $u \in S, v \in T$ and $(u,v) \in E$. The degree of a node is the number of edges that includes it. A graph is said to be $d$-regular if all nodes have degree $d$. The graph may also be represented using its adjacency matrix $A \in \F_2^{n \times n}$. Fix any ordering of the vertices and let $(i,j)$ denote the edge between the $i^{th}$ and the $j^{th}$ node. With slight overload of notation, we let $G$ refer also to the $n \times n$ matrix defined as $G_{ij}=\mathbbm{1}[(i,j) \in E]$. A \emph{multigraph} is a graph that is allowed to have multiple edges between the same nodes. We may represent such graphs using matrices of the form $A \in \Nat^{n \times n}$, where the value $A_{ij}$ represents the number of edges from $i$ to $j$. A graph remains a special case of a multigraph where $A_{ij} \in \{0,1\}$ for every $i,j \in [n]$ \cite{bondy1982graph}.

Let $G$ be a multigraph with adjacency matrix $A$. We associate to $G$ the eigenvalues of $A$. Now, let $\lambda_1\geq\lambda_2\geq\cdots\geq\lambda_n$ be the eigenvalues of $G$. We then define the \emph{algebraic expansion} as $\lambda(G) = \max_{i=2\ldots n} |\lambda_i| = \max(|\lambda_2|, |\lambda_n|)$. The algebraic expansion measures the extent to which nodes are connected, with smaller values of $\lambda(G)$ meaning a graph that is more connected. In particular, the following lemma allows us to lower bound the number of edges between any two sets of vertices in terms of $\lambda(G)$.
\begin{lemma}[Expander Mixing Lemma, \cite{expander_mixing_lemma}]\label{lemma:expander_mixing}
    Let $G=(V,E)$ be a $d$-regular graph, and let $S,T \subseteq V$. Then,
    $$
        \left|E(S,T)-\frac{d\cdot|S|\cdot|T|}{|V|}\right| \leq \lambda(G)\sqrt{|S|\cdot|T|}.
    $$
    where $\lambda(G)$ is the algebraic expansion of the graph.
\end{lemma}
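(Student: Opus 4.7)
The plan is to prove this via the spectral decomposition of the adjacency matrix $A$ of $G$. The starting observation is that $E(S,T)$ can be written as the bilinear form $\vec{1}_S^\top A \vec{1}_T$, where $\vec{1}_S, \vec{1}_T \in \{0,1\}^n$ are the characteristic vectors of $S$ and $T$ respectively. (If $(S,T)$ share vertices, edges inside $S \cap T$ get counted once for each orientation, which matches the convention that $E(S,T) = \{(u,v) \in E : u \in S, v \in T\}$ used in the statement.) The goal is then to split this bilinear form into a ``main term'' $d|S||T|/|V|$ plus a controllable error.

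Since $G$ is $d$-regular, the all-ones vector $\vec{1}$ is an eigenvector of $A$ with eigenvalue $\lambda_1 = d$. Because $A$ is real symmetric, I can pick an orthonormal eigenbasis $v_1, v_2, \ldots, v_n$ with $v_1 = \vec{1}/\sqrt{n}$ and corresponding eigenvalues $\lambda_1 \geq \lambda_2 \geq \cdots \geq \lambda_n$. Expand
\[
\vec{1}_S = \sum_{i=1}^n \alpha_i v_i, \qquad \vec{1}_T = \sum_{i=1}^n \beta_i v_i,
\]
and compute the inner products of the characteristic vectors with $v_1$: this gives $\alpha_1 = |S|/\sqrt{n}$ and $\beta_1 = |T|/\sqrt{n}$. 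Moreover, Parseval's identity yields $\sum_i \alpha_i^2 = \|\vec{1}_S\|^2 = |S|$ and similarly $\sum_i \beta_i^2 = |T|$.

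The key computation is then
\[
E(S,T) = \vec{1}_S^\top A \vec{1}_T = \sum_{i=1}^n \lambda_i \alpha_i \beta_i = \lambda_1 \alpha_1 \beta_1 + \sum_{i=2}^n \lambda_i \alpha_i \beta_i = \frac{d\,|S|\,|T|}{|V|} + \sum_{i=2}^n \lambda_i \alpha_i \beta_i.
\]
It remains to bound the tail sum. Using $|\lambda_i| \leq \lambda(G)$ for $i \geq 2$, followed by the Cauchy-Schwarz inequality, I get
\[
\left|\sum_{i=2}^n \lambda_i \alpha_i \beta_i\right| \leq \lambda(G) \sum_{i=2}^n |\alpha_i \beta_i| \leq \lambda(G) \sqrt{\sum_{i=2}^n \alpha_i^2}\,\sqrt{\sum_{i=2}^n \beta_i^2} \leq \lambda(G)\sqrt{|S|\,|T|},
\]
which gives the claimed bound after rearrangement.

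There is no serious obstacle here — the proof is essentially a one-line spectral calculation once the characteristic-vector reformulation is in place. The only minor subtlety is bookkeeping: making sure the convention that $E(S,T)$ counts edges (rather than ordered pairs, or doubles intersection edges) agrees with the value of the bilinear form $\vec{1}_S^\top A \vec{1}_T$, and extending the argument from simple graphs to multigraphs (which the paper uses elsewhere) by letting $A_{ij} \in \mathbb{N}$ count edges with multiplicity; the spectral theorem and the calculation above go through verbatim since $A$ remains symmetric with $\vec{1}$ as an eigenvector of eigenvalue $d$.
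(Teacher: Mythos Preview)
Your proof is correct and is the standard spectral argument for the Expander Mixing Lemma. Note, however, that the paper does not actually prove this statement: it is listed in the preliminaries with a citation to \cite{expander_mixing_lemma} and used as a black box, so there is no ``paper's own proof'' to compare against.
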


\begin{definition}[Hamming Graphs]\label{def:hamming graph} Fix a set $Q$ with $|Q|=q$. The \emph{Hamming graph} $H(d,q)$ is defined as the graph $(V,E)$ whose vertex set $V=Q^d = Q \times Q \times \cdots \times Q$ is the Cartesian product of $Q$ with itself $d$ times, where $(u,v) \in E$ if $u$ and $v$ differ in precisely one coordinate, i.e. if there is an index $j \in [d]$ such that $u_i = v_i$ if and only if $i=j$. The graph $H(d,q)$ is a regular graph of diameter $d$, whose eigenvalues can be characterized as follows.
\end{definition}
\begin{lemma}[Hamming Graph Eigenvalues, \cite{distance_regular_graphs}]\label{lemma:hamming}
    The $i^\th$ eigenvalue of the adjacency matrix of $H(d,q)$ satisfies,
    $$
        \lambda_i(H(d,q)) = \left(q\,(d-i) - d\right)^{\binom{d}{i} (q-1)^i}.
    $$
\end{lemma}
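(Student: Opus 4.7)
The plan is to recognise $H(d,q)$ as a graph product and reduce the eigenvalue computation to that of the complete graph $K_q$. Concretely, two tuples in $Q^d$ are adjacent in $H(d,q)$ iff they agree in $d-1$ coordinates and differ in exactly one; this is precisely the defining property of the Cartesian product $K_q \,\Box\, K_q \,\Box\, \cdots \,\Box\, K_q$ ($d$ factors). So the first step is to verify this identification, which is immediate from the definitions once one fixes an ordering of the coordinates.

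Next, I would invoke (or quickly prove) the standard fact from algebraic graph theory that if $G_1, G_2$ are graphs with eigenvalues $\{\mu_i\}$ and $\{\nu_j\}$ (with corresponding orthonormal eigenvectors $u_i$ and $v_j$), then the Cartesian product $G_1 \,\Box\, G_2$ has eigenvalues $\mu_i + \nu_j$ with eigenvectors $u_i \otimes v_j$. This is a one-line calculation using the fact that the adjacency matrix of $G_1 \,\Box\, G_2$ equals $A_{G_1} \otimes I + I \otimes A_{G_2}$, and iterating $d-1$ times gives that the eigenvalues of $H(d,q)$ are all $d$-fold sums of eigenvalues of $K_q$, with multiplicity equal to the number of ways to realise the sum times the product of multiplicities in each factor.

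The remaining ingredient is the spectrum of $K_q$ itself: since $A_{K_q} = J - I$ where $J$ is the all-ones matrix, the eigenvalues are $q-1$ (once, with eigenvector the all-ones vector) and $-1$ (with multiplicity $q-1$, on the orthogonal complement of the all-ones vector). Choosing $i$ of the $d$ factors to contribute the eigenvalue $-1$ (and the remaining $d-i$ factors to contribute $q-1$) yields the sum
\[
(d-i)(q-1) + i\cdot(-1) \;=\; d(q-1) - iq \;=\; q(d-i) - d,
\]
matching the base in the statement. The multiplicity counts as $\binom{d}{i}$ (choices of which factors contribute $-1$) times $(q-1)^i \cdot 1^{d-i}$ (product of multiplicities in each factor), giving exactly $\binom{d}{i}(q-1)^i$, which is the exponent appearing in the statement's notation.

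There is no real obstacle: the argument is a textbook application of the tensor structure of the Cartesian product and the trivial spectrum of $K_q$. The only point that warrants care is a sanity check of the counting, namely verifying that $\sum_{i=0}^{d} \binom{d}{i}(q-1)^i = q^d$, which equals the number of vertices and hence the total multiplicity of eigenvalues, confirming that we have accounted for the full spectrum and that distinct values of $i$ yielding the same numerical eigenvalue (which does not happen here since $q(d-i)-d$ is strictly decreasing in $i$) do not cause any undercount.
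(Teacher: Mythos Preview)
Your argument is correct and is the standard textbook proof of the Hamming graph spectrum via the Cartesian product decomposition $H(d,q)\cong K_q^{\Box d}$. The paper does not actually prove this lemma; it is stated as a citation to \cite{distance_regular_graphs} and used as a black box, so there is no ``paper's own proof'' to compare against. Your reading of the exponent $\binom{d}{i}(q-1)^i$ as the multiplicity of the eigenvalue $q(d-i)-d$ is the intended one, and your sanity check $\sum_i \binom{d}{i}(q-1)^i = q^d$ confirms the full spectrum is accounted for.
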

\section{The \texorpdfstring{$k$}{}-SUM Problem}
\label{sec:planted_k_sum}

We now formally define the average-case $k$-SUM problem over general abelian groups, and present existing hardness conjectures for certain interesting groups. We start with a framework for discussing the general groups in this setting, and some descriptive quantities we will use for them.

\paragraph{Group Ensembles.} We fix some underlying countably infinite sequence of finite abelian groups, 
$$
    \G=\{G^{(r)}\}_{r\in \mathbb{N}^+},
$$
that we refer to as a \emph{group ensemble}. Informally, an instance of ``size'' $r$ of the $k$-SUM problem over $\G$ will consist of $r$ elements chosen uniformly at random from the group $G^{(r)}$. With slight abuse of notation, we denote the group operation in all of these groups by $+$ (thus removing its dependency on $r$), and trust that it is clear from the context to which group it belongs. For brevity, we may simply refer to the ensemble as $\left\{G^{(r)}\right\}$ (omitting the subscript).

\begin{definition}[Density]
  For any $k\in\Nat$ and group ensemble $\G=\left\{G^{(r)}\right\}$, we define the \emph{$k$-SUM density} of the $r^\th$ group as,
  \begin{equation}
    \Delta_k^{(r)}(\G) = \frac{k\log{r}}{\log |G^{(r)}|}.
  \end{equation}
  We then define the \emph{$k$-SUM density} of the ensemble $\G$ as the limit of $\Delta^{(r)}$ as $r$ tends to infinity, i.e.,
  \begin{equation}
    \Delta_k(\G) = \lim_{r \rightarrow \infty} \Delta_k^{(r)}(\G).
  \end{equation}
   When $k$ is clear from the context, we will simply refer to the above quantity as the \emph{density of $\G$}, and denote it by $\Delta(\G)$ or even $\Delta$.
\end{definition}

\begin{remark}
    \label{rem:density-approx}
    A more natural definition for density, as described in \cref{sec:intro}, is $\Delta(G) = \log{\binom{r}{k}}/{\log\size{G}}$, which corresponds more closely to the expected number of solutions. The above definition, however, is much more convenient to use in analysis, and is still a good approximation to this quantity -- the difference between them is roughly $\mathcal{O}\left(\frac{\Delta \log{k}}{\log{r}}\right)$. So we use this instead, as Dinur, Keller and Klein~\cite{dinur_keller_klein} also implicitly do.
\end{remark}

We only work with group ensembles that have well-defined density, though many of our techniques can be applied to specific groups (rather than all groups in an ensemble) if needed. We will also need the groups in the ensembles to be efficiently sampleable and have group operations that can be efficiently performed. This is both so that hard problems defined over them can be used, and because our reductions sometimes need to sample random group elements.\footnote{Note that $\log\size{G^{(r)}}$ is the number of bits required to represent elements of $G^{(r)}$, and we ask that random elements be sampleable in time quasilinear in this. This asks for a \emph{uniform} algorithm that samples elements for any $G^{(r)}$. All the theorems stated in the paper are for uniform algorithms. All of our reductions are uniform except where they use this group sampler and compute group operations. So if the group ensembles in consideration only have non-uniform samplers and non-uniform algorithms for group operations, the non-uniform versions of our theorems are still true for them.} 

\begin{definition}[Admissibility]
  \label{def:admissible}
  For $k\in\Nat$, a group ensemble $\G = \set{G^{(r)}}$ is \emph{admissible for $k$-SUM} if it satisfies the following properties:
  \begin{itemize}
    \item \emph{Efficient sampling:} There exists an algorithm that, on input $r\in\Nat$, samples a uniformly random group element from $G^{(r)}$ and runs in time $\mathcal{O}\left(\log\size{G^{(r)}} \polylog\log\size{G^{(r)}}\right)$.
    \item \emph{Efficient operations:} There exists an algorithm that, on input $r\in\Nat$ and group elements $g,h\in G^{(r)}$, outputs the result of the corresponding group operation on $g$ and $h$, and runs in time $\mathcal{O}\left(\polylog\size{G^{(r)}}\right)$.
    \item \emph{Convergent density:} $\Delta_k(\G)$ exists and is finite. Further, we have:
    \begin{equation} \label{eq:delta relates to size of G-admissibility}
    \frac{\abs{\Delta_k^{(r)}(\G) - \Delta_k(\G)}}{\Delta_k(\G)} \leq \frac{1}{\log\size{G^{(r)}}}.
  \end{equation}
  \end{itemize}
\end{definition}

If the density $\Delta_k^{(r)}(\G)$ were to be equal to the limit $\Delta_k(\G) = \Delta$, then the size of the group $G^{(r)}$ would have to be equal to $r^{k/\Delta}$. As these quantities are discrete, this exact equality cannot be achieved for arbitrary values of $\Delta$.  Instead, we have the above convergence condition, which ensures that the size of $G^{(r)}$ is always within a factor of $2$ of its ideal value $r^{k/\Delta}$.

All our statements are to be taken to be made only for ensembles that are admissible for $k$-SUM for $k$ that will be clear from the context, and we leave out this specification in the rest of the paper. For most of the paper, we will also ignore the convergence error, and assume that $\Delta_k^{(r)}(\G) = \Delta_k(\G)$ in our analysis. This error is only of size $(\Delta_k(\G)/\log{\size{G^{(r)}}})$. As we almost always work with small values of density, this will not affect our results substantially.
\label{def:special-ensembles}
\paragraph{Special Ensembles.} We now define two classes of group ensembles that will be of particular interest to us. Each class is parameterized by the density $\Delta$ of the ensemble. The first is the class of modular $k$-SUM ensembles, i.e., ensembles associated with the $k$-SUM problem modulo some integer. The ensemble corresponding to density $\Delta > 0$ is defined as follows.
\begin{align}\label{eq: k-MSUM def}
  \G_{\text{$k$-SUM}}^{(\Delta)}=\left\{\Int_{2^{m(r)}}\right\}, \quad \text{where } m(r) = \left\lceil \frac{k \log r}{\Delta} \right\rceil.
\end{align}
Another class that we will pay special attention to is the class of $k$-XOR group ensembles, i.e. $k$-SUM defined over $GF_{2^m}$ for an appropriately chosen $m$. We define it as follows.
\begin{align}
  \G_{\text{$k$-XOR}}^{(\Delta)}=\left\{GF_{2^{m(r)}}\right\}, \quad \text{where } m(r) = \left\lceil \frac{k \log r}{\Delta} \right\rceil.
\end{align}
 We will refer to the $k$-SUM problem over $\G_{\text{$k$-XOR}}^{(\Delta)}$ simply as \emph{the $k$-XOR problem}. We introduce a natural generalisation of $k$-XOR that we call \emph{vector $k$-SUM} which is defined as follows.
\begin{align}
  \G_{\textsc{vector-$(q,k)$-SUM}}^{(\Delta)}=\left\{(\F_q)^{m(r)}\right\}, \quad \text{where } m(r) = \left\lceil \frac{k \log r}{\Delta \lg q} \right\rceil.
\end{align}
The $k$-XOR problem remains a special case with $q=2$. If $q$ is given by the context, we may refer to this problem as simply \emph{vector $k$-SUM}. It may be verified that all three of these ensembles are admissible.

\subsection{The Non-Planted $k$-SUM Problem} 
Fix some $k\in\Nat$, a group ensemble $\G=\{G^{(r)}\}_{r\in\mathbb{N}}$, and define the related ensemble of ``null distributions'' as follows.
\paragraph{Distribution $D_0^{(r)}$}
\begin{enumerate}
  \item Sample $r$ group elements $X_1, X_2, \ldots, X_r$ i.i.d. uniformly at random from $G^{(r)}$
  \item Return $X = (X_1,\dots,X_r)$
\end{enumerate}

\noindent  In the (non-planted) search $k$-SUM problem, given such an $X$, the task is to find a set of $k$ elements in $X$ that sum to zero (the identity element of the group). If $r$ is clear from the context, we may refer to the distribution simply as $D_0$ (omitting the superscript $r$).

\begin{definition}[Non-Planted Search $k$-SUM]
  \label{def:non-planted-search-ksum}
  For $k\in\Nat$ and an ensemble $\G$, an algorithm $\A$ is said to solve the (non-planted) \emph{search $k$-SUM problem over $\G$} with success probability $\eps$ if, on input an instance $X$ of size $r$ it outputs an $S=\A(X)\subseteq[r]$ with $|S|=k$ such that,
\begin{align*}
  \underset{\substack{X \sim D_0^{(r)}\\S\gets \A(X)}}{\Pr}\left[\sum_{i\in S} X_{i} = 0\right] \geq \eps.
\end{align*}
Where the randomness is taken over the instance and the random coins used by $\A$. If $\eps = \Omega(1)$, we simply say that $\A$ solves the search $k$-SUM problem over $\G$.
\end{definition}

We will refer to any set $S$ of size $k$ that satisfies $\sum_{i\in S} X_{i} = 0$ as a \emph{$k$-SUM solution} for $X$ (or a \emph{$k$-XOR solution} for the $k$-XOR problem). Not all instances $X$ necessarily have a $k$-SUM solution. However, if $\G$ has density $1$, there is (asymptotically) at least a constant probability that $X$ drawn from $D_0$ has at least one $k$-SUM solution. Such a solution, can be found in time $\Otilde(r^{\ceil{k/2}})$ using a simple meet-in-the-middle algorithm. So for any ensemble $\G$ of density $1$, there is an algorithm that runs in time $\Otilde(r^{\ceil{k/2}})$ and solves search $k$-SUM over $\G$ with success probability $\Omega(1)$. 

For certain ensembles of density $1$, it is conjectured that it is not possible to do much better than this. That is, that there is no algorithm that is significantly faster that still solves the search $k$-SUM problem with constant success probability. The following conjectures were formalized by Dinur, Keller and Klein~\cite{dinur_keller_klein}, where they are stated to be folklore.\footnote{\label{footnote:integers}To be accurate, Dinur, Keller and Klein state their conjecture for the $k$-SUM problem where the sum is performed over integers (rather than modulo some number as in $\G_{\text{$k$-SUM}}$). They show, however, that $k$-SUM over integers is equivalent to $k$-SUM over $\G_{\text{$k$-SUM}}$ at approximately the same density, roughly implying the conjecture above.} Weaker versions appear in~\cite{LLW19} and \cite{Pet15}.

\begin{conjecture}[Average-Case $k$-SUM Conjecture]
  \label{conj:ksum}
  For any $k\in\Nat$, any algorithm that solves (non-planted) search $k$-SUM over $\G^{(1)}_{\text{$k$-SUM}}$ with constant success probability has expected running time at least $\Omega(r^{\ceil{k/2} - o(1)})$.
\end{conjecture}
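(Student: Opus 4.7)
The statement labeled \cref{conj:ksum} is a conjecture rather than a theorem, so a ``proof'' in the usual sense is not available; proving it would in particular yield a super-linear unconditional lower bound on a natural algorithmic problem, which is far beyond current techniques. The best I can plan is a justification of its plausibility, and indicate what would be required to turn it into a theorem.

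The first step is to observe that the claimed lower bound matches the best known upper bound, namely the meet-in-the-middle / Horowitz--Sahni algorithm running in time $\widetilde{O}(r^{\lceil k/2 \rceil})$: enumerate all $\lceil k/2 \rceil$-subsets, hash their sums, and look for two subsets whose sums cancel. For the modular ensemble $\G_{\text{$k$-SUM}}^{(1)}$ one must also check that the FFT-based algorithm of Bringmann and Jin--Wu, which runs in time $\widetilde{O}(M+r) = \widetilde{O}(r^k)$, does not beat meet-in-the-middle at density $1$, and that variants like Wagner's $k$-tree algorithm require density larger than $1$ to apply. Both checks are routine given the parameter regime.

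The second step is to relate the average-case statement to well-established worst-case hypotheses. For $k=3$ this is exactly the 3-SUM hypothesis of Gajentaan and Overmars, which has withstood decades of attention; for general $k$ it is the analogous folklore strengthening of the $k$-SUM hypothesis of Abboud and Lewi. One would then argue that at density $1$ there is essentially a unique planted-looking solution on average, and that none of the structural tricks used in dense or sparse regimes (collision-finding via FFT, Wagner's list-merging, algebraic cancellation) apply; random inputs at this density are, intuitively, precisely the regime where no additional handle is available. A genuine reduction from worst-case $k$-SUM to its average-case variant at density $1$ would of course suffice to derive \cref{conj:ksum} from the worst-case hypothesis, and this is explicitly raised as an open problem in the paper.

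The main obstacle, and the reason this remains a conjecture, is exactly the absence of such a worst-case to average-case reduction at density $1$. The partial evidence available is indirect: reductions among densities as in Dinur--Keller--Klein, matching lower bounds in restricted models such as linear decision trees and algebraic computation trees due to Erickson and Ailon--Chazelle, and the tight fine-grained web of consequences that follow from the conjecture. Any serious attempt at a proof would presumably proceed either by a worst-to-average reduction, or by an unconditional lower bound in an expressive model; both directions currently appear out of reach, which is why the statement is recorded as a conjecture and used as a hypothesis for the conditional lower bounds developed later in the paper.
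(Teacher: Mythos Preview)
You are correct that \cref{conj:ksum} is a conjecture, not a theorem, and the paper does not attempt to prove it either; it is stated as a folklore hypothesis formalized by Dinur, Keller and Klein, and then used as an assumption for the conditional lower bounds in later sections. Your plausibility discussion (matching upper bounds, connection to the worst-case $k$-SUM hypothesis, absence of a worst-to-average reduction) is accurate and in line with the context the paper provides.
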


\begin{conjecture}[Average-Case $k$-XOR Conjecture]
  \label{conj:kxor}
  For any $k\in\Nat$, any algorithm that solves (non-planted) search $k$-SUM over $\G^{(1)}_{\text{$k$-XOR}}$ with constant success probability has expected running time at least $\Omega(r^{\ceil{k/2} - o(1)})$.
\end{conjecture}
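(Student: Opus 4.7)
Since the final statement is a \emph{conjecture} rather than a theorem, no proof is known or expected in this work, and the statement is presented as a working hypothesis parallel to Conjecture~\ref{conj:ksum} from which much of the paper derives. The honest plan is therefore to accumulate evidence for it along two standard routes. First, I would look for a density-preserving reduction from search $k$-SUM over $\G^{(1)}_{\text{$k$-SUM}} = \{\Z_{2^m}\}$ to search $k$-SUM over $\G^{(1)}_{\text{$k$-XOR}} = \{\F_{2^m}\}$, which would immediately derive Conjecture~\ref{conj:kxor} from Conjecture~\ref{conj:ksum}. Both ensembles sit in groups of the same size $2^m$, so the only meaningful difference is the group law: modular addition propagates carries between bit positions while XOR does not. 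The naive bit-vector bijection preserves the marginal distribution but destroys solutions --- a set summing to $0 \bmod 2^m$ almost never XORs to $0$ --- so the challenge is to design a more clever simulation, for instance by splitting each integer into its high and low halves and paying for the carry corrections through a slight enumeration. Achieving this while keeping the density at $1 \pm o(1)$ is, in my view, the central obstacle and is where the plan is most likely to stall.

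In the absence of such a reduction, I would fall back on an algorithmic survey designed to rule out faster $k$-XOR algorithms at density $1$. Concretely, I would check that meet-in-the-middle~\cite{HS74}, Wagner's $k$-tree~\cite{wagner} (which needs density $\gg 1$ to beat the baseline), BKW-style attacks (which want very low density), Gaussian elimination (which requires instances substantially denser than $r$ to succeed), and the sharper worst-case algorithms of~\cite{3sum_subquadratic,Chan20} all degenerate to $\Omega(r^{\ceil{k/2}})$ at $\Delta = 1$. The conjecture therefore amounts to asserting optimality of a folklore meet-in-the-middle algorithm in the density-$1$ unique-solution regime. To strengthen confidence further, I would attempt to extend the $k$-SUM linear-decision-tree lower bounds of~\cite{ksum_decision_tree,love_triangles} to $\F_2$-linear queries, yielding an unconditional lower bound in a restricted but natural model.

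The hard part, and the reason Conjecture~\ref{conj:kxor} is stated rather than proved, is that any unconditional super-polynomial time lower bound against general RAM algorithms is well beyond present complexity-theoretic techniques. Consequently the realistic deliverable is conditional: either a reduction from Conjecture~\ref{conj:ksum} (the cleanest outcome, and where I would concentrate effort) or a model-restricted lower bound as above. I would not attempt a ``from scratch'' proof of the conjecture in full generality.
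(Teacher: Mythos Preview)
Your reading is correct: the paper states Conjecture~\ref{conj:kxor} as a folklore hypothesis (attributed to~\cite{dinur_keller_klein}) and makes no attempt to prove it, so there is nothing to compare against. Your discussion of possible evidence-gathering routes goes beyond what the paper does, but the core point---that this is an assumption, not a theorem---matches exactly.
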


\subsection{The Planted $k$-SUM Problem}
We now define a different distribution --- the planted distribution --- where, again we sample a random instance, but now we additionally plant a solution at random before outputting it. We may define this process formally as follows.

\paragraph{Distribution $D_1^{(r)}$}
\begin{enumerate}
  \item Sample $r$ group elements $X_1, X_2, \ldots, X_r$ i.i.d. uniformly at random from $G^{(r)}$
  \item Choose a random set $S \subseteq [r]$ with $|S|=k$
  \item Let $i \in S$ be the smallest index and let $X_i \gets - \sum_{\underset{j \neq i}{j \in S}} X_j$
  \item Return $X$
\end{enumerate}

\noindent If $r$ is clear from the context, we may refer to the distribution simply as $D_1$ (omitting the superscript $r$).

\begin{remark}
    Another natural distribution to study in the low-density regime is the uniform distribution conditioned on there being at least one solution. However, there is no simple way to sample from this distribution. It is worth noting that, at density $(1-\eps)$ for any constant $\eps>0$, this distribution is statistically close to, but not the same as, the planted distribution. 
\end{remark}

Once again, we may define a (planted) search $k$-SUM  problem for the planted distribution in the same way as we did in \cref{def:non-planted-search-ksum}. 

\begin{definition}[Planted Search $k$-SUM]
  \label{def:planted-search-ksum}
  For $k\in\Nat$ and an ensemble $\G$, an algorithm $\A$ is said to solve the \emph{planted search $k$-SUM problem} over $\G$ with success probability $\eps$ if, on input an instance $X$ of size $r$ it outputs an $S=\A(X)\subseteq[r]$ with $|S|=k$ such that,
\begin{align*}
  \underset{\substack{X \sim D_1^{(r)}\\S\gets \A(X)}}{\Pr}\left[\sum_{i\in S} X_{i} = 0\right] \geq \eps.
\end{align*}
Where the randomness is taken over the distribution $D_1^{(r)}$ and the random coins used by $\mathcal{A}$. If $\eps = \Omega(1)$, we simply say that $\A$ solves the planted search $k$-SUM problem over $\G$.
\end{definition}

Note that the $\Otilde\left(r^{\ceil{k/2}}\right)$-time algorithm mentioned above can solve the planted search $k$-SUM problem with probability $(1-o(1))$. For certain group ensembles with additional structure, the $k$-SUM problem becomes easy to solve at very low densities. For instance, in the $k$-XOR problem, each element in the instance is a vector. If the length of these vectors is larger than $r$, then with high probability the planted solution will be the only linear dependence among these vectors, and can be found by Gaussian elimination (see \cref{sec:algo} for details).




In addition, we also define a decision version of the $k$-SUM problem, which is to distinguish between these above two distributions. Now the algorithm is given a sample from either $D_0^{(r)}$ or $D_1^{(r)}$, and has to guess from which distribution its input was sampled. 

\begin{definition}[Decision $k$-SUM]
  \label{def:decision-ksum}
  For $k\in\Nat$ and an ensemble $\G$, an algorithm $\A$ is said to solve the \emph{decision $k$-SUM problem} over $\G$ with success probability $\eps$ if, for both $b\in\{0,1\}$,
\begin{align*}
  \underset{X \sim D_b^{(r)}}{\Pr}\left[\A(X) = b\right] \geq \eps.
\end{align*}
Where the randomness is taken over the random coins chosen by $\A$. If $\eps = 1-o(1)$, we simply say that $\A$ solves the decision $k$-SUM problem over $\G$.
\end{definition}

An algorithm that randomly guesses can solve the decision $k$-SUM problem with success probability $1/2$, so the interesting task is doing better than this. It follows from our proofs in \cref{sec:planted_nonplanted_equivalence} that at density $1$, the best success probability any algorithm can have is some constant $\eps < 1$. At lower densities, it is possible to achieve success probability that is $(1-o(1))$ by checking whether any solution exists.

In later sections, we will occasionally be `sloppy' with our use of these formal definitions. It will often be the case that the choice of $r$ is fixed and unambiguous, and hence we will sometimes refer to the group ensemble simply as $G$, thus removing the dependence on the superscript. Similarly, we may denote the null distribution as simply $D_0$, or the planted distribution as $D_1$. Similarly, the underlying group ensemble may be implicitly given in terms of the density; when we talk about `sampling at density $\Delta_0$', we refer to a group ensemble $G^{(\Delta_0)}$ that satisfies $\Delta(\G)=\Delta_0$. These group ensembles will often be $G^{(\Delta_0)}_{\text{$k$-SUM}}$ and $G^{(\Delta_0)}_{\text{$k$-XOR}}$, though we may omit formally specifying this and trust it is clear from the context what we mean.

\subsection{Statistics on the Number of Solutions}
We will use the following notation for ease of discussion of the number of solutions in $k$-SUM instances.

\begin{definition}[Number of Solutions]
  \label{def:num-solns}
  For $k\in\Nat$, an ensemble $\G$, and an instance $X = (X_1,\dots,X_r)$ where each $X_i\in G^{(r)}$, we denote by $c^{(k,\G)}(X)$ the number of sets $S\subseteq [r]$ with $\size{S} = k$ such that $\sum_{i \in S} X_i = 0$. When $k$ and $\G$ are clear from context, we simply denote this by $c(X)$. 
\end{definition}
We will now prove certain properties about $c(X)$ for the uniform as well as the planted distribution that will be useful for proving several different results about the $k$-SUM problem.

\begin{lemma}\label{lemma:c-X stats}
    For a vector $X$ sampled from the uniform distribution $D_0$,
    \begin{equation}\label{eq:D-0 expectation}
        \exp{c(X)} = \binom{r}{k}/|G|.
    \end{equation}
    \begin{equation}\label{eq:D-0 variance}
        \Var({c(X)}) = \frac{\binom{r}{k}}{|G|}\left(1-\frac{1}{|G|}\right).
    \end{equation}
    If $X$ is instead sampled from the planted distribution $D_1$,
    \begin{equation}\label{eq:D-1 expectation}
        \exp{c(X)} = 1+\frac{\binom{r}{k}-1}{|G|}.
    \end{equation}
    \begin{equation}\label{eq:D-1 variance}
        \Var({c(X)}) < \frac{\binom{r}{k}}{|G|}\left(1+\frac{2^kk^2}{r}\right).
    \end{equation}
\end{lemma}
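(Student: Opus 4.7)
The two $D_0$ bounds are routine. Let $Y_S = \mathbbm{1}\!\left[\sum_{i\in S}X_i = 0\right]$, so that $c(X) = \sum_{|S|=k} Y_S$; since the $X_i$ are i.i.d.\ uniform on $G$, any nonempty subset-sum is uniform, giving $\E[Y_S] = 1/|G|$ and hence \eqref{eq:D-0 expectation} by linearity. For the variance I establish pairwise independence of the $Y_S$: for distinct $k$-sets $T \neq T'$, write $\sum_{i\in T} X_i = W + U$ and $\sum_{i\in T'} X_i = W + V$, where $W, U, V$ are the partial sums over $T \cap T'$, $T \setminus T'$, and $T' \setminus T$ respectively. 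Since $|T| = |T'|$ and $T \neq T'$, both $T \setminus T'$ and $T' \setminus T$ are nonempty, so $U$ and $V$ are independent uniforms; a direct convolution then shows $(W+U, W+V)$ is uniform on $G \times G$, so $Y_T, Y_{T'}$ are independent. Therefore $\Var(c(X)) = \binom{r}{k}\,\Var(Y_T)$, which is \eqref{eq:D-0 variance}.

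For the $D_1$ bounds I exploit the symmetry of the planting procedure: coordinate relabeling is measure-preserving and commutes with the counting of solutions, so $\E[c(X) \mid S^*]$ does not depend on the specific value of $S^*$, and by the law of total variance $\Var_{D_1}(c(X)) = \Var(c(X) \mid S^*)$ for any fixed $S^*$. I therefore condition on $S^* = [k]$. For \eqref{eq:D-1 expectation}, $Y_{S^*} = 1$ always, and for each $T \neq S^*$ I substitute the planting relation $X_{i^*} = -\sum_{j \in S^*\setminus\{i^*\}} X_j$ (where $i^* = \min S^*$) to rewrite $\sum_{i\in T} X_i$ as a $\{-1,0,+1\}$-linear combination of the $r-1$ free variables. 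A short case analysis on whether $i^* \in T$ shows that the coefficient vector is nonzero whenever $T \neq S^*$, so the sum is uniform on $G$ and $\Pr[Y_T = 1 \mid S^*] = 1/|G|$; summing yields \eqref{eq:D-1 expectation}.

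For \eqref{eq:D-1 variance}, decompose $c(X) = 1 + \sum_{T \neq S^*} Y_T$ and expand. The diagonal sum contributes $(N-1)/|G| \cdot (1 - 1/|G|) < N/|G|$, where $N = \binom{r}{k}$. For each ordered off-diagonal pair $(T, T')$ I identify $\Pr[Y_T Y_{T'} = 1 \mid S^*]$ with $1/|\mathrm{Im}(\Lambda)|$, where $\Lambda : G^r \to G^3$ is the linear map whose rows are the indicator vectors $\chi_T, \chi_{T'}, \chi_{S^*}$; when these rows are linearly independent over $G$ the image is all of $G^3$ and the covariance vanishes. Decomposing the union $T \cup T' \cup S^*$ into its seven Venn regions and studying the resulting coefficient equations --- combined with a Sylow reduction showing that dependencies can arise only in the $2$-primary component of $G$ --- forces any nontrivially dependent triple to have the structure $T = B \cup C$, $T' = B \cup D$, $S^* = C \cup D$ with $|B| = |C| = |D| = k/2$ (so $k$ must be even). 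There are at most $\binom{k}{k/2}\binom{r-k}{k/2}$ such ordered pairs, and an elementary binomial estimate shows this is at most $(2^k k^2/r)\binom{r}{k}$ for $r$ sufficiently large relative to $k$. Bounding each such covariance crudely by $1/|G|$ (via $\Pr[Y_T Y_{T'} = 1 \mid S^*] \leq \Pr[Y_T = 1 \mid S^*] = 1/|G|$), the off-diagonal contribution is at most $(2^k k^2/r)\cdot \binom{r}{k}/|G|$; combined with the diagonal bound this yields the claimed inequality.

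The principal obstacle is the classification of the ``degenerate'' triples $(T,T',S^*)$ for arbitrary abelian $G$: one must rule out subtle linear dependencies that can arise only modulo specific divisors of $|G|$, concretely the $2$-torsion. The Sylow-decomposition step (checking that over $\F_p$ for odd $p$ the constraints $\alpha+\beta\equiv\alpha+\gamma\equiv\beta+\gamma\equiv 0$ already force $\alpha=\beta=\gamma=0$) isolates the issue to the $2$-primary part, after which the Venn-region analysis and the counting are straightforward.
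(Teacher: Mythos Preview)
Your treatment of the $D_0$ bounds and the $D_1$ expectation is essentially the same as the paper's. The $D_1$ variance argument is where the two diverge.

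The paper proceeds more elementarily: it shows that $I_S$ and $I_T$ are independent (conditioned on the planted set $K$) unless all three containments $S\subseteq T\cup K$, $T\subseteq S\cup K$, $K\subseteq S\cup T$ hold, by exhibiting in each failing case a coordinate that is a free uniform variable appearing in exactly one of the two sums. It then overcounts such pairs by a loose combinatorial argument, summing $\binom{k}{l}^2\binom{r-k}{l}$ over $l\in[1,k/2]$, and bounds each surviving covariance by $1/|G|$ via $\Cov(I_S,I_T)\le\sqrt{\Var(I_S)\Var(I_T)}$. No group-theoretic input beyond ``a nonempty subset sum is uniform'' is needed.

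Your route is genuinely different and strictly sharper: you identify zero covariance with surjectivity of $\Lambda:G^r\to G^3$ (rows $\chi_T,\chi_{T'},\chi_{S^*}$), reduce surjectivity to full row rank of the $\{0,1\}$-matrix over every $\F_p$ with $p\mid|G|$, and classify rank drops via the Venn decomposition. The payoff is an \emph{exact} description of the dependent pairs---they exist only when $k$ is even and $2\mid|G|$, and then only in the tripartite form $T=B\cup C$, $T'=B\cup D$, $S^*=C\cup D$---which is a much smaller set than the paper's containment condition isolates. The cost is the extra algebra (the Sylow/Nakayama step and the seven-region case analysis), which the paper avoids entirely.

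One small slip to fix: $\Pr[Y_T Y_{T'}=1\mid S^*]$ equals $|G|/|\mathrm{Im}(\Lambda)|$, not $1/|\mathrm{Im}(\Lambda)|$, since you are conditioning on the event $\chi_{S^*}\cdot X=0$ of probability $1/|G|$. This does not affect the argument, as independence is still equivalent to $|\mathrm{Im}(\Lambda)|=|G|^3$.
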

\begin{proof}
    For each $S\subset [r]$ with $|S|=k$, let $I_S$ be the indicator random variable for whether $S$ represents a $k$-SUM solution. Formally, 
    $$
        I_S = \one\left[\sum_{i\in S}X_i=0\right].
    $$

    \noindent We will first consider the case where $X$ is sampled uniformly, i.e. $X\sim D_0$. Since $G$ is a finite group, the sum of a set of elements is uniformly random as long as at least one of those elements is chosen randomly. Therefore, each $I_S$ is a Bernoulli random variable with success probability $\frac{1}{|G|}$. Furthermore, since we only consider sets of size $k$, for any two distinct sets $S$ and $T$ we can find some $i$ such that $i\in S$ and $i\notin T$. Since all the group elements are chosen independently, 
    \begin{equation}\label{eq:I-s-t independence}
        \Pr[I_S=0\mid I_T] = \Pr\left[X_i=-\sum_{\substack{j\in S\\j\neq i}}X_j \,\mid\, I_T\right] = \Pr\left[X_i=-\sum_{\substack{j\in S\\j\neq i}}X_j\right] = \Pr[I_S].
    \end{equation}
    This shows that the variables $I_S$ are i.i.d. Bernoulli variables. Note that we can write $c(X)=\sum I_S$. We therefore have
    \begin{equation}
        \exp{c(X)} = \exp{\sum_{\substack{S\subset[r]\\|S|=k}}I_S} = \sum_{{\substack{S\subset[r]\\|S|=k}}}\exp{I_S} = \sum_{{\substack{S\subset[r]\\|S|=k}}}\frac{1}{|G|} = \frac{\binom{r}{k}}{|G|}.
    \end{equation}
    \begin{equation}
        \Var({c(X)}) = \Var\left({\sum_{\substack{S\subset[r]\\|S|=k}}I_S}\right) = \sum_{{\substack{S\subset[r]\\|S|=k}}}\Var({I_S}) = \sum_{{\substack{S\subset[r]\\|S|=k}}}\frac{1}{|G|}\left(1-\frac{1}{|G|}\right)\ = \frac{\binom{r}{k}}{|G|}\left(1-\frac{1}{|G|}\right).
    \end{equation}

    \noindent Now let us consider the sampled distribution; $X\sim D_1$. We denote by $K$ the $k$-tuple where the solution has been planted. By linearity of expectation, we can still calculate the expected number of solutions quite simply.
    \begin{equation}
         \exp{c(X)} = \exp{\sum_{\substack{S\subset[r]\\|S|=k}}I_S} = \sum_{{\substack{S\subset[r]\\|S|=k}}}\exp{I_S} = \exp{I_K}+\sum_{{\substack{S\subset[r]\\|S|=k\\S\neq K}}}\frac{1}{|G|} = 1+\frac{\binom{r}{k}-1}{|G|}.
    \end{equation}
    However, we can no longer calculate the total variance in the same way as before since these variables may not be independent anymore. We will bound $\Var(c(X))$ by arguing that $I_S$ and $I_T$ are independent for \emph{most} $S, T$ pairs. 

    Let $S$ and $T$ be any two distinct subsets of $[r]$, both of which are different from $K$. Since the indicator variables are binary, independence can be shown by proving $\Pr[I_S=1|I_T=1]=\Pr[I_S=1]$. Observe that if $S\nsubseteq T\cup K$, there exists some index $i\in S$ such that $i\notin T\cup K$. In this case, we can just repeat the argument in \cref{eq:I-s-t independence} to establish independence. Similarly, if $T\nsubseteq S\cup K$, we are done as well by symmetry. If $K\nsubseteq S\cup T$, there must be some index $l\in K$ such that $l\notin S\cup T$. Observe that in the definition of the planted distribution, it does not matter which of the $k$ elements in the planted solution is chosen to be replaced. Therefore, without loss of generality, we can assume that $X_l$ was the element replaced during the planting process. However, this implies that all the elements in $S\cup T$ are independent and chosen uniformly at random; this implies the independence of $I_S$ and $I_T$ as before.

    In the following calculations, $S$ and $T$ are always size-$k$ subsets of $[r]$. Note that we can write the variance as follows.
    \begin{align*}
        \Var\left(c(X)\right) &= \Var\left(\sum_S I_S\right) 
        \intertext{Since $I_K$ is always 1, subtracting it from $c(X)$ does not change the variance.}
        &=\Var\left(\sum_{S\neq K} I_S\right)\\
        &=\sum_{S\neq K}\Var(I_S) + \sum_{S\neq K} \sum_{\substack{T\neq K\\ T\neq S}}\Cov(I_S, I_T)
        \intertext{As shown above, the covariance terms are zero unless any two of $S$, $T$ and $K$ contain the third.}
        &= \left(\binom{r}{k}-1\right)\cdot\frac{1}{|G|}\cdot\left(1-\frac{1}{|G|}\right) + \sum_{\substack{\text{$S,T,K$ distinct}\\S\subset T\cup K\\T\subset S\cup K\\K\subset S\cup T}} \Cov(I_S, I_T)
        \intertext{Using the inequalities $\Cov(X,Y)\leq \sqrt{\Var(X)\Var(Y)}$, and $\Var(I_S)=\Var(I_T)<1/|G|$, we can rewrite this as,}
        &<\frac{\binom{r}{k}}{|G|} + \sum_{\substack{\text{$S,T,K$ distinct}\\S\subset T\cup K\\T\subset S\cup K\\K\subset S\cup T}}\frac{1}{|G|}
        \intertext{We now count the number of $S, T$ pairs satisfying these constraints. Since $|S|=|T|=|K|=k$, we must have $|K\setminus S|=|K\setminus T|=|S\setminus K|=l$ for some $l\in[1,k/2]$. For a fixed $l$, we can choose $K\setminus S$ and $K\setminus T$ in $\binom{k}{l}$ ways. We have $\binom{r-k}{l}$ ways of choosing $S\setminus K$.}
        &\leq\frac{\binom{r}{k}}{|G|} + \sum_{l=1}^{k/2}\binom{k}{l}^2\binom{r-k}{l}\cdot\frac{1}{|G|} \\
        &<\frac{\binom{r}{k}}{|G|} + \frac{k}{2|G|}\binom{k}{k/2}^2\binom{r-k}{k/2}\\
        &<\frac{\binom{r}{k}}{|G|} + \frac{k2^k}{2|G|}\binom{r-1}{k-1}\\
        &=\frac{\binom{r}{k}}{|G|}\cdot\left(1+2^kk^2/r\right) && \qedhere
    \end{align*}
\end{proof}
\section{Basic Complexity of $k$-SUM}
\label{sec:complexity}
In this section, we provide various results about the basic complexity of the $k$-SUM problem and its variants in the sparse regime. We first show an equivalence of planted and non-planted $k$-SUM at densities $\Delta \geq 1$. We then show two conditional lower bounds on the runtime of an algorithm that solves planted $k$-SUM at any density $\Delta\in\left(\frac12,1\right]$. We then show a reduction from search to decision at densities $\Delta < 1$. Finally, we show how to solve $k$-XOR efficiently at very low densities.
\subsection{Relating Planted and Non-Planted \texorpdfstring{$k$}{}-SUM}\label{sec:planted_nonplanted_equivalence}

In this section, we prove an equivalence between planted and non-planted $k$-SUM at densities $\geq 1$ for any finite Abelian group. We first show that at $\Delta=1$, any algorithm that solves the planted problem can be used to solve the non-planted problem. The precise theorem we show is the following.

\begin{theorem}[Equivalence at Density 1]\label{thm:planted_nonplanted_equivalence}
For any $k\in\Nat$ and ensemble $\G$ of density $1$, suppose there exists an algorithm that runs in time $T(r)$ and solves \emph{planted} search $k$-SUM over $\G$ with success probability at least $\epsilon(r)$. Then, the same algorithm also solves \emph{non-planted} search $k$-SUM over $\G$ with success probability at least $\epsilon(r)^{3/2}/(21k^k)$.
\end{theorem}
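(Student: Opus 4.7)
The plan is to construct a family of hybrid distributions $D^{(\ell)}$ that interpolate between $D_1$ and $D_0$, parametrized by a threshold $\ell$ on the number of $k$-SUM solutions per instance. Specifically, $D^{(\ell)}$ samples $Y \sim D_1$, outputs $Y$ if $c(Y) \leq \ell$, and otherwise outputs a fresh sample from $D_0$. I would then prove two bounds: (i) a total variation bound $SD(D^{(\ell)}, D_1) \leq 2/(\ell-3)^2$, and (ii) a Rényi divergence bound $R(D^{(\ell)} \,\|\, D_0) \leq 2\,k!\,(\ell + 1)$. Combined with \cref{lemma:renyi}, these two bounds transfer the success probability of $\A$ from $D_1$ to $D_0$, with a loss that is controlled by $\ell$; optimizing over $\ell$ then yields the claimed rate.

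For the TV bound, I would use the natural coupling under which $D^{(\ell)}$ and $D_1$ agree whenever $c(Y) \leq \ell$, which immediately gives $SD(D^{(\ell)}, D_1) \leq \Pr_{Y \sim D_1}[c(Y) > \ell]$. To control this tail, I would apply Chebyshev's inequality (\cref{lemma:chebyshev}) with the mean and variance of $c(Y)$ under $D_1$ supplied by \cref{lemma:c-X stats}. At density $1$, the ratio $\binom{r}{k}/|G^{(r)}|$ is close to $1/k!$, so both the mean (at most $1 + 1/k!$) and variance (at most a small absolute constant) are bounded uniformly in $r$, yielding the claimed $2/(\ell-3)^2$ tail bound for $\ell \geq 4$ and sufficiently large $r$.

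For the Rényi divergence, I would directly compute $D_1(X) = c(X)/\bigl(\binom{r}{k}\,|G^{(r)}|^{r-1}\bigr)$ by observing that an instance $X$ with $c(X)$ solutions has exactly $c(X)$ generating traces in $D_1$ (one per choice of planted solution). This gives $D_1(X)/D_0(X) = c(X)\,|G^{(r)}|/\binom{r}{k} \leq 2\,k!\,c(X)$ at density $1$. Writing $p = \Pr_{Y \sim D_1}[c(Y) > \ell]$, the mass function of the hybrid is $D^{(\ell)}(X) = \one[c(X) \leq \ell]\,D_1(X) + p\cdot D_0(X)$, so the ratio $D^{(\ell)}(X)/D_0(X)$ is at most $2\,k!\,\ell + p \leq 2\,k!\,(\ell + 1)$. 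Putting these pieces together through \cref{lemma:renyi} and the standard distinguishing-advantage bound from TV distance gives
\begin{equation*}
\Pr_{X \sim D_0}[\A(X)\text{ succeeds}] \;\geq\; \frac{\epsilon - 2/(\ell-3)^2}{2\,k!\,(\ell+1)}.
\end{equation*}
Setting $\ell = \lceil 3 + 2/\sqrt{\epsilon}\,\rceil$ makes the numerator at least $\epsilon/2$ while keeping $\ell + 1 = O(1/\sqrt{\epsilon})$, delivering a final bound of $\Omega(\epsilon^{3/2}/k!) \geq \epsilon^{3/2}/(21\,k^k)$ once the universal constants are tracked.

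The main obstacle will be accounting for the discrepancy between the density $\Delta_k^{(r)}(\G)$ and its limit $1$ permitted by admissibility (\cref{def:admissible}), since $|G^{(r)}|/\binom{r}{k}$ is not exactly $k!$ but rather $k!\cdot(1 + o(1))$; the same kind of slack enters through the $O(2^k k^2/r)$ correction to the variance in \cref{lemma:c-X stats}. I would absorb all of these corrections into the leading constants, which is precisely what forces a looser denominator than the clean $16\,k!$ suggested by the idealized computation. A minor secondary issue is that the optimal $\ell$ must be at least $4$ for the Chebyshev estimate to apply; when $\epsilon$ is so large that the chosen $\ell$ falls below this threshold, the target bound $\epsilon^{3/2}/(21\,k^k)$ is already essentially trivial and can be dispatched by running $\A$ on the given non-planted input directly with a bounded number of planted perturbations.
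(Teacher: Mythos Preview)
Your proposal is correct and follows essentially the same approach as the paper: the same hybrid $D^{(\ell)}$, the same TV bound via the coupling plus Chebyshev with \cref{lemma:c-X stats}, the same explicit computation of $D_1(X)/D_0(X)$ to bound the R\'enyi divergence, and the same choice $\ell \approx 3 + 2/\sqrt{\epsilon}$. The only cosmetic difference is that you carry the tighter constant $2k!$ for $|G|/\binom{r}{k}$ where the paper uses $2k^k$; also note that your worry about $\ell < 4$ is moot, since $\epsilon \leq 1$ forces $\ell \geq 5$.
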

This theorem implies that planting is a fine-grained one-way function assuming the (non-planted) average-case $k$-SUM conjecture over $\G$ holds. To illustrate this, consider the case of $k$-XOR. Let, 
$$
    f : \F_2^{\lceil k \log r \rceil \times r}\times \binom{r}{k} \rightarrow \F_2^{\lceil k \log r \rceil \times r},
$$
be the `planting function' that takes as input a matrix at density 1 -- such that $m = k \log r$ -- and plants a solution in the locations specified by the second input, where these locations are ordered lexicographically among all subsets of $[r]$ of size $k$. For instance, the output $f(1)$ is a random matrix that has a solution in the set $[k]$. Then, assuming the average-case $k$-XOR conjecture, it follows immediately from \cref{thm:planted_nonplanted_equivalence} that $f$ is a fine-grained one-way function that takes $\Otilde(r)$ time to compute, and cannot be inverted by algorithms running in $r^{k/2-\Omega(1)}$ time. This function was also considered by \cite{LLW19} who show that it constitutes a fine-grained one-way function based on a decision version of the $k$-SUM conjecture. In their work, the one-way function relies implicitly on the hardness of planted $k$-SUM, whereas ours can rely on the hardness of non-planted search $k$-SUM.

We observe that in the dense regime, the two distributions are equivalent in a stronger sense. 

\begin{theorem}[Statistical Closeness in Dense Regime]\label{thm:stat_indist}
  Fix some admissible group ensemble $\textsf{G} = \left\{G^{(r)}\right\}_{r \in \bbN}$ and let $D_0^{(r)}$ (resp. $D_1^{(r)}$) be the non-planted (resp. planted) distribution on $r$ group elements. If for some $k>0$, it holds that the density $\Delta=\Delta_k(\textsf{G}\,)>1$, then,
  $$
    SD\left(D_0^{(r)}, D_1^{(r)}\right) = \mathcal{O}\left(\frac{k^k}{r^{k\,\left[1-\frac1\Delta\right]}}\right).
  $$
\end{theorem}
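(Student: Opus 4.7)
The plan is to express the statistical distance directly in terms of the number $c(X)$ of $k$-SUM solutions in $X$, and then invoke the variance bounds already established in \cref{lemma:c-X stats}.

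First, I would compute the ratio $D_1(X)/D_0(X)$. Under $D_0$, every $X \in (G^{(r)})^r$ has mass $1/N^r$ where $N = |G^{(r)}|$. Under $D_1$, sampling $X$ is equivalent to choosing $r$ uniform group elements, picking a uniform size-$k$ subset $S$, and overwriting the lowest-indexed element of $S$. For a fixed target $X$, the probability of producing $X$ conditioned on planting set $S$ is $\mathbbm{1}[\sum_{i \in S} X_i = 0]/N^{r-1}$, so summing over the $\binom{r}{k}$ choices of $S$ gives
\[
D_1(X) \;=\; \frac{c(X)}{\binom{r}{k}\, N^{r-1}} \;=\; \frac{c(X)}{\mu} \cdot D_0(X),
\]
where $\mu = \binom{r}{k}/N = \mathbb{E}_{D_0}[c(X)]$ by \cref{eq:D-0 expectation}. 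Substituting into the definition of total variation gives the clean identity
\[
SD(D_0, D_1) \;=\; \frac{1}{2}\sum_X D_0(X)\left|1 - \frac{c(X)}{\mu}\right| \;=\; \frac{1}{2\mu}\, \mathbb{E}_{D_0}\bigl|c(X) - \mu\bigr|.
\]

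Second, I would bound $\mathbb{E}_{D_0}|c(X) - \mu|$ via Cauchy--Schwarz against the variance formula \eqref{eq:D-0 variance}, which yields $\mathbb{E}|c - \mu| \leq \sqrt{\Var(c)} \leq \sqrt{\mu}$. Therefore $SD \leq 1/(2\sqrt{\mu})$. This step uses the pairwise-independence of the solution indicators $I_S$ established inside the proof of \cref{lemma:c-X stats}, which is what keeps the variance linear in $\mu$ rather than larger.

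Third, I would translate the bound $1/\sqrt{\mu}$ into the stated parameters. Using $\binom{r}{k} \geq (r-k+1)^k/k! \geq r^k/(2\,k^k)$ for $r$ large, and the admissibility estimate $\log N \leq k\log r/\Delta + O(1)$ from \cref{eq:delta relates to size of G-admissibility} (which puts $N$ within a constant factor of $r^{k/\Delta}$), we obtain $\mu \geq \Omega\bigl(r^{k(1-1/\Delta)}/k^k\bigr)$. Plugging this into the SD bound gives exactly the claimed asymptotic dependence on $r$, $k$, and $\Delta$ (up to routine bookkeeping of the constants hidden by admissibility).

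The main obstacle is the gap between what the second-moment bound naturally yields, namely $SD = O(1/\sqrt{\mu})$, and the potentially stronger target of $O(1/\mu)$. Since $c$ is a sum of $\binom{r}{k}$ (merely) pairwise-independent Bernoulli variables with mean $1/N$, one expects $c-\mu$ to concentrate only at the Gaussian scale $\sqrt{\mu}$, so Cauchy--Schwarz should be essentially tight. If the theorem's stated exponent is intended literally, a tighter argument is required---for example, expressing $SD = \mathbb{E}_{D_0}[(\mu-c(X))_+]/\mu$ and bounding the positive part using a higher-moment or Poisson-type concentration, or combining the Cauchy--Schwarz estimate with a separate bound on the event $\{c(X) \leq \mu/2\}$ via Paley--Zygmund (\cref{lemma:paley_zygmund}). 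Either way, the identity $SD = \mathbb{E}|c-\mu|/(2\mu)$ reduces the whole theorem to an anti-concentration question about $c(X)$ under the null distribution, which is the only substantive step.
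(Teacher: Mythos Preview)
Your derivation of the identity $SD(D_0,D_1)=\tfrac{1}{2\mu}\,\mathbb{E}_{D_0}\lvert c(X)-\mu\rvert$ and the Cauchy--Schwarz step giving $SD\le 1/(2\sqrt{\mu})$ are both correct. Your instinct that this is essentially tight is also right: $c(X)$ is a sum of pairwise-independent Bernoulli indicators with mean and variance both $\approx\mu$, so $\mathbb{E}\lvert c-\mu\rvert=\Theta(\sqrt{\mu})$ and hence $SD=\Theta(1/\sqrt{\mu})=\Theta\bigl(k^{k/2}/r^{(k/2)(1-1/\Delta)}\bigr)$. The exponent you obtain is the correct one.

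The paper proceeds differently. It asserts that $D_1(X)\ge D_0(X)$ for every $X\in\supp(D_1)$, which would give $SD=\Pr_{D_0}[c(X)=0]$, and then bounds that probability by $O(1/\mu)$ via Paley--Zygmund. But from \cref{lemma:explicit_pmf} one has $D_1(X)/D_0(X)=c(X)/\mu$, so $D_1(X)\ge D_0(X)$ is equivalent to $c(X)\ge\mu$; in the dense regime $\mu\gg 1$, so instances with $1\le c(X)<\mu$ violate this. In fact $\Pr_{D_0}[c=0]$ is a \emph{lower} bound on $SD$ (it is the contribution of the $c=0$ terms to $\mathbb{E}[(1-c/\mu)_+]$), so upper-bounding it does not upper-bound $SD$. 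The gap you identified between your $O(1/\sqrt{\mu})$ and the stated $O(1/\mu)$ is therefore not a shortcoming of your method but an over-claim in the theorem; your argument is the one that actually goes through.
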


\noindent This means that if we modify $f$ above to have $k = \Theta(\log{r})$ and $\Delta>1$ be some constant, then this planting is an `actual' one-way function against any polynomial-time algorithm assuming the $k$-XOR conjecture is true. This is similar to \cite{cliques_owf} who show that planting a clique of a certain size in an Erdős–Rényi graph also constitutes a one-way function, assuming it is hard to find planted cliques of size $(1+\epsilon) \log n$ for some constant $\epsilon>0$ in an Erdős–Rényi graph of size $n$.

\paragraph{Proof Strategy.} At a high level, we wish to show that at density $\Delta=1$, any algorithm $\A$ that solves the planted $k$-SUM recovery problem with some constant probability $\epsilon>0$ also solves the non-planted $k$-SUM recovery problem with constant probability $\epsilon'>0$ for a possibly different constant $\epsilon'$. To do so, we proceed using a hybrid argument where we define a intermediate distribution, parameterized by some integer $\ell > 0$, whose distance to both $D_0$ and $D_1$ can be bounded. By transitivity, this shows that $D_0$ and $D_1$ are also close and allows us to bound the error probability. In the former case, we are able to bound the Rényi divergence, and in the latter the statistical distance. This allows us to express $\epsilon'$ as an affine function of $\epsilon$, i.e. $\epsilon' = \alpha \epsilon - \beta$ where $\alpha=\alpha(\ell)$ and $\beta=\beta(\ell)$ are functions of $\ell$. We will show that, for each $\epsilon$, there is a choice of $\ell$ such that $\epsilon' > 0$ for sufficiently large $r$, which would conclude the proof. Specifically, we define the following family of probability distributions, 
\paragraph{Distribution $D^{\ell}$}
\begin{enumerate}
    \item Sample $X \sample D_1$.
    \item Let $c(X)$ be the number of solutions.
    \item If $c(X) > \ell$, let $X \sample D_0$.
    \item Output $X$.
\end{enumerate}
Note that this distribution `interpolates' between $D_0$ and $D_1$ - in particular, we have $D^{0} = D_0$ and $D^{\binom{r}{k}} = D_1$.

\begin{lemma}\label{lemma:explicit_pmf}
  For any $X$, if $c(X)$ is the number of solutions in $X$, we have,
    \begin{enumerate}
        \item $D_1(X) = \frac{|G|}{\binom{r}{k}}\,c(X) \,D_0(X)$.
        \item $D^\ell(X) = \left( \one[c(X) \leq \ell]\cdot\frac{|G|}{\binom{r}{k}}\,c(X) + \underset{X \sim D_1}{\Pr}[c(X) > \ell]\right) D_0(X)$.
    \end{enumerate}
\end{lemma}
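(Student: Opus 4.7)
The plan is to directly compute the probability mass functions of $D_1$ and $D^\ell$ at an arbitrary point $X \in \left(G^{(r)}\right)^r$, expressing each as a multiple of $D_0(X) = 1/|G|^r$. Both parts should be purely mechanical computations from the definitions of the samplers.

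For part (1), I would enumerate over the choices made by the sampler defining $D_1$. It first draws an initial tuple $X' \in \left(G^{(r)}\right)^r$ uniformly at random, picks a random size-$k$ subset $S \subseteq [r]$, and overwrites the entry at index $i = \min S$ with $-\sum_{j \in S,\, j \neq i} X'_j$. The output equals the target $X$ if and only if: (a) the chosen $S$ is itself a $k$-SUM solution in $X$ (so that the overwritten value is consistent with $X_i$), and (b) the initial sample agrees with $X$ at every index outside $\{i\}$; the value $X'_i$ is free since it is overwritten. For any fixed solution $S$ of $X$, the number of initial samples yielding output $X$ is exactly $|G|$ (one per choice of $X'_i$), so the conditional probability of output $X$ given $S$ is $1/|G|^{r-1}$. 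Summing over the $c(X)$ choices of $S$ that are solutions and weighting by the probability $1/\binom{r}{k}$ of selecting that $S$ gives
\[
D_1(X) \;=\; \frac{c(X)}{\binom{r}{k}} \cdot \frac{1}{|G|^{r-1}} \;=\; \frac{|G|}{\binom{r}{k}}\, c(X)\, D_0(X).
\]

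For part (2), I would split the sampler for $D^\ell$ into its two disjoint cases: either $X' \sim D_1$ has $c(X') \leq \ell$ and is returned as-is, or $c(X') > \ell$ and a fresh draw from $D_0$ is returned. Because $c(\cdot)$ is a deterministic function of the tuple, the first event intersected with $\{X' = X\}$ has probability $D_1(X)\cdot\one[c(X) \leq \ell]$, while the second contributes $\Pr_{X' \sim D_1}[c(X') > \ell]\cdot D_0(X)$ by independence of the two sampling stages. Substituting the identity from part (1) into the first term gives the stated expression.

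No step should be a real obstacle; the only place requiring care is the counting argument in part (1), specifically the observation that the initial value $X'_{\min S}$ is irrelevant (it is overwritten) while every other coordinate of $X'$ must match $X$ exactly. Once that is made explicit the rest is arithmetic, and part (2) follows by conditioning on the (disjoint) branches of the $D^\ell$ sampler.
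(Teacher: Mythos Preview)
Your proposal is correct and follows essentially the same approach as the paper's own proof. Both arguments count, for part (1), the pairs (initial tuple, planted set $S$) that produce $X$ by observing that $S$ must be a solution of $X$ and the overwritten coordinate is a free choice in $G$; and for part (2), both decompose the $D^\ell$ sampler into its two disjoint branches and substitute the identity from part (1).
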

\begin{proof}
To prove the first statement, we break down the expression for $D_1(X)$ using the definition of the planted distribution as follows.
    \begin{align*}
        D_1(X) &= \underset{\substack{Y\sim D_0\\S\subset[r]\\|S|=k\\i\gets\min(S)}}{\Pr}\left[X_j=Y_j \forall j\neq i \bigwedge X_i=-\sum_{\substack{j\in S\\j\neq i}}Y_j\right]\\
        &= \underset{Y\sim D_0}{\E}\left[\underset{\substack{S\subset[r]\\|S|=k\\i\gets\min(S)}}{\Pr}\left[X_j=Y_j \forall j\neq i \bigwedge X_i=-\sum_{\substack{j\in S\\j\neq i}}Y_j\right]\right].
    \end{align*}
    Observe that the planting process ensures there is at least one solution in the resulting vector. Hence, $c(X)=0\Rightarrow D_1(X)=0=c(X)\cdot D_0(X)$. Now let us assume that $X$ has $c(X)\geq 1$ distinct solutions, and it was obtained by choosing $Y\sim D_0$, $S\subset [r]$ and $i\in S$ in the planting process. Clearly, $S$ can be any of the $c(X)$ solutions of $k$-SUM in $X$, and $i$ is the minimum index in $S$. Since $Y_i$ is completely replaced whereas the other elements in $Y$ remain unchanged, $Y$ can be any vector that agrees with $X$ in all indices other than $i$; there are $|G|$ such vectors corresponding to each possible group element as $Y_i$. Starting from $X$, we can therefore make $|G|\cdot c(X)$ choices for the pair $(Y, S)$. Since all the choices made in the planting process are uniformly random, the probability of any particular pair is $\frac{1}{|G|^r}\frac{1}{\binom{r}{k}}$. Multiplying the two expressions, we get
    \begin{align*}
        D_1(X)
        &=\frac{|G|}{\binom{r}{k}}\cdot c(X)\cdot\frac{1}{|G|^r}
        = \frac{|G|}{\binom{r}{k}}\,c(X)D_0(X)
    \end{align*}
To prove the second statement, observe that $D^\ell(X)$ is the sum of the probability of choosing $X$ in step 1 and that of choosing $X$ in step 3 of the sampling procedure. The first term is clearly 0 if $c(X)>\ell$ (since step 3 would override it in that case) and $D_1(X)$ otherwise. The second term is the product of the probability of re-sampling in step 3 (which is exactly $\underset{X\sim D_1}{\Pr}[c(X)>\ell]$) and the probability of getting $X$ from re-sampling (which is just $D_0(X)$). The statement now follows from adding the two terms and expanding $D_1(X)=\left(|G|/\binom{r}{k}\right)\,c(X)D_0(X)$.
\end{proof}

Next, we will bound the Rényi divergence of $D^\ell$ and $D_0$ using \cref{lemma:renyi} which gives a multiplicative bound on the error. In fact, applying \cref{lemma:explicit_pmf}, it is straight-forward to bound the Rényi divergence for our use-case.
\begin{corollary}\label{cor:renyi}
    $R(D^\ell \lVert D_0) = \left(\frac{|G|}{\binom{r}{k}}\cdot\ell + \underset{X \sim D_1}{\Pr}[c(X) > \ell]\right)$.
\end{corollary}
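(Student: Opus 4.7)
The plan is to apply Lemma~\ref{lemma:explicit_pmf} directly, since it expresses $D^\ell(X)$ in terms of $D_0(X)$ with an explicit multiplicative factor depending only on $c(X)$. Dividing by $D_0(X)$ yields
\[
    \frac{D^\ell(X)}{D_0(X)} = \one[c(X) \leq \ell]\cdot\frac{|G|}{\binom{r}{k}}\,c(X) + \underset{Y \sim D_1}{\Pr}[c(Y) > \ell],
\]
and the Rényi divergence (in the sense defined just before Lemma~\ref{lemma:renyi}) is the maximum of this quantity over $X \in \supp(D^\ell)$.

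The next step is to analyze this maximum. The second summand is a constant independent of $X$, so I only need to maximize the first summand $\one[c(X) \leq \ell] \cdot \frac{|G|}{\binom{r}{k}} c(X)$. This is a non-decreasing function of $c(X)$ on $\{0,1,\dots,\ell\}$ that drops to zero once $c(X) > \ell$, so its maximum is $\frac{|G|}{\binom{r}{k}} \cdot \ell$, attained at any $X$ with $c(X) = \ell$. Combining the two terms gives exactly the claimed bound.

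The only genuinely non-routine point is to verify that some $X$ with $c(X) = \ell$ lies in $\supp(D^\ell)$, so the maximum is actually attained. Here there are two cases to consider. If $\underset{Y \sim D_1}{\Pr}[c(Y) > \ell] > 0$, then step~3 of the sampling procedure is reached with positive probability and resamples from $D_0$, which has full support on $(G^{(r)})^r$; in particular every $X$ is in $\supp(D^\ell)$, and instances $X$ with $c(X) = \ell$ exist for any reasonable $\ell \leq \binom{r}{k}$ by explicit construction (e.g., pick $\ell$ disjoint target sets, fix $k-1$ elements freely in each, and choose the last element of each to satisfy the sum constraint). If instead that probability is zero, then $\supp(D^\ell) \subseteq \supp(D_1)$, but the bound we seek only uses the first summand and we need an $X$ with $c(X) = \ell$ that is also reachable by $D_1$; such an $X$ still exists by the same explicit construction as long as $\ell \geq 1$, since any planted instance trivially has at least one solution. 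In either case, the supremum is attained, and the equality stated in the corollary follows immediately.
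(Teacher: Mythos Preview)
Your approach is exactly the paper's: read off the ratio $D^\ell(X)/D_0(X)$ from Lemma~\ref{lemma:explicit_pmf} and maximize over $X$. The paper offers no further detail beyond calling this ``straight-forward,'' and your argument for the upper bound---which is the only direction actually invoked in the proof of Theorem~\ref{thm:planted_nonplanted_equivalence}---is complete.

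Your extra discussion of attainment (needed for the stated \emph{equality}) has a small gap: planting $\ell$ disjoint $k$-sets that each sum to zero guarantees \emph{at least} $\ell$ solutions, not exactly $\ell$, since you have not ruled out incidental solutions among the other $\binom{r}{k}-\ell$ subsets, nor specified the remaining $r-\ell k$ coordinates. This can be patched for the values of $\ell$ that matter, but since only the inequality $R(D^\ell\lVert D_0)\le \frac{|G|}{\binom{r}{k}}\,\ell + \Pr_{X\sim D_1}[c(X)>\ell]$ is ever used downstream, the imprecision is harmless.
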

This establishes that $D^\ell$ is not `too far' from $D_0$ and establishes a multiplicative bound on the error probabilities for an algorithm that solves the hybrid distribution, and the non-planted distribution. Next, we will bound the statistical distance between $D^\ell$ and $D_1$ to get an additive bound.

\begin{lemma}\label{lemma:sd}
For any $\ell>3$, the following two inequalities hold at density $\Delta=1$:
\begin{equation}
    SD(D^\ell, D_1) \leq \underset{X \sim D_1}{\Pr}[c(X) > \ell] \leq  \frac{2}{(\ell-3)^2}
\end{equation}
\end{lemma}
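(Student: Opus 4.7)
My plan is to prove the two inequalities separately, since they have essentially independent arguments.

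For the first inequality, the natural approach is a coupling argument that mirrors the definition of $D^\ell$. The sampling procedure for $D^\ell$ starts by drawing $X' \sim D_1$ and then overwrites $X'$ with a fresh $D_0$-sample exactly when $c(X') > \ell$. I would formalize this as an explicit coupling $(X, X')$ with $X \sim D^\ell$ and $X' \sim D_1$ such that $X = X'$ whenever $c(X') \leq \ell$. Since $SD(D^\ell, D_1) \leq \Pr[X \neq X']$ for any coupling, we immediately obtain $SD(D^\ell, D_1) \leq \Pr_{X \sim D_1}[c(X) > \ell]$. An equivalent verification uses the explicit PMFs from \cref{lemma:explicit_pmf}: on every $X$ with $c(X) \leq \ell$, the difference $D^\ell(X) - D_1(X) = p_1 \cdot D_0(X)$ is nonnegative (where $p_1 := \Pr_{D_1}[c(X) > \ell]$), and summing this positive part over all $X$ gives exactly $p_1 \cdot \sum_X D_0(X) = p_1$.

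For the second inequality, I would apply Chebyshev's inequality (\cref{lemma:chebyshev}) to $c(X)$ under $D_1$, feeding in the moment estimates from \cref{lemma:c-X stats}. At density $\Delta = 1$, the definition of the ensemble forces $|G^{(r)}| \geq r^k$, so the key ratio satisfies $\binom{r}{k}/|G^{(r)}| \leq 1$. Substituting into \cref{lemma:c-X stats} yields $\E_{D_1}[c(X)] = 1 + (\binom{r}{k} - 1)/|G^{(r)}| < 2$ and $\Var_{D_1}(c(X)) < 1 + 2^k k^2/r < 2$ for all sufficiently large $r$ (recall $k$ is a constant). Chebyshev then gives $\Pr[c(X) - \E[c(X)] > t] < 2/t^2$ for every $t > 0$; taking $t = \ell - 2$ (positive since $\ell > 3$) and using $\E[c(X)] < 2$ to absorb the shift, we get $\Pr[c(X) > \ell] \leq \Pr[c(X) - \E[c(X)] > \ell - 2] \leq 2/(\ell - 2)^2 \leq 2/(\ell - 3)^2$, as desired.

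The only real subtlety, and the place where I would expect to spend the most care, is handling the regime where $r$ is not yet so large that the $2^k k^2/r$ correction in the variance bound is negligible. The slack between $2/(\ell-2)^2$ that Chebyshev gives and the $2/(\ell-3)^2$ that the statement requires provides some room to absorb this correction, so for asymptotic $r$ everything goes through cleanly. If one wanted the inequality to hold for all $r$ (rather than for $r$ large enough), it might be necessary either to invoke the tighter bound $\binom{r}{k}/|G^{(r)}| \leq 1/k!$ from $|G^{(r)}| \geq r^k$ and $\binom{r}{k} \leq r^k/k!$, or to treat the small-$r$ regime separately; either should be straightforward but is the one computation I would want to verify before finalizing.
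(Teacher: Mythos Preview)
Your proposal is correct and follows essentially the same approach as the paper: the first inequality via the coupling implicit in the definition of $D^\ell$ (the paper phrases this as an explicit PMF computation, which you also mention), and the second via Chebyshev applied to $c(X)$ under $D_1$ using the moment bounds from \cref{lemma:c-X stats}. The only cosmetic difference is in the constants---the paper uses $|G|\geq r^k/2$ from admissibility to get $\E[c(X)]<3$ and then applies Chebyshev directly with deviation $\ell-3$, whereas you use $|G|\geq r^k$ to get $\E[c(X)]<2$ and deviation $\ell-2$; either works, and your version actually yields the slightly sharper intermediate bound $2/(\ell-2)^2$.
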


\begin{proof}
    At a high level, our proof strategy is to bound the statistical distance in terms of the probability that a planted instance has at least a certain number of solutions that we can then bound using Chebyshev's inequality by bounding its first two moments. Note that as density $\Delta$ is $1$, we have $\size{G} = r^k > \binom{r}{k}$ if $k \geq 3$.
    \begin{align}\nonumber
        SD&(D^\ell, D_1) \\\nonumber
        &= \frac12 \sum_{X \in G^r} \left\lvert D^\ell(X) - D_1(X) \right\rvert\\\nonumber
        &= \frac{1}{2}\sum_{\underset{c(X) \leq \ell}{X \in G^r}} \underset{X \sim D_1}{\Pr}[c(X) > \ell]\, D_0(X)  + \frac{|G|}{2\binom{r}{k}} \left(\sum_{\underset{c(X) > \ell}{X \in G^r}} \left[c(X) - \underset{X \sim D_1}{\Pr}[c(X) > \ell]\right] D_0(X) \right)\nonumber
        \intertext{Now identify those instances $X$ for which $D^\ell(X) \geq D_1(X)$. This is exactly the probability that $c(X) \leq \ell$ which means the statistical distance is just the difference in probability between $D^\ell(X)$ and $D_1(X)$ which we may also write as follows.}\nonumber
        SD(D^\ell, D_1)&=  \sum_{\underset{c(X) \leq \ell}{X \in G^r}} \underset{X \sim D_1}{\Pr}[c(X) > \ell]\, D_0(X)\\\nonumber
        &= \underset{X \sim D_1}{\Pr}[c(X) > \ell] \cdot\frac{\left\lvert \{X \in G^r \mid c(X) \leq \ell\} \right\rvert}{|G|^r}\\\nonumber
        &= \underset{X \sim D_1}{\Pr}[c(X) > \ell] \cdot \underset{X \sim D_0}{\Pr}[c(X) \leq \ell]\\
        &\leq \underset{X \sim D_1}{\Pr}[c(X) > \ell] \label{eq:SD DL D1 bound}
    \end{align}
    Note that the standard deviation of $c(X)$ when $X\sim D_1$ is less than $\sqrt{2\binom{r}{k}/|G|}$ for large enough $r$ (\cref{lemma:c-X stats}, \cref{eq:D-1 variance}). Furthermore, \cref{eq:delta relates to size of G-admissibility} at density 1 implies $|G^r|\geq r^k/2$. Therefore,
    $$
        \underset{X\sim D_1}{\E}[c(X)]<1+\frac{\binom{r}{k}}{|G|}\leq 1+\frac{2\binom{r}{k}}{r^k}<3,\hspace{10pt} \text{using \cref{lemma:c-X stats}, \cref{eq:D-1 expectation}.}
    $$
    We can now apply Chebyshev's inequality (\cref{lemma:chebyshev}) to get,
    \begin{align*}
        SD&(D^\ell, D_1)\\
        &\leq \Pr_{X\sim D_1}[|c(x)-3|>\ell-3]\\
        &\leq \Pr_{X\sim D_1}\left[|c(X)-\exp{c(X)}|>\frac{\ell-3}{\sqrt{2\binom{r}{k}/|G|}}\Std(c(X))\right]\\
        &\leq \frac{2\binom{r}{k}}{|G|(\ell-3)^2}\\
        &\leq \frac{2}{(\ell-3)^2} \cdot \left(\frac{er}{k}\right)^k \cdot \frac{1}{r^k}\\ 
        &\leq\frac{2}{(\ell-3)^2}&&\qedhere
    \end{align*}

\end{proof}

\noindent This establishes that $D^\ell$ is not `too far' from $D_1$, and implies a bound on the additive error between the success of an algorithm for the planted distribution and its success on $D^\ell$. 
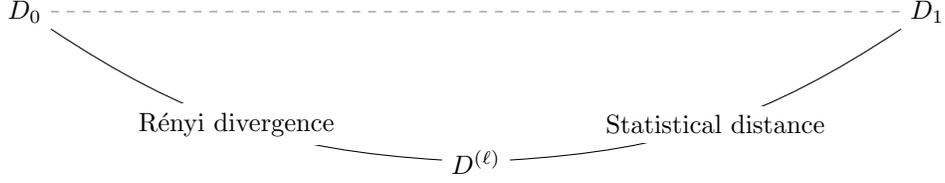
\begin{figure}
    \centering
    \begin{tikzpicture}
        \node (D0) at (0,0) {$D_0$};
        \node (D1) at (12,0) {$D_1$};
        \node (Dl) at (6,-2) {$D^{(\ell)}$};
        \draw[dashed,color=gray] (D0) -- (D1);
        \draw (D0) to[bend right=15] node[fill=white]{Rényi divergence} (Dl);
        \draw (Dl) to[bend right=15] node[fill=white]{Statistical distance} (D1);
    \end{tikzpicture}
    \caption{Depiction of the argument for statistical closeness of $D_0$ and $D_1$. We define a class of hybrid distributions $D^{(\ell)}$ for which we can bound the distance to $D_0$ and $D_1$, specifically we bound the Rényi divergence between $D_0$ and $D^{(\ell)}$ and the statistical distance between $D^{(\ell)}$ and $D_1$. We show that for large enough values of $r$, there is an $\ell$ such that the nonplanted error probability on of any algorithm that solves the planted problem is some constant $>0$.}
\end{figure}
We are now ready to show the main result of this section.

\begin{proofof}{\cref{thm:planted_nonplanted_equivalence}} 
We now prove the main result of this section. Let $\A$ be an algorithm that solves planted $k$-SUM, and let $\epsilon>0$ be a constant that lower bounds its success probability. We shall prove that $\A$ solves non-planted $k$-SUM with probability $\geq \epsilon'$ for some other constant $\epsilon'>0$. Since $\A$ succeeds with probability $\geq\epsilon$ on inputs from $D_1$, it must have a success probability at least $\epsilon-SD(D^\ell,D_1)$ on inputs from $D^\ell$. \cref{lemma:renyi} now implies that the success probability of $\A$ on inputs from $D_0$ must satisfy
\begin{align*}
    \epsilon' &\geq \frac{\epsilon - SD(D^\ell,D_1)}{R(D^\ell \lVert D_0)}
    \intertext{Plugging in values from \cref{lemma:sd,cor:renyi}, we get that,}
    &\geq \frac{\epsilon - \frac{2}{(\ell-3)^2}}{\left(\frac{|G|}{\binom{r}{k}} \cdot \ell + \underset{X \sim D_1}{\Pr}[c(X) > \ell]\right)}\\
    &\geq \frac{\epsilon - \frac{2}{(\ell-3)^2}}{\frac{\ell|G|}{\binom{r}{k}} + \frac{2}{(\ell-3)^2}}
\end{align*}

To ensure that $\epsilon'>0$, solving for $\ell$, we get that $\ell > 3 + \sqrt{\frac{2}{\epsilon}}$. Note that for any $\epsilon$ and $k$, there is a viable such $\ell$ for sufficiently large $r$ (note that $\ell$ is confined to the interval $[0, \binom{r}{k}]$, which concludes the proof. Specifically, we let $\ell = 3 + \frac{2}{\sqrt{\eps}}$ which gives a bound of,
\begin{align*}
    \epsilon' &\geq \frac{\epsilon/2}{\frac{(3\sqrt{\epsilon}+2)|G|}{\binom{r}{k}\sqrt{\epsilon}} + \frac{\epsilon}{2}} 
    \intertext{By \cref{eq:delta relates to size of G-admissibility}, we have $\frac{|G|}{\binom{r}{k}}< \frac{2r^k}{(r/k)^k}=2k^k$. So the above inequality simplifies to}
    &> \frac{\epsilon^{3/2}}{(12\sqrt{\epsilon}+8)k^k+\epsilon^{3/2}} \\
    &>\frac{\epsilon^{3/2}}{21k^k} =\Omega_k\left(\epsilon^{3/2}\right)&&\qedhere
\end{align*}
\end{proofof}

\paragraph{Stronger Equivalence in the Dense Regime.} We now show the second theorem of this section, namely that the two distributions are close in a stronger sense in the dense regime
\begin{proofof}{\cref{thm:stat_indist}}
    Let $M$ be the set of instances without a solution. By \cref{lemma:explicit_pmf}, we know that $D_1$ has the property that $D_1(X) = 0$ for every $X \in M$ and that $D_1(X) \geq D_0(X)$ for every $X \in \supp(D_1)$. Hence, we get that,
\begin{align*}
    SD\left(D_0, D_1\right) = \sum_{X \in M} D_0(X) = \frac{|M|}{|G|^r} &= \underset{X \sim D_0}{\Pr}[c(X) = 0] = 1 - \underset{X \sim D_0}{\Pr}[c(X) > 0]
\end{align*}
We can now use the Paley-Zygmund inequality (\cref{lemma:paley_zygmund}) to get 
\begin{align*}
    \underset{X \sim D_0}{\Pr}[c(X) > 0]
    &\geq \frac{\underset{X \sim D_0}{\E}[c(X)]^2}{\underset{X \sim D_0}{\Var}[c(X)]+\underset{X \sim D_0}{\E}[c(X)]^2}
    \intertext{Substituting the values from \cref{lemma:c-X stats}, \cref{eq:D-0 expectation,eq:D-0 variance}, we get}
    &= \frac{\left(\binom{r}{k}/|G|\right)^2}{\frac{\binom{r}{k}}{|G|}\left(1-\frac{1}{|G|}\right)+\left(\binom{r}{k}/|G|\right)^2}\\
    &=\frac{\binom{r}{k}}{|G|-1+\binom{r}{k}}\\
    &>1-\frac{|G|}{|G|+\binom{r}{k}}
\end{align*}
This means we can upper bound the statistical distance as follows.
\begin{align*}
    SD\left(D_0, D_1\right) < \frac{|G|}{|G|+\binom{r}{k}} <\frac{|G|}{\binom{r}{k}} < \frac{2r^{k/\Delta}}{(r/k)^k} = \mathcal{O}\left({k^kr^{k\left[\frac1\Delta-1\right]}}\right)=\O_k\left({r^{k\left[\frac1\Delta-1\right]}}\right) &\qedhere
\end{align*}

\end{proofof}
\subsection{Conditional Lower Bounds for Sparse \texorpdfstring{$k$}{}-SUM}
\label{sec:lower_bound}
In this section, we establish two different conditional lower bounds for planted $k$-SUM in the sparse regime. In \cref{sec:linear lower bound}, we describe a sparsification procedure on $k$-SUM that reduces the size of the input array to decrease the density of an instance. The resulting reduction establishes a conditional lower bound for recovery and detection that is non-trivial at any density $\Delta\in\left[\frac12,1\right)$. Next in \cref{sec:lower_bound_discrete}, we use a different method to lower density by changing the value of $k$; this gives us non-trivial bounds at some particular densities in $\left[\frac{2}{3},1\right)$.

Before going into further details, let us describe the conditional lower bound by Dinur, Keller and Klein \cite{dinur_keller_klein}. They establish a conditional lower bound for the dense regime $\Delta \in (1,2]$. We describe their reduction at a high level for the case of $G=GF_{2^m}$ with $k$ even for simplicity of exposition.\footnote{See \cref{footnote:integers} in \cref{sec:planted_k_sum} discussing the slightly different definition of the $k$-SUM problem as considered by~\cite{dinur_keller_klein}.} Here, we may interpret the input as a matrix $X \in \F_2^{m \times r}$, with the goal being to find $k$ columns that XOR to the all-zero vector. Their lower bound is established by giving a reduction from an instance of density 1 to a dense instance by removing rows from the instance and giving this instance to a dense oracle. 

Now suppose we wish to convert a density 1 instance to having density $\Delta \in (1,2]$. In order to do this, we need to remove $t = k \log r-m$ rows from the instance. This process introduces $2^m / r^k = r^{k \,\left[\frac1\Delta - 1\right]}$ new solutions in expectation. Hence, ignoring constant factors, assuming that the oracle returns a random solution, we need to invoke the oracle $r^{k\,\left[\frac1\Delta - 1\right]}$ many times to obtain constant success probability. Now suppose the dense oracle takes time $T$, then we can solve a density 1 instance in time $r^{k\,\left[\frac1\Delta - 1\right]}\, T$ which by \cref{conj:ksum}\footnote{Throughout this section, we use a weaker version of \cref{conj:ksum,conj:kxor} that state a lower-bound of $r^{k/2-o(1)}$ rather than $r^{\ceil{k/2}-o(1)}$. This is done for simplicity in our expressions. Note that this relaxation only weakens our lower bounds, which are hence actually stronger than stated for certain values of $k$ and $\Delta$.} must satisfy $r^{k\,\left[\frac1\Delta - 1\right]}\, T \geq r^{k/2-o(1)}$, and hence we must have that $T \geq r^{k\,\left[\frac12 - \frac1\Delta\right]-o(1)}$. This establishes a lower bound for the dense case, and assuming the oracle returns a random solution. However, this is not the case of a malicious oracle as the inputs as described are highly correlated. Thus, the main technical contribution of \cite{dinur_keller_klein} is an obfuscation procedure that ensures the oracle gives (mostly) random responses, whose correctness is analyzed using discrete Fourier analysis. The lower bound they obtain is known to be optimal for $k=3,4,5$.
\begin{theorem}[Dinur, Keller, Klein \cite{dinur_keller_klein}]
  Suppose \cref{conj:ksum} (resp. \cref{conj:kxor}) is true. Then, for $k\in\Nat$ and $\Delta\in(1,2]$, any algorithm that solves search $k$-SUM in $\G_{\text{$k$-SUM}}^{(\Delta)}$ (resp. $\G_{\text{$k$-XOR}}^{(\Delta)}$\,) with constant success probability has to take expected time $\Omega\left(r^{\left[k\,\left(\frac1\Delta-\frac12\right)\right]-o(1)}\right)$.
\end{theorem}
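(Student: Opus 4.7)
The theorem restates the Dinur--Keller--Klein lower bound for the dense regime, and the proof follows the sketch given in the paragraph preceding the statement, which I now expand into a plan. The goal is to reduce (non-planted) search $k$-SUM at density $1$ --- hard by \cref{conj:ksum} (resp.\ \cref{conj:kxor}) --- to (non-planted) search $k$-SUM at density $\Delta \in (1,2]$. The $k$-SUM and $k$-XOR cases are identical up to replacing $\Int_{2^m}$ with $GF_{2^m}$, so I describe the argument only for $\G_{\text{$k$-SUM}}^{(\Delta)}$.

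First I would set up a density-lowering map. At density $1$ we have $G = \Int_{2^m}$ with $m \approx k \log r$; at density $\Delta$ we need $G' = \Int_{2^{m'}}$ with $m' \approx m/\Delta$. Given a density-$1$ instance $X \in G^r$ --- which, by the Paley--Zygmund calculation appearing in the proof of \cref{thm:stat_indist}, contains at least one $k$-SUM solution with constant probability --- I would draw a uniformly random surjective homomorphism $h:G \to G'$ and set $X' = (h(X_1),\dots,h(X_r))$. Since $h$ is linear, every $k$-SUM solution of $X$ is also a $k$-SUM solution of $X'$; since $h$ is a surjective homomorphism on i.i.d.\ uniform inputs, the entries of $X'$ are themselves i.i.d.\ uniform on $G'$, so $X'$ is distributed marginally as a uniformly random density-$\Delta$ instance. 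I would then feed $X'$ to the hypothetical density-$\Delta$ oracle $\A$, receive a candidate support $S$, verify whether $\sum_{i\in S} X_i = 0$ in $G$ (which holds iff $S$ lifts to a solution of $X$ rather than being spuriously introduced by $h$), and repeat with fresh randomness.

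The counting in \cref{lemma:c-X stats} shows that $X'$ typically carries $N := \Theta(r^{k(1-1/\Delta)})$ $k$-SUM solutions, exactly one of which is the lifted solution from $X$. If $\A$ returned a uniformly random solution of $X'$, then $\Theta(N)$ independent repetitions would suffice to recover the lifted solution with constant probability, for total runtime $\Otilde(N \cdot T(r))$. Combining with the $r^{k/2 - o(1)}$ lower bound from \cref{conj:ksum} then gives
\begin{align*}
T(r) \;\geq\; r^{k/2 - k(1 - 1/\Delta) - o(1)} \;=\; r^{k(1/\Delta - 1/2) - o(1)},
\end{align*}
as claimed. The main obstacle, and the step requiring essentially all of the technical work, is making the ``uniformly random solution'' assumption legitimate against a \emph{malicious} $\A$, which could otherwise always return a solution it recognises as non-lifted. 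The plan here is to additionally randomise the instance by a uniformly random column permutation $\pi \in S_r$ together with a solution-preserving additive obfuscation (adding a tuple $(y_1,\dots,y_r) \in (G')^r$ whose entries sum to zero over a freshly sampled random support), and then to run a discrete-Fourier analysis \`a la Dinur--Keller--Klein on the induced distribution to show that the oracle's output (after being undone by $\pi^{-1}$) is, in total variation, $o(1/N)$-close to uniform over the $N$ solutions of $X'$. Producing this $o(1/N)$ Fourier bound --- which dictates precisely how the obfuscation parameters must be sampled --- is the crux of the argument; once it is in place, the $N$-fold repetition and verification argument above yields the claimed lower bound.
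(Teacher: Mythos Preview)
Your proposal is correct and follows essentially the same approach the paper describes: project the density-$1$ instance into a smaller group (``remove rows'' in the $k$-XOR picture, reduce modulo $2^{m'}$ in the $k$-SUM picture), observe that this creates $N = \Theta(r^{k(1-1/\Delta)})$ solutions of which one is the original, and argue that with appropriate obfuscation the oracle's answer is close to uniform over these $N$ solutions so that $\Theta(N)$ repetitions suffice. Note that the paper does not actually give a proof of this theorem --- it is a cited result of Dinur, Keller and Klein --- so the only benchmark is the informal sketch preceding the statement, which your plan matches, including the identification of the Fourier-analytic obfuscation bound as the technical crux.
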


As a first observation, note that this lower bound is easily adaptable to the sparse setting (at least in the case of $k$-XOR). Here, instead of removing rows to increase the density, we will add random rows to lower the density and give the resulting instance to the sparse oracle. Here, we do not need to worry about correlations between instances, as we are not introducing new solutions. In fact, the oracle cannot be malicious as it has to be correct over the randomness of the instance which is distributed exactly according to what it expects. Note that by adding $t$ rows, the original solution is preserved with probability $2^{-t}$ and hence we will have to invoke to oracle $\Omega(2^t)$ times to recover the solution with constant probability. Now suppose we start with a density 1 instance: in order to convert this to a density $\Delta$ instance, we need to add $t = k \log r \left(\frac1\Delta - 1\right)$ such rows. Assuming it takes time $T$ to solve the instance at density $\Delta$, by \cref{conj:ksum} we get a bound of $2^t\, T \geq r^{k/2}$, i.e. $r^{k\,\left[\frac1\Delta-1\right]}\, T \geq r^{k/2}$, and thus $T \geq r^{k\,\left[\frac32 - \frac1\Delta\right]}$ which is non-trivial for $\Delta \geq \frac23$. This reduction establishes the following lower bound.
\begin{theorem}[Follows from techniques in \cite{dinur_keller_klein}]
\label{thm:k-XOR weak lower bound}
    Suppose \cref{conj:kxor} is true. Then, for $k\in\Nat$ and $\Delta\in(\frac23,1]$, any algorithm that solves search $k$-SUM in $G_{\text{$k$-XOR}}^{(\Delta)}$ with constant success probability has to take expected time $\Omega\left(r^{\left[k\,\left(\frac3{2} - \frac1\Delta\right)\right]-o(1)}\right)$.
\end{theorem}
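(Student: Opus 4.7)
The plan is to reduce from planted $k$-XOR at density $1$ to planted $k$-XOR at density $\Delta \in (2/3, 1]$; combined with the equivalence between the planted and non-planted problems at density $1$ established in \cref{thm:planted_nonplanted_equivalence} and the assumption of \cref{conj:kxor}, this will yield the stated lower bound. Given a planted instance $X \in \F_2^{m_1 \times r}$ at density $1$ (so $m_1 = \lceil k \log r \rceil$) with planted solution-set $S$, the sparsifier would sample $t = \lceil k \log r \cdot (1/\Delta - 1) \rceil$ fresh uniformly random rows $Y \in \F_2^{t \times r}$ and stack them under $X$ to form $X' = [X; Y] \in \F_2^{m_2 \times r}$ with $m_2 = m_1 + t$, which corresponds to density $\Delta$. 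Because any $k$-XOR solution for $X'$ automatically restricts to a $k$-XOR solution for $X$, it suffices to ensure that $S$ survives as a solution in $X'$, which happens precisely when $\sum_{i \in S} Y_i = 0$, an event of probability exactly $2^{-t}$.

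The key distributional claim is that, conditioned on the event $E = \{\sum_{i \in S} Y_i = 0\}$, the instance $X'$ is distributed exactly as a planted density-$\Delta$ instance. This can be verified column-by-column: for $i \notin S$, the column $X'_i$ is uniform in $\F_2^{m_2}$ because its $X$-part and $Y$-part are independent and uniform; for $i \in S$, the columns are uniform subject only to the single constraint $\sum_{i \in S} X'_i = 0$ in $\F_2^{m_2}$, which arises as the composition of the planted constraint in $\F_2^{m_1}$ (inherent in $X$) with the fresh constraint in $\F_2^{t}$ imposed by the conditioning on $E$. Unlike the dense-regime reduction of Dinur--Keller--Klein, no obfuscation step is required here, since appending random rows is a solution-preserving operation that neither destroys the planted witness nor introduces spurious solutions in a correlated way.

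Putting these together, if $\A$ is an algorithm solving planted $k$-XOR at density $\Delta$ in time $T$ with success probability $\epsilon = \Omega(1)$, then running $N = \Theta(2^t / \epsilon)$ independent iterations of the sparsifier (with fresh $Y$ each time), invoking $\A$ on each $X'$, and outputting any returned set $S'$ that also satisfies $\sum_{i \in S'} X_i = 0$ solves planted $k$-XOR at density $1$ with constant probability in expected time $\Otilde(2^t \cdot T) = \Otilde(r^{k(1/\Delta - 1)} \cdot T)$. By \cref{thm:planted_nonplanted_equivalence} such an algorithm also solves non-planted $k$-XOR at density $1$ with constant success probability, whence \cref{conj:kxor} forces $r^{k(1/\Delta - 1)} \cdot T \geq r^{k/2 - o(1)}$, yielding $T \geq r^{k(3/2 - 1/\Delta) - o(1)}$, which is non-trivial precisely when $\Delta > 2/3$. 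The main technical obstacle I anticipate is the careful transfer of the success probability of $\A$ to the conditional instance: one must ensure that conditioning on $E$ reproduces $D_1^{(\Delta)}$ \emph{exactly} (so $\epsilon$ is preserved without a statistical-distance penalty) and that fresh $Y$ yields genuinely independent trials across iterations; additional bookkeeping is also needed to absorb the admissibility convergence error of \cref{def:admissible} into the $r^{-o(1)}$ slack in the final exponent.
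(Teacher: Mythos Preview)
Your proposal is correct and follows essentially the same approach as the paper: append $t = k\log r\,(1/\Delta - 1)$ uniformly random rows to a density-$1$ instance, note that the planted solution survives with probability $2^{-t}$ and that conditioned on survival the resulting instance is exactly a planted density-$\Delta$ instance, and repeat $\Theta(2^t)$ times. The paper's own argument is only a short sketch; your write-up is more careful in making explicit the use of \cref{thm:planted_nonplanted_equivalence} to bridge between planted and non-planted hardness at density $1$, and in verifying the exact distributional match under the conditioning.
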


\subsubsection{Lower Bound for Densities \texorpdfstring{$\Delta\in\left(\frac{1}{2},1\right)$}{}}\label{sec:linear lower bound}
In this section, we show how to generalize \cref{thm:k-XOR weak lower bound} to $k$-SUM in arbitrary groups. Specifically, we will prove the following theorem.

\begin{theorem}
\label{thm:lb-1}
  Consider some $k,r\in\Nat$, $\Delta\in\left(\frac12,1\right)$, $\eps\in(0,1]$, and Abelian group $G$. Suppose there is an algorithm that runs in time $T$ and, given an instance of $r^\Delta$ uniformly random group elements from $G$ with a planted $k$-SUM solution, outputs a $k$-SUM solution for it with probability $\eps$. Then, for some constants $c\in\Nat, \eps'\in(0,1]$, there is an algorithm that runs in time $(c\cdot T\cdot r^{k\,(1-\Delta)}\cdot\log(\size{G}))$ that, given an instance of $r$ uniformly random group elements from $G$ with a planted $k$-SUM solution, outputs a $k$-SUM solution for it with probability $\eps'$.
\end{theorem}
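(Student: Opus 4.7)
The plan is to reduce an $r$-element planted instance over $G$ to many $r^\Delta$-element planted instances via random subsampling, and then invoke $\A$ on each. Given an input $X \sim D_1^{(r)}$ with planted solution $S^* \subseteq [r]$, the reduction repeats the following for $t = \Theta(r^{k(1-\Delta)}/\eps)$ iterations: sample $T \subseteq [r]$ uniformly with $|T| = r^\Delta$, form the restriction $X|_T$ by relabelling $T$ as $[r^\Delta]$ in natural (increasing) order, run $\A$ on $X|_T$, verify in time $O(k \log |G|)$ whether the returned set (mapped back to $[r]$) is a $k$-SUM solution of $X$, and output it if so. The driving intuition is that whenever $T$ happens to contain $S^*$, the subinstance $X|_T$ is \emph{itself} a planted $r^\Delta$-instance in the correct distribution, so $\A$ finds its solution with probability $\eps$.

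The key distributional claim is: conditioned on $S^* \subseteq T$, the relabelled restriction $X|_T$ is distributed exactly as $D_1^{(r^\Delta)}$ over $G$. This follows from two symmetries: (i) under the conditioning, the pair $(S^*, T)$ is equivalent to sampling $T$ uniformly and then sampling $S^*$ uniformly inside $T$, so the relabelled image of $S^*$ is uniform over $k$-subsets of $[r^\Delta]$; (ii) entries of $X$ outside $S^*$ are i.i.d.\ uniform in $G$, while entries at $S^*$ satisfy the ``sum-to-zero with smallest-indexed element replaced'' constraint of $D_1$, and both properties survive the order-preserving relabelling. The probability of the conditioning event is
\begin{equation*}
  p \;=\; \Pr[S^* \subseteq T] \;=\; \prod_{j=0}^{k-1} \frac{r^\Delta - j}{r-j} \;\geq\; \frac{1}{2^k}\, r^{-k(1-\Delta)}
\end{equation*}
for all sufficiently large $r$, which is exactly the factor that appears in the claimed running time.

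For the overall success probability, define $q_X := \Pr_{T, R_\A}[\A \text{ succeeds on } X|_T \mid S^* \subseteq T]$. The distributional claim gives $\E_X[q_X] \geq \eps$, and since $q_X \in [0,1]$, a standard averaging argument (using $\eps \leq \E[q_X] \leq (\eps/2) + \Pr_X[q_X \geq \eps/2]$) yields $\Pr_X[q_X \geq \eps/2] \geq \eps/2$. Conditioning on such a ``good'' $X$, the $t$ trials are mutually independent given $X$ (different $T_i$ and different internal randomness), each succeeding with probability at least $p \cdot \eps/2$, so with $t = \lceil 4/(p\eps) \rceil$ the probability that no trial succeeds is at most $(1 - p\eps/2)^t \leq e^{-2}$. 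Unconditionally, the success probability is therefore at least $(\eps/2)(1 - e^{-2}) = \Omega(\eps)$, which we identify with the constant $\eps'$. The total running time is $t \cdot (T + O(k \log|G|)) = O\big(T \cdot r^{k(1-\Delta)} \cdot \log|G|\big)$ after absorbing all factors depending on $k, \Delta, \eps$ into the constant $c$.

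The main obstacle I anticipate is the distributional claim: verifying that $X|_T$ given $S^* \subseteq T$ matches $D_1^{(r^\Delta)}$ \emph{on the nose}, including that the ``smallest-indexed element replaced'' convention in $D_1$ is preserved under the order-preserving relabelling of $T$ to $[r^\Delta]$. This is a finicky combinatorial symmetry check rather than a technically deep step. A secondary subtlety is that the $t$ trials share the same $X$ and are only \emph{conditionally} independent; a naive independence or union-bound argument would yield only $\Omega(\eps p)$ success probability, so the two-level amplification — first passing to a good $X$ by the averaging argument, then exploiting conditional independence across trials — is essential to obtain a constant $\eps'$.
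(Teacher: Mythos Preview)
Your proof is correct and follows the same random-subsampling approach as the paper. The one notable difference is in the success-probability analysis: you use a two-level argument (Markov averaging to isolate ``good'' $X$, then conditional independence across trials), whereas the paper observes that the subset choices $T_i$ are independent of each other and of the element values of $X$ given $S^*$, so one can simply condition on the \emph{first} iteration $i^*$ with $S^*\subseteq T_{i^*}$ and note that $X|_{T_{i^*}}$ is exactly $D_1^{(r^\Delta)}$-distributed even under this conditioning. This lets the paper bound the success probability by $\eps(1-1/e)$ directly with only $2r^{k(1-\Delta)}$ iterations, avoiding both the extra $1/\eps$ factor in $t$ and the averaging step. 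Your concern about dependence through the shared $X$ is legitimate in general but moot here for this structural reason; your argument is a valid workaround, just more than is needed.
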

\begin{proof}
    Let $\A^{(\Delta)}$ be the algorithm that given $r^\Delta$ random group elements from the group $G$ with a planted solution, outputs a $k$-SUM solution with probability at least $\epsilon$. We then construct the following algorithm $\A^{(1)}$ for recovering $k$-SUM solutions given $r$ random elements from $G$ with a planted solution.
    
    \paragraph{Algorithm $\A^{(1)}(X)$}
    \begin{enumerate}
        \item Repeat $2r^{k\,(1-\Delta)}$ times:
        \begin{enumerate}
            \item[1.1.] Initialize $X'$ to be an empty array.
            \item[1.2.] Randomly choose $r^{\Delta}$ elements from $X$ and copy them to $X'$.
            \item[1.3.] Define $P$ to be the indexing function such that $X'[i]=X[P(i)]$.
            \item[1.4.] Let $K \gets \A^{(\Delta)}(X')$
            \item[1.5.] If $K$ is a solution, return $P(K)$.
        \end{enumerate}
    \end{enumerate}
    
    \noindent By definition of planted $k$-SUM, we know that $X$ has at least one solution $K$. On any given iteration, the probability of all $k$ of those elements being copied to $X'$ is $\frac{r^\Delta}{r}\cdot\frac{r^\Delta-1}{r}\cdots\frac{r^\Delta-k+1}{r} > \frac{r^{k(\Delta-1)}}{2}$. Therefore, the probability that we call $A$ on an array containing all the elements of $\kappa$ at least once is at least, 
    $$
        1-\left(1-\frac{r^{k\,(\Delta-1)}}{2}\right)^{2r^{k\,(1-\Delta)}}\geq 1-\frac{1}{e}=\Omega(1).
    $$ 
    We claim that the probability distribution induced on $X'$ conditioned on the original solution being preserved is just the planted distribution on density $\Delta$. Observe that the elements of $X$ outside $K$ are uniformly i.i.d from $G$, and $K$ independently contains a uniformly random $k$-tuple from $G$ that sums to 0. Therefore, the elements of $X'$ outside $K$ are also uniformly i.i.d from $G$, and $K$ still contains a uniformly random $k$-tuple from $G$ that sums to 0. We can conclude that if $\A^{(\Delta)}$ gets called on an array where the solution is preserved, its input will look like an average-case instance sampled from the planted distribution, and $\A^{(\Delta)}$ will succeed with probability $\epsilon$. The overall success probability of $\A^{(1)}$ is therefore at least $\epsilon \left(1 - \frac1e\right)=\Omega(1)$. The runtime of $\A^{(1)}$ is $\mathcal{O}\left(r^{k\,(1-\Delta)}\,T\right)$, as desired.
\end{proof}
\noindent This reduction immediately gives a lower bound on $k$-SUM in terms of the density.
\begin{corollary}[Conditional Lower Bound]
  \label{cor:lb-1}
  Suppose \cref{conj:ksum} (resp. \cref{conj:kxor}) is true. Then, for $k\in\Nat$ and $\Delta\in\left[\frac12,1\right)$, any algorithm that solves planted search $k$-SUM in $\G_{\text{$k$-SUM}}^{(\Delta)}$ (resp. $\G_{\text{$k$-XOR}}^{(\Delta)}$) with constant success probability has to take expected time $\Omega\left(r^{\left[k\,\left(1-\frac1{2\Delta}\right)\right]-o(1)}\right)$.
\end{corollary}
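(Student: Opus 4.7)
The plan is to chain together three ingredients: (i) the sparsification reduction of \cref{thm:lb-1}, which converts an algorithm for planted $k$-SUM at density $\Delta$ into one for planted $k$-SUM at density $1$ within the \emph{same} group; (ii) the planted-to-non-planted reduction at density $1$ given by \cref{thm:planted_nonplanted_equivalence}; and (iii) the average-case $k$-SUM conjecture (\cref{conj:ksum}), respectively the $k$-XOR conjecture (\cref{conj:kxor}), which lower-bounds the runtime at density~$1$.

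Concretely, suppose for contradiction that there is an algorithm $\A$ solving planted search $k$-SUM over $\G_{k\text{-SUM}}^{(\Delta)}$ on instances of size $n$ with constant success probability in time $T(n) = o\bigl(n^{k(1-1/(2\Delta))-o(1)}\bigr)$. Setting $n = r^{\Delta}$ aligns the underlying group, since the group in $\G_{k\text{-SUM}}^{(\Delta)}$ on $n = r^\Delta$ elements has size roughly $n^{k/\Delta} = r^k$, which matches density~$1$ on inputs of size $r$. Applying \cref{thm:lb-1} then yields an algorithm $\A'$ for planted search $k$-SUM at density~$1$ on inputs of size $r$, running in time
\begin{equation*}
    O\bigl(T(r^\Delta) \cdot r^{k(1-\Delta)} \cdot \log|G|\bigr) = o\bigl(r^{k(\Delta - 1/2) + k(1-\Delta) - o(1)}\bigr) = o\bigl(r^{k/2 - o(1)}\bigr),
\end{equation*}
where we absorbed the $\log|G| = O(k \log r)$ factor into the $o(1)$ in the exponent.

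Next, apply \cref{thm:planted_nonplanted_equivalence} to $\A'$: since $\A'$ has constant success probability on the planted distribution at density $1$, there is an algorithm $\A''$ solving the \emph{non-planted} search $k$-SUM problem at density $1$ with constant success probability, in essentially the same running time (up to a $k$-dependent constant factor, which is absorbed). But the resulting runtime $o(r^{k/2 - o(1)})$ contradicts \cref{conj:ksum}, which asserts a lower bound of $\Omega(r^{\lceil k/2 \rceil - o(1)})$ for precisely this problem. The $k$-XOR case is identical, using \cref{conj:kxor} in place of \cref{conj:ksum}; note that \cref{thm:lb-1} and \cref{thm:planted_nonplanted_equivalence} both hold for arbitrary admissible abelian group ensembles, so no modifications are required.

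There is no real obstacle here beyond bookkeeping: the sparsification reduction, the planted-non-planted equivalence, and the conjecture were designed to compose this way. The one place to be careful is in tracking the change of variables $n = r^\Delta$ through the exponents, and in verifying that $\Delta \geq 1/2$ is exactly the threshold at which the sparsification overhead $r^{k(1-\Delta)}$ does not swamp the gain from the density-$\Delta$ algorithm; this is why the bound becomes vacuous below density $1/2$.
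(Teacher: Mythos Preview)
Your proposal is correct and follows exactly the approach the paper intends: chain \cref{thm:planted_nonplanted_equivalence} (to pass from the non-planted conjecture to hardness of planted $k$-SUM at density $1$) with \cref{thm:lb-1} (the sparsification reduction), then solve the resulting inequality $r^{k(1-\Delta)}\cdot T(r^{\Delta}) \geq r^{k/2 - o(1)}$ for $T$. The paper does not spell out the corollary's proof, but the technical overview and the opening of Section~4.2 describe precisely this composition; your explicit change of variables $n = r^{\Delta}$ and the exponent arithmetic are correct.
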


\subsubsection{Reducing Between \texorpdfstring{$k$}{}-SUM for Different \texorpdfstring{$k$}{}'s}\label{sec:lower_bound_discrete}
In this section, we will present a different sparse conditional lower bound. This bound also applies to any group. Recall that we previously decreased the density by reducing the number of elements in the instance. Instead, now we will reduce the density by compressing elements of the inputs and hope that the resulting instance has a `nice' structure. An interesting feature of this lower bound is that it relates the hardness of $k$-SUM to the hardness of $k'$-SUM at a different density (where $k\neq k'$).

\begin{theorem}\label{thm: discrete lower bound}
  Consider $k_1,k_2\in\Nat$ such that $k_1\geq3$ and $k_2\in [k_1+1, 2k_1-1]$, and an ensemble $\G$ of density $\Delta_1 \leq \frac{k_1}{k_2}$. Suppose there exists an algorithm that runs in time $T(r)$ and solves planted search $k_1$-SUM on $\G$ with constant success probability. Then, there is an algorithm $B$ that runs in time $\mathcal{O}\left(r^{k_2-k_1}T(r)\right)$ and solves planted search $k_2$-SUM on $\G$ with constant success probability.
\end{theorem}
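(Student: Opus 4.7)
The plan is to reduce planted $k_2$-SUM on $\G$ to planted $k_1$-SUM on $\G$ by \emph{contracting} random disjoint groups of elements into their sums, generalizing the pair-contraction idea sketched in the overview. Set $a = k_2 - k_1 + 1$; a contracted group of size $a$ will play the role of a single entry of a $k_1$-SUM solution, collapsing $a$ entries of the original $k_2$-SUM solution. Given a planted $k_2$-SUM instance $X \in (G^{(r)})^r$, the reduction repeats the following $N = \Theta(r^{k_2-k_1})$ times with fresh randomness: sample $t = \lfloor r/(2a) \rfloor$ disjoint subsets $T_1, \ldots, T_t \subseteq [r]$ of size $a$ each uniformly at random; form an instance $X'$ of size $r' = r - t(a-1)$ consisting of the $t$ group sums $Y_j = \sum_{i \in T_j} X_i$ together with the $r - ta$ untouched entries of $X$; apply a uniformly random permutation to $X'$; feed the result to the given $k_1$-SUM algorithm $\A$; and expand any returned solution (replacing each summed entry $Y_j$ by the indices $T_j$) into a candidate $k_2$-SUM solution for $X$, outputting it if valid.

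The analysis hinges on the \emph{good event} $E$: exactly one group, say $T_{j^*}$, is contained in the planted solution set $K$ (with $|K| = k_2$), and every other element of $K$ lies outside $\bigcup_{j \neq j^*} T_j$. Under $E$, the summed entry $Y_{j^*}$ together with the $k_1-1$ untouched entries of $K$ sums to $\sum_{i \in K} X_i = 0$, producing a $k_1$-SUM solution in $X'$. I would then argue that conditional on $E$, the instance $X'$ is distributed exactly as a planted $k_1$-SUM instance of size $r'$ over $G^{(r)}$: the other sums $Y_j$ for $j \neq j^*$ are each a sum of $a \geq 2$ independent uniform elements, hence uniform in $G^{(r)}$; the untouched entries outside $K$ are uniform and independent by construction; the $k_1-1$ untouched entries of $K$ are marginally uniform since any proper subset of a uniform zero-sum $k_2$-tuple is i.i.d.\ uniform; and $Y_{j^*}$ equals the negation of the sum of these $k_1-1$ entries. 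The random permutation then makes the location of the planted solution uniform over $\binom{[r']}{k_1}$. Since $\Delta_1 \leq k_1/k_2 < 1$, the expected number of spurious $k_1$-SUM solutions in $X'$ is $o(1)$, so $\A$ succeeds on it (conditional on $E$) with the same constant probability with which it succeeds on $\G$.

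The probabilistic estimate is a first/second moment calculation. By symmetry, each $T_j$ is marginally uniform over size-$a$ subsets of $[r]$, so
\[ \E\left[|\{j : T_j \subseteq K\}|\right] = t \cdot \binom{k_2}{a} \big/ \binom{r}{a} = \Theta\bigl(r^{1-a}\bigr) = \Theta\bigl(r^{-(k_2-k_1)}\bigr). \]
A pairwise estimate bounds the probability that two groups are simultaneously contained in $K$ by $O\bigl(t^2 \cdot (\binom{k_2}{a}/\binom{r}{a})^2\bigr) = O(r^{-2(k_2-k_1)})$, so the probability that \emph{exactly} one group is $\subseteq K$ remains $\Theta(r^{-(k_2-k_1)})$. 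Conditional on $T_{j^*} \subseteq K$, the union $\bigcup_{j \neq j^*} T_j$ is a uniformly random $(t-1)a \leq r/2$ subset of $[r] \setminus T_{j^*}$; each of the $k_1 - 1$ remaining elements of $K$ avoids this union with probability $\Omega(1)$, and all $k_1 - 1$ do so jointly with probability $\Omega_{k_1}(1)$ (via a standard telescoping product bound). Thus $\Pr[E] = \Omega(r^{-(k_2-k_1)})$, and $N = \Theta(r^{k_2-k_1})$ independent repetitions suffice to make $E$ occur (and be solved correctly by $\A$) with constant probability, yielding the claimed runtime $\mathcal{O}(r^{k_2-k_1} T(r))$.

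The main technical obstacle will be reconciling the mismatch that $\A$ is guaranteed to work on the ensemble $\G$ at size $r'$ with inputs drawn from $G^{(r')}$, whereas the reduction produces inputs from $G^{(r)}$ of size $r'$. Admissibility of $\G$ ensures that $\bigl|\log|G^{(r)}| - \log|G^{(r')}|\bigr|$ is small and that the density of the constructed instance converges to $\Delta_1$ as $r \to \infty$; but a precise argument is still needed that $\A$ retains its constant success probability on such slightly mis-sized instances. For the concrete ensembles $\G^{(\Delta_1)}_{\text{$k$-SUM}}$ and $\G^{(\Delta_1)}_{\text{$k$-XOR}}$ this should be essentially immediate (since the groups $G^{(r)}$ and $G^{(r')}$ differ only in representation size and embed naturally into one another), and the general case can be handled either by a direct reduction between the relevant groups or by padding the constructed instance with uniformly random elements to match the size expected by $\A$.
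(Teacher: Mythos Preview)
Your approach is correct and achieves the same bound, but the contraction scheme differs from the paper's. You merge a \emph{single} group of size $a = k_2 - k_1 + 1$ and hope that it lands entirely inside the planted set $K$ (with the remaining $k_1-1$ elements of $K$ left untouched). The paper instead always uses \emph{pairs}: it copies a random half of the elements verbatim, pairs up the other half, and hopes that $2k_1-k_2$ elements of $K$ fall in the copied half while the remaining $2(k_2-k_1)$ elements of $K$ get paired among themselves into $k_2-k_1$ pairs. Both events have probability $\Theta_k(r^{-(k_2-k_1)})$, so the running-time analyses coincide. Your single-group version is arguably cleaner to state; the paper's pair-based version has the minor conceptual advantage of reusing exactly the pair-contraction idea from the technical overview without introducing a new parameter $a$.

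Regarding the ``main technical obstacle'' you flag: the paper dispatches it in one line by \emph{padding} $X'$ with fresh uniformly random elements of $G^{(r)}$ back up to size exactly $r$ before feeding it to $\A$. Conditioned on your good event $E$, the padded instance is then distributed exactly as $D_1^{(r)}$ for $k_1$-SUM over $\G$, so $\A$'s guarantee applies verbatim with no need to compare $G^{(r)}$ and $G^{(r')}$ or to appeal to admissibility. You already suggest padding as a fix; you should simply incorporate it into the construction rather than leave it as an open issue---it removes the mismatch entirely and works for arbitrary group ensembles, not just the concrete $\G_{\text{$k$-SUM}}$ and $\G_{\text{$k$-XOR}}$ families.
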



\begin{proof} 
    We start by describing the new algorithm.
    
    \paragraph{Algorithm $B(X)$}
    \begin{enumerate}
        \item Repeat $3^{2k_2}\,r^{k_2-k_1}$ times:
        \begin{enumerate}
            \item[1.1.] Initialize $X'$ to be an empty array.
            \item[1.2.] Randomly choose $\frac{r}{2}$ elements from $X$ and copy them to $X'$.
            \item[1.3.] Randomly split the remaining elements of $X$ into $\frac{r}{4}$ disjoint pairs.
            \item[1.4.] Insert the sums of each of the above $\frac{r}{4}$ pairs into $X'$.
            \item[1.5.] Add $\frac{r}{4}$ random elements of $G$ to $X'$.
            \item[1.6.] Apply a random permutation to $X'$.
            \item[1.7.] Let $S \gets A(X')$.
            \item[1.8.] If $S$ is a solution and it depends on exactly $k_2$ elements of $X$, return those $k_2$ elements.
        \end{enumerate}
    \end{enumerate}

    By definition of planted $k$-SUM, we know that $X$ has at least one solution $S$. Recall that $|S|=k_2$. We are interested in the event where $2k_1-k_2$ of the elements in $S$ were copied directly to $X'$ and the remaining $2k_2-2k_1$ elements of $S$ were paired with each other such that their sums got copied to $X'$. Clearly, this would give rise to a $(2k_1-k_2)+\frac{2k_2-2k_1}{2}=k_1$-SUM solution in $X'$. We call solutions of this type \emph{valid}. The probability of exactly $2k_1-k_2$ elements of $S$ being copied directly to $X'$ in step 1.2. is at least, 
    $$
        \left(\frac{r/2-(2k_1-k_2)}{r}\right)^{2k_2-2k_1}\geq \frac{1}{3^{2k_2}}.
    $$
    The probability that the $2k_2-2k_1$ remaining elements get paired amongst themselves in step 1.3. is at least,
    $
        \left(\frac{1}{r}\right)^{k_2-k_1}
    $.
    Therefore, the probability that we call $A$ on an array containing $k_1$-SUM solution at least once is at least, 
    $$
        1-\left(1-\frac{1}{3^{2k_2}r^{k_2-k_1}}\right)^{3^{2k_2}r^{k_2-k_1}}\geq 1-\frac{1}{e}=\Omega(1)
    $$ 
    We claim that the probability distribution induced on $X'$ conditioned on it having a \emph{valid} $k_1$-SUM solution is just the planted distribution on density $\Delta_1$. Observe that the elements of $X$ outside $S$ are uniformly i.i.d from $G$. Therefore, the $\frac{r}{2}$ elements added in step 1.2. and the $\frac{r}{4}$ elements added in step 1.5. are uniformly i.i.d. from $G$. Since $G$ is a group, the sum of two random elements is also random; this implies that the other $\frac{r}{4}$ elements of $X'$ are uniformly i.i.d. too (excluding the solution). The density of $X'$ is clearly $\frac{k_1\log(r)}{m} = \frac{k_1}{k_2}\cdot\Delta_2=\Delta_1$.

    We can conclude that if $A$ gets called on an array where a valid solution exists, its input will look like an average-case instance sampled from the planted distribution, and it will succeed with constant probability. The overall success probability of $B$ is therefore also a constant. The runtime of $B$ is clearly $\mathcal{O}\left(r^{k_2-k_1}T(r)\right)$, as required.
\end{proof}

\begin{corollary}[Conditional Lower Bound for Different $k$'s]
  \label{cor:lb-2}
  Suppose \cref{conj:ksum} (resp. \cref{conj:kxor}) is true. Then, for $k\in\Nat$ and $\Delta\in\left(\frac12,1\right)$ such that $\Delta = \frac{k}{k'}$ for some $k'\in [k+1,2k-1]$, any algorithm that solves planted search $k$-SUM in $\G_{\text{$k$-SUM}}^{(\Delta)}$ (resp. $\G_{\text{$k$-XOR}}^{(\Delta)}$\,) with constant success probability has to take expected time $\Omega\left(r^{\left[k\,(1-\frac{1}{2\Delta})\right]-o(1)}\right)$.
\end{corollary}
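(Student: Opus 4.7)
The plan is to combine \cref{thm: discrete lower bound} (the reduction from planted $k_1$-SUM to planted $k_2$-SUM at matching densities) with \cref{thm:planted_nonplanted_equivalence} (equivalence of planted and non-planted $k$-SUM at density $1$), and then contrapose against the average-case $k$-SUM conjecture (\cref{conj:ksum}) applied at level $k'$.

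First I would set $k_1 = k$ and $k_2 = k'$ in \cref{thm: discrete lower bound}. Since $k' \in [k+1, 2k-1]$, the hypothesis $k_2 \in [k_1+1, 2k_1-1]$ is satisfied, and since $\Delta = k/k' = k_1/k_2$, the density hypothesis $\Delta_1 \leq k_1/k_2$ is met with equality by choosing $\G = \G^{(\Delta)}_{\text{$k$-SUM}}$ (resp.\ $\G^{(\Delta)}_{\text{$k$-XOR}}$). Observe that this same $\G$, viewed through the lens of $k'$-SUM, has density $\Delta_{k'}(\G) = (k'/k) \cdot \Delta = 1$. Suppose for contradiction there is an algorithm $\A$ for planted search $k$-SUM on $\G$ with constant success probability running in time $T(r) = o\bigl(r^{k\,(1-1/(2\Delta)) - o(1)}\bigr)$. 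Then \cref{thm: discrete lower bound} produces an algorithm $B$ for planted search $k'$-SUM on $\G$ (which is at density $1$) running in time $\mathcal{O}\bigl(r^{k'-k} T(r)\bigr)$ with constant success probability.

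Next I would apply \cref{thm:planted_nonplanted_equivalence} to $B$: since it solves planted search $k'$-SUM at density $1$ with constant probability, the same algorithm solves \emph{non-planted} search $k'$-SUM at density $1$ with success probability at least $\Omega(1)/(21(k')^{k'})$, which is still $\Omega(1)$ (as $k'$ is a constant), without any additional running-time overhead. Thus we obtain an algorithm for non-planted search $k'$-SUM at density $1$ with constant success probability in time $\mathcal{O}\bigl(r^{k'-k} T(r)\bigr)$.

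Finally I would derive the contradiction with \cref{conj:ksum} (resp.\ \cref{conj:kxor}) applied to $k'$: this conjecture forces $r^{k'-k} T(r) \geq \Omega\bigl(r^{k'/2 - o(1)}\bigr)$, so $T(r) \geq \Omega\bigl(r^{k'/2 - (k'-k) - o(1)}\bigr) = \Omega\bigl(r^{k - k'/2 - o(1)}\bigr)$. Substituting $k' = k/\Delta$ gives $k - k'/2 = k - k/(2\Delta) = k\,(1 - 1/(2\Delta))$, which contradicts our assumption and establishes the claimed lower bound. The step I expect to require the most care in writing up is the bookkeeping between densities (checking that $\G$ really sits at density $\Delta$ for $k$-SUM and simultaneously density $1$ for $k'$-SUM, up to the convergence error from \cref{def:admissible}), but this is purely routine given the admissibility definition; no new idea is required beyond chaining the two existing reductions.
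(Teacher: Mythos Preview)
Your proposal is correct and follows essentially the same approach as the paper: apply \cref{thm: discrete lower bound} with $k_1=k$, $k_2=k'$ to get a planted $k'$-SUM algorithm at density $1$ in time $\mathcal{O}(r^{k'-k}T(r))$, then contrapose against the conjecture to obtain $T(r) \geq \Omega(r^{k-k'/2-o(1)}) = \Omega(r^{k(1-1/(2\Delta))-o(1)})$. You are in fact slightly more careful than the paper, which leaves the bridging step from planted to non-planted $k'$-SUM (your invocation of \cref{thm:planted_nonplanted_equivalence}) implicit.
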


\begin{proof}
  This follows directly from the contrapositive of the previous theorem and the $k$-SUM conjecture. If there is an algorithm at density $\Delta=\frac{k}{k'}$ for $k$-SUM with runtime $T$, \cref{thm: discrete lower bound} implies that we can solve $k'$-SUM at density 1 in time $\mathcal{O}\left(T\,r^{k'-k}\right)$. By assumption, this is at least $\Omega\left(r^{\left[k'/2\right]-o(1)}\right)$. This, along with the definition of $\Delta$, implies $T=\Omega\left(r^{\left[k\,\left(1-\frac{1}{2\Delta}\right)\right]-o(1)}\right)$, as needed.
\end{proof}

\subsection{Search to Decision Reduction}\label{sec:search_to_decision}

A search-to-decision reduction for $k$-SUM is implied by the work of Impagliazzo and Naor~\cite{IN89}. They show a similar reduction for the Subset Sum problem modulo prime numbers or powers of $2$, and their proof can be extended -- using an efficient instantiation of the Goldreich-Levin algorithm~\cite{GL89,Trevisan04} and some minor optimizations -- to obtain the following theorem.

\begin{theorem}[{Search-to-Decision Reduction, implied by \cite{IN89}}]
    \label{thm:search-decision-IN}
    For $k\in\Nat$ and $\Delta < 1$, suppose there is an algorithm that runs in time $T(r)$, and solves the \emph{decision} $k$-SUM problem over $\G_{\text{$k$-SUM}}^{(\Delta)}$ (resp. $\G_{\text{$k$-XOR}}^{(\Delta)}$) with success probability $(1/2+\eps)$. Then, there is an algorithm that runs in time $O(T(r)\cdot r\log{(r)}\cdot (k/\eps)^4)$, and solves the \emph{planted search} $k$-SUM problem over $\G_{\text{$k$-SUM}}^{(\Delta)}$ (resp. $\G_{\text{$k$-XOR}}^{(\Delta)}$) with success probability at least $3/4$. 
\end{theorem}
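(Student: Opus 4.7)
The plan is to adapt the search-to-decision reduction of Impagliazzo and Naor for Subset Sum to the $k$-SUM setting. Given a planted instance $X = (X_1, \dots, X_r)$ with planted set $S$, we recover $S$ by answering $r$ membership queries: for each $i \in [r]$, is $i \in S$?

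The key observation is as follows. Define $X^{(i)}$ to be $X$ with $X_i$ replaced by a fresh uniformly random group element. If $i \notin S$, then $X^{(i)}$ has the same distribution as $X$, namely $D_1$, because the changed coordinate lies outside the solution and was already uniformly random to begin with. If $i \in S$, then $X^{(i)}$ is distributed as $D_0$: the planted sum constraint is destroyed (the element $X_{\min S}$ was determined by coordinates that have now been perturbed), so all $r$ coordinates become mutually uniform and independent, and at density $\Delta < 1$ no other planted solution appears with non-negligible probability. Hence an \emph{ideal} decision oracle applied to $X^{(i)}$ would reveal whether $i \in S$.

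Because the oracle succeeds with probability only $1/2+\eps$ on average, we must amplify it on each bit query. The plan is to re-randomize $X^{(i)}$ using symmetries of the $k$-SUM problem that preserve both $D_0$ and $D_1$ --- a uniformly random permutation of indices composed with multiplication of all coordinates by a uniformly random unit of $G$ (available in $\Z_{2^m}$ and $\F_{2^m}$). Each such re-randomization yields a sample with the correct marginal label distribution, so by Chernoff a majority vote over $O(\log(r)/\eps^2)$ independent oracle calls identifies the label with probability $1-O(1/r)$. A union bound over all $r$ bits gives overall success at least $3/4$, and verifying that the recovered $S$ is a genuine $k$-SUM solution costs an additional $O(k)$.

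The stated runtime $O(T \cdot r \log r \cdot (k/\eps)^4)$ reflects the more careful Impagliazzo--Naor approach via Goldreich--Levin: rather than probing one bit at a time, one builds from the decision oracle a predictor for random hardcore bits $\langle \mathbf{b}, \chi_S \rangle$ of the indicator vector of $S$, and invokes the efficient Goldreich--Levin/Trevisan algorithm to extract $\chi_S$. The main technical obstacle will be establishing this predictor: one must show that a decision advantage $\eps$ between $D_0$ and $D_1$ translates into a Fourier-type correlation of magnitude $\Omega(\eps/\poly(k))$ with $\chi_S$, via a carefully chosen modification of $X$ along the coordinates indicated by $\mathbf{b}$ (replacing each such coordinate with a shift and tracking its additive effect on the planted sum). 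The polynomial factor $(k/\eps)^4$ then arises in the standard way from the list-decoding radius and sample complexity of efficient Goldreich--Levin extraction, and the $\Delta<1$ assumption ensures that the modified instances remain faithful samples from $D_0$ up to negligible error.
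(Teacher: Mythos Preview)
The paper does not actually prove this theorem; it states it as a consequence of Impagliazzo--Naor together with an efficient Goldreich--Levin implementation and ``minor optimizations'', and then moves on to prove a different, incomparable reduction (the next theorem). Your final paragraph correctly identifies the Goldreich--Levin route as the intended one, and your description of the predictor (shift the coordinates indexed by $\vecb$ and track the effect on the planted sum) is on target.

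The first three paragraphs, however, propose a separate bit-by-bit membership-query approach, and that approach has a genuine gap. You are right that, marginally over $X\sim D_1$ and the fresh coordinate, $X^{(i)}\sim D_0$ when $i\in S$ and $X^{(i)}\sim D_1$ otherwise. But your amplification step is invalid: the re-randomizations you propose (index permutation and multiplication by a random unit) do \emph{not} yield independent samples. For a fixed input $X$, the orbit of $X^{(i)}$ under these symmetries is a vanishingly small slice of the support of $D_0$ (resp.\ $D_1$), and an oracle that is correct only with probability $\tfrac12+\eps$ \emph{on average} is free to be consistently wrong on that entire orbit. Chernoff requires independence (or, at minimum, a per-call guarantee conditional on $X$), not merely correct marginals. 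This is exactly why the paper's own membership-style reduction in the following theorem needs the much stronger hypothesis of decision success $1-o(1)$: only then does a Markov argument show that for most \emph{fixed} $X$ the oracle is reliable, after which bit-by-bit probing with majority voting is sound.

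For the Goldreich--Levin route you should make the predictor explicit. For $k$-XOR: given $\vecb\in\{0,1\}^r$, sample $v\gets\F_2^m$, set $X'_j=X_j+b_j v$, and feed $X'$ to the decision oracle; since $\sum_{j\in S}X'_j=(\langle \vecb,\chi_S\rangle\bmod 2)\cdot v$, the modified instance is distributed as $D_1$ when the parity is $0$ and as $D_0$ when it is $1$, so the predictor inherits advantage $\eps$. For $k$-SUM over $\Z_{2^m}$, the same shift works once you note $\langle \vecb,\chi_S\rangle\in\{0,\dots,k\}$, which is where the $\poly(k)$ loss enters. A Markov step then gives a predictor with advantage $\Omega(\eps)$ on an $\Omega(\eps)$-fraction of instances, on which efficient GL list-decodes $\chi_S$; getting from this $\Omega(\eps)$ search success up to $3/4$ is part of the ``minor optimizations'' the paper alludes to but does not spell out, and your sketch does not address it either.
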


We show a different search-to-decision reduction for the $k$-SUM problem over general group ensembles that is incomparable to the one above. Whereas the above reduction can work with any decision algorithm that has success probability more than $1/2$, our reduction requires this success probability to be close to $1$. On the other hand, it avoids the factor of $r$ loss in the running time of the above reduction. Our proof is also more elementary, using an algorithm reminiscent of binary search. The precise statement we show is the following. 

\begin{theorem}[Search-to-Decision Reduction]\label{thm:search_to_decision}
For $k\in\Nat$, and ensemble $\G$ of density $\Delta<1$, suppose there is an algorithm that runs in time $T(r) = \Omega(r/\Delta)$, and solves the \emph{decision} $k$-SUM problem over $\G$ with success probability $(1-o(1))$. Then, for any constant $\gamma < 1$, there is an algorithm that runs in time $\Otilde(T(r))$, and solves the \emph{planted search} $k$-SUM problem in $\G$ with success probability at least $\gamma$. 
\end{theorem}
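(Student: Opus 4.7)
The plan is a binary-search-style reduction that uses the decision oracle as a subset-membership tester for the planted set. The key structural observation, valid at density $\Delta<1$, is that if $X\sim D_1$ with planted set $S\subseteq[r]$ and we form $X^{(A)}$ from $X$ by replacing the entries at positions in $A\subseteq[r]$ with fresh uniform group elements, then $X^{(A)}\sim D_1$ (with the same planted set $S$) whenever $A\cap S=\emptyset$, while $X^{(A)}\sim D_0$ whenever $A\cap S\neq\emptyset$: a single resampled entry in $S$ destroys the planted relation, and because the ``correction'' entry at $\min(S)$ is then determined by discarded uniform values, the overall joint distribution is exactly the uniform distribution. A decision query on $X^{(A)}$ thus tells us, with probability $1-o(1)$, whether $A$ hits $S$.

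Using this primitive, I would first locate one index $i_1\in S$ by binary-searching on $[r]$: at each level, split the current candidate $A$ into halves $A_L\cup A_R$, query the oracle on $X^{(A_L)}$, and recurse on $A_L$ if the answer is ``unplanted'' and on $A_R$ otherwise; correct answers preserve the invariant $S\cap A\neq\emptyset$, so after $\lceil\log r\rceil$ steps $A$ collapses to a singleton in $S$. To recover the remaining $k-1$ indices I would iterate: having collected a set $F\subseteq S$, rerun the same binary search on $[r]\setminus F$ while leaving the entries at positions in $F$ untouched, so that each query still tests ``does the replaced block hit $S$?''. After $k$ rounds we output a candidate $F$ of size $k$ and verify by checking $\sum_{i\in F}X_i=0$. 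At density $\Delta<1$ the calculations in \cref{lemma:c-X stats} imply that $D_1$ has a unique $k$-SUM solution with probability $1-o(1)$ and that an incorrect $k$-subset passes verification only with probability at most $1/|G^{(r)}|=o(1)$, so the check reliably certifies correctness. The query count is $O(k\log r)$ and each query is assembled in time $O(r\log|G^{(r)}|)$, giving total running time $O(k\log r\cdot T(r))=\Otilde(T(r))$ because $T(r)=\Omega(r/\Delta)$.

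The main obstacle, and where the proof demands the most care, is controlling error accumulation across queries. Each individual query, marginally over $X\sim D_1$ and the fresh replacements, is a draw from $D_0$ or $D_1$ and so fails with probability $o(1)$; a naive union bound over $O(k\log r)$ queries only yields overall success $1-O(\log r)\cdot o(1)$, which need not exceed a prescribed constant $\gamma<1$ for an arbitrary $o(1)$ rate. To push the success probability above any constant $\gamma$, I would amplify each binary-search step by invoking the oracle $t=\Theta(\log\log r)$ times on independently sampled replacement entries and taking a majority vote, and wrap the whole reduction in an outer loop of $O(1)$ repetitions that keeps the first output passing the $\sum=0$ verification. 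The subtlety is that different replacements at a fixed $A$ share the unaltered entries $X_{[r]\setminus A}$, so the $t$ votes are not literally independent from the oracle's point of view; the error analysis must combine the conditional randomness of the replacements with averaging over $X\sim D_1$ (and use the $o(1)$-measure of the oracle's error set under both $D_0$ and $D_1$) to conclude that a majority-voted step errs with probability $o(1/\log r)$, at which point a union bound over the $O(k\log r)$ steps together with the verifier yields success probability at least $\gamma$ within the claimed $\Otilde(T(r))$ running time.
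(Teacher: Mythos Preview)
Your distributional observation---that $X^{(A)}\sim D_1$ when $A\cap S=\emptyset$ and $X^{(A)}\sim D_0$ otherwise---is correct, and the binary-search strategy is a natural alternative to the paper's approach. However, your proposed fix for the error accumulation does not close the gap. Majority voting over $t$ fresh replacements at a fixed set $A$ reduces the per-step error to $\exp(-\Omega(t))$ only on those instances $X$ for which the conditional error $p_A(X):=\Pr_{\text{repl}}[\A^{\det}(X^{(A)})\text{ wrong}\mid X]$ is bounded below $1/2$. Nothing prevents the oracle's error set from depending only on the coordinates in $[r]\setminus A$ (e.g., err exactly when $Y_{[r]\setminus A}$ lies in some set of measure $\epsilon(r)$); then $p_A(X)\in\{0,1\}$, and the majority-voted step still errs with probability $\epsilon(r)$ regardless of $t$. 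The outer $O(1)$ repetitions reuse the same $X$ and so cannot help either. Consequently the best bound your scheme yields is failure probability $O(k\log r)\cdot\epsilon(r)$, which for, say, $\epsilon(r)=1/\sqrt{\log r}$ does not fall below $1-\gamma$. The difficulty is structural: binary search forces $\Theta(\log r)$ \emph{different} query patterns $A_j$, each with its own ``$X$ is bad for $A_j$'' event that Markov only controls at the $O(\epsilon(r))$ level, and these must then be union-bounded.

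The paper sidesteps this by making the queries non-adaptive and identically distributed. It repeats a \emph{single} randomized query $R(X)$---replace a uniformly random half of the coordinates with fresh samples---$p=\polylog(r)$ times, incrementing a counter $C_i$ for every unreplaced index $i$ whenever the oracle reports ``planted''. Because all $p$ iterations use the same procedure $R$, there is only \emph{one} bad-instance event: by Markov, $\Pr_X\bigl[\Pr_R[\text{oracle wrong on }R(X)]>2^{-2k}\bigr]=O(2^{2k}\epsilon(r))=o(1)$. On the remaining good instances the $p$ iterations are conditionally independent given $X$, so a Chernoff bound separates the counters (roughly $p/2^k$ for indices in the planted set versus $p/2^{k+1}$ otherwise), and outputting the $k$ largest counters succeeds with probability at least $\gamma$ once $p$ and $r$ are large enough.
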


We first describe the reduction at a high level and then prove its correctness. Our algorithm is vaguely related to binary search. We will repeatedly guess a random half of the inputs to replace with fresh random elements, invoke the decision oracle on the resulting instance, and record whether or not the oracle reported there was a solution. Specifically, we will maintain a counter for every element in the original input that we increment whenever an element was found to belong to an unreplaced half of the inputs for which the decision algorithm reported there was a solution. Finally, we output the indices corresponding to the $k$ largest counters. Our hope is that this process is biased in favor of the indices in the solution, and that we do not introduce too many new solutions in the process. 
\paragraph{Algorithm $R(X)$}
\begin{enumerate}
    \item Sample $r/2$ elements from $[r]$ at random (with replacement), and let $S\subseteq [r]$ be the resulting set.
    \item Let $X'_i \gets X_i$ for every $i\not\in S$, and let $X'_i \sample G$ otherwise. 
    \item Return $X'$
\end{enumerate}
We now describe our reduction formally. Let $X \sim D_1$ be some planted instance and let $\A^\det$ be an algorithm that solves the decision problem with probability $1-o(1)$, and let $T$ be its runtime. 

\begin{figure}
    \centering
    \begin{tikzpicture}
        \node (Adet) at (0,-0.25) {$\A^\det$};
        \draw (-1.5,1) rectangle (1.5,-1.5);
        
        \draw (2.5,0.9) rectangle (6.5,0.6);
        \draw[fill=white!60!gray] (2.5,0.9) rectangle (3.5,0.6);
        \draw[fill=white!60!gray] (4,0.9) rectangle (4.5,0.6);
        \draw[fill=white!60!gray] (5,0.9) rectangle (5.5,0.6);
        \draw[->] (2.5,0.75) -- (1.5,0.75);
        \draw[->] (-1.5,0.75) -- (-3,0.75);
        \node at (-3.25,0.75) {1};
        
        \draw (2.5,0.4) rectangle (6.5,0.1);
        \draw[fill=white!60!gray] (2.5,0.4) rectangle (3,0.1);
        \draw[fill=white!60!gray] (4,0.4) rectangle (5.5,0.1);
        \draw[->] (2.5,0.25) -- (1.5,0.25);
        \draw[->] (-1.5,0.25) -- (-3,0.25);
        \node at (-3.25,0.25) {0};
        
        \draw (2.5,-0.1) rectangle (6.5,-0.4);
        \draw[fill=white!60!gray] (3,-0.1) rectangle (4,-0.4);
        \draw[fill=white!60!gray] (4.5,-0.1) rectangle (5,-0.4);
        \draw[fill=white!60!gray] (6,-0.1) rectangle (6.5,-0.4);
        \draw[->] (2.5,-0.25) -- (1.5,-0.25);
        \draw[->] (-1.5,-0.25) -- (-3,-0.25);
        \node at (-3.25,-0.25) {1};
        
        \draw (2.5,-1.1) rectangle (6.5,-1.4);
        \draw[fill=white!60!gray] (3.5,-1.1) rectangle (4,-1.4);
        \draw[fill=white!60!gray] (4.5,-1.1) rectangle (5.5,-1.4);
        \draw[fill=white!60!gray] (6,-1.1) rectangle (6.5,-1.4);
        \draw[->] (2.5,-1.25) -- (1.5,-1.25);
        \draw[->] (-1.5,-1.25) -- (-3,-1.25);
        \node at (-3.25,-1.25) {1};
        
        \node at (4.75,-0.65) {$\vdots$};
        \node at (-3.25,-0.65) {$\vdots$};
        
        \draw (-0.5,-2.4) rectangle (3.5,-2.9); 
        \draw(0,-2.4) -- (0,-2.9);
        \draw(0.5,-2.4) -- (0.5,-2.9);
        \draw(1,-2.4) -- (1,-2.9);
        \draw(1.5,-2.4) -- (1.5,-2.9);
        \draw(2,-2.4) -- (2,-2.9);
        \draw(2.5,-2.4) -- (2.5,-2.9);
        \draw(3,-2.4) -- (3,-2.9);
        
        \draw[line width=0.3mm] (-0.5,-2.4) rectangle (0,-2.9);
        \draw[line width=0.3mm] (1,-2.4) rectangle (1.5,-2.9);
        \draw[line width=0.3mm] (2.5,-2.4) rectangle (3,-2.9);
        
        \node at (-1.25,-2.65) {$C:$};
        \node at (-0.25,-2.65) {2};
        \node at (0.25,-2.65) {1};
        \node at (0.75,-2.65)  {1};
        \node at (1.25,-2.65)  {2};
        \node at (1.75,-2.65)  {1};
        \node at (2.25,-2.65)  {1};
        \node at (2.75,-2.65)  {3};
        \node at (3.25,-2.65)  {1};
        
        \draw [decorate,decoration={brace,mirror,amplitude=5pt},xshift=-4pt,yshift=0pt]
(6.75,-1.4) -- (6.75,0.9) node[midway,xshift=1cm] {$p$ times};
    \end{tikzpicture}
    \caption{Illustration of the structure of the algorithm $\A^\rec$ shown for an instance $X$ of size $r=8$ and $k=3$. It chooses a random subset $S$ of the elements of size $r/2=4$ and replaces all elements in $X$ from $S$ with fresh samples from the group and gives the resulting instance $Y$ to the decision oracle $\A^\det$. If $\A^\det$ reports there is a solution, we increment a counter for each of the elements we did not replace. We then repeat this process $p=\polylog(r)$ times and output the $k$ elements whose counters are the highest. In the illustration, we have $p=4$ and have marked each set $S$ chosen. At the end, the three elements with the highest counters are $X_1, X_4, X_7$, so the search algorithm outputs $\{X_1,X_4,X_7\}$ as the solution.}
\end{figure}
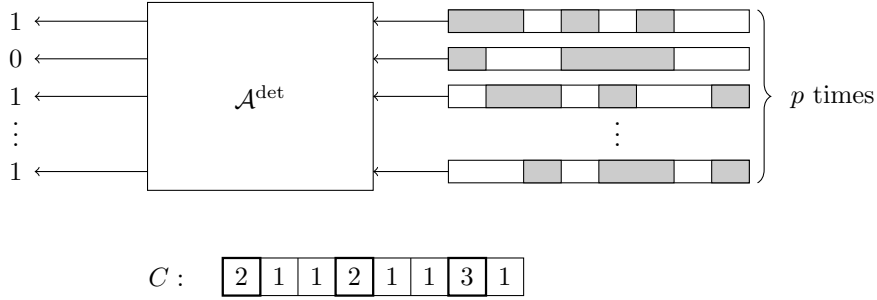

\paragraph{Algorithm $\A^\rec(X)$}
\begin{enumerate}
    \item Let $C \gets 0^r$ be a list of counters.
    \item Repeat $p$ times:
    \begin{enumerate}
      \item[2.1.] Let $Y \gets R(X)$, and let $S$ be the subset chosen in this execution of $R$.
        \item[2.2.] If $\A^\det(Y) = 1$, increment $C_i$ for every $i \not\in S$.
    \end{enumerate}
    \item Output $K\subseteq[r]$ with $|K|=k$ that maximizes $\sum_{i \in K} C_i$.
\end{enumerate}
Sampling an element from $G$ takes time $\mathcal{O}(k \log(r) / \Delta)$, and so $R(X)$ can be computed in time $\Otilde(r / \Delta)$. Also note that the last step can be done in time $\mathcal{O}(r)$ (or faster if using a secondary data structure such as a heap), and hence the total time complexity of the algorithm is $\mathcal{O}((T+r/\Delta)\,p)$. This means we are done if we can show that $\Pr\left[\sum_{i\in K} x_i = 0\right] \geq \gamma$ for some $p = \polylog(r)$.

We now explain our proof strategy at a high level. Intuitively, the above procedure will assign higher counts for the indices belonging to a solution. Indeed, we will give a concentration bound on the value of each counter, conditioned on it being a solution or not. We will then bound the probability of these values belonging to two disjoint intervals, such that the desired solution is output. Finally, we will union bound over all values to achieve the desired result. 

\begin{lemma}\label{lemma:A_rec new solution}
  Let $K \subseteq [r]$ denote the planted set. Then for sufficiently large $r$ it holds that in any given iteration of $\A^\rec$, the probability that there is a set $S \neq K$ such that $\sum_{i \in S}Y_i = 0$ is at most $4 r^{k\,[1-\frac1\Delta]}$.
\end{lemma}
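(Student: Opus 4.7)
The plan is to bound the expected number of ``spurious'' solutions in $Y$ by a direct union bound, and then apply Markov's inequality. Concretely, for each $k$-subset $S \subseteq [r]$ with $S \neq K$, let $E_S$ denote the event $\sum_{i \in S} Y_i = 0$. We want to upper bound $\Pr\left[\bigcup_{S \neq K} E_S\right]$, and it suffices to bound $\sum_{S \neq K} \Pr[E_S]$.

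The key step is to argue that for every fixed $S \neq K$ with $|S| = k$, we have $\Pr[E_S] = 1/|G|$. Since $|S| = |K| = k$ but $S \neq K$, the set $S \setminus K$ is non-empty, so we may fix some $i^* \in S \setminus K$. I will argue that $Y_{i^*}$ is uniformly distributed on $G$ and independent of $\{Y_j : j \neq i^*\}$. There are two sub-cases depending on whether $i^*$ was chosen by the subsampler $R$: if $i^* \in S_R$ (the random half chosen by $R$) then $Y_{i^*}$ is by definition an independent fresh group element; if $i^* \notin S_R$ then $Y_{i^*} = X_{i^*}$, and since $i^* \notin K$, by the definition of $D_1$ the coordinate $X_{i^*}$ is uniform on $G$ and independent of all other $X_j$'s (and hence of all other $Y_j$'s). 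In either case, conditioning on $\{Y_j : j \neq i^*\}$ leaves $Y_{i^*}$ uniform, and thus $\Pr[\sum_{i \in S} Y_i = 0] = 1/|G|$.

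Summing over the at most $\binom{r}{k} \leq r^k$ choices of $S \neq K$, the expected number of spurious solutions is at most $\binom{r}{k}/|G|$. The admissibility condition (\cref{def:admissible}, \cref{eq:delta relates to size of G-admissibility}) ensures $|G^{(r)}| \geq r^{k/\Delta}/2$ for all sufficiently large $r$, so this is at most $2\, r^{k\,[1-1/\Delta]}$. Markov's inequality then gives that the probability of at least one such spurious solution existing is at most $2\, r^{k\,[1-1/\Delta]} \leq 4\, r^{k\,[1-1/\Delta]}$, completing the proof.

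I do not anticipate a significant obstacle; the only subtlety is being careful about which coordinates of $Y$ can be correlated (namely, only those coordinates of $K$ that were not replaced by $R$), and ensuring that the argument picks $i^*$ outside $K$ so as to avoid these correlations entirely. The slack from $2$ to $4$ in the bound is not tight and comfortably absorbs the factor-of-two error in the admissibility estimate of $|G^{(r)}|$.
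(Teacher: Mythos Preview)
Your proof is correct and in fact more elementary than the paper's. The paper first argues that $Y = R(X)$ is distributed as $D_1$ (if all of $K$ is preserved by $R$) or as $D_0$ (otherwise), and in either case upper bounds the probability of a spurious solution by $\Pr_{Y \sim D_1}[c(Y) > 1]$; it then applies Chebyshev's inequality using the variance estimate for $c(Y)$ from \cref{lemma:c-X stats} to get $\frac{2\binom{r}{k}/|G|}{(1-r^{k(1-1/\Delta)})^2} \leq 4\,r^{k(1-1/\Delta)}$. You instead compute $\Pr[E_S] = 1/|G|$ directly for each $S \neq K$ by exhibiting an index $i^* \in S\setminus K$ at which $Y_{i^*}$ is fresh uniform, and then take the union bound. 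This first-moment route sidesteps the second-moment machinery entirely, handles the mixture structure of $Y$ implicitly rather than through the $D_0$/$D_1$ case split, and even lands on the slightly better constant $2$. The paper's approach, while less streamlined here, has the virtue of reusing \cref{lemma:c-X stats}, which is already needed elsewhere.
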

\begin{proof}
  Observe that the $R(X)$ used in a given iteration is effectively sampled from $D_1$ if all the elements in $K$ were preserved, and it is sampled from $D_0$ otherwise. In either case, we can upper bound the probability of a solution distinct from $K$ existing by $\Pr_{Y\sim D_1}[c(Y)>1]$.

  Note that the standard deviation of $c(Y)$ when $Y\sim D_1$ is less than $\sqrt{2\binom{r}{k}/|G|}$ for large enough $r$ (\cref{lemma:c-X stats}, \cref{eq:D-1 variance}). Furthermore, \cref{eq:delta relates to size of G-admissibility} implies $|G|\geq r^{k/\Delta}/2$. Therefore by using \cref{lemma:c-X stats}, \cref{eq:D-1 expectation}, 
  $$
    \underset{Y\sim D_1}{\E}[c(Y)]<1+\frac{\binom{r}{k}}{|G|}\leq 1+\frac{2\binom{r}{k}}{r^{k/\Delta}}<1+r^{k\,(1-\frac{1}{\Delta})}.
  $$

  \noindent We can now apply Chebyshev's inequality (\cref{lemma:chebyshev}) to get,
  \begin{align*}
        \Pr_{Y\sim D_1}[c(Y)>1]
        = &\Pr_{Y\sim D_1}[c(Y)\geq2]\\
        \leq &\Pr_{Y\sim D_1}\left[\left|c(Y)-\exp{c(Y)}\right|>1-r^{k\,(1-\frac{1}{\Delta})}\right]\\
        \leq &\Pr_{Y\sim D_1}\left[\left|c(Y)-\exp{c(Y)}\right|>\frac{1-r^{k\,(1-\frac{1}{\Delta})}}{\sqrt{2\binom{r}{k}/|G|}}\,\Std(c(Y))\right]\\
        \leq &\frac{2\binom{r}{k}}{|G|\left(1-r^{k\,(1-\frac{1}{\Delta})}\right)^2}
        \intertext{Substituting $\binom{r}{k}<(r/k)^k<r^k/2$ and $|G|\geq r^{k/\Delta}/2$, we get,}
        < &\frac{2r^{k\,(1-\frac{1}{\Delta})}}{\left(1-r^{k\,(1-\frac{1}{\Delta})}\right)^2}
        \intertext{The denominator is a monotonically increasing function in $r$ with limit 1. Therefore, for sufficiently large $r$, it will exceed $1/2$, and hence we get,}
        <&4r^{k\,(1-\frac{1}{\Delta})} &&\qedhere
  \end{align*}

\end{proof}

In the following, we will bound the values of the counters. Denote by $\Det(X)$ the correct answer for the instance $X$, i.e.,
$$
    \Det(X) = 
    \begin{cases}
        1 & \text{if $X$ has a solution},\\
        0 & \text{if $X$ does not have a solution}.
    \end{cases}
$$

\begin{lemma}\label{lemma:delta < 1 => Det(X) = b for Db}
    At any constant density $\Delta<1$, if $X\sim D_b$ then $\Det(X) = b$, except with probability $o(1)$.
\end{lemma}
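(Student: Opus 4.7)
The plan is to handle the two cases $b=0$ and $b=1$ separately; the $b=1$ case is immediate while the $b=0$ case follows from a first-moment (Markov) calculation.

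For $b=1$, note that the sampling procedure defining $D_1$ replaces the smallest-indexed element of a randomly chosen size-$k$ set $S$ by the negative of the sum of the others. Thus every $X \sim D_1$ contains at least the planted solution $S$, so $c(X) \geq 1$ and hence $\Det(X) = 1$ with probability $1$. No use of the density assumption is needed here.

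For $b=0$, the strategy is to invoke Markov's inequality (\cref{lemma:markov}) on $c(X)$. By \cref{lemma:c-X stats}, $\E_{X \sim D_0}[c(X)] = \binom{r}{k}/|G^{(r)}|$. By the convergent-density condition in \cref{def:admissible}, the size $|G^{(r)}|$ is within a factor of $2$ of $r^{k/\Delta}$ for $r$ sufficiently large. Combined with the bound $\binom{r}{k} \leq r^k$, this gives
\[
    \E_{X \sim D_0}[c(X)] \;\leq\; \frac{2 r^k}{r^{k/\Delta}} \;=\; 2\, r^{k(1 - 1/\Delta)}.
\]
Since $\Delta < 1$ is a constant, the exponent $k(1-1/\Delta)$ is a negative constant, so this expectation is $o(1)$.

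Applying Markov's inequality, we conclude
\[
    \Pr_{X \sim D_0}[\Det(X) = 1] \;=\; \Pr_{X \sim D_0}[c(X) \geq 1] \;\leq\; \E_{X \sim D_0}[c(X)] \;=\; o(1),
\]
so $\Det(X) = 0 = b$ with probability $1 - o(1)$ as desired. I do not anticipate any real obstacle: the lemma is essentially a direct consequence of the expected solution count being vanishing in the sparse regime, which is precisely why $\Delta < 1$ is called sparse in the first place. The only mild care needed is in using the admissibility condition to replace $|G^{(r)}|$ by $r^{k/\Delta}$ up to a constant factor.
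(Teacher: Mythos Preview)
Your proof is correct and follows essentially the same approach as the paper: the $b=1$ case is immediate from the definition of $D_1$, and the $b=0$ case uses Markov's inequality on $c(X)$ together with $\E_{D_0}[c(X)] = \binom{r}{k}/|G|$ from \cref{lemma:c-X stats} and the admissibility bound $|G| \geq r^{k/\Delta}/2$ to get $\Pr[c(X)\geq 1] \leq 2r^{k(1-1/\Delta)} = o(1)$.
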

\begin{proof}
  We need to show that with probability $1-o(1)$, an instance $X \sim D_b$ has a solution iff $b=1$. The case of $b=1$ is true by definition, while for $b=0$ we need to upper bound the probability that an instance $X \sim D_0$ has a solution. 
    \begin{align*}
        \underset{X \sim D_0}{\Pr}[\Det(X) \neq 0] &= \underset{X \sim D_0}{\Pr}[c(X) \geq 1] 
        \intertext{Applying Markov's inequality (\cref{lemma:markov}), we get}
        &\leq \exp{c(X)} =\binom{r}{k}/|G| &&\text{\cref{lemma:c-X stats}, \cref{eq:D-0 expectation}}\\
        &< r^k/|G| \leq 2r^{k\,(1-\frac{1}{\Delta})} &&\text{\cref{eq:delta relates to size of G-admissibility}}
    \end{align*}
    which as remarked is subconstant.
\end{proof}

\noindent Say an instance $X$ is \emph{bad} if $\Pr[\A^\det(R(X)) \neq \Det(R(X))] > \frac{1}{2^{2k}}$, with randomness taken over $R$ and $\A^\det$.
\begin{lemma}\label{lemma:most_X_are_correct}
     If $\A^\det$ solves the decision problem with success probability $1-o(1)$, then for sufficiently large $r$, an instance $X \sim D_1$ is bad with probability at most $\frac{1}{2^{8k}}$.
\end{lemma}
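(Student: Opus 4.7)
The plan is to exhibit that when $X \sim D_1$, the distribution of $R(X)$ over the internal randomness of $R$ is a mixture of $D_0$ and $D_1$; this will let me bound the \emph{average} failure probability of $\A^\det$ on $R(X)$, and Markov's inequality will then turn this into the required bound on the fraction of bad $X$.

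First I would establish the mixture characterization. Fix $X \sim D_1$ with planted set $K$, and let $S$ be the (random) set of indices re-sampled by $R$. If $K \cap S = \emptyset$, every element of the planted solution survives, the replacements on $S$ are i.i.d.\ uniform and independent, and the coordinates in $[r] \setminus S$ retain the $D_1$-structure with planted set $K$; averaging over the uniformly random $K$, this conditional distribution is exactly $D_1$. If instead $K \cap S \neq \emptyset$, I claim $R(X) \sim D_0$. All coordinates of $R(X)$ are clearly marginally uniform; the only non-trivial case for joint uniformity is when $\min K \notin S$, since then $R(X)_{\min K}$ equals the constrained value $-\sum_{j \in K,\, j \neq \min K} X_j$. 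But at least one summand $X_{j^*}$ (with $j^* \in K \cap S$) is among the \emph{unobserved} originals, which is uniform and independent of all the observable coordinates (the preserved $X_j$ for $j \notin S$ and the fresh replacements for $j \in S$); hence $R(X)_{\min K}$ is uniform given the observables, yielding joint uniformity. Thus $R(X) \sim D_0$ whenever $K \cap S \neq \emptyset$.

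Next, I would combine this with the hypothesis $\Pr_{Y \sim D_b}[\A^\det(Y) = b] = 1-o(1)$ for $b \in \{0,1\}$ and with \cref{lemma:delta < 1 => Det(X) = b for Db}, which at density $\Delta < 1$ gives $\Det(Y) = b$ with probability $1-o(1)$ when $Y \sim D_b$. A union bound yields $\Pr_{Y \sim D_b}[\A^\det(Y) \neq \Det(Y)] = o(1)$ for both $b$, so by the mixture representation of $R(X)$,
\[
    \Pr_{X \sim D_1,\, R,\, \A^\det}\!\bigl[\A^\det(R(X)) \neq \Det(R(X))\bigr] = o(1).
\]

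Finally, applying Markov's inequality to the non-negative quantity $f(X) := \Pr_{R,\, \A^\det}[\A^\det(R(X)) \neq \Det(R(X))]$ under $X \sim D_1$,
\[
    \Pr_{X \sim D_1}\!\bigl[f(X) > 2^{-2k}\bigr] \;\leq\; 2^{2k} \cdot \E_{X \sim D_1}[f(X)] \;=\; 2^{2k} \cdot o(1),
\]
which is below $2^{-8k}$ for all sufficiently large $r$ (using that $k$ is a constant). The main obstacle is the joint-uniformity argument in the case $K \cap S \neq \emptyset$: arguing marginal uniformity of each $R(X)_i$ is easy, but one really must justify independence across all coordinates; the key observation that unlocks this is that whenever the constraint $R(X)_{\min K}$ survives, some summand of its defining sum has been \emph{removed from view} by $R$, which is what ``re-randomizes'' it against the observables.
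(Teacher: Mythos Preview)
Your proposal is correct and follows essentially the same approach as the paper: establish that $R(X)$ is a mixture of $D_0$ and $D_1$ (depending on whether $S$ intersects the planted set), combine with the $o(1)$ failure bound on each component via \cref{lemma:delta < 1 => Det(X) = b for Db}, and conclude by Markov's inequality. Your treatment is actually more careful than the paper's on the mixture step---the paper simply asserts that ``we do not preserve the solution and the instance is distributed as $D_0$'' when $K\cap S\neq\emptyset$, whereas you explicitly verify joint uniformity via the unobserved-summand argument, which is the right thing to check.
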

\begin{proof}
  At a high level, the result essentially follows using a Markov bound at a sufficiently high value of $r$. Let $X \sim D_1$ be an instance from the planted distribution. First note that the distribution of $R(X)$ is a convex combination of $D_0$ and $D_1$, determined by whether or not the set chosen by $R(\cdot)$ intersects with the solution. Hence by correctness $\A^\det$ has to mostly agree with $\Det$. Let $S$ be the subset chosen by $R$. Clearly, if $S$ is disjoint from the solution, the resulting instance is distributed as $D_1$. Otherwise, we do not preserve the solution and the instance is distributed as $D_0$. Note that as $\Delta<1$ is constant, it follows from \cref{lemma:delta < 1 => Det(X) = b for Db} that except with probability $o(1)$, $X\sim D_{\Det(X)}$. By convexity, in the former case, $\A$ has to output $1$ except with probability $o(1)$, while in the latter case, it has to output $0$ except with probability $o(1)$. Let $Z(X) = \underset{Y \gets R(X)}{\Pr} \left[ \A(Y) \neq \Det(Y)\right]$, and note that an instance $X$ is bad if $Z(X) \geq \frac{1}{2^{2k}}$.  For large enough $r$, we thus get a bound of 
  $\E[Z(X)]  \leq \frac{1}{2^{10k}}$ with the expectation taken over $X$ and $R(\cdot)$. We can now bound the probability that $X$ is bad.
    \begin{align*}
      \underset{X \sim D_1}{\Pr}[\text{$X$ is bad}] = \underset{X \sim D_1}{\Pr}\left[Z(X) \geq \frac{1}{2^{2k}}\right]
        &\leq \underset{X \sim D_1}{\Pr}\left[ Z(X) \geq 2^{8k}\,\E[Z(X)] \right] \leq \frac{1}{2^{8k}},
    \end{align*}
    where the latter follows from Markov's inequality (\cref{lemma:markov}). 
\end{proof}

\begin{proofof}{\cref{thm:search_to_decision}} At a high level, we will give a concentration bound on each counter using a Chernoff bound, and conclude that, with high probability, the range of the counters for the indices in the solution is disjoint from the range of counters outside the solution by employing a union bound on all the counters.

  Fix an input $X$. Now, in the reduction, for each choice of $R(X)$, the counter will be incremented by some vector which is either zero if the solution was destroyed, or a balanced vector if the solution is preserved. Let $E_{ij}$ be the event that the $i^\text{th}$ counter was incremented in the $j^\textrm{th}$ iteration. Note that $C_i = \sum_{j=1}^p E_{ij}$. Now consider an index $i$ belonging to the planted solution, and suppose that $\A^\det$ has no errors and that $R(\cdot)$ did not introduce any new solutions. Then $C_i$ is incremented if all of the indices belonging to the solution were not replaced, and thus $\E[E_{ij}\mid \text{$i$ solution}] = \frac{1}{2^k}$. Analogously, for an index not belonging to a solution, it will be incremented if the solution were preserved and also this index was preserved, and so the error-free expectation will be $\E[E_{ij} \mid \text{$i$ not solution}] = \frac{1}{2^{k+1}}$. By \cref{lemma:most_X_are_correct}, even if $\A^\det$ has a $o(1)$ probability of error, we know that for each counter the error is $\epsilon<\frac{1}{2^{2k}}<\frac{1}{2^{k+3}}$ (since $k\geq 3$) with probability at least $1-\frac{1}{2^{8k}}$, where the randomness is taken over the instance. In addition, even if we destroyed the solution, $R(\cdot)$ might inadvertently create a new solution which happens with probability $< 4r^{k(1-\frac1\Delta)}$.  To account for the errors, we assume, as a worst-case precaution using a union bound, that the expectations change by at most $\epsilon$, such that by linearity of expectation, 
\begin{align*}
    \E[C_i] &\geq
         p\left(\frac1{2^k} - \epsilon\right), && \text{when $i$ is solution.}\\
    \E[C_i] &\leq
         p\left(\frac1{2^{k+1}} + \epsilon + 4r^{k\,(1-\frac1\Delta)}\right), && \text{when $i$ is not solution.}
\end{align*}
We now wish to say that the range of values of indices belonging to the solution is disjoint from the range of those not belonging to the solution. We say a counter is \emph{bad} if it deviates from its expectation by more than $\frac{p}{2^{k+4}}$. This ensures that when no counters are bad, for sufficiently large $r$, the range of counts for the indices belonging to a solution is disjoint from those not belonging to a solution. To see this, we compute the distance $\Delta$ to the midpoint of the expectations, i.e.,
\begin{align*}
    \Delta = \frac{ p\left(\frac1{2^k} - \epsilon\right) - p\left(\frac1{2^{k+1}} + \epsilon + 4r^{k\,\left(1-\frac1\Delta\right)}\right)}{2} &> \frac{p \left(\frac1{2^k} - \frac 1{2^{k+3}} - \frac 1{2^{k+1}} - \frac 1{2^{k+3}} - 4r^{k\,\left(1-\frac1\Delta\right)}\right)}{2}\\ &= p\left(\frac{1}{2^{k+3}} - 2r^{k\,\left(1-\frac1\Delta\right)}\right)\\
    &> \frac{p}{2^{k+4}},
\end{align*}
where the first inequality follows as $\epsilon < \frac{1}{2^{2k}} < \frac{1}{2^{k+3}}$ is true for any $k\geq 3$, and second inequality follows since $\Delta<1$ is constant and $2r^{k\left(1-\frac1\Delta\right)}<\frac{1}{2^{k+4}}$ for sufficiently large $r$. Note that when $X$ is fixed, each $E_{ij}$ and $E_{ik}$ are independent for $j\neq k$ and are supported on $\{0,1\}$. We may thus we may bound the probability of a bad counter as function of $p$ using a Chernoff bound (\cref{lemma:chernoff}). Suppose that $i$ is an index belonging to the solution, then we get the following bound, (\cref{lemma:chernoff}).
\begin{align*}
    \Pr\left[\text{$i^\mathrm{th}$ counter is bad} \mid \text{$i$ solution}\right] 
    &= \Pr\left[C_i < \E [C_i] - \frac{p}{2^{k+4}}\right]
    = \Pr\left[\frac1p \,C_i < \frac1p \,\E [C_i] - \frac{1}{2^{k+4}}\right]
    < \mathrm{e}^{-\frac{p}{2^{2k+7}}}.
\end{align*}
Now let $0<\gamma \leq 1$ be any constant and let $p = 2^{2k+7}\,\ln \frac{r}{\gamma}$. Then we get an upper bound of $\frac{\gamma}{r}$ for a solution counter going bad. We get the same bound for the indices not belonging to a solution. By a union bound on all the counters, we bound the total error rate by $\gamma$. 
\end{proofof}
\subsection{Reduction from $k$-SUM to Subset Sum at Very Low Densities}\label{sec:subset-sum-reduction}

In this section, we reduce the planted $k$-SUM problem on integers to the subset sum problem. We will show both a worst-case as well as an average-case reduction. We then use existing algorithms for low-density subset sum to get non-trivial algorithms for planted $k$-SUM at low densities. Surprisingly, the two constructions are quite different and can not be combined.


\begin{definition}[Worst Case Algorithm for Subset Sum]
  A subset sum problem instance comprises a vector $\mathbf{A}$ containing $r$ integers, and a target value $t$. An worst-case algorithm $\mathrm{Alg}$ solves the problem in time $T(r)$ if and only if it can find a Boolean vector $\mathbf{x}$ of length $r$ such that $\mathbf{A\cdot x} = t$ whenever such a vector $\mathbf{x}$ exists.
\end{definition}
\noindent Note that the above problem is known to be \textsf{NP}-complete. \nikolaj{Elaborate on why this matters.}\shweta{I don't think we need to.}

\begin{definition}[Average-Case Algorithm for Subset Sum]
    The average case problem is parametrized by two integers $r$ and $N$, where $N$ must be a prime power. To sample an instance, we choose a vector $\mathbf{A}$ from $\mathbb{Z}_N^r$ and a vector $\mathbf{x}$ from $\{0, 1\}^r$ uniformly at random. An average case algorithm returns a vector $\mathbf{x'}$ given $\mathbf{A}$ and $\mathbf{A\cdot x} \mod N$ such that $\mathbf{A\cdot x} = \mathbf{A\cdot x'} \mod N$ with constant probability. Note that this probability is taken over the randomness of the input as well as the randomness used inside the algorithm.
\end{definition}
\begin{remark}
  \label{rem:subset-sum}
  The quantity $\Delta=\frac{r}{\log{N}}$ is called the density of the subset sum instance. The runtime of an average case algorithm is expressed as a function of $r$ for some fixed $\Delta$. It is known that the problem is hardest when $\Delta=1$, and there exist polynomial time algorithms when $\Delta\leq \frac{1}{r}$~\cite{LO85,Bennett22}. 
\end{remark}

\begin{remark}
  \label{rem:subset-sum-2}
  In fact, the specific algorithms described in~\cite{LO85,Bennett22} can be applied directly to solve $k$-SUM at the densities in \cref{cor:k-MSUM low density alg,cor:k-SUM low density alg} and would give better success probability than that stated there. Nevertheless, we present our results as corollaries of our reduction to subset sum, as this reduction works for a wider range of densities, and would transfer improvements in algorithms for subset sum immediately to $k$-SUM.
\end{remark}

It is possible to define the average-case version of the problem without using modular arithmetic such that the two versions correspond to each other more obviously. The modular version of the problem can be solved by using an oracle for non-modular subset sum by calling it $r$ times with multiples of $N$ added to the target. On the other hand, the non-modular version can be solved by simply calling the oracle for modular subset sum once; there exists an unique solution with high probability in low densities. We chose this definition because it is cleaner and still equivalent to the more intuitive translation of the subset sum problem to the average-case setting.

\subsubsection{Worst-Case Reduction to Subset Sum}

In this section, we show the following worst-case reduction from the $k$-SUM search problem to subset sum.

\begin{theorem}[Worst-Case Reduction to Subset Sum]\label{thm:worst case reduction to subset sum}
    Suppose there exists a worst-case algorithm $A$ with time complexity $T(r) = \Omega(r)$ that solves subset sum. Then, there is an algorithm $B$ with the same time complexity $T(r)$ that solves the worst-case search $k$-SUM problem.
\end{theorem}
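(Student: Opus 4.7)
The plan is to give a simple padding-based many-one reduction: encode the cardinality constraint ``$|S|=k$'' and the nonnegativity required by subset sum into a single transformed instance, then invoke $A$ once. Given an input $(a_1,\ldots,a_r)$ to worst-case search $k$-SUM (over the integers), I would first pick $C = 1 + \max_i |a_i|$ so that every $a_i + C$ is a positive integer, and $M = r+1$ so that $M$ strictly exceeds any possible $|S|\le r$. I then define the transformed subset-sum instance by $b_i := M(a_i + C) + 1$ for $i \in [r]$, with target $t := M\,k\,C + k$, and I feed $(\mathbf{b}, t)$ to $A$.

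The correctness argument proceeds by decomposing the sum $\sum_{i\in S} b_i = M\bigl(\sum_{i\in S} a_i + |S|\,C\bigr) + |S|$ into an ``$M$-part'' and a ``low-order'' part. Supposing $\sum_{i\in S} b_i = t$, rearranging gives $M\bigl(\sum_{i\in S} a_i + (|S|-k)C\bigr) = k - |S|$, and since $|S| \le r < M$ forces $|k - |S||<M$, divisibility by $M$ forces $|S| = k$, which in turn forces $\sum_{i\in S} a_i = 0$. Thus every solution returned by $A$ is a bona fide $k$-SUM solution. Conversely, if the original instance has a $k$-SUM solution $S^*$, then $|S^*| = k$ and $\sum_{i\in S^*} a_i = 0$ yield $\sum_{i\in S^*} b_i = MkC + k = t$, so the subset-sum instance is a yes-instance and $A$ is obligated to return \emph{some} subset summing to $t$, which by the previous sentence is a valid $k$-SUM solution.

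For the runtime, the transformation is linear-time in $r$ (each $b_i$ has bit-length $O(\log r + \log\max_i|a_i|)$, a polynomial overhead we absorb into the model), the target $t$ is computed in $O(1)$ arithmetic operations, and we make a single call to $A$ on an instance of size $r$. Since $T(r) = \Omega(r)$, the total running time is $T(r) + O(r) = O(T(r))$, matching the claim.

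The only conceptual obstacle is enforcing the \emph{exact} cardinality $|S| = k$ using a subset-sum oracle, which a priori admits subsets of any size; the ``$+1$ in the low-order slot with modulus $M > r$'' trick resolves this cleanly, and the shift by $C$ handles the fact that subset sum is usually defined over nonnegative integers while $k$-SUM allows negatives. No more delicate arguments (e.g., about solution distributions or multiple calls) are required since this is a worst-case reduction.
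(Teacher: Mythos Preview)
Your proof is correct, and the overall shape---shift to positive integers, encode the cardinality constraint, make one call to $A$---matches the paper, but the encoding mechanism is genuinely different. The paper sets $Y_i := (k+1)M + X_i$ with $M = 1+\max_i|X_i|$ and target $t = k(k+1)M$, then argues by two-sided \emph{range bounds}: each $Y_i \in (kM,(k+2)M)$, so a sum of $j$ terms lies in $(jkM,\,j(k+2)M)$, which can hit $t$ only when $j=k$. You instead use a two-slot \emph{positional encoding} $b_i = M(a_i+C)+1$ with $M=r+1$, so the low-order slot literally counts $|S|$ and a single divisibility step forces $|S|=k$. Your argument is slightly cleaner (one modular observation rather than two inequalities), at the minor cost of inflating the integers by a factor of $r$ rather than $k+1$; the paper's version keeps the bit-length blow-up depending only on $k$, which one might prefer when $k$ is a fixed constant. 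Either route suffices for the theorem as stated.
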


\begin{proof}
    We start by describing the new algorithm:

    \paragraph{Algorithm $B(X)$}
    \begin{enumerate}
        \item Construct a new array $Y$ of $r$ integers.
        \item Let $M := \max(|X_i|)+1$. 
        \item Set $Y[i] := (k+1)M+X[i]$ for all $1\leq i\leq r$.
        \item Run $A$ on the set $Y$ with target $t=k\,(k+1)M$.
        \item Return the set of indices returned by the previous call.
    \end{enumerate}

    The above algorithm clearly has the same runtime complexity as $A$ does. To show correctness, recall that the input array $X$ must have a $k$-SUM solution. So there is a set $\kappa$ of size $k$ such that $\sum_{i\in\kappa}X_i = 0$. By our construction, $\sum_{i\in\kappa}Y_i = k\,(k+1)M+\sum_{i\in\kappa}X_i=t$. Therefore, we ensure that there is at least one solution to the subset sum problem instance we create. Hence, $A$ must return a valid subset sum solution.

    Suppose that $A$ returns a set $S$. By construction, $$|X_i|<M\Rightarrow M+X_i>0 \Rightarrow Y_i = (k+1)M+X_i=kM+(M+X_i)> kM$$ Therefore, if $|S|\geq k+1$, the value of $\sum_{i\in S}Y_i > (k+1)\cdot kM = t$. This is a contradiction since the aforementioned sum is supposed to equal $t$, and we can thus conclude that $|S|\leq k$. Similarly, observe that $$|X_i|<M\Rightarrow X_i<M \Rightarrow Y_i = (k+1)M+X_i< (k+2)M$$ Therefore, if $|S|\leq k-1$, the value of $\sum_{i\in S}Y_i < (k-1)\cdot (k+2)M <k\,(k+1)M= t$. This is also a contradiction since the aforementioned sum is supposed to equal $t$, and we can thus conclude that $|S|\geq k$. Together, these two constraints imply that $|S|=k$. Now we can easily calculate $$\sum_{i\in S}X_i = \sum_{i\in S}\left(Y_i - (k+1)M\right) = \sum_{i\in S}Y_i - k\,(k+1)M = 0$$ This concludes our proof that $B$ returns a set of exactly $k$ indices which correspond to a $k$-SUM solution.

    Note that the above reduction works unchanged in the case where $A$ is a randomized algorithm with some constant success probability; $B$ will also then have the same success probability in that case.
\end{proof}

\subsubsection{Average-Case Reduction to Subset Sum}
The above reduction unfortunately does not generalize to the average case setting. An average-case oracle could potentially be biased against any $Y$ of the above form (which can be constructed from an array $X$ with a $k$-SUM solution). To get around this, we reduce from the modular $k$-SUM problem with prime moduli. In the average case, this is known to be equivalent to the integer $k$-SUM problem.

\begin{theorem}[Average-Case Reduction to Subset Sum]\label{theorem:subset-sum AC reduction}
  Suppose there exists an algorithm $A$ of time complexity $T(r)=\Omega(r)$ that solves the average-case subset sum problem with constant success probability at some constant density $\Delta<1$. Then, for any constant $k$, there is an algorithm $B$ with the same time complexity that can solve the planted search $k$-SUM problem on groups of the form $\mathbb{Z}_p$ (where $p$ is a prime larger than $k$) at density $\frac{k\Delta\log r}{r}$ with constant success probability.
\end{theorem}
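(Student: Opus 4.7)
The plan is to reduce planted $k$-SUM over $\mathbb{Z}_p$ to average-case subset sum over $\mathbb{Z}_N$ by setting $N = p$, which matches the two moduli: the $k$-SUM density $\frac{k\Delta\log r}{r}$ forces $\log p = r/\Delta$, exactly the modulus size demanded by a subset sum instance of density $\Delta$. Given a planted instance $X \in \mathbb{Z}_p^r$ with planted solution at a subset $S$ of size $k$, algorithm $B$ repeats the following basic step $\mathcal{O}(2^k)$ times: sample a uniformly random Boolean vector $\mathbf{w} \in \{0,1\}^r$, compute the subset sum target $t = X\cdot \mathbf{w} \bmod p$, invoke the oracle to obtain $\mathbf{x}' \gets A(X, t)$, and test whether $\mathbf{x}' - \mathbf{w}$ (over $\mathbb{Z}$) equals $\pm\mathbbm{1}_S$ for some size-$k$ subset satisfying $X \cdot \mathbbm{1}_S = 0$; if so, output that subset.

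The central observation making the reduction work is that $X\mathbbm{1}_S = 0$ implies $X(\mathbf{w} + \mathbbm{1}_S) = X(\mathbf{w} - \mathbbm{1}_S) = X\mathbf{w}$. Whenever $\mathbf{w}$ is constant (all-zero or all-one) on the planted set $S$ (a $2^{1-k}$ event over the choice of $\mathbf{w}$), the vector $\mathbf{w} \pm \mathbbm{1}_S$ is Boolean and is a second valid subset sum preimage of $t$ under $X$. A moment computation in the spirit of \cref{lemma:c-X stats}, applied at density $\frac{k\Delta\log r}{r}$, shows that with overwhelming probability these are the only two Boolean preimages (since the expected number of size-$\ell$ subset sum solutions, $\binom{r}{\ell}/p$, is exponentially small for every $\ell \leq r$ at this density). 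Hence, whenever the oracle returns the solution other than $\mathbf{w}$, we extract $S$ directly from the support of $\mathbf{x}' - \mathbf{w}$, and the per-attempt cost is $\mathcal{O}(T(r) + r\log p)$; summing over $\mathcal{O}(2^k)$ trials and hiding constants depending on $k$, the total runtime is $\mathcal{O}(T(r))$ as required.

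The main technical obstacle is distribution matching: the subset sum oracle's correctness guarantee is stated only for $(A, A\mathbf{x})$ drawn from the uniform distribution on $\mathbb{Z}_p^r \times \{0,1\}^r$, whereas in our call the first coordinate is distributed as the planted $D_1$ over $X$, which places its entire mass on the (negligible-measure) event $X\mathbbm{1}_S = 0$. I plan to handle this by conditioning on the high-probability event that the random $\mathbf{w}$ is zero on $S$ (probability $2^{-k}$), under which the joint distribution of $(X|_{[r]\setminus S}, \mathbf{w}|_{[r]\setminus S})$ is exactly uniform and independent, while $X|_S$ is an independent uniformly random tuple satisfying $X\mathbbm{1}_S = 0$ that is never read by the target computation. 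Viewing the oracle's input through this lens and following the Rényi-divergence style argument of \cref{lemma:renyi,cor:renyi}, the oracle's constant success probability on the uniform subset sum distribution transfers to a constant lower bound on the probability that it returns the ``non-trivial'' Boolean solution among $\{\mathbf{w}, \mathbf{w}+\mathbbm{1}_S\}$. The hardest part of the proof will be pinning down this transfer cleanly so that the loss remains at most a $k$-dependent constant, absorbed by the $\mathcal{O}(2^k)$ repetitions.
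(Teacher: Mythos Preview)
Your plan has a genuine gap exactly at the step you identify as hardest. At density $\frac{k\Delta\log r}{r}$ the ratio $D_1(X)/D_0(X) = \frac{p}{\binom{r}{k}}\,c(X)$ from \cref{lemma:explicit_pmf} is at least $p/\binom{r}{k} \approx 2^{r/\Delta}/r^k$ on every $X$ in the support of $D_1$, so the R\'enyi divergence $R(D_1\Vert D_0)$ is exponential in $r$, not a $k$-dependent constant. The bounds you cite from \cref{cor:renyi} concern the intermediate distribution $D^{(\ell)}$ at density~$1$, where solutions occur with constant probability; here the planted event has probability $2^{-\Theta(r)}$ under $D_0$, and no divergence or statistical-distance argument can transfer the oracle's average-case guarantee with sub-exponential loss. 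Your remark that ``$X|_S$ is never read by the target computation'' does not help: $X|_S$ is still part of the oracle's input, and an oracle meeting the average-case subset-sum specification is free to fail arbitrarily on the negligible-measure set of arrays containing a $k$-tuple summing to zero.

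There is a second, independent gap: even on inputs where the oracle succeeds, nothing forces it to return $\mathbf{w}\pm\mathbbm{1}_S$ rather than the trivial $\mathbf{w}$, which is always a valid preimage of $t$. The paper's proof resolves both issues with one device you are missing: it adds a uniform random $\alpha\in\mathbb{Z}_p$ to every entry, setting $Y_i = X_i+\alpha$. Because $p>k$ is prime, $k\alpha$ is uniform in $\mathbb{Z}_p$, and a short sampling-equivalence argument shows that $Y$ is then \emph{exactly} uniform in $\mathbb{Z}_p^r$---no divergence argument is needed on the array side. The target is $t = k\alpha + \sum_{i\in S}Y_i$ for a fresh random subset $S$; now $S$ alone is \emph{not} a solution (its sum is $t-k\alpha$), while $S\cup T$ is, so a successful oracle must output $S\cup T$ and thereby reveal $T$. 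The only remaining distributional work is to show that the hidden subset $S\cup T$ is close enough to a uniform Boolean vector---a mild deviation of $k$ forced ones---which the paper handles by a separate R\'enyi bound (\cref{lemma:subset_sum set distribution}) with only a $2^{\Theta(k)}$ loss.
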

\begin{proof}
  We start by describing the new algorithm:

    \paragraph{Algorithm $B(X)$}
    \begin{enumerate}
        \item Construct a new array $Y$ of $r$ integers.
        \item Choose $\alpha$ uniformly at random from $\mathbb{Z}_p$.
        \item Set $Y[i] := \alpha+X[i]$ (mod $p$) for all $1\leq i\leq r$.
        \item Choose a random $S\subseteq[r]$.
        \item Run $A$ on the set $Y$ with target $t=k\alpha+\sum_{i\in S}Y_i$ (mod $p$).
        \item If $A$ succeeds and returns $S'$ such that $S\subset S'$ and $|S'\setminus S|=k$, return $S'-S$.
    \end{enumerate}
    Since both the problems here involve members of $\mathbb{Z}_p$, we implicitly treat all numbers in the following analysis modulo $p$.

    We will first prove that $Y$ \emph{looks random}. Specifically, we will demonstrate that 
    $$\Pr\left[Y[i] = x \,\,\Bigg|\, \bigcup_{j\neq i}Y_j \right] = \frac{1}{p} \hspace{1 in} \forall i\in[r] \,\forall x\in\mathbb{Z}_p$$
    Note that $X$ comes from a planted $k$-SUM instance, and therefore, was sampled from the planted distribution. Let us consider the effect of combining that sampling process with the first 3 steps of our algorithm $B$, which is how we obtain the array $Y$.
    
    \paragraph{Sampling Algorithm for $Y$}
    \begin{enumerate}
        \item Sample $r$ elements $Y_1, Y_2, \ldots, Y_r$ i.i.d. uniformly at random from $\mathbb{Z}_p$.
        \item Choose a random set $T \subseteq [r]$ with $|T|=k$
        \item Let $t \in T$ be the smallest index and let $Y_t \gets - \sum_{\underset{j \neq t}{j \in T}} Y_j$
        \item Choose $\alpha$ uniformly at random from $\mathbb{Z}_p$.
        \item Increment $Y_i$ by $\alpha$ for all $1\leq i\leq r$.
    \end{enumerate}

    Steps 2 and 4 are just independent uniformly random choices; we can obviously move both steps to the very beginning. Observe that we can move step 3 to the end if we simply modify the assignment to $Y_t \gets k\alpha - \sum_{\underset{j \neq t}{j \in T}} Y_j$. This is because step 3 does not affect the other $r-1$ indices of $Y$ and the modified assignment represents the net effect of steps 3 and 5 on $Y_i$. Therefore the sampling algorithm is equivalent to the following:

    \paragraph{Equivalent Sampling Algorithm for $Y$}
    \begin{enumerate}
        \item Sample $r$ elements $Y_1, Y_2, \ldots, Y_r$ i.i.d. uniformly at random from $\mathbb{Z}_p$.
        \item Choose a random set $T \subseteq [r]$ with $|T|=k$. Let $t\in T$ be its smallest index.
        \item Choose $\alpha$ uniformly at random from $\mathbb{Z}_p$.
        \item Increment $Y_i$ by $\alpha$ for all $1\leq i\leq r$.
        \item Let $Y_t\gets k\alpha-\sum_{\underset{j \neq t}{j \in T}} Y_j$
    \end{enumerate}

    In the new algorithm, it is easy to see that $Y$ is uniformly distributed in $\mathbb{Z}_p^r$ at the end of step 3. That remains true after step 4, since $\mathbb{Z}_p$ is a cyclic group and adding $\alpha$ is merely applying an invertible translation. Finally, note that, for any setting of the $Y_i$'s, $k\alpha-\sum_{\underset{j \neq t}{j \in T}} Y_j$ is a uniformly random member of $\mathbb{Z}_p$, since so is $\alpha$ and thus $k\alpha$ as $k$ is co-prime to $p$. Thus, the last step simply replaces $Y_t$ with a freshly chosen uniformly random element of $\mathbb{Z}_p$. Therefore, the result of this entire sampling procedure $Y$ is distributed uniformly in $\mathbb{Z}_p^r$. Further, due to symmetry, the set $T$ is also independent of $Y$ and distributed uniformly over all subsets of $[r]$ of size $k$.

    \pnote{Commented out another sampling algorithm that doesn't seem to be needed.}



    Observe that since the density of the provided $k$-SUM is $\frac{k\Delta\log r}{r}$, we have $\frac{k\log r}{\log p} = \frac{k\Delta\log r}{r}$, and so $\frac{r}{\log p} = \Delta$. Recall that in the planted $k$-SUM problem, the existence of a $k$-tuple $T\subset[r]$ such that $\sum_{i\in T}X_i = 0$ is guaranteed. Therefore, $\sum_{i\in T}Y_i=k\alpha$. With probability $2^{-k}$, the set $S$ is disjoint from $T$. In that event, $S'=S\bigcup T$ is a valid solution to the subset sum instance in line 5 (of algorithm $B$), and it also satisfies the conditions in line 6. We can now upper bound the probability of there being a different subset of $Y$ having the same sum as follows:
    \begin{align*}
    &\Pr\left[\exists \veca\in\bitset^r \text{ such that } \langle Y,\veca\rangle = \langle Y,\vecb\rangle \text{ and } \veca\neq\vecb\right], &&\text{$\vecb$ is the indicator bit vector for $S'$}\\
        &\quad\leq \sum_{\substack{\veca\in\bitset^r\\ \veca\neq \vecb}}\pr{\langle Y,\veca\rangle = \langle Y,\vecb\rangle}, &&\text{by the union bound}\\
        &\quad=\sum_{\substack{\veca\in\bitset^r\\ \veca\neq \vecb}}\frac{1}{p} &&\text{Since $Y\sample\Z_p^r$}\\
        &\quad< \frac{2^{r}}{p} = 2^{r-\log{p}} \\
        &\quad=2^{r-\frac{r}{\Delta}} = 2^{r(1-\frac{1}{\Delta})} \\
        &\quad= \negl(r) &&\text{Since $\Delta$ is a constant $<1$}
    \end{align*}
    
    \noindent So if $A$ succeeds on the problem instance $(Y, t)$ constructed in line 5 with some constant probability $q$, we can conclude that $B$ successfully solves the $k$-SUM instance with probability at least $\frac{q}{2^{k+1}}$.
    \begin{align*}
        &\pr{B \text{ succeeds}}\geq \pr{B\text{ succeeds}\mid S\cap T=\phi}\cdot\pr{S\cap T = \phi}\\
        =&\frac{1}{2^k}\pr{B\text{ succeeds}\mid S\cap T=\phi} \\
        =&\frac{1}{2^k}\pr{A\text{ succeeds and } A(Y,t) \text{ passes the checks in line 6}\mid S\cap T=\phi} 
        \intertext{If we now condition on $Y$ not having a second solution, $A$ succeeding would imply it finds the set $S\cup T$ which passes the checks in line 6}
        \geq&\frac{1}{2^k}(1-\negl(r))\pr{A\text{ succeeds}\mid S\cap T=\phi\text{ and } (Y,t) \text{ has a unique subset sum solution}}
        \intertext{Note that for any 3 events $P,Q,R$, we have $\pr{P\mid Q\And R}\pr{R}+\pr{P\mid Q\And \texttt{not} R}\pr{\texttt{not}R}=\pr{P\mid Q}$ which implies $\pr{P\mid Q\And R}\geq\pr{P\mid Q}-\pr{\texttt{not}R}$}
        \geq&\frac{1-\negl(r)}{2^k}\paren{\pr{A\text{ succeeds}\mid S\cap T=\phi} - \pr{Y\text{ has a second subset sum solution}}} \\
        =&\frac{1-\negl(r)}{2^k}\paren{q - \negl(r)}
        \geq\frac{q}{2^{k+1}} \hspace{25pt}\text{for large }r
    \end{align*}
    where $q$ is the probability that $A$ succeeds conditioned on $S\cap T = \phi$.
 
    A subset sum problem instance is characterized by a vector and a subset. We have already concluded that the vector input in our constructed problem instance is distributed uniformly in $\mathbb{Z}_p^r$. The density of our constructed subset sum problem is exactly $\Delta$, and so $A$ would succeed with constant probability if the subset was chosen uniformly. Note that the subset ($S'$) is the union of $S$ and $T$ where $S$ is chosen uniformly at random in line 4, $T$ is a random subset of size $k$, and we only condition on $S\bigcap T=\emptyset$. Since $T$ was independent from $Y$, so is $S'$ even after the above conditioning. Therefore, since $k$ is a constant, use \cref{lemma:subset_sum set distribution} to conclude that the probability that $A$ solves the subset sum problem we construct conditioned on $S\cap T = \phi$ (that is, $q$) is at least a constant. Since $k$ is a constant, this coupled with the inequality above implies that $B$ has constant success probability.
\end{proof}
\begin{corollary}[$k$-SUM in $\bbZ_p$ is Easy at Very Low Density]\label{cor:k-MSUM low density alg}
  There exists an average-case algorithm for $k$-SUM over $\mathbb{Z}_p$ groups at density at most $\frac{k\log r}{r^2}$ whose runtime complexity is polynomial in $r$ and does not depend on $k$.
\end{corollary}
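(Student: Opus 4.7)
The plan is to combine the average-case reduction in \cref{theorem:subset-sum AC reduction} with a polynomial-time algorithm for low-density subset sum. By \cref{rem:subset-sum}, the Lagarias--Odlyzko/Bennett-type algorithms solve average-case subset sum in time $\poly(r)$ whenever the subset sum density $\Delta$ satisfies $\Delta \leq 1/r$. Setting $\Delta = 1/r$ and plugging into the density translation provided by \cref{theorem:subset-sum AC reduction}, the resulting $k$-SUM density is $\frac{k \Delta \log r}{r} = \frac{k \log r}{r^2}$, exactly the density appearing in the corollary.

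Concretely, I would take any planted $k$-SUM instance $X$ over $\mathbb{Z}_p$ of density at most $\frac{k \log r}{r^2}$, run the reduction from \cref{theorem:subset-sum AC reduction} to produce a subset sum instance $(Y,t)$ in time $O(r \log p) = O(r \log r)$, invoke the low-density subset sum algorithm on $(Y,t)$, and verify whether the returned subset $S'$ contains the auxiliary set $S$ and differs from it by exactly $k$ indices, outputting $S' \setminus S$ if so. This is exactly the algorithm $B$ in the proof of \cref{theorem:subset-sum AC reduction}, whose per-call success probability is $\Omega(1/2^{k+1})$, so for any constant $k$ we can amplify to arbitrarily high constant success probability by $O(2^k)$ independent repetitions. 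The polynomial degree in $r$ is inherited entirely from the subset sum solver and is independent of $k$.

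The only small wrinkle is that \cref{theorem:subset-sum AC reduction} is stated for a \emph{constant} density $\Delta < 1$, whereas we would apply it with $\Delta = 1/r$. Inspecting its proof, the only place constancy of $\Delta$ is used is to show that $2^{r(1-1/\Delta)}$ is negligible in $r$; for $\Delta = 1/r$ this quantity becomes $2^{r-r^2}$, which is even more negligible, so the argument carries through unchanged. Thus the reduction retains its guarantees at $\Delta = 1/r$, and this is the main (minor) point to verify.

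I expect no serious obstacle: the two ingredients (the reduction and the low-density subset sum algorithm) are essentially already in place, and what remains is a bookkeeping argument that the composition runs in $\poly(r)$ with exponent independent of $k$, together with noting that the density bound in the reduction holds comfortably at $\Delta = 1/r$. An alternative route, as flagged by \cref{rem:subset-sum-2}, would be to apply \cite{LO85,Bennett22} \emph{directly} to the $k$-SUM instance, bypassing the reduction; this would give the same conclusion and possibly a better success probability, but the reduction-based route is the most natural given the preceding results of the section.
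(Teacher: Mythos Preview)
Your proposal is correct and follows exactly the paper's approach: invoke the low-density subset sum algorithm (\cref{rem:subset-sum}) at subset sum density $\Delta=1/r$ and feed it through \cref{theorem:subset-sum AC reduction}. The paper's proof is two sentences and glosses over the point you flagged about \cref{theorem:subset-sum AC reduction} being stated for constant $\Delta$; your observation that the only use of constancy is the negligibility of $2^{r(1-1/\Delta)}$, which holds a fortiori at $\Delta=1/r$, is exactly the right patch. One small note: your $O(2^k)$ repetitions for amplification would make the running time depend on $k$, contradicting the corollary's claim; the cleaner reading is to omit the amplification and accept the $\Omega(2^{-(k+1)})$ success probability (constant for constant $k$), which keeps the running time itself $k$-free as stated.
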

\begin{proof}
  The subset sum problem can be solved in the average-case in polynomial time for density at most $1/r$ with constant success probability (see \cref{rem:subset-sum}). The statement then follows from \cref{theorem:subset-sum AC reduction}.
\end{proof}

\noindent We can use a reduction by \cite{dinur_keller_klein} to lift this result to $k$-SUM over the integers.
\begin{corollary}[$k$-SUM Over Integers is Easy at Very Low Density]
\label{cor:k-SUM low density alg}
    There exists an average-case algorithm for $k$-SUM over integers (where the integers are chosen at random from $[-N,N]$ such that $\Delta=\frac{k\log r}{\log N}$) at density at most $\frac{k\log r}{r^2}$ whose runtime complexity is polynomial in $r$ and does not depend on $k$.
\end{corollary}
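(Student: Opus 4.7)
The plan is to reduce planted integer $k$-SUM at density at most $\frac{k\log r}{r^2}$ to planted modular $k$-SUM over $\mathbb{Z}_p$ at essentially the same density, and then invoke \cref{cor:k-MSUM low density alg}.

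Given a planted integer $k$-SUM instance with elements drawn uniformly from $[-N,N]$ (so $\log N \geq r^2$ up to a factor of $k$), the first step is to pick a prime $p$ in the interval $(2kN, 4kN]$, which is guaranteed to exist by Bertrand's postulate. This choice provides a clean lifting property: for any candidate set $S \subseteq [r]$ with $|S|=k$, the integer sum $\sum_{i\in S} X_i$ lies in $[-kN,kN] \subset (-p/2, p/2)$, so $\sum_{i\in S} X_i \equiv 0 \pmod{p}$ holds if and only if $\sum_{i\in S} X_i = 0$ over the integers. Hence any modular $k$-SUM solution for the reduced instance lifts to a solution of the original integer instance, and the planted integer solution is preserved as a planted modular solution. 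Moreover, $\log p = \log N + O(\log k)$, so the induced density $\frac{k\log r}{\log p}$ is at most $\frac{k\log r}{r^2}$ up to lower order terms, remaining in the regime required by \cref{cor:k-MSUM low density alg}.

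Next, I would apply the integer-to-modular reduction of Dinur, Keller and Klein~\cite{dinur_keller_klein} to bridge the distributional mismatch, since reducing uniform integers from $[-N,N]$ modulo $p > 2kN$ produces a distribution supported on a $(2N{+}1)$-sized subset of $\mathbb{Z}_p$ rather than the uniform distribution on $\mathbb{Z}_p$ that the algorithm expects. Their reduction uses a blinding step that produces a modular instance whose distribution is statistically indistinguishable from the uniform planted distribution on $\mathbb{Z}_p^r$ while preserving the locations of the planted solution. Feeding the resulting instance to \cref{cor:k-MSUM low density alg} returns the planted set $S$ with constant success probability in time polynomial in $r$ and independent of $k$; lifting $S$ back via the above property completes the proof.

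The main obstacle is ensuring that the density penalty from using a larger modulus (and any additional overhead from the blinding reduction) does not push the instance outside the admissible range for \cref{cor:k-MSUM low density alg}. Since $\log p - \log N = O(\log k)$ and the target density threshold is $\frac{k\log r}{r^2}$, this penalty is negligible for large $r$ and any fixed $k$, so the full reduction preserves both the polynomial runtime in $r$ and its independence from $k$.
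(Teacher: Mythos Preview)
Your proposal is correct and follows essentially the same approach as the paper: both pick a prime $p$ of size $\Theta(N)$, invoke \cref{cor:k-MSUM low density alg} for the modular problem, and appeal to the integer-to-modular reduction of Dinur, Keller and Klein (their Theorem~4.5) to handle the distributional transfer. The only cosmetic difference is that you choose $p\in(2kN,4kN]$ so that modular solutions lift to integer solutions by a direct magnitude argument, whereas the paper takes $p\in[N,2N]$ and delegates that issue entirely to the DKK reduction; neither choice changes the argument or the conclusion.
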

\begin{proof}
    Let $p$ be a prime between $N$ and $2N$. \cref{cor:k-MSUM low density alg} gives us a polynomial algorithm $A$ that solves $k$-SUM in $\mathbb{Z}_p$ for density approximately $\Delta$. We can now use Theorem 4.5 in \cite{dinur_keller_klein} to obtain the desired algorithm.
\end{proof}

\begin{lemma}\label{lemma:subset_sum set distribution}
    Assume that $k$ is some fixed constant. Let $D_0$ be the uniform distribution on subsets of $[r]$. Let $D_1$ be the distribution induced by the following sampling procedure:
    \begin{enumerate}
        \item Choose $k$ distinct integers uniformly at random from $[r]$; call this set $T$.
        \item Choose a random subset $S$ of $[r]$
        \item If $S\bigcap T=\emptyset$, return $S\bigcup T$. Otherwise, repeat the process.
    \end{enumerate}
    If an average-case algorithm $A$ has success probabilities $\gamma_0=\Omega_k(1)$ on inputs from $D_0$ and $\gamma_1$ on inputs from $D_1$, then $\gamma_1\geq \gamma_0 \,2^{-(2k+1)}$.
\end{lemma}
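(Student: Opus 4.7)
The plan is to recast $D_1$ as a rejection-sampling modification of $D_0$. Concretely, consider the following procedure: independently sample $U \sim D_0$ and a uniformly random $k$-subset $T \subseteq [r]$; accept and output $U$ if $T \subseteq U$, otherwise reject and repeat. By summing over the $\binom{|U_0|}{k}$ choices of $T \subseteq U_0$ for any fixed $U_0$, one checks that the output distribution is exactly $D_1$. The per-trial acceptance probability is $\Pr_{U,T}[T \subseteq U] = \mathbb{E}_{U \sim D_0}[\binom{|U|}{k}/\binom{r}{k}] = 2^{-k}$, using the identity $\sum_U \binom{|U|}{k} = \binom{r}{k}\,2^{r-k}$.

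Let $p(U)$ denote the success probability of $A$ (over the randomness of the input vector and $A$'s coins) on an input whose solution subset is $U$, so that $\gamma_i = \mathbb{E}_{U \sim D_i}[p(U)]$. The rejection-sampling viewpoint immediately gives the identity
\[
  \gamma_1 \;=\; 2^k \cdot \mathbb{E}_{U \sim D_0}\!\left[ p(U) \cdot \frac{\binom{|U|}{k}}{\binom{r}{k}} \right].
\]
I would then restrict the expectation to the typical event $\{|U| \geq r/4\}$. On this event, for $r$ large enough in terms of $k$, the ratio $\binom{|U|}{k}/\binom{r}{k} = \prod_{i=0}^{k-1}(|U|-i)/(r-i)$ is at least $(1/8)^k$, which yields $\gamma_1 \geq 4^{-k}\cdot \mathbb{E}_{U \sim D_0}[p(U)\,\mathbf{1}[|U|\geq r/4]]$. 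The complementary event has probability at most $e^{-r/16}$ by the Chernoff bound (\cref{lemma:chernoff}) applied to $|U| = \sum_{i=1}^r \mathbf{1}[i \in U]$, so using the assumption $\gamma_0 = \Omega_k(1)$, for $r$ sufficiently large compared to $k$ this tail costs at most $\gamma_0/2$ of the mass. The conclusion follows:
\[
  \gamma_1 \;\geq\; 4^{-k}\cdot (\gamma_0 - e^{-r/16}) \;\geq\; 4^{-k}\cdot \gamma_0/2 \;=\; \gamma_0 \cdot 2^{-(2k+1)}.
\]

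The only mildly nontrivial step is verifying the equivalence of the two sampling procedures for $D_1$ and tracking the constants in the ratio $\binom{|U|}{k}/\binom{r}{k}$; the rest is a routine truncation argument combined with the standard Chernoff estimate.
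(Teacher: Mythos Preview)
Your proof is correct and is essentially the paper's argument: the same explicit density formula $D_1(U)=2^k\binom{|U|}{k}/\binom{r}{k}\cdot D_0(U)$, the same truncation to $|U|\ge r/4$, the same binomial-ratio bound, and the same Chernoff/Hoeffding tail estimate. The paper merely packages these steps through an auxiliary distribution $D_2$ (uniform on subsets of size $\ge r/4$) together with the bounds $SD(D_0,D_2)=o(1)$ and $R(D_2\,\|\,D_1)\le 4^k$, whereas your rejection-sampling identity and direct truncation of the expectation carry out the identical computation without the intermediary.
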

\begin{proof}
    Let $D_2$ be the uniform distribution on subsets of $[r]$ with size at least $\frac{r}{4}$, and let $\gamma_2$ be the success probability of $A$ on inputs from $D_2$. Our proof will bound the total variation distance between $D_0$ and $D_2$, and the Rényi divergence of order $\infty$ between $D_1$ and $D_2$. We will thus derive bounds on $|\gamma_0-\gamma_2|$ as well as $\frac{\gamma_1}{\gamma_2}$, which will together imply the given statement.

    Note that $\supp(D_2)\subset\supp(D_0)$. Since both of these distributions are uniform, we have,
    \begin{align*}
        \Delta&(D_0, D_2) \\
        &= \sum_{S \in \supp(D_2)} \left(D_2(S)-D_0(S)\right)\\
        &= \sum_{S \in \supp(D_0)-\supp(D_2)} D_0(S)\\
        &= \sum_{S\subset [r], |S|<\frac{r}{4}} D_0(S)\\
        &= \Pr_{S\sim D_0}\left[|S|<\frac{r}{4}\right]\\
        &\leq \mathrm{exp}\left(-2r\left(\frac{1}{2}-\frac{1}{4}\right)^2\right) \intertext{Using Hoeffding's inequality (\cref{lemma:hoeffding}), we get,}\\
        &= \mathrm{exp}\left(-\frac{r}{8}\right)\\
        &= o(1).
    \end{align*}
    This implies that $\gamma_2\geq \gamma_0-o(1)>\frac{\gamma_0}{2}$. Let us now calculate an explicit expression for $D_1(Q)$ for a subset $Q$ of size at least $k$. To get an output of $Q$, we need two events to happen simultaneously. $T$ must be a subset of $Q$, which happens with probability $\binom{|Q|}{k}/\binom{r}{k}$. Given any such choice of $T$, $S$ must be exactly $Q-T$. There are $2^{r-k}$ choices of $S$ which do not intersect with $T$, so the probability of this particular set being chosen is $2^{k-r}$ (note that the sampling process for $D_1$ implicitly conditions on $S\cap T=\emptyset$). Since the above two events are independent, we have $$D_1(Q) = \frac{\binom{|Q|}{k}}{\binom{r}{k}}\cdot\frac{2^k}{2^r}$$
    Note that all sets of size at least $k$ are in the support of $D_1$. Therefore, as long as $r>4k$, we have $\supp(D_2)\subseteq\supp(D_1)$. We can now calculate the following Rényi divergence using \cref{lemma:renyi} as follows.
    \begin{align*}
        R&(D_2 \lVert D_1)\\
        &= \underset{Q \in \supp(D_2)}{\max} \frac{D_2(Q)}{D_1(Q)}\\
        &= \underset{Q\subseteq[r], |Q|\geq\frac{r}{4}}{\max} \left(D_2(Q)\cdot\frac{\binom{r}{k}}{\binom{|Q|}{k}}\cdot2^{r-k}\right)\\
        &= \underset{Q\subseteq[r], |Q|\geq\frac{r}{4}}{\max} \left(\frac{1}{\left\lvert S \subseteq [r] \mid |S| \geq r/4\right\rvert}\cdot\frac{\binom{r}{k}}{\binom{|Q|}{k}}\cdot2^{r-k}\right) \intertext{We may bound this as follows.}\\
        &\leq \left(\frac{1}{2^{r-1}}\cdot\frac{\binom{r}{k}}{\binom{r/4}{k}}\cdot2^{r-k}\right).\intertext{This can be written as follows.}\\
        &=\frac{1}{2^{k-1}}\cdot\frac{r\,(r-1)\cdots(r-k+1)}{(r/4)(r/4-1)\cdots(r/4-k+1)}\\
        &<\frac{2}{2^k}\cdot\left(\frac{r-k+1}{r/4-k+1}\right)^k\\
        &<\frac{1}{2^k}\cdot\left(\frac{2r-8k+8}{r/4-k+1}\right)^k =4^k.
    \end{align*}
    
    \noindent Where the last inequality holds for large enough values of $r$. Applying the Rényi divergence lemma (\cref{lemma:renyi}) on the event that algorithm $A$ succeeds, we get that, $$\gamma_1\geq \gamma_2/R(D_2\lVert D_1)>\gamma_2/4^k.$$

    \noindent Combining the two above relations, we get $\gamma_1\geq 2^{-(2k+1)}\gamma_0$.
\end{proof}

\subsection{Algorithm for $k$-XOR at Low Densities}
\label{sec:algo}

In this section, we will show that the $k$-XOR problem becomes easy to solve at densities $\Delta \leq \frac{1}{r^{0.5+\epsilon}}$. To start with, observe that the $k$-XOR problem becomes very easy to solve when the input matrix is square (i.e. $m=r$). This is because a random boolean square matrix is full rank with constant probability, in which case the planted $k$-XOR solution is the only linear dependence, and we can find it using Gaussian elimination in $\O(r^3)$ time. This algorithm also works when $m>r$; we can simply ignore the bottom $m-r$ rows. Below, we will provide an algorithm for $k$-XOR that works by repeatedly trying to reduce a more general instance to this case.

\begin{theorem}[Algorithm for $k$-XOR]\label{thm:faster-algo-kxor}
    For any $k\in\bbN$, there is an algorithm that runs in time $\Otilde\left(r^{k}m^{3-k}\right)$ and solves the planted search $k$-XOR problem at any density with success probability $1-o(1)$.
\end{theorem}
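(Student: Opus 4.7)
The plan is to cast an instance of planted $k$-XOR as a matrix $X \in \F_2^{m \times r}$ whose columns are vectors in $\F_2^m$, and to recover the planted $k$-sparse dependence by Gaussian elimination on a random square submatrix. Concretely, I would focus on the regime $m \le r$ (the case $m > r$ being handled separately by first truncating to the top $r + \mathcal{O}(\log r)$ rows, whose kernel coincides with that of the full matrix with high probability), and repeat the following $T = \Theta((r/m)^k \log r)$ times: sample a uniformly random subset $S \subseteq [r]$ of size $m$, form the submatrix $X_S \in \F_2^{m \times m}$, compute a basis for its kernel by Gaussian elimination in $\mathcal{O}(m^3)$ time, and if the kernel is one-dimensional with a weight-$k$ nonzero vector $v$, output the $k$ indices in $S$ indicated by $v$.

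For correctness I would argue as follows. Let $K \subseteq [r]$ denote the set of planted indices. With probability $\binom{r-k}{m-k}/\binom{r}{m} = \Omega((m/r)^k)$, a uniformly random size-$m$ subset $S$ contains $K$. Conditioned on this event, the remaining $m-k$ columns of $X_S$ are i.i.d.\ uniform in $\F_2^m$, independent of the $(k-1)$-dimensional subspace $V \subseteq \F_2^m$ spanned by the planted columns. A standard computation then gives
\begin{align*}
\Pr\bigl[\text{the $m-k$ uniform vectors are linearly independent in } \F_2^m / V\bigr] \;=\; \prod_{j=2}^{m-k+1}\bigl(1 - 2^{-j}\bigr) \;\geq\; c_0 \;>\; 0
\end{align*}
for an absolute constant $c_0$. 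Whenever this event occurs, the column span of $X_S$ has dimension exactly $m-1$, its kernel is one-dimensional, and the unique nonzero kernel vector must be the planted solution (restricted to $S$). Each iteration therefore succeeds with probability at least $\Omega((m/r)^k)$, so $T = \Theta((r/m)^k \log r)$ independent trials succeed at least once with probability $1 - o(1)$.

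The total runtime is dominated by the $T$ calls to Gaussian elimination at cost $\Otilde(m^3)$ each, giving $\Otilde((r/m)^k \cdot m^3) = \Otilde(r^k m^{3-k})$ as claimed. The main conceptual point to check is soundness: we must ensure that no iteration ever returns a non-solution, which is immediate from the explicit test that the kernel be one-dimensional with a weight-$k$ support vector---any $v$ passing this test is, by definition of the kernel, a genuine $k$-XOR solution (the planted one when $K \subseteq S$, or possibly a spurious collision in the random columns otherwise, which is still a valid $k$-XOR solution in the original instance). The main obstacle to iron out in the full proof is the $m > r$ regime, where Gaussian elimination on the full $m \times r$ matrix would exceed the budget; here I would argue that the top $r + \Theta(\log r)$ rows already determine the kernel of the full matrix with probability $1 - 1/\poly(r)$, so a single Gaussian elimination on this truncated $\Otilde(r) \times r$ matrix solves the instance in $\Otilde(r^3)$ time, which in turn is absorbed by $\Otilde(r^k m^{3-k})$ in the relevant sparse range $\Delta \geq 1/r$.
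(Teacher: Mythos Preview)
Your proposal is correct and follows essentially the same approach as the paper: repeatedly subsample a random set of columns, run Gaussian elimination, and return any weight-$k$ kernel vector found. The only difference is that the paper subsamples $m/2$ columns (a tall $m \times (m/2)$ matrix, in which the planted dependence is the unique linear relation with probability $1-\negl(m)$) rather than your $m$ columns (a square matrix, where uniqueness holds only with constant probability via your quotient-space computation); both choices lead to the same $\Otilde(r^k m^{3-k})$ runtime, and the paper is similarly informal about the $m>r$ regime.
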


\begin{proof}
Consider the following algorithm that is given input $A\in\bbF_2^{m\times r}$. It becomes trivial when $r < m/2$, but in that case the Gaussian elimination part can be applied directly to the instance to get the same results.
\paragraph{Algorithm $\cP(A)$}
\begin{enumerate}
\item Repeat the following steps $\left(\dfrac{4r}{m}\right)^k\log{r}$ times:
  \begin{enumerate}
      \item[1.1.] Randomly choose $\frac{m}{2}$ columns of $A$ to create a new $m\times \frac{m}{2}$ matrix $B$.
      \item[1.2.] Run Gaussian elimination on $B$ to find any linear relationships between its columns.
      \item[1.3.] If the previous step returns a linear dependence of size exactly $k$, return it.
  \end{enumerate}
\item Return $\bot$.
\end{enumerate}

\noindent Observe that the planted solution is preserved in $B$ in any given repetition with probability 
$$
    \dfrac{\binom{r-k}{m/2-k}}{\binom{r}{m/2}}=\dfrac{(r-k)!\,(m/2)!}{(m/2-k)!\,r!}=\dfrac{(m/2)(m/2-1)\cdots(m/2-k+1)}{r\,(r-1)\cdots(r-k+1)}>\left(\frac{m/2-k+1}{r-k+1}\right)^k>\left(\frac{m}{4r}\right)^k
$$
\noindent Since we run $(\frac{4r}{m})^k\log{r}$ repetitions, the probability that $B$ contains the original $k$-XOR solution at least once is greater than $$1-\left(1-\left(\frac{m}{4r}\right)^k\right)^{\left(\frac{4r}{m}\right)^k\log{r}}>1-\frac{1}{r}$$

If we condition on $B$ having the $k$-XOR solution preserved, it is easy to see that the induced distribution on $B$ is the planted distribution on $\mathbb{F}_{2}^{m\times\frac{m}{2}}$. Since the number of rows is much larger than the number of columns, the planted solution is the only linear dependence in such a matrix with probability $1-\negl(m)$. Therefore, Gaussian elimination returns a $k$-XOR solution and we return the correct answer 1 with probability greater than $1-o(1)$.

The runtime of Gaussian elimination is $\O(m^3)$. The total runtime of the above algorithm is therefore $\O\left(\left(\frac{r}{m}\right)^{k}m^3\log{r}\right)=\Otilde(r^{k}\,m^{3-k})$. 
\end{proof}
\begin{corollary}\label{corollary:k-xor algorithm at low density}
    For any $k\in\bbN$ and constant $\epsilon>\dfrac{3}{k-3}$, there is an algorithm for the planted search $k$-XOR problem that runs in time $r^{\ceil{k/2}-\Omega(1)}$.
\end{corollary}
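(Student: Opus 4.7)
The plan is to instantiate Theorem~\ref{thm:faster-algo-kxor} at a density low enough that the $m^{3-k}$ factor kills the $r^k$ factor. Specifically, I would take $\Delta = r^{-(1/2+\epsilon)}$, which by the definition of $\G^{(\Delta)}_{\text{$k$-XOR}}$ forces $m = \lceil k(\log r)/\Delta \rceil = \Omega(r^{1/2+\epsilon}\log r)$. Note that the hypothesis $\epsilon > 3/(k-3)$ is only meaningful when $k > 3$, so we may assume $k \geq 4$ and hence $3 - k \leq -1$. Consequently, the bound $\Otilde(r^k m^{3-k})$ from Theorem~\ref{thm:faster-algo-kxor} is monotonically decreasing in $m$, and taking $m$ as large as the above lower bound gives the strongest conclusion.

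Substituting and absorbing the $(\log r)^{3-k}$ factors into $\Otilde(\cdot)$, the runtime becomes
\[
  \Otilde\paren{r^{k}\cdot r^{-(k-3)(1/2+\epsilon)}}
  \;=\; \Otilde\paren{r^{\,k/2 + 3/2 - \epsilon(k-3)}}.
\]
The hypothesis $\epsilon > 3/(k-3)$ gives $\epsilon(k-3) > 3$, so the exponent is strictly less than $k/2 - 3/2$. Since $k/2 \leq \ceil{k/2}$, this is at most $\ceil{k/2} - 3/2$. Setting $\delta = \epsilon(k-3) - 3 > 0$, the runtime is $\Otilde(r^{\ceil{k/2} - 3/2 - \delta})$, which is $r^{\ceil{k/2} - \Omega(1)}$ as required.

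There is no substantive obstacle — the corollary is a direct calculation on the runtime bound of Theorem~\ref{thm:faster-algo-kxor}. The only points requiring care are (i) checking that $k \geq 4$, so that $3-k$ is indeed negative and decreasing $\Delta$ actually helps, and (ii) handling even and odd $k$ uniformly, which is done by the loose estimate $k/2 \leq \ceil{k/2}$; the constraint $\epsilon > 3/(k-3)$ is cleaner than (and sufficient for) the tighter case-by-case bounds $\epsilon > 3/(2(k-3))$ for even $k$ and $\epsilon > 1/(k-3)$ for odd $k$.
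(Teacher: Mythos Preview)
Your proof is correct and follows essentially the same approach as the paper: substitute $m=\Theta(r^{1/2+\epsilon}\log r)$ (equivalently $\Delta=r^{-(1/2+\epsilon)}$) into the $\Otilde(r^k m^{3-k})$ bound of Theorem~\ref{thm:faster-algo-kxor} and use $\epsilon>3/(k-3)$ to push the exponent below $\ceil{k/2}$. Your exposition is in fact more careful than the paper's, which jumps directly to $\Otilde_k(r^{(k-3)/2})$ without spelling out the intermediate exponent $k/2+3/2-\epsilon(k-3)$ or the need for $k\geq 4$.
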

\begin{proof}
    Since $m=\dfrac{k\log{r}}{\Delta}$, the runtime of $\cP$ is $\Otilde\left(\dfrac{r^k\,\Delta^{k-3}}{k^{k-3}}\right)=\Otilde_k(r^{(k-3)/2})=o(r^{\ceil{k/2}-\Omega(1)})$. 
\end{proof}
\section{Hardness Amplification}
\label{sec:success amplification}
In this section, we propose a success amplification procedure for search $\kSUM$ outside the dense regime. The procedure amplifies the success probability of any algorithm that solves planted search $k$-SUM with probability $\Omega(1/\polylog(r))$ to $1-o(1/\log{r})$, at the cost of increasing its runtime by a polylogarithmic factor. In \cref{sec:hardness amplification for general group}, we prove this result for any abelian group up to density $(1-\eps)$ for some $\eps = o(1)$. In \cref{sec: success amp upto 1 for vksum,sec: success amp upto 1 for modular ksum}, we show how to extend our result up to density 1 for the special case of $\kSUM$ over integers modulo a power of $2$, and \vkSUM (of which $\kXOR$ is a special case). Note that while $\textsf{G}$ is the group ensemble, and $G$ is the group specific to the instance size $r$, for simplicity of notation, we will use $\textsf{G}$ to denote also the specific group and trust that this will not cause confusion.

\subsection{Hardness Amplification for General Groups} \label{sec:hardness amplification for general group}

We will consider the $\kSUM$ problem over some arbitrary abelian group ensemble $\G = \set{G^{(r)}}$. Our input will be an array of length $r$ containing elements of $G^{(r)}$. At density $\Delta$, we have the relation $\Delta \log{|G^{(r)}|}=k\log{r}$. As long as $1> \Delta=\Omega\left(\frac{1}{\polylog(r)}\right)$ and $k = O(1)$, this implies that $\log{|G^{(r)}|} = \Theta(\polylog(r))$. Below, for any instance $M$ and set $S\subseteq [r]$, we will denote by $M[S]$ the sub-array of $M$ indexed by the elements in $S$.

Recall the distributions $D_0$ and $D_1$ defined in \cref{sec:planted_k_sum} for any $r$ (specified by context) as follows:
\begin{enumerate} 
  \item To sample from $D_0$, we simply choose each element of the array independently and uniformly at random from $G^{(r)}$. 
  \item To sample from $D_1$, we do the same thing and then replace a random entry with the negated sum of $k-1$ randomly chosen other entries\footnote{This is equivalent to replacing the smallest entry with the negated sum of the other entries, as defined in \cref{sec:planted_k_sum}.}.
\end{enumerate}
Given an arbitrary instance $M$ sampled from $D_1$, the planted search $\kSUM$ problem asks for a $k$-tuple of indices $S$ such that: 
\begin{equation}\label{eq:k-XOR solution}
  \sum_{i\in S}M_i=0
\end{equation}

\begin{theorem}[Hardness Amplification for General Groups]\label{thm: recovery algorithm success amplification}
  Consider any $k\geq 3$, $\alpha \geq 1$, and a group ensemble $\G$ with density $\Delta_k\,(\G) = \Delta$ such that:
  $$\Omega\pfrac{1}{\polylog(r)}\leq\Delta\leq \frac{k\log{r}}{k\log{r}+(\alpha+1)\log\log{r}}<1.$$
  Suppose there exists an algorithm $\A^\rec$ that runs in time $T(r) = \Omega(r)$, and solves the planted search $\kSUM$ problem over $\G$ with success probability $\gamma = \Omega\pfrac{1}{\log^\alpha{r}}$. Then, there is an algorithm that runs in time $\Otilde\left(T\cdot (k/\gamma^2)^k\right)$, and solves the planted search $\kSUM$ problem over $\G$ with success probability $\left(1-o\left(\frac{1}{\log{r}}\right)\right)$.
\end{theorem}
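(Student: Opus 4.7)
The plan is to follow the random-walk strategy sketched in the technical overview. First I would reduce to the case where $\A^\rec$ is deterministic by fixing an optimal random tape, so that there is a well-defined ``good set'' $T$ of instances on which it outputs a correct $k$-SUM solution; at least a $\gamma$-fraction of planted instances lie in $T$. For any planted instance $X$, let $\mathrm{span}(X)$ denote the set of instances with the \emph{same} planted solution (same $k$ positions and same $k$ values) as $X$. The goal is to show that starting from $X$, a short randomized walk lands in $T \cap \mathrm{span}(X)$ with probability $\Omega(\gamma/\polylog(r))$ for almost all $X$; then $\A^\rec$ applied to the walk endpoint recovers exactly the solution planted in $X$.

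Next I would define the walk. Each step picks a uniformly random coordinate and replaces its entry with a uniformly random group element distinct from the current one. Conditioned on no step ever selecting one of the $k$ planted coordinates --- an event of probability $(1-k/r)^t$ --- the walk restricted to $\mathrm{span}(X)$ is a random walk on the Hamming graph $H(r-k, |G^{(r)}|)$. \Cref{lemma:hamming} shows that this graph has a spectral gap large enough that a walk of length $t = \Theta(r\log\log r)$ mixes to within $1/\polylog(r)$ of uniform on $\mathrm{span}(X)$; combined with the expander mixing lemma (\Cref{lemma:expander_mixing}), the conditioned walk places mass $\Omega(\mu - 1/\polylog(r))$ on any subset of $\mathrm{span}(X)$ of relative density $\mu$, uniformly over the starting point. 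Undoing the conditioning costs a factor of $(1-k/r)^t = 1/\polylog(r)$, so the unconditioned walk hits $T \cap \mathrm{span}(X)$ with probability $\Omega(\mu/\polylog(r))$ provided $|T\cap \mathrm{span}(X)|/|\mathrm{span}(X)| \geq \mu = \Omega(\gamma)$.

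The reason one can expect $|T \cap \mathrm{span}(X)|/|\mathrm{span}(X)| \geq \Omega(\gamma)$ for most planted $X$ is the obfuscation step, which I would apply before running the walk. The reduction draws a uniformly random zero-sum tuple $(e_1,\dots,e_k) \in G^k$, adds $e_i$ to a random coordinate $\pi(i)$ of $X$, and finally applies a uniformly random coordinate permutation. With probability at least $1/k^k$ the coordinates $\pi(1),\dots,\pi(k)$ are precisely the $k$ planted coordinates of $X$; conditioned on this event, the resulting instance $\widehat{X}$ is a fresh planted instance whose solution position and value are uniform and independent of the original $X$. A Markov argument over the random choice of $\widehat{X}$ then shows that the fraction of planted instances with $|T\cap \mathrm{span}(\widehat{X})|/|\mathrm{span}(\widehat{X})| < \gamma/2$ is at most $O(\gamma)$, i.e. $o(1/\log r)$ after one level of repetition. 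Repeating the entire obfuscation-walk-recover procedure polylogarithmically many times (together with the $k^k$ factor for the obfuscation hitting the right coordinates) amplifies the per-run success probability $\Omega(\gamma/\polylog(r))$ to $1 - o(1/\log r)$, giving the claimed runtime $\Otilde(T\cdot(k/\gamma^2)^k)$.

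The main obstacle will be controlling spurious solutions introduced by the walk. At densities close to $1$, a uniformly random element of $\mathrm{span}(X)$ can contain unplanted $k$-SUM solutions; if $\A^\rec$ returns one of these, our decoder cannot map it back to a solution of the original $X$. This is precisely what the upper bound $\Delta \leq k\log r/(k\log r + (\alpha+1)\log\log r)$ controls: via \Cref{lemma:c-X stats} the expected number of extra solutions in a uniformly random member of $\mathrm{span}(X)$ is $r^k/|G^{(r)}| = o(\gamma/\log r)$, so a union bound over the polylogarithmic number of walk endpoints produced by the algorithm shows that no spurious solution ever appears except with probability $o(1/\log r)$. Putting these pieces together (obfuscation $+$ mixing $+$ absence of spurious solutions $+$ repetition) yields the theorem.
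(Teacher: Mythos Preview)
Your proposal captures the right ingredients (obfuscation, random walk on the Hamming graph, spectral mixing, spurious-solution control via the density hypothesis), but the way you compose them differs from the paper and leaves two real gaps. First, you obfuscate \emph{before} the walk and then claim ``a Markov argument'' shows most spans have $T$-density $\ge\gamma/2$. That argument fails: if $f(S,V)=|T\cap\mathrm{span}_{S,V}|/|\mathrm{span}_{S,V}|$ has mean $\gamma$ and range $[0,1]$, Markov only gives $\Pr[f<\gamma/2]\le 1-\gamma/2$, which is useless. The paper instead folds obfuscation \emph{into} the recovery oracle, defining $\A^{\obf}$ as: permute coordinates, add a random $e_i\in\{e_1,\dots,e_k\}$ to \emph{every} coordinate (not just $k$ of them---your description ``adds $e_i$ to a random coordinate $\pi(i)$'' would hit the planted set with probability $\Theta(k!/r^k)$, not $1/k^k$), then call $\A^\rec$. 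Lemma~\ref{oprime-guarantee} then shows that for \emph{every} fixed $(S,V)$, $\A^{\obf}$ succeeds on a $\ge\gamma/2$ fraction of $\mathrm{span}_{S,V}$, so the target set is dense in every span, not merely on average.

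Second, the expander mixing lemma does not give pointwise mixing: the $\ell_1$ error of the $t$-step walk started from a single vertex is of order $(\lambda/d)^t\sqrt{|V|}$, which is enormous here. What EML does give is a bound on the \emph{fraction} of bad starting vertices. The paper therefore applies the walk to the raw input $M$ and only then calls $\A^\obf$; because the good-for-$\A^\obf$ set has density $\ge\gamma/8$ in $\mathrm{span}(M)$ unconditionally, EML on the $t$-th power graph shows that all but an $o(1/\log r)$ fraction of $M$ have $\ge\gamma/16$ of their $t$-step walks landing on good vertices (Lemma~\ref{lemma: few bad vertices in expander}). Repetition then amplifies the per-iteration success $\Omega(\gamma^{2k+2})$ to $1-o(1/\log r)$. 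One minor additional point: fixing an ``optimal random tape'' for $\A^\rec$ would make your reduction non-uniform; the paper works directly with success probabilities, defining a vertex as good when $\A^\obf$ succeeds on it with probability $\ge\gamma/4$.
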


\noindent The proof of this statement is postponed to first build an appropriate framework. At a high level, we will simply invoke the recovery oracle multiple times on inputs related to the original input. To deal with the oracle potentially being malicious, we employ an obfuscation procedure to hide the original solution. We will raise the success probability in several steps. For the sake of simplicity, we will assume that $\gamma=\Theta\left(\frac{1}{\log^\alpha{r}}\right)$ in the rest of the proof. Of course, if our starting algorithm $\A^\rec$ has a higher success probability, we can always make it less reliable by randomly failing on an appropriate fraction of inputs.

Fix a group ensemble $\G$ satisfying the conditions in the theorem. Fix some $r$, and denote the corresponding group $G^{(r)} \in \G$ simply by $G$. For our construction, we will restrict our attention to arrays which have exactly one $\kSUM$ solution.

\begin{definition}[Permissible Array]\label{def:permissible}
  We call an array that having exactly one $\ksum$ solution \emph{permissible}. Formally, an array $M\in\G^r$ is permissible if and only if there exists a unique subset $\kappa\subset[r]$ such that $|\kappa|=k$ and $\sum_{i\in\kappa}M[i]=0$. Define $D_{perm}$ to be the distribution $D_1$ conditioned on the sampled array being permissible.
\end{definition}
\begin{remark}
  Since it is nontrivial to check if a given array is permissible, there is no obvious efficient sampling process for $D_{perm}$. We only use this distribution to make theoretical arguments.
\end{remark}
\begin{lemma}\label{lemma:permissible inputs}
  The following is true.
  \begin{enumerate}
    \item $\displaystyle\Pr_{M\sim D_1}[\text{$M$ is permissible}]=1-o(\gamma)$.
    \item $D_{perm}$ is uniform on its support.
  \end{enumerate}
\end{lemma}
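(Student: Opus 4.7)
The plan is to handle the two parts separately, both by unpacking the probability mass functions of $D_0$ and $D_1$ using Lemma~\ref{lemma:explicit_pmf}, and then using the moment estimates from Lemma~\ref{lemma:c-X stats} together with the density hypothesis.

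For part (1), the key observation is that the density condition $\Delta \leq \frac{k\log r}{k\log r + (\alpha+1)\log\log r}$ translates into $\log|G| \geq k\log r + (\alpha+1)\log\log r$, i.e. $|G| \geq r^k \log^{\alpha+1}r$. A planted instance always has at least one solution, so $M$ fails to be permissible iff $c(M) \geq 2$. I would apply Markov's inequality to the nonnegative random variable $c(M)-1$: using $\E_{M\sim D_1}[c(M)] = 1 + (\binom{r}{k}-1)/|G|$ from Lemma~\ref{lemma:c-X stats}, equation~\eqref{eq:D-1 expectation}, one gets
\begin{align*}
\Pr_{M\sim D_1}\bigl[c(M) \geq 2\bigr] \;\leq\; \E_{M\sim D_1}[c(M)-1] \;=\; \frac{\binom{r}{k}-1}{|G|} \;\leq\; \frac{r^k}{|G|} \;\leq\; \frac{1}{\log^{\alpha+1} r}.
\end{align*}
Since $\gamma = \Theta(1/\log^\alpha r)$ by assumption, this bound is $o(\gamma)$, giving $\Pr_{M\sim D_1}[M \text{ permissible}] = 1-o(\gamma)$ as required.

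For part (2), I would compute $D_{perm}(M)$ explicitly on the permissible support. By definition, for any permissible $M$,
\begin{align*}
D_{perm}(M) \;=\; \frac{D_1(M)}{\Pr_{M'\sim D_1}[M' \text{ permissible}]}.
\end{align*}
By Lemma~\ref{lemma:explicit_pmf}, $D_1(M) = \frac{|G|}{\binom{r}{k}} c(M)\, D_0(M)$. For permissible $M$ we have $c(M) = 1$ and, since $D_0$ is uniform on $G^r$, $D_0(M) = 1/|G|^r$. Therefore $D_1(M) = \frac{1}{\binom{r}{k}|G|^{r-1}}$, a constant independent of $M$. Dividing by the (also $M$-independent) normalizing constant from part~(1) shows $D_{perm}$ is uniform on the set of permissible arrays.

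I do not anticipate any real obstacle here: the only minor subtlety is ensuring that the $o(\gamma)$ bound in part (1) is tight enough to match the $\gamma$ scaling promised in the hardness amplification framework, but the assumed lower bound on $\log|G|$ is engineered precisely so that $|G| \geq r^k \log^{\alpha+1}r$, giving one extra $\log r$ factor beyond what is needed, which is exactly what delivers the $o(\gamma)$ conclusion.
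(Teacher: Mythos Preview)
Your proof is correct. Part~(2) matches the paper's argument exactly. For part~(1), the paper instead does a direct case split---a union bound over extra solutions not involving the replaced index, plus a separate count for solutions that do involve it---whereas you apply Markov's inequality to $c(M)-1$ using the first moment already computed in Lemma~\ref{lemma:c-X stats}. Both routes land at the same bound $r^k/|G| \leq 1/\log^{\alpha+1}r = o(\gamma)$; your packaging is a bit cleaner since it recycles existing machinery rather than re-doing the combinatorics, while the paper's argument is self-contained and makes the source of potential extra solutions explicit.
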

\begin{proof}
  Since $M$ is sampled from $D_1$, there is always the solution we explicitly planted -- call this set $S$. The probability of another $\kSUM$ solution among the $r-1$ original entries is at most ${\binom{r-1}{k}}/{|\G|}<r^{k\,(1-1/\Delta)}$. The new entry creates an extraneous solution exactly when there are some other $k-1$ entries which have the same sum as the unchanged entries in $S$; this happens with probability at most $\frac{\binom{r-1}{k-1}}{|\G|}<r^{k\,(1-1/\Delta)-1}<\frac{1}{r}=o(\gamma)$. Therefore, we conclude that $M$ is permissible except with probability $\O(r^{k\,(1-1/\Delta)})$. The following calculation shows that our upper bound on $\Delta$ implies that the probability of multiple solutions is $o(\gamma)$.
  \begin{align*}
    r^{k\,(1-\frac{1}{\Delta})}
    \leq r^{k\,\left(1-\frac{k\log{r}+(\alpha+1)\log{\log{r}}}{k\log{r}}\right)}=2^{k\log{r}\left(\frac{-(\alpha+1)\log{\log{r}}}{k\log{r}}\right)}
    =\left(\frac{1}{\log{r}}\right)^{\alpha+1} = o(\gamma).
  \end{align*}
  
  \noindent The second statement follows from \cref{lemma:explicit_pmf} and the fact that each permissible array has exactly one $\kSUM$ solution.
\end{proof}

\subsubsection{Solution Obfuscation}\label{sec: Building A_obf}

Our first step is creating an intermediate algorithm $\A^\obf$ that runs in $\Otilde(T)$ time and solves the same problem somewhat more reliably. Let $\A^\rec$ be the recovery algorithm with success probability $\gamma$ promised in the statement of \cref{thm: recovery algorithm success amplification}; it takes as input an array sampled from $D_1$ and returns an array of size $k$ which contains the indices corresponding to a solution. Observe that $\A^\rec$ can be adversarially biased in at least two ways. It can only look for solutions in some particular positions (e.g. it only looks at the first $r/2$ entries and always fails in the $1-1/2^k$ fraction of the inputs where the solution involves the other entries). To deal with this issue, we \emph{anonymize} the solution by applying an arbitrary permutation to $M$.

The original oracle can also only look for solutions where the entries have some particular property (e.g. it can only find solutions where all the elements in $S$ belong to some subgroup of $G$ and always fails for the rest of the input space). To fix this issue, we \emph{randomize} the solution by adding to it $k$ random elements from $G$ that sum to 0. This transforms the solution into a random set of $k$ elements adding up to 0. We cannot do this directly because we do not know where the solutions are. Instead, we pick a set of $k$ random elements that sum to $0$, and add a random element from this set to each element in $M$. With some small constant probability, we will then end up with the desired result of a different one of these added to each element of the solution.

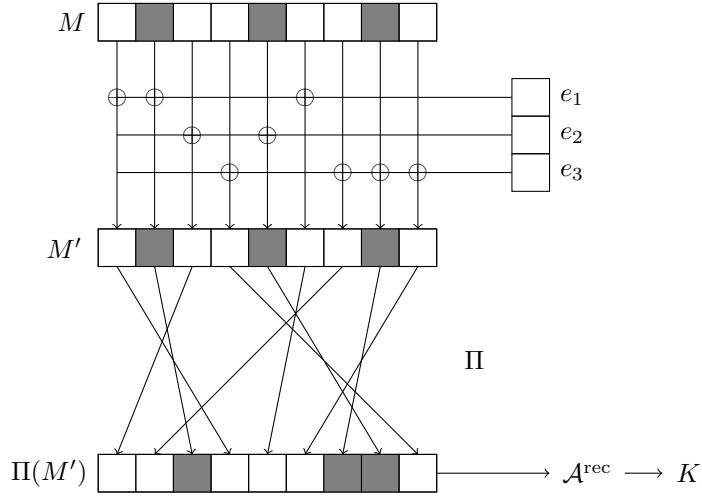
\begin{figure}
    \centering
    \begin{tikzpicture}
        \node at (-0.35,6.25) {$M$};
        \draw (0,6) rectangle (4.5,6.5);
        \draw (0,6) rectangle (0.5,6.5);
        \draw[fill=gray] (0.5,6) rectangle (1,6.5);
        \draw (1,6) rectangle (1.5,6.5);
        \draw (1.5,6) rectangle (2,6.5);
        \draw[fill=gray] (2,6) rectangle (2.5,6.5);
        \draw (2.5,6) rectangle (3,6.5);
        \draw (3,6) rectangle (3.5,6.5);
        \draw[fill=gray] (3.5,6) rectangle (4,6.5);
        \draw (4,6) rectangle (4.5,6.5);

        \draw[->] (0.25,6) -- (0.25,3.5);
        \draw[->] (0.75,6) -- (0.75,3.5);
        \draw[->] (1.25,6) -- (1.25,3.5);
        \draw[->] (1.75,6) -- (1.75,3.5);
        \draw[->] (2.25,6) -- (2.25,3.5);
        \draw[->] (2.75,6) -- (2.75,3.5);
        \draw[->] (3.25,6) -- (3.25,3.5);
        \draw[->] (3.75,6) -- (3.75,3.5);
        \draw[->] (4.25,6) -- (4.25,3.5);

        \node at (6.3,5.25) {$e_1$};
        \node at (6.3,4.75) {$e_2$};
        \node at (6.3,4.25) {$e_3$};
        \draw (5.5,5.25) -- (0.25,5.25);
        \draw (5.5,4.75) -- (0.25,4.75);
        \draw (5.5,4.25) -- (0.25,4.25);

        \node at (0.25,5.25) {$\oplus$};
        \node at (0.75,5.25) {$\oplus$};
        \node at (1.25,4.75) {$\oplus$};
        \node at (1.75,4.25) {$\oplus$};
        \node at (2.25,4.75) {$\oplus$};
        \node at (2.75,5.25) {$\oplus$};
        \node at (3.25,4.25) {$\oplus$};
        \node at (3.75,4.25) {$\oplus$};
        \node at (4.25,4.25) {$\oplus$};

        \draw (6,5.5) rectangle (5.5,5);
        \draw (6,5) rectangle (5.5,4.5);
        \draw (6,4.5) rectangle (5.5,4);
        
        \node at (-0.45,3.25) {$M'$};
        \draw (0,3) rectangle (4.5,3.5);
        \draw (0,3) rectangle (0.5,3.5);
        \draw[fill=gray] (0.5,3) rectangle (1,3.5);
        \draw (1,3) rectangle (1.5,3.5);
        \draw (1.5,3) rectangle (2,3.5);
        \draw[fill=gray] (2,3) rectangle (2.5,3.5);
        \draw (2.5,3) rectangle (3,3.5);
        \draw (3,3) rectangle (3.5,3.5);
        \draw[fill=gray] (3.5,3) rectangle (4,3.5);
        \draw (4,3) rectangle (4.5,3.5);

        \draw[->] (0.25,3) -- (1.75,0.5);
        \draw[->] (0.75,3) -- (1.25,0.5);
        \draw[->] (1.25,3) -- (0.25,0.5);
        \draw[->] (1.75,3) -- (4.25,0.5);
        \draw[->] (2.25,3) -- (3.75,0.5);
        \draw[->] (2.75,3) -- (2.25,0.5);
        \draw[->] (3.25,3) -- (0.75,0.5);
        \draw[->] (3.75,3) -- (3.25,0.5);
        \draw[->] (4.25,3) -- (2.75,0.5);
        \node at (5,1.75) {$\Pi$};
        
        \node at (-0.65,0.25) {$\Pi(M')$};
        \draw (0,0) rectangle (4.5,0.5);
        \draw (0,0) rectangle (0.5,0.5);
        \draw (0.5,0) rectangle (1,0.5);
        \draw[fill=gray] (1,0) rectangle (1.5,0.5);
        \draw (1.5,0) rectangle (2,0.5);
        \draw (2,0) rectangle (2.5,0.5);
        \draw (2.5,0) rectangle (3,0.5);
        \draw[fill=gray] (3,0) rectangle (3.5,0.5);
        \draw[fill=gray] (3.5,0) rectangle (4,0.5);
        \draw (4,0) rectangle (4.5,0.5);

        \draw[->] (4.5,0.25) -- (6,0.25);
        \node at (6.5,0.25) {$\A^\rec$};
        \draw[->] (7,0.25) -- (7.5,0.25);
        \node at (7.85,0.25) {$K$};
    \end{tikzpicture}
    \caption{Illustration of the obfuscation procedure $\A^\obf$. The shaded cells indicate the position of the solution. We first add a random $e_i$ to each element (represented by $\oplus$) of the input, and permute it before giving it to $\A^\rec$.}
    \label{fig:enter-label}
\end{figure}

\paragraph{Algorithm $\A^\obf(M)$}\label{alg:O_obf(M)}
\begin{enumerate}
  \item Sample a random permutation $\Pi$ of $[r]$.
  \item Sample $e_i\sample G$ uniformly at random for $1\leq i<k$.
  \item Set $e_k\gets-\sum_{i=1}^{k-1}e_i$.
  \item Repeat $k^k\cdot\log{r}$ times : \label{line: outer loop of A_obf}
  \begin{enumerate}[label*=\arabic*.]
    \item Create a copy of the array $M'\gets M$.
    \item For each index $j\in\{1,2,\ldots,r\}$ :
    \begin{enumerate}[label*=\arabic*.]
      \item \label{line: sampling i} Sample $i\sample[k]$ uniformly at random
      \item Set $M'[j]\gets M[j]+e_i$.
    \end{enumerate}
    \item Let $K\gets\A^\rec(\Pi(M'))$ and $K'\gets \Pi^{-1}(K)$
    \item If $\sum_{i\in K'}M[i]=0$, return $K'$.
  \end{enumerate}
  \item If a solution has not been found yet, return $\bot$.
\end{enumerate}

Assuming we can sample elements of the underlying group and perform group operations in $\O(\log{|\G|})$ time, the runtime of $\A^\obf$ is $\O(k^k\log{r}(r\log{|\G|}+T)) = \Otilde(k^k\cdot T)$, since $\log{|\G|}=\Theta(\polylog(r))$ and $T = \Omega(r)$.

\begin{lemma}\label{oprime-guarantee}
    Let $S$ be any set of indices in $[r]$ of size $k$. Let $V$ be any array of elements of $G$ of length $k$ such that $\sum_{j=1}^kV[j]=0$. Recall that $\gamma$ is the success probability of $\A^\rec$ Then, for large enough $r$, 
    \begin{align*}
        \Pr_{M\sim D_{perm}}\left[\A^\obf(M)\,\mathrm{ succeeds}\mid M[S]=V\right] \geq \frac{\gamma}{2}
    \end{align*}
\end{lemma}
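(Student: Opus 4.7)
The plan is to exhibit, with probability at least $\gamma/2$ over $M \sim D_{perm}$ conditioned on $M[S]=V$, a single iteration of the outer loop of $\A^\obf$ whose input to $\A^\rec$ is genuinely distributed as $D_1$, and on which $\A^\rec$ therefore succeeds. Set $N := k^k \log r$. For each iteration $i \in [N]$, define the event $E_i$ that the $k$ indices $(i_j)_{j \in S}$ sampled in that iteration form a permutation of $[k]$, i.e.\ every element of the planted set $S$ receives a distinct offset $e_\bullet$. Since these indices are i.i.d.\ uniform over $[k]$ and fresh per iteration, the $E_i$'s are mutually independent with $\Pr[E_i] = k!/k^k \geq k^{-k}$, so $\Pr[\bigcup_i E_i] \geq 1-(1-k^{-k})^{k^k\log r} \geq 1-1/r$.

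The core structural claim is that, conditioned on $M[S]=V$, on $E_i$, and on the randomness of iterations $<i$, the marginal law of $\Pi(M'^{(i)})$ is exactly $D_1$. At the positions in $S$, the new values are $V[j] + e_{\sigma(j)}$ for a uniform permutation $\sigma$; because $(e_1,\dots,e_k)$ is uniformly distributed on $k$-tuples summing to $0$ and $V$ itself sums to $0$, this is a uniform $k$-tuple with zero sum. At the positions outside $S$, the new value $M[j] + e_{i_j}$ is uniform in $G$ because $M[j]$ is uniform in $G$ under $D_{perm} \mid M[S]=V$ up to $o(\gamma)$ total variation (the only obstruction being an extra solution, which by \cref{lemma:permissible inputs} has probability $o(\gamma)$). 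Finally, $\Pi$ randomizes the location of the solution to a uniform $k$-subset of $[r]$, giving precisely $D_1$. Conditioning on the earlier iterations causes no harm: $E_1,\dots,E_{i-1}$ are functions only of $(i_{\cdot,j})_{j\in S}$ for those iterations, which are independent of $\Pi$, $e$, $M$, and of all iteration-$i$ randomness.

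Letting $i^\star := \min\{i : E_i\}$ (well-defined except on mass $1/r$), we then have $\Pi(M'^{(i^\star)}) \sim D_1$ up to $o(\gamma)$ TVD, so $\A^\rec$ returns a valid $k$-SUM solution of it with probability at least $\gamma$. Again by \cref{lemma:permissible inputs}, except on a further event of mass $o(\gamma)$ this $D_1$-sample is permissible, whence its unique solution is $\Pi(S)$ and $\A^\rec$ necessarily recovers it. At this point $K' = \Pi^{-1}(K) = S$, the check $\sum_{i\in K'} M[i]=0$ passes, and $\A^\obf$ returns $S$, which is the true solution of $M$. Combining, $\Pr[\A^\obf\text{ succeeds}\mid M[S]=V] \geq (1-1/r)\cdot(\gamma - o(\gamma)) \geq \gamma(1-o(1)) \geq \gamma/2$ for $r$ sufficiently large.

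The main obstacle is the second paragraph: establishing the distributional identity $\Pi(M'^{(i^\star)}) \sim D_1$ cleanly after all the conditionings. In particular, one needs to separate the outer-loop randomness ($\Pi$, $e$) from the inner-loop randomness (the $i_j$'s and $\A^\rec$'s coins), verify independence between them and the events on which we condition, and absorb the statistical slack between $D_{perm}$ and $D_1$ into the $o(\gamma)$ budget afforded by the density bound through \cref{lemma:permissible inputs}.
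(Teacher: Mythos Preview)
Your proposal is correct and follows essentially the same approach as the paper: isolate the first iteration in which the planted indices receive pairwise distinct offsets $e_\bullet$, argue that (after absorbing an $o(\gamma)$ statistical slack coming from permissibility) the input handed to $\A^\rec$ in that iteration is distributed exactly as $D_1$, and conclude that $\A^\rec$ recovers $\Pi(S)$ with probability at least $\gamma - o(\gamma)$. The paper organizes the same computation by introducing the auxiliary distribution $D_1^{S,V}$ and splitting the argument into three short claims (identifying $D_{perm}\mid M[S]=V$ with $D_1^{S,V}$ conditioned on permissibility, bounding the non-permissible mass under $D_1^{S,V}$, and the $D_1$ distributional claim for $\widehat{M}$), whereas you work directly with $D_{perm}\mid M[S]=V$ and carry the $o(\gamma)$ total-variation error through; the substance is identical. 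Your explicit remark that conditioning on $\bar E_1,\dots,\bar E_{i-1},E_i$ only constrains the per-iteration indices $(i_j)_{j\in S}$ and leaves $\Pi,e,M$ untouched is a point the paper leaves implicit. One small thing you leave implicit yourself: the algorithm might terminate before iteration $i^\star$, but since the check in step~4.4 guarantees that any early return is a valid solution for $M$, this can only help and your lower bound stands.
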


\begin{proof}
  Note that the distribution $D_1$ may be equivalently sampled as follows:
  \paragraph{Equivalent Sampling Algorithm for $D_1$}
  \begin{enumerate}[itemsep=0pt,topsep=5pt]
    \item Sample $M\gets D_0$
    \item Sample uniformly at random a set $S'$ of indices in $[r]$ of size $k$, and an array $V'$ of elements of $G$ of size $k$ such that $\sum_{j=1}^kV'[j]=0$
    \item Replace the entries of $M$ at the indices given by $S'$ with the respective elements of $V'$
    \item Output $M$
  \end{enumerate}
  For any $S$ and $V$, denote by $D_1^{S,V}$ the distribution that follows from fixing these to be the $S'$ and $V'$, respectively, in the sampling process above (rather than sampling them at random). Note that $D_1^{S,V}$ may not be the same as sampling an $M$ from $D_1$ conditioned on $M[S] = V$. Nevertheless, we have the following.

  \begin{claim}
    \label{claim:oprime-1}
    For any $S$ and $V$ as in the statement of the lemma, the following two distributions on $M$ are identical:
    \begin{enumerate}
      \item $M\sim D_1$ conditioned on $\left((M \text{ is permissible}) \wedge (M[S] = V)\right)$.
      \item $M\sim D_1^{S,V}$ conditioned on $(M \text{ is permissible})$.
    \end{enumerate}
  \end{claim}
  \begin{proofof}{\cref{claim:oprime-1}}
    $M$ being permissible implies that $M$ has exactly one $\kSUM$ solution. By definition of $S$ and $V$, the set $S$ is a $\kSUM$ solution. Further, in the above sampling procedure for $D_1$, the set $S'$ chosen there is also always a solution. So conditioning $D_1$ on having exactly one solution and also satisfying $M[S] = V$ implies that $S' = S$ and $V' = V$ in the sampling process. So the sampling process for this distribution is to set $S'=S$ and $V'=V$, and then condition on the resulting $M$ having exactly one solution. But this is also exactly the procedure of sampling $D_1^{S,V}$ conditioned on it having exactly one solution. This proves the claim.
  \end{proofof}

  Using \cref{claim:oprime-1} and the definition of $D_{perm}$, we can write the quantity we want to bound for the lemma as follows:
  \begin{align}
    \label{eq:oprime-1}
    \Pr_{M\sim D_{perm}}\left[\A^\obf(M)\,\mathrm{ succeeds}\mid M[S]=V\right] &= \Pr_{M\sim D_1}\left[ \A^\obf(M)\,\mathrm{ succeeds} \mid M[S]=V \wedge M \text{ is permissible} \right]\nonumber\\
    &= \Pr_{M\sim D_1^{S,V}}\left[\A^\obf(M)\,\mathrm{ succeeds}\mid M \text{ is permissible}\right]
  \end{align}

  Next we show that the probability that $M$ sampled from $D_1^{S,V}$ is not permissible is very small. So it is sufficient to bound the above probability without the conditioning on $M$'s permissibility.
  
  \begin{claim}
    \label{claim:oprime-2}
    For any $S$ and $V$ as in the statement of the lemma,
    \begin{align*}
      \prob{M\sim D_1^{S,V}}{M \text{ is not permissible}} \leq o(\gamma).
    \end{align*}
  \end{claim}
  \begin{proofof}{\cref{claim:oprime-2}}
    $M$ is not permissible iff there is an additional $\kSUM$ solution apart from $S$. The probability that there exists another solution is at most $\binom{r}{k} \cdot \size{G}^{-1} = o(\gamma)$, as argued for \cref{lemma:permissible inputs}. 
  \end{proofof}

  \begin{claim}
    \label{claim:oprime-3}
    For any $S$ and $V$ as in the statement of the lemma,
    \begin{align*}
      \Pr_{M\sim D_1^{S,V}}\left[\A^\obf(M)\,\mathrm{ succeeds}\right] \geq \gamma - o(\gamma).
    \end{align*}
  \end{claim}
  \begin{proofof}{\cref{claim:oprime-3}}
    Recall that $A^\obf$ proceeds by selecting a random permutation $\Pi$ and a random array $E$ of $k$ elements $(e_1,\dots,e_k)$ of $G$ that sum to $0$. Denote by $\Pi(S)$ the set that results from applying $\Pi$ to each index contained in $S$. $A^\obf(M)$ definitely succeeds if both of the following events happen in at least one of the iterations of step 4 of $A^\obf(M)$:
    \begin{enumerate}[itemsep=0pt,topsep=5pt]
      \item The element $e_i$ of $E$ added to $M[j]$ for each $j\in S$ is distinct.
      \item $\A_{\rec}(\Pi(M'))$ returns $\Pi(S)$
    \end{enumerate}
    In any iteration, the first event above happens with probability $(k!)/k^k > 1/k^k$, independently of $M$, $S$, and $V$. So the probability that it never happens in all $k^k\cdot\log(r)$ iterations is $\O\left(\textsf{exp}\left({-\log(r)}\right)\right) = o(\gamma)$.

    Suppose the first event does happen. Denote by $\widehat{M} = \Pi(M')$ the input provided to $\A_{rec}$ in the first iteration in which this event happens. For simplicity, suppose that the array $E$ gets added elementwise to $V$ in $M$ -- that is, for the least $j\in S$, the element added to $M[j]$ is $e_1$, and so on. Observe that the overall sampling procedure for $\widehat{M}$ can be equivalently described as follows:
    \begin{enumerate}[itemsep=0pt,topsep=5pt]
      \item Sample $M\gets D_1^{S,V}$. $M$ is a uniformly random array subject to the condition $M[S] = V$
      \item Add random elements of $E$ to elements of $M$ to get $M'$, subject to the above condition. $M'$ is now a uniformly random array subject to the condition $M[S] = V+E$
      \item Apply $\Pi$ to $M'$ to get $\widehat{M}$. $\widehat{M}$ is a uniformly random array subject to the condition $\widehat{M}[\Pi(S)] = \Pi_S(V+E)$, where $\Pi_S$ permutes the elements of the array $(V+E)$ according to match the change in the relative ordering of the indices in $S$ following the application of $\Pi$.
    \end{enumerate}
    As $E$ was chosen as a uniformly random array of elements that sum to $0$, and $V$ is an array of elements that sum to $0$, the sum $(V+E)$ is also a uniformly random array of elements that sum to $0$. As $\Pi$ is a uniformly random permutation, $\Pi(S)$ is a uniformly random set of indices of size $k$. Thus, if $M$ is sample from $D_1^{S,V}$ for any $S$ and $V$ as in the statement of the lemma, $\widehat{M}$ is distributed according to $D_1$.

    By our hypothesis, given a sample $\widehat{M}$ from $D_1$, the algorithm $\A_{rec}(\widehat{M})$ outputs a $\kSUM$ solution with probability $\gamma$. $\Pi(S)$ is always a solution of $\widehat{M}$ (under our current conditioning). The probability that there exists another solution is at most $o(\gamma)$, as argued earlier. Thus, the probability, when $M$ is sampled from $D_1^{S,V}$ and conditioning on the first event above happening, that $\A_{rec}(\widehat{M})$ outputs $\Pi(S)$ is at least $\gamma - o(\gamma)$.

    By the union bound, both the events above together happen with probability at least $(\gamma - o(\gamma))$, which proves the claim.
  \end{proofof}

  \noindent Now, putting together \cref{claim:oprime-2,claim:oprime-3} and (\ref{eq:oprime-1}) gives us the following statement, 
  \begin{align*}
    \Pr_{M\sim D_{perm}}\left[\A^\obf(M)\,\mathrm{ succeeds}\mid M[S]=V\right] \geq \gamma - o(\gamma) &&& \qedhere
  \end{align*}
\end{proof}

\subsubsection{Success Amplification}\label{sec: low density hardness amplification}
\pnote{$\G$ is the entire group ensemble, and $G$ is the group specific to instance size $r$. So it should be $G$ in most of the occurrences below. But maybe that makes the notation for the graphs ambiguous. Might be better to change all the $G$ to  $\G$ earlier in the section, then, after pointing out this notation.}
Our next step is amplifying the success probability to almost 1. The main idea is replacing many of the entries of $M$ with random elements from $\G$, and running $\A^\obf$ on the result. If none of the entries in the solution get replaced, we end up with an almost random array sampled from $D_1^{S, V}$. Repeating this procedure enough times and utilizing the guarantees provided by $\A^\obf$, we can amplify the success probability significantly. The algorithm on input $M$ is formally described below.

\paragraph{Algorithm $\A^\amp(M)$}\label{alg:A_det(M)}
\begin{enumerate}
  \item Repeat $\frac{64\log{r}}{\gamma^{2k+2}}$ times:
  \begin{enumerate}
    \item[1.1.] Make a new copy of $M$ and call it $M'$
    \item[1.2.] Repeat $r\log{\frac{1}{\gamma}}$ times:
    \begin{enumerate}
      \item[1.2.1.] Pick $i$ uniformly at random from $[r]$
      \item[1.2.2.] Replace $M'[i]$ with a random element from $\G$ different from $M'[i]$ 
    \end{enumerate}
    \item[1.3.] Call $\A^\obf(M')$
    \item[1.4.] If the previous call succeeds and returns a $k$-tuple that was unchanged from $M$ to $M'$, return that.
  \end{enumerate}
\end{enumerate}
It is easy to see that the algorithm never returns a wrong answer. Note that the inner loop takes $\O(\log{|\G|+\log{r}})$ time if we assume that sampling from a set takes time logarithmic in the size of that set. The outer loop takes $\Otilde(k^k\cdot T)+\Otilde\left(r\log{|\G|}\log{\frac{1}{\gamma}}+r\log{|\G|}\right)$ time since the runtime of $\A^\obf$ is $\Otilde(k^k\cdot T)$ (recall that $T$ was defined as the runtime of $\A^\rec$). Since $\log{|\G|}=\mathcal{O}(\polylog(r))$, $\frac{1}{\gamma}=\mathcal{O}(\polylog(r))$ and $T=\Omega(r)$, the expression simplifies to $\Otilde(k^kT)$. The runtime of the whole algorithm is therefore $\Otilde\left(\frac{\log{r}}{\gamma^{2k+2}}\right)\cdot\Otilde(k^kT) = \Otilde\left(T\cdot \left(k/\gamma^2\right)^k\right)$, as desired. To analyze the success probability of $\A^\amp$, we start with a couple of definitions.

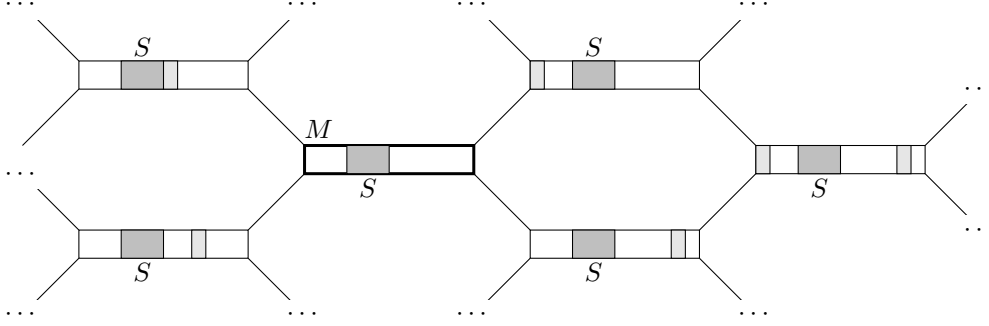
\begin{figure}
    \centering
    \begin{tikzpicture}[scale=0.75]
        \draw (0,0) -- (-1,-1);
        \draw (0,0.5) -- (-1,1.5);
        \draw (3,0) -- (4,-1);
        \draw (3,0.5) -- (4,1.5);
        \draw (7,1.5) -- (8,0.5);
        \draw (7,-1) -- (8,0);

        \draw (-4,2) -- (-5,3) node[fill=white] {$\cdots$};
        \draw (-1,2) -- (0,3) node[fill=white] {$\cdots$};
        \draw (4,2) -- (3,3) node[fill=white] {$\cdots$};
        \draw (7,2) -- (8,3) node[fill=white] {$\cdots$};
        
        \draw (-4,-1.5) -- (-5,-2.5) node[fill=white] {$\cdots$};
        \draw (-1,-1.5) -- (0,-2.5) node[fill=white] {$\cdots$};
        \draw (4,-1.5) -- (3,-2.5) node[fill=white] {$\cdots$};
        \draw (7,-1.5) -- (8,-2.5) node[fill=white] {$\cdots$};

        \draw (-4,1.5) -- (-5,0.5);
        \draw (-4,-1) -- (-5,0) node[fill=white] {$\cdots$};
        \draw (11,0.5) -- (12,1.5) node[fill=white] {$\cdots$};
        \draw (11,0) -- (12,-1) node[fill=white] {$\cdots$};

        \node at (0.25,0.8) {$M$};
        \node at (1.125,-0.25) {$S$};
        \draw[line width=0.4mm] (0,0) rectangle (3,0.5);
        \draw[fill=lightgray] (0.75,0) rectangle (1.5,0.5);

        \node at (9.125,-0.25) {$S$};
        \draw (8,0) rectangle (11,0.5);
        \draw[fill=white!80!gray] (8,0) rectangle (8.25,0.5);
        \draw[fill=lightgray] (8.75,0) rectangle (9.5,0.5);
        \draw[fill=white!80!gray] (10.5,0) rectangle (10.75,0.5);

        \node at (-4+1.125,-1.75) {$S$};
        \draw (-4,-1.5) rectangle (-1,-1);
        \draw[fill=lightgray] (-4+0.75,-1.5) rectangle (-4+1.5,-1);
        \draw[fill=white!80!gray] (-4+2,-1.5) rectangle (-4+2.25,-1);

        \node at (4+1.125,-1.75) {$S$};
        \draw (4,-1.5) rectangle (7,-1);
        \draw[fill=lightgray] (4+0.75,-1.5) rectangle (4+1.5,-1);
        \draw[fill=white!80!gray] (4+2.5,-1.5) rectangle (4+2.75,-1);

        \node at (-4+1.125,2.25) {$S$};
        \draw (-4,1.5) rectangle (-1,2);
        \draw[fill=lightgray] (-4+0.75,1.5) rectangle (-4+1.5,2);
        \draw[fill=white!80!gray] (-4+1.5,1.5) rectangle (-4+1.75,2);

        \node at (4+1.125,2.25) {$S$};
        \draw (4,1.5) rectangle (7,2);
        \draw[fill=white!80!gray] (4,1.5) rectangle (4+0.25,2);
        \draw[fill=lightgray] (4+0.75,1.5) rectangle (4+1.5,2);

    \end{tikzpicture}
    \caption{The Hamming graph representing the inner loop of $\A^\amp(M)$. Each iteration resamples a random entry of $M$ and corresponds to taking a random step in the graph, assuming this does not destroy the solution $S$. Intuitively, the oracle cannot consistently fail on all the vertices we visit, and thus the process must eventually return the set $S$. The proof of this statement amounts to bounding the edge expansion of this graph.}
    \label{fig:hamming_graph}
\end{figure} 

\begin{definition}[Correspondence Graph]\label{def:correspondence graph}
  For a permissible array $M$ with the $\kSUM$ solution $S\subset [r]$, we define its \emph{correspondence graph} $G_M$ as follows: create a vertex for every array $M'\in\G^r$ such that $M[S]=M'[S]$. Two vertices are connected by an edge if and only if the corresponding arrays only differ in one index.
\end{definition}
\begin{lemma}\label{lemma: G is hamming}
  $G_M$ is isomorphic to the Hamming graph $H(r-k, |\G|)$ (see \cref{def:hamming graph}). 
\end{lemma}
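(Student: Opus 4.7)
The plan is to exhibit an explicit bijection between the vertex set of $G_M$ and $\G^{r-k}$ and then verify that it preserves adjacency in both directions. Since the solution index set $S \subseteq [r]$ has $|S|=k$, every vertex of $G_M$ is an array $M' \in \G^r$ with $M'[i] = M[i]$ for every $i \in S$, so such an $M'$ is completely determined by its restriction $M'|_{[r]\setminus S}$ to the remaining $r-k$ coordinates, which may take any value in $\G^{r-k}$.

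Let me fix an arbitrary ordering $i_1 < i_2 < \cdots < i_{r-k}$ of $[r]\setminus S$ and define the map
\begin{equation*}
    \varphi : V(G_M) \longrightarrow \G^{r-k}, \qquad \varphi(M') = \bigl(M'[i_1], M'[i_2], \ldots, M'[i_{r-k}]\bigr).
\end{equation*}
The first step is to observe $\varphi$ is a bijection: injectivity follows because $M'$ and $M''$ in $V(G_M)$ agree on $S$ by definition, so equality of their $\varphi$-images forces equality everywhere; surjectivity follows because, given any tuple in $\G^{r-k}$, we can build the corresponding $M'$ by filling in the $S$-coordinates from $M$ and the remaining coordinates from the tuple.

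The second step is to verify that $\varphi$ preserves adjacency. Two vertices $M', M'' \in V(G_M)$ are adjacent in $G_M$ iff they differ at exactly one index $j \in [r]$. Because they must agree on every index in $S$, any such $j$ must lie in $[r]\setminus S$. Hence $M' \sim M''$ in $G_M$ iff $\varphi(M')$ and $\varphi(M'')$ differ in exactly one coordinate, which is precisely the adjacency condition for the Hamming graph $H(r-k,|\G|)$ from \cref{def:hamming graph}. Thus $\varphi$ is a graph isomorphism.

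I do not anticipate any real obstacle here; the only minor care needed is that the definition of $G_M$ does not permit self-loops (when $M' = M''$), but since the Hamming graph also excludes self-loops, the adjacency condition matches on both sides. So the bijection above gives the desired isomorphism.
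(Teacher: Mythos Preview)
Your proof is correct and follows exactly the same approach as the paper's: the paper's proof is simply a one-line version of your argument, noting that the $r-k$ free entries can each take any of $|\G|$ values and that the isomorphism is then immediate from the definitions. Your explicit bijection $\varphi$ and adjacency check are precisely the details being suppressed there.
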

\begin{proof}
  We are allowed to change $r-k$ entries of $M$, and there are $|\G|$ possibilities for each of those indices. The isomorphism follows from definition.
\end{proof}
\begin{lemma}\label{lemma: almost all vertices permissible}
  The fraction of vertices corresponding to arrays which are not permissible is $o(\gamma)$ in any correspondence graph $G_M$.
\end{lemma}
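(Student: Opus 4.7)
The plan is to bound, by a union bound, the fraction of arrays $M' \in \G^r$ satisfying $M'[S] = M[S]$ that contain a $k$-SUM solution distinct from $S$. First I would note that the vertex set of $G_M$ has size exactly $|\G|^{r-k}$, since the $k$ entries at positions in $S$ are frozen to $M[S]$ (which already sums to $0$, so $S$ remains a valid solution at every vertex) and the remaining $r - k$ entries each range freely over $\G$. Hence permissibility at a vertex $M'$ is equivalent to saying that $S$ is the \emph{only} $k$-subset of $[r]$ whose entries sum to zero.

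The main step is to bound, for each subset $T \subseteq [r]$ with $|T| = k$ and $T \neq S$, the fraction of vertices on which $\sum_{i \in T} M'[i] = 0$. I claim this fraction is exactly $1/|\G|$. Indeed, since $T \neq S$, there exists an index $j \in T \setminus S$, and the coordinate $M'[j]$ is a free uniform coordinate across the vertex set of $G_M$. Fixing the other $|T| - 1$ coordinates arbitrarily, the equation $\sum_{i \in T} M'[i] = 0$ uniquely determines $M'[j]$ in $\G$, so exactly a $1/|\G|$ fraction of vertices satisfy it.

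Now a union bound over all $\binom{r}{k} - 1 < \binom{r}{k}$ choices of $T \neq S$ gives that the fraction of non-permissible vertices is at most $\binom{r}{k}/|\G| < r^k / |\G| = r^{k(1 - 1/\Delta)}$. Finally I would invoke the same computation used in the proof of \cref{lemma:permissible inputs}: using the hypothesis $\Delta \leq \frac{k\log r}{k\log r + (\alpha+1)\log \log r}$, one gets
\begin{align*}
  r^{k\,(1 - 1/\Delta)} \leq 2^{k \log r \cdot \frac{-(\alpha+1)\log\log r}{k \log r}} = (\log r)^{-(\alpha+1)} = o(\gamma),
\end{align*}
since $\gamma = \Theta((\log r)^{-\alpha})$. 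This yields the desired bound. I do not expect any significant obstacle here: the argument is essentially identical to the one used for \cref{lemma:permissible inputs}, with the only subtlety being that the conditioning on $M'[S] = M[S]$ still leaves every coordinate outside $S$ uniformly random, so the per-$T$ probability estimate is unchanged.
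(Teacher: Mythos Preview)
Your proposal is correct and follows essentially the same approach as the paper's proof: both reduce to a union bound over subsets $T\neq S$ of size $k$, using that any such $T$ contains at least one free coordinate outside $S$ so that the probability $\sum_{i\in T}M'[i]=0$ is exactly $1/|\G|$. The paper writes the resulting count as the Vandermonde-type sum $\frac{1}{|\G|}\sum_{i=0}^{k-1}\binom{r-k}{k-i}\binom{k}{i}$, whereas you simply bound the number of such $T$ by $\binom{r}{k}$; these are the same up to the negligible $-1$, and the conclusion $r^{k(1-1/\Delta)}=o(\gamma)$ is reached identically.
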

\begin{proof}
  We will show that for any permissible $M$, if we choose the $r-k$ entries outside its $\kSUM$ solution uniformly at random from $\G$, we end up with a permissible array with probability $1-o(\gamma)$. This is equivalent to the lemma statement.

  The proof of \cref{lemma:permissible inputs} establishes this result almost directly. With very high probability, there will be no new solutions other than the $k$ fixed entries. We can bound the probability of a new solution by 
$$\frac{1}{|\G|}\left(\binom{r-k}{k}\binom{k}{0}+\binom{r-k}{k-1}\binom{k}{1}+\cdots\binom{r-k}{1}\binom{k}{k-1}\right)$$
  Since $k$ is a constant, this sum is $\mathcal{O}\left(r^{k\,(1-1/\Delta)}\right) = o(\gamma)$ assuming $\Delta \leq {k\log{r}}/({k\log{r}+(\alpha+1)\log{\log{r}}})$.
\end{proof}

\noindent For a pictorial representation of $G_M$, see \cref{fig:hamming_graph}. Note that if we are lucky enough to not destroy the solution while running the inner loop, we end up taking $r\log{\frac{1}{\gamma}}$ steps in this graph.

\begin{definition}\label{def:correspondence power graph}
  For a permissible array $M$, its \emph{correspondence power graph} $G^*_M$ is a multigraph which has the same vertices as $G_M$, and has $t$ edges between $M_1$ and $M_2$ where $t$ is the number of paths of length $r\log{\frac{1}{\gamma}}$ from $M_1$ to $M_2$ in $G_M$. In other words, we can obtain the adjacency matrix of $G^*_M$ by raising the adjacency matrix of $G_M$ to the power $r\log{\frac{1}{\gamma}}$.
\end{definition}
\begin{lemma}\label{lemma: G* expansion}
  The algebraic expansion of $G^*_M$ is $\left((r-k)(|\G|-1)-|\G|\right)^{r\log{\frac{1}{\gamma}}}$.
\end{lemma}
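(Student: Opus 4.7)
The plan is to combine the isomorphism $G_M \cong H(r-k, |\G|)$ from \cref{lemma: G is hamming} with the explicit eigenvalue formula for Hamming graphs (\cref{lemma:hamming}), and then transfer to the $t$-th power using the spectral mapping theorem, namely that the eigenvalues of $A^t$ are exactly the $t$-th powers of the eigenvalues of $A$.

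First I would invoke \cref{lemma:hamming} with $d = r-k$ and $q = |\G|$, which gives the distinct eigenvalues of $G_M$ as $\lambda_i = |\G|(r-k-i) - (r-k)$ for $i \in \{0, 1, \ldots, r-k\}$, with multiplicities $\binom{r-k}{i}(|\G|-1)^i$. The top eigenvalue is $\lambda_0 = (r-k)(|\G|-1)$, which is simply the regular degree of $H(r-k, |\G|)$ and corresponds to the trivial all-ones eigenvector; it is excluded when computing the algebraic expansion.

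Next I need to identify $\max_{i \geq 1}|\lambda_i|$. The sequence $\lambda_i$ is linear in $i$ and decreases monotonically from $(r-k)(|\G|-1)$ at $i=0$ down to $-(r-k)$ at $i = r-k$, so the only candidates for the maximum magnitude among $i \geq 1$ are $|\lambda_1| = (r-k)(|\G|-1) - |\G|$ and $|\lambda_{r-k}| = r-k$. Under the theorem's standing hypothesis ($k \geq 3$ and $\Delta \leq 1$, so $|\G| \geq r^{k/\Delta} \geq r^{3}$), the quantity $(r-k)(|\G|-1) - |\G|$ dwarfs $r-k$, giving $\lambda(G_M) = (r-k)(|\G|-1) - |\G|$.

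Finally, by \cref{def:correspondence power graph} the adjacency matrix of $G^*_M$ is the $t$-th power of that of $G_M$ with $t = r\log(1/\gamma)$, so its eigenvalues are exactly the $t$-th powers $\lambda_i^t$. Since $x \mapsto |x|^t$ is monotone non-decreasing on $[0,\infty)$, the ordering of eigenvalues by absolute value is preserved: the largest eigenvalue of $G^*_M$ is $\lambda_0^t$ and the second-largest in magnitude is $\lambda_1^t = \bigl((r-k)(|\G|-1) - |\G|\bigr)^{r\log(1/\gamma)}$, which is the claimed value of $\lambda(G^*_M)$. The only real subtlety is the second step --- confirming that $|\lambda_1|$ rather than $|\lambda_{r-k}|$ dominates the non-trivial spectrum --- but the density regime makes this comparison essentially immediate, so the result follows from a direct computation.
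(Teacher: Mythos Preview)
Your proposal is correct and follows essentially the same approach as the paper's proof: both invoke the Hamming-graph isomorphism (\cref{lemma: G is hamming}) and eigenvalue formula (\cref{lemma:hamming}), identify the top and bottom nontrivial eigenvalues $\lambda_1=(r-k)(|\G|-1)-|\G|$ and $\lambda_{r-k}=-(r-k)$, argue that $|\lambda_1|>|\lambda_{r-k}|$ in the relevant parameter regime, and then pass to the $t$-th power via the spectral mapping. If anything, your justification of the comparison $|\lambda_1|>|\lambda_{r-k}|$ (appealing explicitly to the density bound $|\G|\geq r^{k/\Delta}$) is slightly more careful than the paper's terse ``for any $r>k$''.
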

\begin{proof}
  By \cref{lemma:hamming,lemma: G is hamming}, we know that for $G_M$, the highest eigenvalue is $(r-k)(|\G|-1)$, the second highest eigenvalue is $(r-k)(|\G|-1)-|\G|$, and the lowest eigenvalue is $k-r$. Note that raising a matrix to a certain power also raises its eigenvalues to the same power, and the algebraic expansion of a graph with eigenvalues $\lambda_1\geq\lambda_2\geq\cdots\geq\lambda_n$ is defined as $\max(|\lambda_2|, |\lambda_n|)$. Since $(r-k)(|\G|-1)-|\G|>r-k$ for any $r>k$, the lemma follows.
\end{proof}
\begin{lemma}\label{lemma: G* is d-regular}
  Each vertex of $G^*_M$ has degree $\left((r-k)(|\G|-1)\right)^{r\log{\frac{1}{\gamma}}}$
\end{lemma}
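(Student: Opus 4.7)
The plan is to use the regularity of the Hamming graph together with the definition of $G^*_M$ in terms of walk counts. By \cref{lemma: G is hamming}, $G_M$ is isomorphic to $H(r-k, |\G|)$, so it is regular of degree equal to the number of strings in $\G^{r-k}$ differing from a fixed string in exactly one coordinate. There are $r-k$ coordinate positions available, and in each we may swap to any of $|\G|-1$ other values of $\G$, so the common degree is $d := (r-k)(|\G|-1)$.

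Next I would unwind \cref{def:correspondence power graph}. The degree of a vertex $v$ in the multigraph $G^*_M$ is, by definition, the total number of walks in $G_M$ of length $t := r\log(1/\gamma)$ that start at $v$ (summed over all possible endpoints, which is just the total number of length-$t$ walks from $v$). Equivalently, if $A$ is the adjacency matrix of $G_M$, the degree of $v$ in $G^*_M$ is the $v$-th row sum of $A^t$.

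The final step is to observe that in any $d$-regular graph, the row sums of $A^t$ are identically $d^t$: this follows by induction, since $A \mathbf{1} = d \mathbf{1}$ implies $A^t \mathbf{1} = d^t \mathbf{1}$, so every row of $A^t$ sums to $d^t$. Substituting $d = (r-k)(|\G|-1)$ and $t = r\log(1/\gamma)$ yields the claimed degree $\left((r-k)(|\G|-1)\right)^{r\log(1/\gamma)}$.

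There is no real obstacle here — the lemma is essentially a direct consequence of the regularity of the Hamming graph and the matrix-power definition of $G^*_M$. The only thing to be mindful of is the minor book-keeping that walks are counted with multiplicity (matching the multigraph structure of $G^*_M$), so that self-loops and walks revisiting vertices all contribute correctly to the degree count.
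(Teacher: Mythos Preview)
Your proposal is correct and follows essentially the same approach as the paper: use \cref{lemma: G is hamming} to conclude $G_M$ is $d$-regular with $d=(r-k)(|\G|-1)$, then observe that from any vertex in a $d$-regular graph there are exactly $d^t$ walks of length $t$, which by \cref{def:correspondence power graph} is the degree in $G^*_M$. Your matrix-power justification $A^t\mathbf{1}=d^t\mathbf{1}$ is just a slightly more explicit phrasing of the same counting argument.
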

\begin{proof}
  \cref{lemma: G is hamming} implies that $G_M$ is a regular graph of degree $d = (r-k)(|\G|-1)$. Therefore, starting from any vertex $v$, there are exactly $d^l$ paths of length $l$. By definition, this implies $G^*_M$ is $\left((r-k)(|\G|-1)\right)^{r\log{\frac{1}{\gamma}}}$-regular.
\end{proof}

\noindent Note that taking $r\log{\frac{1}{\gamma}}$ steps in $G_M$ is equivalent to taking 1 step in $G^*_M$. 

\begin{definition}[Good Vertices]\label{def:good vertex}
  We call an array $X$ \emph{good} if it is permissible and $\A^\obf$ has at least a $\frac{\gamma}{4}$ success probability on it.
  $$\Pr[\A^\obf(X)\,\mathrm{succeeds}]\geq \frac{\gamma}{4}$$
  We call a vertex \emph{good} if it corresponds to a good array.
\end{definition}

\begin{lemma}\label{lemma: at least gamma/8 fraction are good}
  Let $M$ be a large enough permissible array. At least a $\frac{\gamma}{8}$ fraction of the vertices of $G_M$ are good.
\end{lemma}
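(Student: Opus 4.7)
The plan is to translate the pointwise guarantee of \cref{oprime-guarantee} into an averaged statement over the vertex set of $G_M$ and then apply a Markov-type argument to extract a large fraction of vertices on which $\A^\obf$ does noticeably better than its average.

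First, let $S$ be the unique $\kSUM$ solution in $M$ and set $V \coloneqq M[S]$; note that $\sum_{j} V[j] = 0$, so $V$ satisfies the hypothesis of \cref{oprime-guarantee}. By the definition of the correspondence graph (\cref{def:correspondence graph}), the vertex set of $G_M$ is exactly $\{M' \in \G^r : M'[S] = V\}$, which has size $|\G|^{r-k}$. Let $P$ denote the subset of permissible such arrays. I will argue that the distribution $D_{perm}$ conditioned on $\{M[S] = V\}$ is exactly the uniform distribution on $P$. Indeed, by \cref{lemma:permissible inputs}, $D_{perm}$ is uniform on all permissible arrays; conditioning on $M[S] = V$ restricts to permissible arrays whose entries on $S$ equal $V$, and since $V$ sums to $0$, the set $S$ is automatically a $\kSUM$ solution for every such array, so permissibility forces $S$ to be the unique solution. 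Hence this conditional distribution is the uniform measure on $P$.

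Now let $p(M') \coloneqq \Pr[\A^\obf(M') \text{ succeeds}]$, where the probability is over the internal coins of $\A^\obf$. Invoking \cref{oprime-guarantee} with our choice of $S$ and $V$ gives
\[
  \E_{M' \sim \mathrm{Unif}(P)}[p(M')] \;=\; \Pr_{M' \sim D_{perm}}\!\left[\A^\obf(M') \text{ succeeds} \mid M'[S]=V\right] \;\geq\; \frac{\gamma}{2}.
\]
Since $p(M') \leq 1$, splitting the expectation according to whether $p(M') \geq \gamma/4$ yields
\[
  \frac{\gamma}{2} \;\leq\; \Pr_{M' \sim \mathrm{Unif}(P)}\!\left[p(M') \geq \tfrac{\gamma}{4}\right] + \frac{\gamma}{4},
\]
so at least a $\gamma/4$ fraction of vertices in $P$ are good in the sense of \cref{def:good vertex}.

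Finally, I need to pass from a fraction of $P$ to a fraction of $V(G_M)$. By \cref{lemma: almost all vertices permissible}, $|P| \geq (1 - o(\gamma))\,|V(G_M)|$, so for sufficiently large $r$ the number of good vertices is at least $(\gamma/4)(1 - o(\gamma))|V(G_M)| \geq (\gamma/8)|V(G_M)|$, which is the desired bound. The main thing to be careful about is the first step, identifying the conditional distribution $D_{perm}\!\mid\!\{M[S]=V\}$ with the uniform distribution on permissible vertices of $G_M$; everything else is a routine Markov/union-bound calculation.
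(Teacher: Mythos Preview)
Your proof is correct and follows essentially the same approach as the paper: both identify the conditional distribution $D_{perm}\mid\{M[S]=V\}$ with the uniform measure on the permissible vertices of $G_M$, invoke \cref{oprime-guarantee} to get an average success probability of at least $\gamma/2$, and then extract a large fraction of good vertices by an averaging argument, finally accounting for the $o(\gamma)$ non-permissible vertices via \cref{lemma: almost all vertices permissible}. The only cosmetic difference is that the paper phrases the averaging step as a proof by contradiction (assuming fewer than $\gamma/8$ good vertices and deriving an upper bound on the expectation that contradicts $\gamma/2$), whereas you apply the Markov-type bound directly; the two are equivalent.
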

\begin{proof}
  Note that only a $o(\gamma)$ fraction of the vertices in $G_M$ are not permissible. We choose $r$ large enough so that this fraction does not exceed $\frac{\gamma}{8}$. The rest of the vertices exactly correspond to the support of $D_{perm}$ which has the same $\kSUM$ solution in the same position as $M$. Since $D_{perm}$ is uniform on its support (\cref{lemma:permissible inputs}), conditioning on a random vertex of $G_M$ being permissible is the same as sampling from $D_{perm}$ and conditioning on it having the same $\kSUM$ solution in the same place as $M$. \cref{oprime-guarantee} now yields
  \begin{equation}\label{eq:o obf success guarantee on G_M}
    \underset{M'\sim G_M}{\E}[\Pr[\A^\obf(M')\,\mathrm{succeeds}]\mid M' \text{ is permissible}] \geq \frac{\gamma}{2}
  \end{equation}
  We proceed via a proof by contradiction. Let us assume that the lemma is false. This implies that at least a $1-\frac{\gamma}{8}$ fraction of the vertices in $G$ are either not permissible or have success probability less than $\frac{\gamma}{4}$. Since the fraction of vertices not permissible is at most $\frac{\gamma}{8}$, at least a $1-\frac{\gamma}{4}$ fraction of the vertices are permissible but not good. Then, we have:
  \begin{align*}
    \underset{M'\sim G_M}{\E}&[\Pr[\A^\obf(M')\,\mathrm{succeeds}]\mid M' \text{ is permissible}]
                               \intertext{This expected value is maximized when all the good vertices have success probability $1$ and all the other permissible arrays in $G_M$ have success probability $\frac{\gamma}{4}$.}
    &\leq \frac{\frac{\gamma}{8}}{1-\frac{\gamma}{8}}+\frac{1-\frac{\gamma}{4}}{1-\frac{\gamma}{8}}\cdot\frac{\gamma}{4}
      \intertext{This simplifies as follows.}
    &= \frac{\gamma}{8-\gamma} + \frac{(4-\gamma)\gamma}{2(8-\gamma)}
      =\frac{6\gamma-\gamma^2}{2(8-\gamma)} < \frac{8\gamma-\gamma^2}{2(8-\gamma)} = \frac{\gamma}{2}
  \end{align*}
  This inequality directly contradicts \cref{eq:o obf success guarantee on G_M}, and hence we are done.
\end{proof}

\begin{definition}[Bad Vertices]\label{def:bad vertex}
  Let $M$ be a permissible array. We call a vertex $X$ of $G^*_M$ \emph{bad} if it is permissible and less than $\frac{\gamma}{16}$ of its outgoing edges connect to a good vertex.
\end{definition}
\begin{lemma}\label{lemma: few bad vertices in expander}
  Let $M$ be a large enough permissible array. The fraction of bad vertices in $G^*_M$ is $o\left(\frac{1}{\log{r}}\right)$.
\end{lemma}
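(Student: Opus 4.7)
The plan is to apply the Expander Mixing Lemma (\cref{lemma:expander_mixing}) to the power graph $G^*_M$, taking $S$ to be the set of bad vertices and $T$ the set of good vertices. By \cref{lemma: at least gamma/8 fraction are good}, $|T|/|V| \geq \gamma/8$, and by definition of bad each $v \in S$ sends fewer than $(\gamma/16)\,d$ edges into $T$, so $E(S,T) < (\gamma/16)\,d\,|S|$, where $d = ((r-k)(|\G|-1))^{r\log(1/\gamma)}$ is the regular degree from \cref{lemma: G* is d-regular}.

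Combining this with the EML lower bound $E(S,T) \geq d|S||T|/|V| - \lambda\sqrt{|S||T|}$, where $\lambda = \lambda(G^*_M)$ is the algebraic expansion from \cref{lemma: G* expansion}, and substituting $|T|/|V| \geq \gamma/8$, I deduce $\lambda\sqrt{|S||T|} > d|S|\,\gamma/16$. Squaring and bounding $|T| \leq |V|$ yields
\[
\frac{|S|}{|V|} \leq 256 \cdot \frac{(\lambda/d)^2}{\gamma^2}.
\]

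It remains to estimate $\lambda/d$ using the Hamming-graph eigenvalues. Since $|\G|/((r-k)(|\G|-1)) \geq 1/(r-k)$,
\[
\frac{\lambda}{d} = \left(1 - \frac{|\G|}{(r-k)(|\G|-1)}\right)^{r\log(1/\gamma)} \leq \exp\!\left(-\frac{r\log(1/\gamma)}{r-k}\right) = \gamma^{r/((r-k)\ln 2)}.
\]
Because $r/((r-k)\ln 2) \geq 1/\ln 2 > 1$ for large $r$, this gives $\lambda/d \leq \gamma^c$ for a constant $c > 1$, so $(\lambda/d)^2/\gamma^2$ is a positive power of $\gamma$. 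Combined with $\gamma = \Omega(1/\log^\alpha r)$ and the upper bound on $\Delta$ in \cref{thm: recovery algorithm success amplification}, this power is $o(1/\log r)$, as required.

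The main obstacle is to pin down the exponent $c$ tightly enough to beat $o(1/\log r)$: the Hamming graph $H(r-k,|\G|)$ has spectral gap only $\Theta(1/r)$ and the inner loop takes just $r\log(1/\gamma)$ steps, so the crucial extra decay of $\lambda/d$ below $\gamma$ comes essentially from the factor-of-$\log_2 e$ gap between $\log_2$ and $\ln$ in the exponent together with the small slack $r/(r-k) > 1$ for fixed $k$. Any looser estimate here would leave $(\lambda/d)^2/\gamma^2$ at best $\Theta(1)$, so the quantitative inequalities chaining the EML output to the Hamming eigenvalues need to be tracked carefully.
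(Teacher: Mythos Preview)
Your approach is exactly the paper's: apply the Expander Mixing Lemma to $G^*_M$ with $S$ the bad vertices and $T$ the good vertices, then feed in the Hamming--graph eigenvalue bound. The issue is purely quantitative, and it bites exactly where you flag it in your last paragraph.

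When you square $\lambda\sqrt{|S|\,|T|}>d|S|\gamma/16$ and then replace $|T|$ by $|V|$, you obtain
\[
\frac{|S|}{|V|}\;\le\;\frac{256}{\gamma^{2}}\left(\frac{\lambda}{d}\right)^{2}.
\]
That crude step throws away a factor of order $\gamma$. Keeping the dependence on $t=|T|/|V|$ all the way through (as the paper does) gives
\[
\frac{|S|}{|V|}\;\le\;\left(\frac{\lambda}{d}\right)^{2}\frac{t}{(t-\gamma/16)^{2}}\;\le\;\frac{32}{\gamma}\left(\frac{\lambda}{d}\right)^{2},
\]
since $t\mapsto t/(t-\gamma/16)^2$ is decreasing on $[\gamma/8,1]$ and equals $32/\gamma$ at $t=\gamma/8$. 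The difference between $32/\gamma$ and $256/\gamma^{2}$ is precisely one power of $\gamma$.

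Why this matters: your own estimate gives $\lambda/d\le\gamma^{c}$ with $c=r/((r-k)\ln 2)\to 1/\ln 2\approx 1.443$, so $(\lambda/d)^{2}/\gamma^{2}\le\gamma^{2c-2}$ with exponent $2c-2\to 2/\ln 2-2\approx 0.885$. If $\gamma=\Theta(1/\log^{\alpha}r)$ with $\alpha$ close to $1$ (which the theorem allows), then $\gamma^{0.885}=\Theta(1/\log^{0.885\alpha}r)$ is \emph{not} $o(1/\log r)$; you would need $\alpha>1/0.885\approx 1.13$. By contrast, with the paper's $32/\gamma$ you get $\gamma^{2c-1}\approx\gamma^{1.885}$, which is $o(1/\log r)$ for every $\alpha\ge 1$.

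Finally, invoking ``the upper bound on $\Delta$'' does not help here: the density only enters the argument through permissibility (\cref{lemma:permissible inputs,lemma: almost all vertices permissible}), not through the spectral bound, so it cannot rescue the lost factor of $\gamma$. The fix is simply not to use $|T|\le|V|$; carry $t=|T|/|V|$ through the squaring step instead.
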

\begin{proof}
  Let $S$ be the set of bad vertices and $T$ be the set of good vertices. We denote the set of all vertices of $G^*_M$ by $V$. We know from \cref{lemma: G* is d-regular} that $G^*_M$ is a regular graph; we shall call its degree $d$. We will also denote the algebraic expansion of $G^*_M$ by $\lambda$. The expander mixing lemma (\cref{lemma:expander_mixing}) now implies
  \begin{align*}
    &\left|E(S,T)-\frac{d\cdot|S|\cdot|T|}{|V|}\right| \leq \lambda\sqrt{|S|\cdot|T|} \quad\Longrightarrow\quad E(S,T)\geq \frac{d\cdot|S|\cdot|T|}{|V|} - \lambda\sqrt{|S|\cdot|T|}.
  \end{align*}
  The number of edges between $S$ and $T$ is at most $\frac{d\gamma|S|}{16}$ since each bad vertex, by \cref{def:bad vertex}, has at most $\frac{d\gamma}{16}$ edges connecting to a good vertex. Therefore,
  \begin{align*}
    \frac{d\gamma|S|}{16}\geq \frac{d\cdot|S|\cdot|T|}{|V|} - \lambda\sqrt{|S|\cdot|T|} \quad &\Longrightarrow \quad 
                                                                                                \frac{d\gamma}{16}\cdot\frac{|S|}{|V|} \geq d\cdot\frac{|S|}{|V|}\cdot\frac{|T|}{|V|} - \lambda\sqrt{\frac{|S|}{|V|}\cdot\frac{|T|}{|V|}}\\
                                                                                              &\Longrightarrow \quad \frac{|S|}{|V|}\left(\frac{d|T|}{|V|}-\frac{d\gamma}{16}\right)\leq \lambda\sqrt{\frac{|S|}{|V|}\cdot\frac{|T|}{|V|}}\\
                                                                                              &\Longrightarrow \quad \sqrt{\frac{|S|}{|V|}} \leq \frac{\lambda\sqrt{\frac{|T|}{|V|}}}{\frac{d|T|}{|V|}-\frac{d\gamma}{16}}\\
                                                                                              &\Longrightarrow \quad \frac{|S|}{|V|} \leq \left(\frac{\lambda}{d}\right)^2\cdot\frac{\frac{|T|}{|V|}}{\left(\frac{|T|}{|V|}-\frac{\gamma}{16}\right)^2}.
  \end{align*}
  Since $T$ is the set of good vertices, \cref{lemma: at least gamma/8 fraction are good} implies $\frac{|T|}{|V|}\geq\frac{\gamma}{8}$. Therefore,
  \begin{align*}
    &\frac{|S|}{|V|} \leq \left(\frac{\lambda}{d}\right)^2\cdot\frac{\frac{\gamma}{8}}{\left(\frac{\gamma}{8}-\frac{\gamma}{16}\right)^2} = \left(\frac{\lambda}{d}\right)^2\cdot\frac{32}{\gamma}
      \intertext{We now plug in the values of $\lambda$ and $d$ from \cref{lemma: G* expansion,lemma: G* is d-regular} respectively to get}
    &\frac{|S|}{|V|} \leq \left(1-\frac{|\G|}{(r-k)(|\G|-1)}\right)^{2r\log{\frac{1}{\gamma}}}\cdot\frac{32}{\gamma} < \left(1-\frac{1}{r}\right)^{2r\log{\frac{1}{\gamma}}}\cdot\frac{32}{\gamma}
      \intertext{We can further bound the above expression by using the inequality $\left(1-\frac{1}{r}\right)^r \leq \frac{1}{e}$}
    &\frac{|S|}{|V|} < \frac{32\textsf{exp}\left({-2\log{\frac{1}{\gamma}}}\right)}{\gamma}=32\gamma=\Theta\left(\frac{1}{\log^{\alpha}{r}}\right)=o\left(\frac{1}{\log{r}}\right)
  \end{align*}
  We have thus shown that the fraction of bad vertices in the correspondence power graph of any permissible array is $o(1/\log{r})$. 
\end{proof}

We are now ready to conclude our second step.

\begin{lemma}\label{lemma: low density amplification}
  The average-case success probability of $\A^\amp$ is $1-o\left(\frac{1}{\log{r}}\right)$. This algorithm works at any $\Omega\left(\frac{1}{\polylog(r)}\right)$ density $\Delta \leq \frac{k\log{r}}{k\log{r}+(\alpha+1)\log{\log{r}}}$ for large enough $r$.
\end{lemma}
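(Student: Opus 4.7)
The plan is to combine the structural lemmas on the correspondence (power) graphs with a simple failure-probability calculation over the outer iterations. At a high level, for a random input $M \sim D_1$ I will show that with probability $1 - o(1/\log r)$ the array $M$ is permissible and a non-bad vertex of $G^*_M$, and then show that conditioned on this, each outer iteration of $\A^\amp$ succeeds with probability $\Omega(\gamma^{k+2})$, so that amplifying over $T = 64 \log r/\gamma^{2k+2}$ iterations drives the overall failure probability down to $o(1/\log r)$.

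First, I would bound the ``bad starting point'' probability. By \cref{lemma:permissible inputs}, $M$ is permissible except with probability $o(\gamma) = o(1/\log r)$. Conditioned on permissibility, $D_{perm}$ is uniform on its support, and that support partitions according to the solution location $S$ and values $V$ into the vertex sets of the graphs $G_M$; within any such $G^*_M$, \cref{lemma: few bad vertices in expander} says the bad-vertex fraction is $o(1/\log r)$, so $M$ is permissible and non-bad with probability $1 - o(1/\log r)$.

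Next, condition on a permissible non-bad $M$ and analyze one outer iteration by decomposing into three events: (E1) all $r\log(1/\gamma)$ inner-loop resamplings avoid the index set $S$, so the walk stays on $G_M$; (E2) conditioned on E1, the endpoint $M'$ is a good neighbor of $M$ in $G^*_M$; (E3) $\A^\obf(M')$ succeeds on this $M'$. The bound $\Pr[E_1] \geq (1-k/r)^{r\log(1/\gamma)} \geq \gamma^k/2$ for large $r$ is a standard calculation. By definition of $G^*_M$ (\cref{def:correspondence power graph}) the conditioned walk produces a uniformly random neighbor in the regular multigraph $G^*_M$, so the non-bad assumption on $M$ yields $\Pr[E_2 \mid E_1] \geq \gamma/16$, and the definition of a good vertex gives $\Pr[E_3 \mid E_2] \geq \gamma/4$. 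On $E_1 \wedge E_2 \wedge E_3$ the tuple returned is the unique solution of the permissible $M'$, which coincides with the solution of $M$ and whose entries were untouched by the walk, so the check in step 1.4 of $\A^\amp$ passes. Hence each iteration succeeds with probability at least $c\gamma^{k+2}$ for an absolute constant $c > 0$, and $(1 - c\gamma^{k+2})^T \leq \exp(-c \log r / \gamma^k) = o(1/\log r)$ since $\gamma^k = \Omega(1/\polylog r)$. Union-bounding with the $o(1/\log r)$ bad-start event completes the argument; the runtime analysis was already established at the definition of $\A^\amp$.

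The main obstacle will be formalising the transition from the algorithm's explicit walk to the idealised ``uniform neighbour'' in $G^*_M$: one must verify that a single inner-loop step (sample $i \sim [r]$, replace $M'[i]$ by a uniformly random value $\neq M'[i]$) conditioned on $i \notin S$ is exactly one step of the standard random walk on the Hamming subgraph $G_M \cong H(r-k,|\G|)$, so that after $r\log(1/\gamma)$ conditioned steps the endpoint is precisely a uniformly random neighbour in $G^*_M$. The density restriction $\Delta \leq k\log r/(k\log r + (\alpha+1)\log\log r)$ does not require separate treatment; it is inherited verbatim from the prior lemmas (particularly \cref{lemma:permissible inputs} and \cref{lemma: almost all vertices permissible}) which already absorb this bound into the $o(\gamma)$ estimates used above.
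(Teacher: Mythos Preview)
Your proposal is correct and follows essentially the same approach as the paper's proof: reduce to the permissible, non-bad case via \cref{lemma:permissible inputs} and \cref{lemma: few bad vertices in expander}, then decompose each outer iteration into (solution preserved) $\wedge$ (land on good vertex) $\wedge$ ($\A^\obf$ succeeds), and amplify. The only discrepancy is the exponent in your bound $\Pr[E_1]\geq \gamma^{k}/2$; the paper uses the cruder $(1-1/t)^t\geq 1/4$ to get $\gamma^{2k}$, which is why the outer loop is repeated $64\log r/\gamma^{2k+2}$ times rather than $\Theta(\log r/\gamma^{k+2})$. Your tighter estimate (if justified carefully with the correct log base) only makes the final bound stronger, so either way the argument goes through.
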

\begin{proof}
  With probability $1-o(\gamma)=1-o(1/\log{r})$, the input array $M$ is permissible (see \cref{lemma:permissible inputs}). In that case, $M$ has a unique solution. \cref{lemma: few bad vertices in expander} tells us that among all arrays that contain this exact solution at this exact position, only a $o(1/\log{r})$ fraction can be \emph{bad}. So we can assume with probability $1-o(1/\log{r})$ that $M$ has at least a $\frac{\gamma}{16}$ fraction of its edges leading to good vertices.

  Each iteration of the outer loop of $\A^\amp$ makes $r\log{\frac{1}{\gamma}}$ replacements before calling $\A^\obf$. The original solution is preserved if we choose one of the $r-k$ entries not in the solution at every step. This happens with probability $(1-\frac{k}{r})^{r\log{\frac{1}{\gamma}}}>\textsf{exp}\left({-2k\log{\frac{1}{\gamma}}}\right) = \gamma^{2k}$, since $(1-\frac{1}{t})^t>\frac{1}{e^2}$ for $t>2$. Note that when we preserve the solution, the inner loop takes one random step in $G^*_M$. With probability at least $\frac{\gamma}{16}$, this lands us into a good vertex where $\A^\obf$ succeeds with probability at least $\frac{\gamma}{4}$. Thus, each iteration of the outer loop has a probability at least $\frac{\gamma^{2k+2}}{64}$ of recovering the original solution for any input which is permissible and not a bad vertex. Observe that different iterations of the outer loop are independent once we condition on the input array. Since the outer loop runs $\frac{64\log{r}}{\gamma^{2k+2}}$ times, the probability of recovering the solution in at least one of the iterations is 
  \begin{align*}
    1-\left(1-\frac{\gamma^{2k+2}}{64}\right)^{({64\log{r}})\big/{\gamma^{2k+2}}}>1-\frac{1}{\textsf{exp}\left({\log(r)}\right)}=1-\frac{1}{r}=1-o\left(\frac{1}{\log{r}}\right)&\qedhere
  \end{align*}
\end{proof}

\subsection{Hardness Amplification for Vector \texorpdfstring{$k$-SUM}{} at Density \texorpdfstring{$\Delta \leq 1$}{}}
\label{sec: success amp upto 1 for vksum}

The statement of \cref{thm: recovery algorithm success amplification} can be further strengthened if the underlying group ensemble has some extra structure. In some special cases, we can show an analogous results for all densities $\Delta\leq 1$. In this subsection, we will handle the special case where $\G$ is $\Z_q^m$ for some fixed integer $q$, and addition is defined as pointwise addition modulo $q$. Note that any element of $\Z_q^m$ can be represented by a vector of size $m$, hence we will represent the input as an $m\times r$ matrix whose columns represent the elements. Each entry of this matrix will be in $\Z_q$. At density $\Delta$, we have the relation $m\Delta\log{q}=k\log{r}$. As long as $1\geq \Delta=\Omega\left(\frac{1}{\polylog(r)}\right)$, this implies that $m = \frac{k\log{r}}{\Delta\log{q}}=\Theta(\polylog(r))$. The $k$-XOR problem is the special case of this with $q=2$.

\begin{theorem}[Hardness Amplification for Vector $k$-SUM]\label{thm: vksum success amplification}
    For $k\geq 3$ and density $\Omega\left(\frac{1}{\polylog(r)}\right) \leq \Delta \leq 1$, suppose there exists an algorithm $\A^\rec$ that runs in time $T(r) = \Omega(r)$, and solves the planted search \vkSUM problem at density $\Delta$ ($\kSUM$ over $\G_{\textsc{vector-}(q,k)\textsc{-SUM}}^{(\Delta)}$) with success probability $\gamma = \Omega\left(\frac{1}{\polylog(r)}\right)$. Then, there is an algorithm that runs in time $\Otilde\left(T\cdot (k/\gamma^2)^k\right)$, and solves the same problem with success probability $\left(1-o\left(\frac{1}{\log{r}}\right)\right)$.    
\end{theorem}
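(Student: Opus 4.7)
The plan is to reduce instances at density $\Delta$ (potentially as large as $1$) to instances at the slightly lower density $\Delta_0 := \frac{k \log r}{k \log r + (\alpha+1)\log\log r}$, where $\alpha$ is chosen so that $\gamma = \Theta(1/\log^\alpha r)$; at density $\Delta_0$, \cref{thm: recovery algorithm success amplification} applies directly. Writing $m = \lceil k\log r/(\Delta \log q)\rceil$ and $m_0 = \lceil k\log r/(\Delta_0 \log q)\rceil$, the difference $t := m_0 - m$ is only $O(\log\log r/\log q)$, so in particular $q^t = \polylog(r)$.

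First I would build a density-$\Delta_0$ oracle from $\A^\rec$ as follows: on input $N_0 \in \F_q^{m_0 \times r}$ drawn from the planted distribution at density $\Delta_0$, retain only the top $m$ rows to obtain $N \in \F_q^{m \times r}$, invoke $\A^\rec(N)$ to get $K$, and output $K$ if $\sum_{j\in K}(N_0)_j = 0$. Truncation preserves both the planted solution and the uniform randomness of the other entries, so $N$ is distributed exactly as $D_1$ at density $\Delta$, and hence $\A^\rec(N)$ returns some solution of $N$ with probability $\gamma$. The catch is that $N$ may carry spurious $k$-SUM solutions introduced by truncation, and only the \emph{planted} solution of $N$ lifts back to a solution of $N_0$. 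The key observation is that in $D_1$, conditional on the instance $M$, the planted subset is uniformly distributed over all $k$-SUM solutions of $M$ (a direct posterior calculation from the definition of $D_1$). Consequently the probability that $\A^\rec(N)$ returns the planted solution equals $\E_N[p(N)/c(N)]$, where $p(N) = \Pr[\A^\rec(N)\in\mathcal{S}(N)\mid N]$ has expectation at least $\gamma$ and $c(N) = |\mathcal{S}(N)|$. The second-moment bound in \cref{lemma:c-X stats} gives $\E[c(N)], \Var[c(N)] = O_k(1)$ at density $\Delta\leq 1$, so Chebyshev's inequality lets me restrict attention to instances with $c(N) \leq O(\gamma^{-1/2})$ at a cost of only $O(\gamma)$ probability mass, yielding success probability $\Omega(\gamma^{3/2}) = \Omega(1/\polylog r)$ for the constructed oracle --- enough to invoke \cref{thm: recovery algorithm success amplification} at density $\Delta_0$. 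The resulting amplified algorithm $\A_0^{\amp}$ has success probability $1 - o(1/\log r)$ and runtime $\Otilde(T \cdot (k/\gamma^2)^k)$, after absorbing the extra $\polylog r$ factor into $\Otilde$ (using that $k$ is constant).

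Finally, I would lift $\A_0^{\amp}$ back to density $\Delta$ by random row-padding: on input $M \in \F_q^{m\times r}$, iterate $N' = \polylog(r)$ times, each iteration sampling $R \in \F_q^{t\times r}$ uniformly, forming $M_0 = M \circ R$, invoking $\A_0^{\amp}(M_0)$ to get $K$, and returning $K$ if $\sum_{j\in K} M_j = 0$. A short calculation shows that, conditional on the event $\sum_{j\in S} R_j = 0$ (where $S$ is the planted solution of $M$), which has probability $q^{-t} = 1/\polylog r$, the pair $(M_0, S)$ is distributed exactly as $D_1$ at density $\Delta_0$; hence each iteration recovers a valid solution with probability $\Omega(1/\polylog r)$. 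After $N' = \polylog r$ iterations the failure probability drops below $1/\log r$, and the total runtime remains $\Otilde(T\cdot (k/\gamma^2)^k)$.

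The main obstacle is the first step: showing that $\A^\rec$, when applied to a potentially non-permissible instance $N$ that may harbor a small but nonzero expected number of spurious $k$-SUM solutions, returns specifically the planted solution (the only one that lifts back to $N_0$) with probability polynomial in $\gamma$. The resolution hinges crucially on the "uniform-among-solutions" characterization of the planted distribution, combined with tight moment bounds on $c(N)$ to control the constant-factor overhead coming from spurious solutions.
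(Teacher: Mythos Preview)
Your approach is essentially the same as the paper's: both build a density-$\Delta_0$ oracle from $\A^\rec$ by row truncation, invoke \cref{thm: recovery algorithm success amplification} at density $\Delta_0$, and then return to density $\Delta$ by random row-padding. The paper's treatment of the truncation step is actually looser than yours: it uses Markov to get $\Pr[c(M)\le 5]\ge 1/2$ and then multiplies $\gamma\cdot\tfrac12\cdot\tfrac15$ as if the events ``$\A^\rec$ succeeds'' and ``$c(M)\le 5$'' were independent, arriving at success probability $\gamma/10$; your Chebyshev argument yielding $\Omega(\gamma^{3/2})$ is more careful and fully rigorous, and since $\gamma=1/\polylog(r)$ and $k$ is constant, the resulting extra $(1/\gamma)^{O(k)}$ factor in the runtime is absorbed into the $\Otilde$ just as you note.
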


\begin{proof}
  As before, we will assume w.l.o.g. that $\gamma=\Theta\left(\frac{1}{\log^\alpha{r}}\right)$ for some constant $\alpha>1$.  Let us define, $$\Delta_0 := \frac{k\log{r}}{k\log{r}+(\alpha+1)\log{\log{r}}}$$
  We have already proven this result for densities $\Delta\leq \Delta_0$ in \cref{sec:hardness amplification for general group}. To complete the proof, we will now consider the other case and assume that $\Delta\in(\Delta_0,1]$.

  Note that this implies that permissible matrices may no longer take up a $1-o(\gamma)$ fraction of the input space, and we can no longer pretend the input is permissible without a significant loss in success probability anymore.
  
  First, we use $\A^\rec$ to construct an algorithm $\A_0$ for solving the same problem at density $\Delta_0$. On input $M_0$ (which is sampled from $D_1^{\Delta_0}$), this algorithm $\A_0$ simply throws away a randomly chosen $1-\frac{\Delta_0}{\Delta}$ fraction of the rows to get a density $\Delta$ instance $M$. It then checks if $\A^\rec(M)$ returns a $k$-tuple which is also a solution for the original input $M_0$, and if so, returns it. Observe that since the rows of $M_0$ are independently chosen, $M$ has the same distribution as $D_1^\Delta$. Therefore, $\A^\rec$ returns a solution with probability $\gamma$. Any $k$-tuple that sum to 0 in $M_0$ obviously also sum to 0 in $M$. If $M$ has $c$ solutions of vector $k$-SUM, we can argue by symmetry that they are all equally likely to be a solution for $M_0$. Now, note that since $M$ has density $\leq1$, the expected value of $c$ is less than 3. If the planted solution is $S$,
  $$\E[c]=\sum_{\substack{|\kappa|=k \\ \kappa\subset [r]}}\Pr[\kappa \text{ is a solution}]=1+ \sum_{\substack{|\kappa|=k \\ \kappa\subset [r]\\ \kappa\neq S}}\Pr[\kappa \text{ is a solution}]=1+\frac{\binom{r}{k}-1}{q^m}<3.$$
  We can therefore apply Markov's inequality (\cref{lemma:markov}) to conclude that with probability at least $1/2$, there are less than 6 \vkSUM solutions for $M$. So with probability at least $\gamma\cdot\frac{1}{2}\cdot\frac{1}{5}$, we will call $\A^\rec$ on a matrix with at most 5 solutions, $\A^\rec$ will succeed, and the solution it returns will also be a solution for our original input $M_0$. The success probability of $\A_0$ is therefore at least $\frac{\gamma}{10}=\Omega\left(\frac{1}{\polylog(r)}\right)$. The runtime of $\A_0$ is clearly $\mathcal{O}(mr)+T=\Otilde(T)$.

  We now use our success amplification procedure described in \cref{sec:hardness amplification for general group} to obtain an average-case recovery algorithm $\A_0^\amp$ with runtime $\Otilde(T)$ that solves density $\Delta_0$ instances with probability $1-o(1/\log{r})$.

  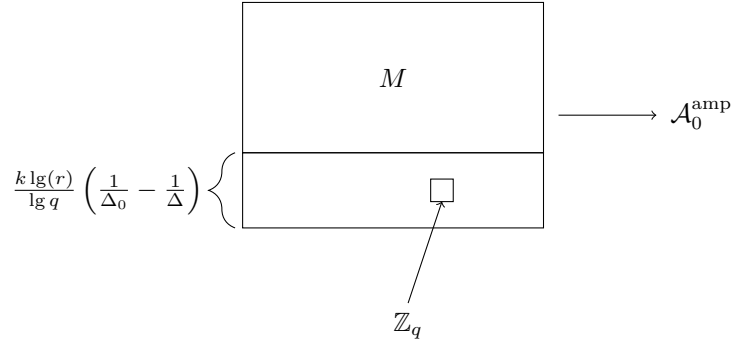
\begin{figure}
      \centering
      \begin{tikzpicture}
          \draw (0,0) rectangle node {$M$} (4,2);
          \draw (0,-1) rectangle (4,0);
          \draw [decorate,decoration={brace,amplitude=10pt}] (-0.1,-1) -- (-0.1,0);
          \node at (-1.8,-0.5) {$\frac{k\lg(r)}{\lg{q}}\left(\frac{1}{\Delta_0}-\frac{1}{\Delta}\right)$};
          \draw (2.5,-0.35) rectangle (2.8,-0.65);
          \draw[<-] (2.65,-0.65) -- (2.2,-2);
          \node at (2.2,-2.3) {$\bbZ_q$};
          \draw[->] (4.2,0.5) -- (5.5,0.5);
          \node at (6.1,0.5) {$\A_0^\amp$};
      \end{tikzpicture}
      \caption{Illustration of how to lift the hardness amplification to all densities $\Delta \leq 1$ for the case of vector $k$-SUM, with $k$-XOR being a special case of $q=2$. Here, a total of $\frac{k\lg(r)}{\lg{q}}\left(\frac{1}{\Delta_0}-\frac{1}{\Delta}\right)$ rows are added to the solution, where each entry is sampled i.i.d. from $\bbZ_q$, and the result is given to $\A_0^\amp$.}
      \label{fig:vector ksum reduction}
  \end{figure}
  
  Now we are ready to construct the final algorithm $\B$. Given a density $\Delta$ instance $M$ with dimensions $m\times r$, our algorithm does the following (see also \cref{fig:vector ksum reduction} for an illustration).
  \paragraph{Algorithm $\B(M)$}
  \begin{enumerate}
    \item Define $t:=(\log r)^{\alpha+2}$
    \item Repeat $t$ times:
    \begin{enumerate}
      \item[2.1.] Add $\frac{k\log(r)}{\log{q}}\left(\frac{1}{\Delta_0}-\frac{1}{\Delta}\right)$ rows to $M$. Each new entry is chosen uniformly at random from $\Z_q$.
      \item[2.2.] Call $\A_0^\amp$ on the new matrix $M_0$.
      \item[2.3.] If a solution was returned in the last step, check if that is also a solution for $M$. If so, return it.
    \end{enumerate}
  \end{enumerate}

  Observe that $M$ is guaranteed to have at least one solution $S$ since it is sampled from $D_1^\Delta$. The probability of $S$ being a $k$-XOR solution for $M_0$ is exactly $q^{\frac{k\log(r)}{\log{q}}\left(\frac{1}{\Delta}-\frac{1}{\Delta_0}\right)}$ since the probability of the original solution being preserved drops by a factor of $q$ for each additional row added. We now bound the probability $p$ that at least one iteration of the inner loop ran $\A^\amp_0$ on a matrix where $S$ was a solution. 
  \begin{align*}
    p &\geq 1-\left(1-q^{\frac{k\log(r)}{\log{q}}\left(\frac{1}{\Delta}-\frac{1}{\Delta_0}\right)}\right)^t = 1-\left(1-r^{k\,\left(\frac{1}{\Delta}-\frac{1}{\Delta_0}\right)}\right)^t
        \intertext{Now, since $\left(1-\frac{1}{n}\right)^n\leq \frac{1}{e}$ and $\Delta \leq 1$, it follows that,}\\
      &\geq 1- \textsf{exp}\left(-t\cdot r^{k\,\left(1-\frac{1}{\Delta_0}\right)}\right)
        \intertext{We now substitute in $\Delta_0 = \frac{k\log{r}}{k\log{r}+(\alpha+1)\log\log{r}}$ to get,}\\
      &= 1- \textsf{exp} \left( -t\cdot r^{\frac{-(\alpha+1)\log\log{r}}{\log{r}}} \right)
        \intertext{We now let $t=(\log r)^{\alpha+2}$ and obtain,} \\
      &= 1-\textsf{exp}\left(-\left(\log{r}\right)^{\alpha+2}\cdot\left(\log{r}\right)^{-(\alpha+1)}\right) \\
      &= 1-\textsf{exp}\left(-\log(r)\right) =1-o\left(\frac{1}{\log{r}}\right)
  \end{align*}
  When $M_0$ does have a solution, it is easy to see that it has the same distribution as $D_1^{(\Delta_0)}$ and hence, $\A^\amp_0$ returns a solution with probability $1-o(1/\log{r})$. Taking a union bound, we conclude that with probability $1-o(1/\log{r})$, we call $\A_0^\amp$ on a matrix with a \vkSUM solution at least once \emph{and} it returns that solution. Any solution to $M_0$ is always a solution to $M$, and our algorithm therefore returns it. We thus have the required success probability.

  The runtime of $\B$ is clearly $t$ times the runtime of $\A_0^\amp$. Since $\A_0^\amp$ runs in $\Otilde(T)$ and $t=\mathcal{O}(\polylog(r))$, our algorithm also runs in $\Otilde(T)$ time.
\end{proof}

\begin{corollary}
\label{cor:kxor-amp-1}
    If the planted search $\kXOR$ problem at density $\Omega\pfrac{1}{\polylog(r)}\leq\Delta\leq1$ is hard to solve with probability $1-o\pfrac{1}{\log{r}}$ in time $\Otilde(T)$, it is also hard to solve with probability $\gamma = \Omega\pfrac{1}{\polylog(r)}$ in time $\Otilde(T \cdot (\gamma^2/k)^k)$.
\end{corollary}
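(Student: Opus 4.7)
The plan is to derive the corollary as a direct contrapositive of \cref{thm: vksum success amplification}, specialized to the case $q = 2$. Recall that the $k$-XOR problem is precisely vector $k$-SUM with $q=2$, so the hypotheses of \cref{thm: vksum success amplification} are exactly the hypotheses we may apply here, for the same density range $\Omega(1/\polylog(r)) \leq \Delta \leq 1$.

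First I would note what \cref{thm: vksum success amplification} states in positive form: the existence of an algorithm that runs in time $T(r)$ and solves planted search $k$-XOR with probability $\gamma = \Omega(1/\polylog(r))$ implies the existence of an algorithm running in time $\Otilde(T \cdot (k/\gamma^2)^k)$ that solves the same problem with probability $1 - o(1/\log r)$. Taking the contrapositive directly gives: if no algorithm running in time $\Otilde(T')$ solves $k$-XOR with probability $1-o(1/\log r)$, then no algorithm running in time $T$ solves $k$-XOR with probability $\gamma$, where the two runtimes are related by $T' = T \cdot (k/\gamma^2)^k$, equivalently $T = T' \cdot (\gamma^2/k)^k$.

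Hence I would relabel: setting $T' := T$ in the corollary's notation, the hypothesis becomes ``no algorithm in time $\Otilde(T)$ achieves success $1-o(1/\log r)$,'' and the contrapositive yields ``no algorithm in time $\Otilde(T \cdot (\gamma^2/k)^k)$ achieves success $\gamma$,'' which is exactly the conclusion claimed. The $\Otilde(\cdot)$ notation absorbs the polylogarithmic slack introduced by the original amplification statement, since $1/\gamma = \polylog(r)$ and $k$ is constant (or at least $\polylog(r)$), so all overhead factors are hidden.

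There is no substantive obstacle beyond verifying the algebraic manipulation of the runtime expression, and confirming that the density range $\Omega(1/\polylog(r)) \leq \Delta \leq 1$ in the corollary matches precisely the range to which \cref{thm: vksum success amplification} applies. Since \cref{thm: vksum success amplification} was proved uniformly across this entire density range (combining the general-group amplification of \cref{sec:hardness amplification for general group} for $\Delta \leq \Delta_0$ with the row-padding reduction for $\Delta \in (\Delta_0, 1]$), no additional case analysis is required, and the corollary follows in a single line of logical inversion.
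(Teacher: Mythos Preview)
The proposal is correct and matches the paper's own proof, which is a single line: set $q=2$ in \cref{thm: vksum success amplification} and take the contrapositive. Your expanded discussion of the runtime algebra and density range is consistent with that one-line argument.
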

\begin{proof}
    This follows from setting $q=2$ in \cref{thm: vksum success amplification} and taking the contrapositive.
\end{proof}

\begin{corollary}[Strong Hardness of $k$-XOR]
    \label{cor:kxor-amp}
    For any constant $k\geq 3$, the average-case $k$-XOR conjecture (\cref{conj:kxor}) implies that any algorithm that, given $r$ uniformly random vectors from $\bbF_2^m$ for $m = k\log{r}$, can find a set of $k$ that sum to $0$ with probability $\Omega({1}/{\polylog(r)})$ takes time at least $r^{\ceil{k/2}-o(1)}$.
\end{corollary}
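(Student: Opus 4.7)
The plan is to prove the contrapositive by a clean three-step chain assuming \cref{conj:kxor}. Suppose, toward contradiction, that there is an algorithm $\A$ which, on an instance drawn from $D_0$ over $\G_{k\text{-XOR}}^{(1)}$ (i.e., $r$ uniform vectors in $\bbF_2^{k\log r}$), outputs a $k$-subset summing to $0$ with probability $\gamma = \Omega(1/\polylog(r))$ in time $T = r^{\ceil{k/2}-\Omega(1)}$. I will chain existing results to produce an algorithm for non-planted $k$-XOR with \emph{constant} success probability in time $\Otilde(T)$, contradicting \cref{conj:kxor}.

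First I would observe that the same algorithm $\A$, run unchanged on planted samples from $D_1$, still succeeds with probability $\Omega(\gamma)$. By \cref{lemma:explicit_pmf} we have $D_1(X) = \frac{|G|}{\binom{r}{k}}\, c(X)\, D_0(X)$, and $\A(X)$ can succeed only when $c(X) \geq 1$, so
\begin{align*}
\Pr_{X\sim D_1}[\A(X) \text{ succ.}] \;\geq\; \frac{|G|}{\binom{r}{k}} \cdot \Pr_{X\sim D_0}[\A(X) \text{ succ.}] \;\geq\; k! \cdot \gamma \;=\; \Omega(\gamma),
\end{align*}
using $|G| \geq r^k$ and $\binom{r}{k} \leq r^k/k!$ at density $1$, together with the constancy of $k$.

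Next I would apply \cref{cor:kxor-amp-1} to amplify this to planted success probability $1 - o(1/\log r)$ at the cost of a factor $\Otilde((k/\gamma^2)^k)$ in running time; since $k$ is constant and $\gamma = \Omega(1/\polylog(r))$, this factor is polylogarithmic and the total running time remains $\Otilde(T) = r^{\ceil{k/2}-\Omega(1)}$. Finally, I would invoke \cref{thm:planted_nonplanted_equivalence} to convert this amplified planted algorithm into one for \emph{non-planted} $k$-XOR at density $1$ with success probability at least $(1-o(1/\log r))^{3/2}/(21 k^k) = \Omega(1)$ (for constant $k$) in the same asymptotic running time. This directly contradicts \cref{conj:kxor}.

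The whole argument is a three-way composition of results already established in the paper, so no new technical content is needed; the main task is essentially a sanity check, verifying that the constant $k!$ factor lost in Step 1, the polylogarithmic overhead of the amplification, and the constant factor lost in the planted-to-non-planted conversion are all absorbed into the $r^{o(1)}$ slack allowed by the conjecture. The only mildly delicate point is the first step: it relies crucially on density being exactly $1$ so that $|G|/\binom{r}{k} = \Theta(1)$, and the transfer would degrade sharply at densities significantly larger than $1$.
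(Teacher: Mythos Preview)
Your proof is correct and follows the same core route as the paper's (invoking the hardness amplification for planted $k$-XOR at density $1$, i.e., \cref{cor:kxor-amp-1}). You are in fact more explicit than the paper's one-line proof: you supply both non-planted/planted conversions---the forward direction via the identity $D_1(X)=\tfrac{|G|}{\binom{r}{k}}c(X)D_0(X)$, and the return trip via \cref{thm:planted_nonplanted_equivalence}---which the paper's proof leaves implicit.
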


\begin{proof}
    This follows from setting $\Delta=1$ and $k = O(1)$ in \cref{cor:kxor-amp-1}.
\end{proof}

\subsection{Hardness Amplification for Modular \texorpdfstring{$\kSUM$}{} at Density \texorpdfstring{$\Delta \leq 1$}{}}
\label{sec: success amp upto 1 for modular ksum}

In this subsection, we will prove an analogue of \cref{thm: vksum success amplification} for the special case where $\G$ is $\Z_{2^m}$ (see \cref{eq: k-MSUM def}). At density $\Delta$, we have the relation $m\Delta=k\log{r}$. Note that the input will now be an array $M$ of size $r$ where each entry is between $0$ and $2^m-1$.

\begin{theorem}[Hardness Amplification for Modular $k$-SUM]\label{thm: k-Msum success amplification}
    For $k\geq 3$ and density $\Omega\left(\frac{1}{\polylog(r)}\right) \leq \Delta \leq 1$, suppose there exists an algorithm $\A^\rec$ that runs in time $T(r) = \Omega(r)$, and solves the planted search $\kSUM$ problem over $\G_{\text{$k$-SUM}}^{(\Delta)}$ with success probability $\gamma = \Omega\left(\frac{1}{\polylog(r)}\right)$. Then, there is an algorithm that runs in time $\Otilde\left(T\cdot (k/\gamma^2)^k\right)$, and solves the same problem with success probability $\left(1-o\left(\frac{1}{\log{r}}\right)\right)$.    
\end{theorem}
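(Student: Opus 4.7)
The plan is to follow the same two-step template used for vector $k$-SUM in \cref{thm: vksum success amplification}: reduce to density $\Delta_0 = \frac{k\log r}{k\log r + (\alpha+1)\log\log r}$ where \cref{thm: recovery algorithm success amplification} already applies, and then use an instance-embedding in the opposite direction to lift the amplified algorithm back up to density $\Delta$. As in the vector proof, we may assume $\gamma = \Theta(1/\log^\alpha r)$ with $\alpha \geq 1$, and the range $\Delta \leq \Delta_0$ is handled directly by \cref{thm: recovery algorithm success amplification}. So the task reduces to the regime $\Delta \in (\Delta_0, 1]$.

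The key observation is that in the modular setting there are natural analogues of the ``throw rows away'' and ``append rows'' reductions used for $\F_q^m$. With $m = \lceil k\log r/\Delta\rceil$ and $m_0 = \lceil k\log r/\Delta_0\rceil > m$, the downward reduction is simply to take an instance $M_0 \in \Z_{2^{m_0}}^r$ and replace each entry by its reduction $M[i] = M_0[i] \bmod 2^m$. Any $k$-SUM solution for $M_0$ in $\Z_{2^{m_0}}$ is immediately a $k$-SUM solution for $M$ in $\Z_{2^m}$, and one checks easily that if $M_0 \sim D_1^{\Delta_0}$ with planted set $S$, then the resulting $M$ is distributed as $D_1^{\Delta}$ with the same planted set. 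This gives an algorithm $\A_0$ for density $\Delta_0$: call $\A^\rec$ on the reduced instance and check whether the returned tuple lifts to a solution in $\Z_{2^{m_0}}$. A Markov-style argument as in \cref{thm: vksum success amplification} bounds the number of solutions of $M$ at $\Delta \leq 1$ to be $<6$ with probability $\geq 1/2$, yielding success probability $\Omega(\gamma)$ for $\A_0$. Apply \cref{thm: recovery algorithm success amplification} to $\A_0$ (which runs at density $\Delta_0$) to obtain $\A_0^\amp$ with runtime $\Otilde(T \cdot (k/\gamma^2)^k)$ and success $1-o(1/\log r)$.

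The upward embedding is the mirror image: given a density-$\Delta$ instance $M \in \Z_{2^m}^r$, sample $y_1, \ldots, y_r$ i.i.d.\ uniformly from $\Z_{2^{m_0-m}}$ and set $M_0[i] = M[i] + 2^m y_i \in \Z_{2^{m_0}}$. Since $\sum_{i\in S} M[i] \equiv 0 \pmod{2^m}$, we can write $\sum_{i\in S} M[i] = c \cdot 2^m$ for some integer $c \in \{0,\ldots,k-1\}$; then $\sum_{i\in S} M_0[i] \equiv 2^m(c + \sum_{i\in S} y_i) \pmod{2^{m_0}}$, so the planted set $S$ remains a solution of $M_0$ precisely when $c + \sum_{i\in S} y_i \equiv 0 \pmod{2^{m_0-m}}$, which happens with probability $2^{m-m_0} = r^{-k(1/\Delta_0 - 1/\Delta)}$ over the $y_i$'s. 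Conditioning on this event, the entries outside $S$ and all but one inside $S$ remain uniform in $\Z_{2^{m_0}}$, while the remaining entry is forced by the solution constraint — exactly the structure of $D_1^{\Delta_0}$. Now the main algorithm $\B$ repeats this embedding $t = (\log r)^{\alpha+2}$ times, calls $\A_0^\amp$ each time, and returns any returned tuple that is a valid solution for the original $M$.

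Finally, the success-probability calculation is essentially identical to the one at the end of \cref{thm: vksum success amplification}: the probability that at least one of the $t$ embeddings preserves the solution is $1 - (1 - r^{-k(1/\Delta_0 - 1/\Delta)})^t \geq 1 - \exp(-t \cdot r^{k(1-1/\Delta_0)}) = 1 - o(1/\log r)$ after plugging in $\Delta \leq 1$ and the definitions of $\Delta_0, t$, and on each such preserving iteration $\A_0^\amp$ succeeds with probability $1-o(1/\log r)$. A union bound concludes. The main conceptual obstacle I expect is verifying that the high-bit embedding $M[i] \mapsto M[i] + 2^m y_i$ truly produces the planted distribution $D_1^{\Delta_0}$ when conditioned on solution preservation, because the carries between the low $m$ bits and the high $m_0 - m$ bits introduce a coupling between $M$ and $y$ via the integer $c$; the rest of the argument is bookkeeping that mirrors the vector $k$-SUM proof exactly.
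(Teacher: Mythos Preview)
Your proposal is correct and follows essentially the same approach as the paper's proof: the downward reduction via $M_0[i]\bmod 2^m$, the Markov bound on the solution count, applying \cref{thm: recovery algorithm success amplification} at density $\Delta_0$, and the upward ``append high bits'' embedding $M_0[i]=M[i]+2^m y_i$ iterated $(\log r)^{\alpha+2}$ times are exactly the steps the paper takes (with $\beta_i$ in place of your $y_i$). The carry-coupling concern you flag is the right thing to check and it resolves cleanly, since the conditioning event has probability $2^{m-m_0}$ independent of $c$; the paper simply asserts this step as ``easy to see.''
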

\begin{proof}
    We follow the proof of \cref{thm: vksum success amplification} very closely. We will again assume w.l.o.g. that $\gamma=\Theta\left(\frac{1}{\log^\alpha{r}}\right)$ for some constant $\alpha>1$.  We define, $$\Delta_0 := \frac{k\log{r}}{k\log{r}+(\alpha+1)\log{\log{r}}}$$
    Since we already proved this result for densities $\Delta\leq \Delta_0$ in \cref{sec:hardness amplification for general group}, we only consider the case $\Delta\in(\Delta_0,1]$.
  
    First, we use $\A^\rec$ to construct an algorithm $\A_0$ for solving the same problem at density $\Delta_0$. On input $M_0$ (which is sampled from $D_1^{\Delta_0}$), this algorithm $\A_0$ simply reduces each entry modulo $2^{k\log{r}/\Delta}$ to get a density $\Delta$ instance $M$. It then checks if $\A^\rec(M)$ returns a $k$-tuple which is also a solution for the original input $M_0$, and if so, returns it. Observe that since the entries of $M_0$ are independently sampled uniformly from $\Z_{2^{k\log{r}/\Delta_0}}$ and the latter modulus is a multiple of the new modulus, $M$ has the same distribution as $D_1^\Delta$. Therefore, $\A^\rec$ returns a solution with probability $\gamma$. Any $k$-tuple that sum to 0 in $M_0$ obviously also sum to 0 in $M$, since (in the following $a \mid b$ means that ``$a$ divides $b$''),
    \begin{align*}
        \sum_{i\in\kappa}M_0[i]&=0 \\
        \implies &2^{k\log{r}/\Delta_0}\mid \sum_{i\in\kappa}M_0[i] &&\text{By definition of }+\text{ in }M_0 \\
        \implies &2^{k\log{r}/\Delta}\mid \sum_{i\in\kappa}M_0[i] &&\text{Since }\Delta_0<\Delta \\
        \implies &2^{k\log{r}/\Delta}\mid \sum_{i\in\kappa}(M_0[i] \mod{2^{k\log{r}/\Delta}}) \\
        \implies  &2^{k\log{r}/\Delta}\mid \sum_{i\in\kappa}M[i] &&\text{By design of }\A_0 \\
        \implies &\sum_{i\in\kappa}M[i]=0 &&\text{By definition of }+\text{ in }M
    \end{align*}
    If $M$ has $c$ solutions of $\kSUM$, we can argue by symmetry that they are all equally likely to be a solution for $M_0$. Since $M$ has density $\leq1$, the expected value of $c$ is less than 3. We can therefore apply Markov's inequality (\cref{lemma:markov}) to conclude that with probability at least $1/2$, there are less than 6 $\kSUM$ solutions for $M$. As before, this implies that the success probability of $\A_0$ is at least $\frac{\gamma}{10}=\Omega\left(\frac{1}{\polylog(r)}\right)$. The runtime of $\A_0$ is clearly $\O(mr)+T=\Otilde(T)$.

  We now use our success amplification procedure described in \cref{sec:hardness amplification for general group} to obtain an average-case recovery algorithm $\A_0^\amp$ with runtime $\Otilde(T)$ that solves density $\Delta_0$ instances with probability $1-o(1/\log{r})$.

  \begin{figure}
      \centering
      \begin{tikzpicture}
        \node at (-0.35,6.25) {$M$};
        \draw (0,6) rectangle (4.5,6.5);
        \draw (0,6) rectangle (0.5,6.5);
        \draw[fill=gray] (0.5,6) rectangle (1,6.5);
        \draw (1,6) rectangle (1.5,6.5);
        \draw (1.5,6) rectangle (2,6.5);
        \draw[fill=gray] (2,6) rectangle (2.5,6.5);
        \draw (2.5,6) rectangle (3,6.5);
        \draw (3,6) rectangle (3.5,6.5);
        \draw[fill=gray] (3.5,6) rectangle (4,6.5);
        \draw (4,6) rectangle (4.5,6.5);

        \draw[->] (0.25,6) -- (0.25,0.5);
        \draw[->] (0.75,6) -- (0.75,0.5);
        \draw[->] (1.25,6) -- (1.25,0.5);
        \draw[->] (1.75,6) -- (1.75,0.5);
        \draw[->] (2.25,6) -- (2.25,0.5);
        \draw[->] (2.75,6) -- (2.75,0.5);
        \draw[->] (3.25,6) -- (3.25,0.5);
        \draw[->] (3.75,6) -- (3.75,0.5);
        \draw[->] (4.25,6) -- (4.25,0.5);

        \node at (6.3,5.25) {$\beta_1$};
        \node at (6.3,4.75) {$\beta_2$};
        \node at (6.3,4.1) {$\vdots$};
        \node at (7.9,3.25) {$\beta_i \sample \left[ 2^{k \lg r \left(1/\Delta_0 - 1/\Delta\right)} \right]$};
        \node at (6.3,2.25) {$\vdots$};
        \node at (6.3,1.25) {$\beta_r$};
        \draw (5.5,5.25) -- (0.25,5.25);
        \draw (5.5,4.75) -- (0.25,4.75);
        \draw (5.5,4.25) -- (0.25,4.25);
        \draw (5.5,3.75) -- (0.25,3.75);
        \draw (5.5,3.25) -- (0.25,3.25);
        \draw (5.5,2.75) -- (0.25,2.75);
        \draw (5.5,2.25) -- (0.25,2.25);
        \draw (5.5,1.75) -- (0.25,1.75);
        \draw (5.5,1.25) -- (0.25,1.25);

        \node at (0.25,5.25) {$\oplus$};
        \node at (0.75,4.75) {$\oplus$};
        \node at (1.25,4.25) {$\oplus$};
        \node at (1.75,3.75) {$\oplus$};
        \node at (2.25,3.25) {$\oplus$};
        \node at (2.75,2.75) {$\oplus$};
        \node at (3.25,2.25) {$\oplus$};
        \node at (3.75,1.75) {$\oplus$};
        \node at (4.25,1.25) {$\oplus$};

        \draw (6,5.5) rectangle (5.5,5);
        \draw (6,5) rectangle (5.5,4.5);
        \draw (6,4.5) rectangle (5.5,4);
        \draw (6,4) rectangle (5.5,3.5);
        \draw (6,3.5) rectangle (5.5,3);
        \draw (6,3) rectangle (5.5,2.5);
        \draw (6,2.5) rectangle (5.5,2);
        \draw (6,2) rectangle (5.5,1.5);
        \draw (6,1.5) rectangle (5.5,1);
        
        \node at (-0.45,0.25) {$M_0$};
        \draw (0,0) rectangle (4.5,0.5);
        \draw (0,0) rectangle (0.5,0.5);
        \draw[fill=gray] (0.5,0) rectangle (1,0.5);
        \draw (1,0) rectangle (1.5,0.5);
        \draw (1.5,0) rectangle (2,0.5);
        \draw[fill=gray] (2,0) rectangle (2.5,0.5);
        \draw (2.5,0) rectangle (3,0.5);
        \draw (3,0) rectangle (3.5,0.5);
        \draw[fill=gray] (3.5,0) rectangle (4,0.5);
        \draw (4,0) rectangle (4.5,0.5);

        \draw[->] (4.5,0.25) -- (6.5,0.25);
        \node at (7,0.25) {$\A_0^\amp$};
      \end{tikzpicture}
      \caption{Illustration of how to lift the hardness amplification to all densitites $\Delta \leq 1$ for the case of modular $k$-SUM. We sample random $\beta_i$ from an appropriate range and add to each entry $M[i]$ of the original array, and give the result is given to $\A_0^\amp$.}
      \label{fig:modular ksum}
  \end{figure}
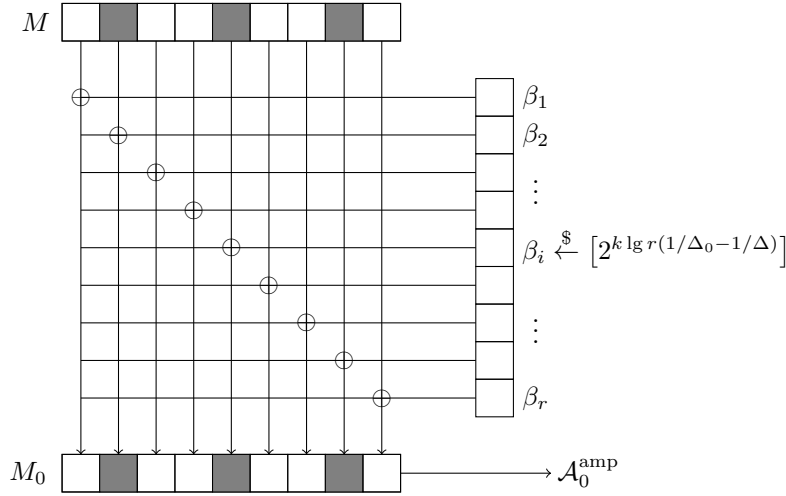
  
  Now we are ready to construct the final algorithm $\B$. Given a density $\Delta$ instance $M$ with dimensions $m\times r$, our algorithm does the following (see also \cref{fig:modular ksum} for an illustration).
  \paragraph{Algorithm $\B(M)$}
  \begin{enumerate}
    \item Define $t:=(\log r)^{\alpha+2}$
    \item Repeat $t$ times:
    \begin{enumerate}[label*=\arabic*.]
      \item Initialize a new empty array $M_0$ of size $r$.
      \item For each $1\leq i\leq r$:
      \begin{enumerate}[label*=\arabic*.]
          \item Sample $\beta_i$ uniformly from $[2^{k\log{r}(1/\Delta_0-1/\Delta)}]$
          \item Set $M_0[i] := M[i]+\beta_i2^{k\log{r}/\Delta}$
      \end{enumerate}
      \item Call $\A_0^\amp$ on the new matrix $M_0$.
      \item If a solution was returned in the last step, check if that is also a solution for $M$. If so, return it.
    \end{enumerate}
  \end{enumerate}

  Observe that $M$ is guaranteed to have at least one solution $S$ since it is sampled from $D_1^\Delta$. The probability of $S$ being a $\kSUM$ solution for $M_0$ is 
  \begin{align*}
      \pr{\sum_{i\in S}M_0[i] = 0} = &\pr{2^{k\log{r}/\Delta_0}\big|\sum_{i\in S}M_0[i]} \\
      =&\pr{2^{k\log{r}/\Delta_0}\big|\sum_{i\in S}\paren{M[i]+\beta_i2^{k\log{r}/\Delta}}}\\
      =&\pr{2^{k\log{r}/\Delta_0}\big|\sum_{i\in S}M[i]+2^{k\log{r}/\Delta}\sum_{i\in S}\beta_i}
      \intertext{Since $S$ is a $\kSUM$ solution in $M$, the first term in the above sum is divisible by $2^{k\log{r}/\Delta}$. We can therefore cancel out $2^{k\log{r}/\Delta}$ on both sides to get}
      =&\pr{2^{k\log{r}(1/\Delta_0-1/\Delta)}\bigg|\paren{\sum_{i\in S}M[i]}\bigg/2^{k\log{r}/\Delta}+\sum_{i\in S}\beta_i} \\
      =&\pr{\sum_{i\in S}\beta_i\equiv-\paren{\sum_{i\in S}M[i]}\bigg/2^{k\log{r}/\Delta} \mod{2^{k\log{r}(1/\Delta_0-1/\Delta)}}}
      \intertext{Since each $\beta_i$ is uniformly random, so is their sum $\beta:=\sum_{i\in S}\beta_i$. Denoting the right hand side of the above congruence by $t$ and letting $x:=2^{k\log{r}(1/\Delta_0-1/\Delta)}$, we have}
      =&\prob{\beta\sample[x]}{\beta\equiv t\mod{x}} = \frac{1}{x} = 2^{k\log{r}(1/\Delta-1/\Delta_0)}.
  \end{align*}
  \noindent As in the proof of \cref{thm: vksum success amplification}, we can bound the probability $p$ that at least one iteration of the inner loop ran $\A^\amp_0$ on an array where $S$ was a solution by 
  \begin{align*}
    p &\geq 1-\left(1-2^{\frac{k\log(r)}{\log{q}}\left(\frac{1}{\Delta}-\frac{1}{\Delta_0}\right)}\right)^t = 1-\left(1-r^{k\,\left(\frac{1}{\Delta}-\frac{1}{\Delta_0}\right)}\right)^t =1-o\left(\frac{1}{\log{r}}\right)
  \end{align*}
  When $M_0$ does have a solution, it is easy to see that it has the same distribution as $D_1^{(\Delta_0)}$ and hence, $\A^\amp_0$ returns a solution with probability $1-o(1/\log{r})$. Taking a union bound, we conclude that with probability $1-o(1/\log{r})$, we call $\A_0^\amp$ on an array with a $\kSUM$ solution at least once \emph{and} it returns that solution. Any solution to $M_0$ is always a solution to $M$, and our algorithm therefore returns it. We thus have the required success probability.

  The runtime of $\B$ is clearly $t$ times the runtime of $\A_0^\amp$. Since $\A_0^\amp$ runs in $\Otilde(T)$ and $t=\mathcal{O}(\polylog(r))$, our algorithm also runs in $\Otilde(T)$ time.
\end{proof}


\pnote{Replaced the corollary with the following}
\begin{corollary}[Strong Hardness of $k$-SUM]
    \label{cor:ksum-amp}
    For any constant $k \geq 3$, the average-case $k$-SUM conjecture (\cref{conj:ksum-intro}) implies that any algorithm that, given $r$ uniformly random integers from $[-r^k,r^k]$ can find a set of $k$ that sum to $0$ with probability $\Omega({1}/{\polylog(r)})$ takes time at least $r^{\ceil{k/2}-o(1)}$.
\end{corollary}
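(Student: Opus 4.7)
The plan is to prove the corollary by contradiction, using the Dinur--Keller--Klein equivalence between integer and modular $k$-SUM, the hardness amplification for modular $k$-SUM (\cref{thm: k-Msum success amplification}), and the planted-to-non-planted reduction at density one (\cref{thm:planted_nonplanted_equivalence}). Suppose for contradiction that an algorithm $A$ finds a $k$-SUM solution among $r$ uniformly random integers from $[-r^k, r^k]$ with probability $\gamma = \Omega(1/\polylog(r))$ in time $T < r^{\ceil{k/2}-o(1)}$. By the DKK reduction (see footnote~\ref{footnote:integers} and its use in \cref{cor:k-SUM low density alg}), we obtain an algorithm $A_1$ that solves non-planted search $k$-SUM over $\G^{(1)}_{\text{$k$-SUM}}$ with success probability $\Omega(\gamma)$ and running time $\Otilde(T)$.

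I would next observe that $A_1$ also solves the \emph{planted} version at density one with essentially the same probability, up to a factor depending only on $k$. By \cref{lemma:explicit_pmf}, $D_1(X) = (|G|/\binom{r}{k})\, c(X)\, D_0(X)$, and at density one $|G|/\binom{r}{k} = \Theta(k!)$ by admissibility (\cref{def:admissible}). Since $A_1(X)$ succeeding implies $c(X) \geq 1$, summing over inputs on which $A_1$ is correct gives
\begin{align*}
    \Pr_{X \sim D_1}[A_1 \text{ succeeds}] \;\geq\; \Omega(k!) \cdot \Pr_{X \sim D_0}[A_1 \text{ succeeds}] \;=\; \Omega_k(\gamma).
\end{align*}
Hence $A_1$ solves planted search $k$-SUM over $\G^{(1)}_{\text{$k$-SUM}}$ with probability $\Omega(1/\polylog r)$ in time $\Otilde(T)$.

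I would then apply \cref{thm: k-Msum success amplification} at $\Delta=1$ to amplify $A_1$ into an algorithm $A_2$ for planted modular $k$-SUM at density one with success probability $1 - o(1/\log r)$ and runtime $\Otilde(T \cdot (k/\gamma^2)^k) = \Otilde(T)$, since $k$ is constant and $\gamma^{-1} = O(\polylog r)$. Finally, \cref{thm:planted_nonplanted_equivalence} guarantees that the same algorithm $A_2$ solves non-planted $k$-SUM over $\G^{(1)}_{\text{$k$-SUM}}$ with probability at least $(1 - o(1/\log r))^{3/2}/(21 k^k) = \Omega_k(1)$ in time $\Otilde(T)$. For constant $k$ this is a constant success probability, contradicting \cref{conj:ksum}, which requires such an algorithm to run in time $\Omega(r^{\ceil{k/2} - o(1)})$; hence $T \geq r^{\ceil{k/2} - o(1)}$, as desired.

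The main potential obstacle in this plan is the very first step: the DKK reduction between integer and modular $k$-SUM must preserve the average-case success probability up to constant factors depending only on $k$. This is essentially standard---a random integer in $[-r^k, r^k]$ reduces modulo $2^{\lceil k \log r \rceil}$ to a distribution within negligible statistical distance of uniform, and each modular solution at density one corresponds to only $O(k)$ integer solutions on average---but one has to verify carefully that the number of spurious modular solutions does not suppress the success probability by more than a constant factor. The remaining three steps (the $k!$ ratio argument, the amplification via \cref{thm: k-Msum success amplification}, and the planted-to-nonplanted transfer via \cref{thm:planted_nonplanted_equivalence}) follow directly from results already established in the paper.
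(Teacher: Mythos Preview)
Your proof is correct and follows essentially the same route as the paper's one-line argument (DKK equivalence, \cref{thm: k-Msum success amplification} at $\Delta=1$, contrapositive). You additionally make explicit two bridging steps the paper leaves implicit: the non-planted $\to$ planted direction via the pointwise bound $D_1(X)\geq (|G|/\binom{r}{k})\,D_0(X)$ on instances with $c(X)\geq 1$, and the planted $\to$ non-planted direction via \cref{thm:planted_nonplanted_equivalence}. Both are needed to connect the amplification theorem (which is stated for planted $k$-SUM) with the conjecture (which is stated for non-planted $k$-SUM), so your write-up is in fact more complete than the paper's.
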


\begin{proof}
    This follows from setting $\Delta=1$ and $k = O(1)$ in \cref{thm: k-Msum success amplification}, using the reduction of $\ksum$ to modular $\ksum$ presented in Theorem 4.5 in \cite{dinur_keller_klein}, and taking the contrapositive.
\end{proof}






\section{Implications to Public-Key Encryption } \label{sec:pke}
In this section, we propose a class of public-key bit encryption schemes based on the planted search $k$-XOR problem and the learning parity with noise (LPN) problem. By instantiating the class appropriately, we strike various trade-offs between the hardness required for LPN and the densities at which we assume planted search $k$-XOR is hard.

\subsection{Preliminaries}
\begin{definition}[Public Key Encryption] \label{def:pke}
    A \emph{Public Key Encryption (PKE)} scheme for a message space $\cM$ consists of PPT algorithms $\pke = (\keygen, \enc, \dec)$ with the following syntax:

\begin{itemize}
    \item $\keygen(1^{\secp}) \xrightarrow{} (pk, sk)$: on input the unary representation of the security parameter $\secp$, generates a public key $pk$ and a secret key $sk$.
    \item $\enc(pk,m) \xrightarrow{} ct$: on input a public key $pk$ and a message $m \in \cM$, outputs a ciphertext $ct$.
    \item $\dec(sk,ct) \xrightarrow{} m$: on input a secret key $sk$ and a ciphertext $ct$, outputs a message $m \in \cM$.
\end{itemize}
The scheme should satisfy the following properties:
    \begin{description}
        \item \textbf{Correctness}. A scheme $\pke$ is \emph{correct} if there exists a negligible function $\negl(.)$ such that for every security parameter $\secp$ and message $\mu \in \cM$ :
        \begin{equation*}
            \Pr[\dec(sk, \enc(pk, \mu)) = \mu] \geq 1 - \negl(\secp) 
        \end{equation*}
        where $(pk, sk) \gets \keygen(1^{\secp})$.
    The scheme is \emph{weakly} correct if the probability of correct decryption is bounded by $1 - o(1)$ instead of $1 - \negl(\secp)$ above.
        
     \item \textbf{CPA Security}. A scheme $\pke$ is \emph{IND-CPA secure} if for any PPT adversary $\sfA$ there exists a negligible function $\negl$ such that: 
        \begin{equation*}
        \label{eq:indcpa}
        \big|\Pr[\pkeGame_{\sfA(\secp)}^{0} = 1] - \Pr[\pkeGame_{\sfA(\secp)}^{1} = 1]\big| \leq  \negl(\secp)
        \end{equation*}
        where $\pkeGame_{\sfA(\secp)}^{b}$ is a game between an adversary $\sfA$ and a challenger $\sfC$ with a challenge bit $b$ defined as follows:
        \begin{itemize}
            \item $\sfC$ samples $(pk, sk) \gets \keygen(1^{\secp})$, and sends $pk$ to $\sfA$.
            \item $\sfA$ chooses $\mu_{0}, \mu_{1} \in \cM$ and sends them to $\cC$.
            \item $\sfC$ computes $ct \gets \enc(pk, \mu_{b})$, and sends $ct$ to $\sfA$.
            \item The adversary $\sfA$ outputs a bit $b'$ which we define as the output of the game.
        \end{itemize}
        The scheme is said to be \emph{weakly} IND-CPA secure if we replace $\negl(\secp)$ with $o(1)$ in the two conditions above. 
    \end{description}    
 \end{definition}  
  \begin{remark}
    Technically speaking, we will obtain a \emph{weak} public-key encryption scheme in the sense that the advantage of the adversary is not negligible in the security parameter but only vanishing. This may be amplified using error-correction with an appropriate hardcore lemma (such as \cite{pke_amplification,holenstein_hardcore}) to get a full-fledged PKE scheme. 
\end{remark}

\begin{definition}[Search LPN Problem]
    An algorithm $\A$ is said to solve the \emph{search Learning Parity with Noise (LPN) problem with noise rate $\eta$} with probability $\epsilon$ if, given $\langle X, Xs+e\rangle$ where $X\gets\F_2^{r\times m}$, $s\gets\F_2^m$, and $e\gets \Ber_\eta^r$, it outputs $s$ with probability at least $\epsilon$, where the randomness is taken over the instance and the random coins used by the algorithm. 
\end{definition}

The best known algorithm for search LPN when $\eta = \Theta(1)$ is by Blum, Kalai and Wasserman~\cite{bkw} that for an $m$-dimensional secret runs in subexponential time $2^{\mathcal{O}(m / \log(m))}$. There is another algorithm by Esser, Kübler and May~\cite{EKM17} that runs in time $2^{\mathcal{O}(\eta m)}$ and thus outperforms the BKW algorithm when the noise-rate is small. We also consider the decision LPN problem that we formally define as follows.
\begin{definition}[Decision LPN Problem]
    The \emph{Decision Learning Parity with Noise (LPN) problem with noise rate $\eta\in(0,1)$} is to distinguish between the following distributions:
\begin{enumerate}
    \item $\langle X,Xs+e \rangle$, where $X\gets\F_2^{r\times m}$, $s\gets\F_2^m$, and $e\gets \Ber_\eta^r$.
    \item $\langle X,y \rangle$, where $X\gets\F_2^{r\times m}$, and $y\gets\F_2^r$.
\end{enumerate}
We say that an algorithm $\A$ solves the decision LPN problem with advantage $\epsilon$ if,
$$
    \Big \lvert \Pr[\A(X,Xs+e) = 1] - \Pr[\A(X,y) = 1] \Big \rvert \geq \epsilon,
$$
where the randomness is taken over the instance and the random coins used by $\A$.
\end{definition}
The search and decision LPN problems are known to be polynomially equivalent, as showed in \cite{lpn_search_decision}. We restate their result as follows.
\begin{lemma}[Search-to-Decision LPN  \cite{lpn_search_decision}]\label{lemma:lpn search decision equiv}
    If there is an algorithm that runs in time $T$ and solves the decision LPN problem with advantage $\epsilon$, then there is an algorithm that runs in time $\mathcal{O}\left(\frac{T\,m \log m}{\epsilon^2}\right)$ and solves the search LPN problem with probability $\epsilon/4$.
\end{lemma}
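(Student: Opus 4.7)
The plan is to implement the classical column-swap plus secret-re-randomization search-to-decision reduction of \cite{lpn_search_decision}. The core gadget, given an LPN instance $(X, y = Xs + e)$, produces a noisy predictor of a single bit $s_i$ of the secret: sample $u \gets \F_2^r$ and form $X'$ by replacing the $i$-th column of $X$ with $u$. A short direct calculation gives $y = X's + s_i(X_i + u) + e$, so if $s_i = 0$ the pair $(X', y)$ is distributed exactly as a fresh LPN sample (with the same secret $s$), whereas if $s_i = 1$ the fresh uniform column makes $y$ uniform and independent of $X'$, so $(X', y)$ is distributed as a uniform pair. Feeding $(X', y)$ to the decision algorithm $\A^\det$ thus yields a bit whose expectation differs by $\pm \epsilon$ depending on $s_i$.

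Next I would amplify the gadget and simultaneously enable recovery of every coordinate using secret re-randomization. Given $(X, y)$, for any $t \in \F_2^m$ the pair $(X, y + Xt) = (X, X(s+t) + e)$ is an LPN instance with secret $s + t$, and composing with the column-swap gadget on coordinate $i$ makes $\A^\det$'s output a noisy predictor of $s_i \oplus t_i$. For each $i \in [m]$, I would sample $K = \Theta(\log m / \epsilon^2)$ independent pairs $(t^{(k)}, u^{(k)})$, collect the oracle outputs $b^{(k)}$, partition them according to the value of $t^{(k)}_i$, and compare the empirical means. Conditioned on the oracle attaining advantage $\Omega(\epsilon)$ on the particular instance $(X, y)$, the two subsets have expected means separated by $\Omega(\epsilon)$ with opposite ordering for $s_i = 0$ and $s_i = 1$; Hoeffding's inequality (\cref{lemma:hoeffding}) then recovers $s_i$ with error probability at most $1/(4m)$, and a union bound over the $m$ coordinates returns the correct $s$ with probability at least $3/4$. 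Each iteration assembles a modified instance in $O(rm)$ time and makes one oracle call of cost $T$, for a total runtime of $O(Tm \log m / \epsilon^2)$, as claimed.

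The main obstacle is that the hypothesized advantage $\epsilon$ is an average-case quantity over random instances, whereas for a specific $(X, y)$ the per-instance advantage $\epsilon_{X, y}$ of $\A^\det$ could be much smaller; this is precisely what forces the final success probability down to $\epsilon/4$ rather than $1 - o(1)$. I would resolve this by a simple Markov averaging step: since $\bbE_{X, y}[\epsilon_{X, y}] \geq \epsilon$ and $\epsilon_{X, y} \leq 1$, a fraction at least $\epsilon/2$ of instances satisfy $\epsilon_{X, y} \geq \epsilon/2$, and on these ``good'' instances the concentration argument above recovers $s$ with probability $1 - o(1)$, yielding overall success probability at least $(\epsilon/2)(1 - o(1)) \geq \epsilon/4$. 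A secondary subtlety is that the $K$ queries used to recover a single bit all share the same underlying matrix $X$ and so are not i.i.d.\ fresh instances; this is handled by noting that, conditioned on $(X, y)$, the outputs $b^{(k)}$ are independent over the fresh randomness of $(t^{(k)}, u^{(k)})$ and the oracle's own coins, which is exactly the condition Hoeffding requires.
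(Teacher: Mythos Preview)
The paper does not actually prove this lemma --- it is quoted from \cite{lpn_search_decision} without proof --- so there is nothing in the paper to compare against. Your plan is the standard column-swap plus secret-rerandomization reduction and is correct in outline.

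The substantive gap is in the Markov step. You posit a single per-instance quantity $\epsilon_{X,y}$ with $\bbE[\epsilon_{X,y}]\geq\epsilon$ and claim that on the $\epsilon/2$ fraction of instances with $\epsilon_{X,y}\geq\epsilon/2$, Hoeffding recovers every coordinate. But the bias your gadget actually supplies for coordinate $i$ is the per-instance gap between the oracle's acceptance probabilities on the two derived distributions \emph{for that particular $i$}, and this depends on $i$: swapping column $1$ versus column $7$ of the same $(X,y)$ feeds $\A^{\det}$ genuinely different input distributions (they share the fixed noise $e$ and the $m-1$ non-swapped columns of $X$, but which column is fresh differs). Averaging shows, for each $i$ separately, that an $\Omega(\epsilon)$ fraction of instances has coordinate-$i$ bias $\Omega(\epsilon)$; it does not produce a single $\Omega(\epsilon)$ fraction on which all $m$ coordinate biases are large simultaneously, which is what your union bound over bits requires. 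You need either an additional symmetrization making the derived distribution independent of which original coordinate is being recovered, or access to enough fresh LPN samples that each of the $O(m\log m/\epsilon^2)$ oracle calls uses a disjoint block of $r$ samples (so every query is a genuinely independent LPN-or-uniform instance and no per-instance averaging is needed). A secondary issue: as written, the ``LPN case'' of your gadget forces the derived secret's $i$-th bit to $0$, so the marginal bias is against that conditioned LPN distribution rather than full LPN, and an adversarial oracle could zero it out for some $i$; this is patched by also adding $c\cdot(X_i+u)$ to $y$ for a fresh random bit $c$, which randomizes the $i$-th bit of the derived secret.
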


\begin{definition}[Hardness of LPN]
\label{def:lpn-search-hard}
    We say that \emph{$\eta$-noise search LPN is $T(m)$-hard} if any algorithm $\A$ that runs in time at most $T(m)$ has success probability at most $o(1/T(m))$ in solving the search LPN problem for secrets of size $m$ with noise rate $\Omega(\eta)$.
\end{definition}
We can leverage the equivalence of \cref{lemma:lpn search decision equiv} to translate hardness of LPN into a bound on the advantage of an algorithm that solves decision LPN.

\begin{corollary}\label{cor:decision lpn advantage}
    If $\eta$-noise Search LPN is $(T(m)^3\cdot m\log{m})$-hard, then there is no algorithm that runs in time $\mathcal{O}(T(m))$ and solves the decision LPN problem with advantage $\Omega(1/T(m))$.
\end{corollary}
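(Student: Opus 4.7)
The plan is to prove the contrapositive using the search-to-decision reduction of \cref{lemma:lpn search decision equiv}. Suppose for contradiction that there exists an algorithm $\A$ running in time $O(T(m))$ that solves decision LPN with advantage $\varepsilon = \Omega(1/T(m))$. I would then invoke \cref{lemma:lpn search decision equiv} on $\A$, which produces an algorithm $\A'$ for search LPN running in time
\[
O\!\left(\frac{T(m) \cdot m\log m}{\varepsilon^2}\right) = O\!\left(T(m)^3 \cdot m \log m\right),
\]
with success probability at least $\varepsilon / 4 = \Omega(1/T(m))$.

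Next I would compare this against the hypothesized $T(m)^3 \cdot m\log m$-hardness of search LPN. By \cref{def:lpn-search-hard} applied with the hardness parameter $T'(m) := T(m)^3 \cdot m\log m$, any algorithm running within this time bound has success probability at most $o(1/T'(m)) = o\!\left(1/(T(m)^3 \cdot m\log m)\right)$. However, $\A'$ fits within this runtime bound and achieves success probability $\Omega(1/T(m))$, which is asymptotically much larger than $o(1/T'(m))$. This contradicts the assumed hardness, so no such $\A$ can exist.

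The proof is essentially a direct substitution into the known reduction, so there is no real obstacle; the only minor subtlety is being careful that the reduction's quadratic loss in $1/\varepsilon$ is exactly what makes the exponent of $T(m)$ in the required search-hardness equal to $3$ (one factor from the reduction's own runtime, two from $1/\varepsilon^2$). All the steps are elementary once the contrapositive framing is set up.
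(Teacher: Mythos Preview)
Your proposal is correct and is exactly the intended argument: the paper states this as an immediate corollary of \cref{lemma:lpn search decision equiv} without writing out a proof, and your contrapositive computation (one factor of $T(m)$ from the decision algorithm's runtime, two from $1/\varepsilon^2$) is precisely the intended derivation.
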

\begin{definition}[Hardness of Planted Search $k$-XOR]\label{def:search-k-xor hardness}
    We say that planted search \emph{$k$-XOR is $T(r)$-hard at density $\Delta$} if any algorithm that runs in time at most $T(r)$
    has a success probability at most $(1-\Omega(1/\log{r}))$ in solving the planted search $k$-XOR problem at density $\Delta$. 
\end{definition}

Similarly, we put together \cref{thm:search-decision-IN,cor:kxor-amp-1} to translate mild hardness of planted search $k$-XOR (\cref{def:search-k-xor hardness}) into a bound on the advantage of an algorithm that solves decision $k$-XOR with any constant advantage.
\begin{corollary}\label{cor:decision kxor advantage}
    For any $k = o(\log{r}/\log\log{r})$, if planted search $k$-XOR is $T(r)$-hard at density $\Delta$, then any algorithm that runs in time $\left(T(r)/r^{2}\right)$
    solves the decision-$k$-XOR problem at density $\Delta$ with advantage at most $o(1)$.
\end{corollary}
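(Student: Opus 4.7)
The plan is to prove this by contrapositive. Assume, toward a contradiction, that there is an algorithm $D$ running in time $\tau(r) = O(T(r)/r^2)$ that solves decision $k$-XOR at density $\Delta$ with advantage $\epsilon(r)$ that is not $o(1)$. After passing to an appropriate subsequence of $r$'s, I may assume $\epsilon = \Omega(1)$, so $D$ has success probability at least $1/2 + \Omega(1)$ as a decision algorithm. The goal is to chain the two hammers developed earlier in the paper--the search-to-decision reduction and the hardness amplification--to produce a planted-search algorithm that runs in time at most $T(r)$ and achieves success probability $1 - o(1/\log r)$, directly contradicting Definition~\ref{def:search-k-xor hardness}.

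First I would feed $D$ into Theorem~\ref{thm:search-decision-IN}. Because the advantage is $\Omega(1)$, the $(k/\epsilon)^4$ factor in the reduction becomes $O(k^4)$, yielding a planted search algorithm $S_1$ with success probability at least $3/4$ and running time
\[
O\bigl(\tau(r)\cdot r\log r\cdot (k/\epsilon)^4\bigr) \;=\; \Otilde\!\left(\frac{T(r)\cdot k^4}{r}\right).
\]
Next, I would apply Corollary~\ref{cor:kxor-amp-1} to $S_1$ in its contrapositive (amplifying) direction with $\gamma = 3/4 = \Omega(1)$. This promotes $S_1$ to an algorithm $S_2$ that solves planted search $k$-XOR with success probability $1 - o(1/\log r)$, and whose running time blows up by a factor of $\Otilde((k/\gamma^2)^k) = \Otilde(k^k)$. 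Composing, $S_2$ runs in time
\[
\Otilde\!\left(\frac{T(r)\cdot k^{k+4}}{r}\right).
\]

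The final step is to verify that the hypothesis $k = o(\log r / \log\log r)$ forces this to be strictly less than $T(r)$ for all sufficiently large $r$. Writing $k = \log r /(f(r)\log\log r)$ for some $f(r)\to\infty$, one has $\log k \leq \log\log r$, so $(k+4)\log k = O(k \log\log r) = O(\log r / f(r)) = o(\log r)$. Hence $k^{k+4} = 2^{o(\log r)} = r^{o(1)}$, and the total runtime of $S_2$ is $\Otilde(T(r)/r^{1-o(1)}) \ll T(r)$. Together with the $1 - o(1/\log r)$ success probability of $S_2$, this contradicts the assumed $T(r)$-hardness of planted search $k$-XOR at density $\Delta$.

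The only nontrivial bookkeeping step--the one that would need care in the writeup--is confirming that both of the $k$-dependent blowups absorb into the $r^2$ slack granted by the theorem statement. The factor of $k^4$ from search-to-decision is clearly subsumed since $k^4 = \polylog(r)$, but the $k^k$ factor from amplification is borderline; it is precisely the $k = o(\log r/\log\log r)$ hypothesis that keeps $k^k$ subpolynomial in $r$ and makes the reduction go through. Once that bound is in hand, the rest is routine composition of the two reductions.
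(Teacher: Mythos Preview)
Your proposal is correct and follows essentially the same approach as the paper, which simply states that the corollary is obtained by putting together \cref{thm:search-decision-IN} and \cref{cor:kxor-amp-1}. You have correctly identified that the role of the hypothesis $k = o(\log r/\log\log r)$ is precisely to keep the $k^{O(k)}$ blowup from the amplification step subpolynomial in $r$, so that the composed reduction stays within the $r^2$ slack in the runtime.
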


\subsection{Construction}
In this section, we propose a class of cryptosystems $PKE_{\eta,k,\ell,m}$ that is parameterized by four functions $\eta : \Nat \rightarrow (0,1)$ and $k,\ell,m : \Nat \rightarrow \Nat$, where for security parameter $r$, $\eta(r)$ is the noise-rate, $k(r)$ is the size of the planted $k$-XOR solution, $m(r)$ is the dimension of the vectors, and $\ell(r)$ is the number of repetitions in the cryptosystem we describe (this is necessary to satisfy correctness). We will show that $PKE_{\eta,k,\ell,m}$ can be instantiated in various ways to obtain public-key encryption, by striking a trade-off between the assumed hardness of LPN and the densities at which planted search $k$-XOR is assumed hard. We can also alternatively trade off the density for the size of the public key. This allows us to obtain public keys of size $r^{1+o(1)}$ at noise rate $\eta=\Theta(1)$ from $2^{m^{0.5}}$-hardness, while the previous best-known construction of PKE from the same assumptions used a public key of size $r^2$. For given $\eta,k,\ell,m$, the cryptosystem is defined as follows.
\paragraph{Key Generation $\KeyGen(1^r)$.}
\begin{enumerate}
    \item  Let $pk \sample \F_2^{m(r) \times r}$.
    \item Choose a random set $sk \subseteq [r]$ with $|sk|=k(r)$.
    \item Let $i \in sk$ be the smallest index and let $pk_{i} \gets - \displaystyle\sum_{\underset{j \neq i}{j \in sk}} pk_{j}$.
    \item Return $\langle pk, sk \rangle$.
\end{enumerate}

\noindent Here, we will interpret each $sk \in \F_2^r$ as the characteristic vector for the set $sk \subseteq [r]$.

\paragraph{Encryption $\Enc(pk, b)$.}
\begin{enumerate}
    \item If $b=0$.
    \begin{enumerate}
        \item[1.1.] Return $C \sample \F_2^{\ell(r) \times r}$.
    \end{enumerate}
    \item If $b=1$.
    \begin{enumerate}
        \item[2.1.] Let $S \sample \F_2^{\ell(r) \times m(r)}$.
        \item[2.2.] Let $E \gets \Ber_{\eta(r)/2}^{\ell(r) \times r}$.
        \item[2.3.] Return $C \gets S\, pk + E$.
     \end{enumerate}
\end{enumerate}

\paragraph{Decryption $\Dec(sk, C)$.}
\begin{enumerate}
  \item Return $0$ if $\lVert C \, sk \rVert_0 > \ell(r) \left(\frac12 - \frac{{(1-\eta(r))}^{k(r)}}{4}\right)$; else return 1.
\end{enumerate}

Moving forward, we will implicitly fix the value of $r$ and thus treat $\eta,k,\ell,m$ as constants such that we may drop their dependence on $r$.

\begin{lemma}\label{lemma:epsilon-correctness}
    Decryption succeeds with probability at least $(1-\eps)$ when $\ell \geq 32 \left(1-\eta\right)^{-2 k}\ln(1/\varepsilon)$.
\end{lemma}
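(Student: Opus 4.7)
My plan is a standard correctness analysis based on examining the distribution of the vector $C\,sk \in \F_2^{\ell}$ used in decryption, separately for $b=0$ and $b=1$, and then applying a concentration bound.

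First, I will observe that the key generation ensures $pk \cdot sk = 0$, since the $i$-th coordinate of $pk$ was set precisely to cancel the contribution of the remaining $sk$-indexed columns. Thus for $b=1$ we have $C\,sk = S\, pk\, sk + E\,sk = E\,sk$, while for $b=0$ we have $C\,sk$ uniform on $\F_2^{\ell}$ (because $sk$ is nonzero and $C$ is uniform). Each row of $E\,sk$ is a sum modulo $2$ of exactly $k$ independent $\Ber_{\eta/2}$ random variables, so by the standard piling-up identity, each coordinate of $E\,sk$ is an independent $\Ber_{p}$ variable with $p = \tfrac{1 - (1-\eta)^k}{2}$. Consequently:
\begin{itemize}
    \item Under $b=0$, $\E\bigl[\lVert C\,sk\rVert_0\bigr] = \ell/2$.
    \item Under $b=1$, $\E\bigl[\lVert C\,sk\rVert_0\bigr] = \ell/2 - \ell(1-\eta)^k/2$.
\end{itemize}
The decryption threshold $\tau = \ell\bigl(\tfrac{1}{2} - \tfrac{(1-\eta)^k}{4}\bigr)$ lies exactly midway between these two means, at distance $\ell(1-\eta)^k/4$ from each.

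Next, I will invoke Hoeffding's inequality (\cref{lemma:hoeffding}) on $\lVert C\,sk\rVert_0$, which is a sum of $\ell$ independent $\{0,1\}$ random variables in both cases. In the $b=0$ case the decryptor errs only if $\lVert C\,sk\rVert_0 \leq \tau$, i.e.\ if the sum falls at least $\ell(1-\eta)^k/4$ below its mean; in the $b=1$ case the decryptor errs only if $\lVert C\,sk\rVert_0 > \tau$, i.e.\ if the sum exceeds its mean by more than $\ell(1-\eta)^k/4$. Hoeffding bounds each of these probabilities by
\[
    \exp\!\left(-\frac{2 (\ell(1-\eta)^k/4)^2}{\ell}\right) = \exp\!\left(-\frac{\ell (1-\eta)^{2k}}{8}\right).
\]

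Finally, I will solve for $\ell$: to make each tail bound at most $\varepsilon/2$ it suffices that $\ell(1-\eta)^{2k}/8 \geq \ln(2/\varepsilon)$, i.e.\ $\ell \geq 8(1-\eta)^{-2k}\ln(2/\varepsilon)$. The hypothesis $\ell \geq 32(1-\eta)^{-2k}\ln(1/\varepsilon)$ comfortably implies this, and a union bound over the two possible plaintexts yields overall decryption failure probability at most $\varepsilon$. There is no real obstacle here: the nontrivial input is the structural fact $pk\,sk = 0$, which makes the noise term $E\,sk$ the only source of error in the $b=1$ branch; once that is noted the rest is a direct Hoeffding/piling-up computation.
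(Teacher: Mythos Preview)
Your proposal is correct and follows essentially the same approach as the paper: compute the bias of each coordinate of $C\,sk$ in the two cases and apply a concentration bound over the $\ell$ repetitions. The only cosmetic differences are that the paper computes the bias via the decomposition $\Ber_{\eta/2} = \Ber_\eta \cdot \Ber_{1/2}$ rather than the piling-up identity, and uses the Chernoff bound (\cref{lemma:chernoff}) rather than Hoeffding, yielding the slightly looser constant $32$ that appears in the statement.
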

\begin{proof}
  It will be convenient for our proof to think of the error of $\Ber_{\eta/2}$ added to each bit as $\Ber_\eta \cdot \Ber_{1/2}$ -- for each location, with probability $\eta$, XOR it with a uniformly random bit. 

  Suppose for $\ell=1$ that we encrypt $b=0$ such that $c$ is a uniformly random vector from $\{0,1\}^r$. It is clear in this case, as $sk \neq 0^r$, that $sk^\top c$ a uniformly random bit, i.e. if $\langle pk, sk \rangle \gets \KeyGen(r)$, then we get that, $
    \Pr[sk^\top c = 1] = \frac12$. If instead we encrypt $b=1$ and there are no errors in the $k$ positions corresponding to the planted set, then $sk^\top c = 0$. This happens with probability $(1-\eta)^k$. If instead, some error occurred in the planted set, $sk^\top c$ will be uniformly distributed. As such, if $c$ is obtained by encrypting $1$,
        $$
        \Pr[sk^\top c=0] = (1-\eta)^{k} +  \frac{1-(1-\eta)^k}2 = \frac12 + \frac{(1-\eta)^k}{2}.
        $$
        This does not provide useful correctness by itself so we amplify the difference by repeating the process $\ell$ times and take the majority vote. Now suppose we have $\ell$ iterations and consider the case of $b=0$. Let $E_i = c_i^\top sk$, for $i=1\ldots \ell$. Then all $E_i$ are i.id. and satisfy $\E[E_i] = \frac12$. It follows by a Chernoff bound (\cref{lemma:chernoff}) that,
    \begin{align*}
      \Pr[\Dec(\Enc(pk,0),sk)=1] 
        &= \Pr\left[\sum_{i=1}^\ell E_i \leq \ell \left(\frac12 - \frac{(1-\eta)^{k}}{4}\right)\right]
        \leq \textsf{exp}\left({-\left(\frac14\left(1-\eta\right)^{k}\right)^2 \ell / 2}\right).
    \end{align*}
    This gives a decryption error of at most $\varepsilon$ whenever $\ell \geq 32 \left(1-\eta\right)^{-2 k}\ln(1/\varepsilon)$. The case for $b=1$ is identical.
\end{proof}

We aim to show that no adversary can efficiently distinguish between an encryption of zero and an encryption of one, as per the usual definition of semantic security \cite{semantic_security}.

\begin{lemma}[Indistinguishability of Cryptosystem]\label{lemma:pke}
    Suppose the following conditions are satisfied for some choice of $\eta$, $k$, $\ell$, $m$ (as functions of $r$), and $T$ (as a function of $m$):
    \begin{enumerate}
        \item[(1)] $k = o(\log(r)/\log\log(r))$ and $\ell = o(T(m))$.
        \item[(2)] $\eta$-noise Search LPN is $(T(m)^3\cdot m\log{m})$-hard (\cref{def:lpn-search-hard}) for secrets of size $m$.
        \item[(3)] Planted search $k$-XOR is $r^{k/2-o(1)}$-hard (\cref{def:search-k-xor hardness}) at density $\frac{k \log r}{m}$. 
    \end{enumerate} 
    Then any adversary that runs in time $\min\left(T(m),r^{k/2-2}\right)$ has an advantage $o(1)$ in distinguishing an encryption of 0 from an encryption of 1 in $PKE_{\eta,k,\ell,m}$.
\end{lemma}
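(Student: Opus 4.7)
The plan is to argue indistinguishability via a standard four-hybrid argument, using decision $k$-XOR to swap between planted and uniform public keys, and using decision LPN (combined with a sub-hybrid over the $\ell$ rows of the ciphertext) to swap between a uniform ciphertext and an LPN-style ciphertext. Concretely, I would define the following distributions on $(pk,C)$: $H_0$ is the real encryption of $0$ (planted $pk$, uniform $C$); $H_1$ replaces the planted $pk$ with a uniformly random $pk \sample \F_2^{m\times r}$ while keeping $C$ uniform; $H_2$ keeps $pk$ uniform but sets $C = Spk + E$ as in the real encryption of $1$; and $H_3$ is the real encryption of $1$ (planted $pk$, ciphertext $Spk+E$). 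By the triangle inequality, it suffices to bound each of the three consecutive distinguishing advantages by $o(1)$.

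For the transitions $H_0 \leftrightarrow H_1$ and $H_2 \leftrightarrow H_3$, I would argue that distinguishing the planted and uniform distributions on $\F_2^{m\times r}$ is exactly the decision $k$-XOR problem at density $\Delta = (k\log r)/m$. Any adversary distinguishing these two hybrids (together with the encryption simulation, which costs only $\O(\ell m r)$) gives a decision $k$-XOR distinguisher of essentially the same running time. By hypothesis (3) together with \cref{cor:decision kxor advantage} (whose applicability relies on $k = o(\log r/\log\log r)$ from hypothesis (1)), no algorithm running in time $r^{k/2-2}$ can achieve advantage more than $o(1)$, so both gaps are $o(1)$.

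For the transition $H_1 \leftrightarrow H_2$, I would set up a sub-hybrid $H_1^{(j)}$ for $j=0,\ldots,\ell$, in which the first $j$ rows of $C$ are sampled as $S_i pk + E_i$ and the remaining $\ell-j$ rows are uniform, so that $H_1^{(0)}=H_1$ and $H_1^{(\ell)}=H_2$. A distinguisher between $H_1^{(j-1)}$ and $H_1^{(j)}$ yields a decision LPN distinguisher by viewing $pk^\top \in \F_2^{r\times m}$ as the LPN matrix, $S_j^\top$ as the LPN secret, and $E_j^\top$ as the LPN noise; all other rows can be simulated internally in time $\O(\ell m r)$. By \cref{cor:decision lpn advantage} applied with hypothesis (2), the advantage per sub-hybrid is at most $o(1/T(m))$ for adversaries running in time $\O(T(m))$. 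Summing over the $\ell$ sub-hybrids gives a total advantage of $o(\ell/T(m)) = o(1)$, using the condition $\ell = o(T(m))$ from hypothesis (1).

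The main obstacle is the bookkeeping for the parameters: I need to ensure that the reductions' running times stay inside both the LPN and the $k$-XOR hardness regimes simultaneously. The $r^{k/2-2}$ bound in the lemma is chosen precisely to absorb the $r^2$ loss from \cref{cor:decision kxor advantage}, while the $\ell = o(T(m))$ condition is exactly what is needed to make the $\ell$-step LPN hybrid collapse to an $o(1)$ term (and to prevent the $\Omega(1/T(m))$ LPN advantage threshold from being exceeded). Beyond these parameter checks, the argument is fully modular and consists of standard hybrid manipulations, so I expect no serious conceptual difficulty once the reductions' time bounds are written out carefully.
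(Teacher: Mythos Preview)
Your proposal is correct and follows essentially the same approach as the paper: the paper's hybrids $H_1^{(0)}, H_0^{(0)}, H_0^{(\ell)}, H_1^{(\ell)}$ are exactly your $H_0, H_1, H_2, H_3$, with the two outer transitions handled by decision $k$-XOR via \cref{cor:decision kxor advantage} and the middle transition by an $\ell$-step LPN hybrid via \cref{cor:decision lpn advantage}. Your parameter bookkeeping (the $r^2$ loss absorbed into $r^{k/2-2}$, and $\ell = o(T(m))$ to keep the LPN hybrid $o(1)$) matches the paper as well.
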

\begin{proof}
    We proceed using a hybrid argument (see \cref{fig:pke hybrid}) and define a class of distributions $H_b^{(i)}$, each of which generates a public key $pk$ using the distribution $D_b$, generates $i$ vectors of the form $s^\top pk + e^\top$ and $\ell-i$ random vectors from $\F_2^r$. Note that $H_1^{(0)}$ is an encryption of 0 and $H_1^{(\ell)}$ is an encryption of 1. 
    
    By our assumption about the hardness of planted search $k$-XOR and \cref{cor:decision kxor advantage}, any adversary that runs in time $r^{k/2 - 2}$ has advantage $o(1)$ in distinguishing $H_1^{(0)}$ from $H_0^{(0)}$ (respectively, $H_1^{(\ell)}$ from $H_0^{(\ell)}$). 
    
    Next, note that $H_0^{(i)}$ differs from $H_0^{(i+1)}$ only by having replaced one random vector with a vector of the form $s^\top pk + e^\top$ -- this is exactly an instance of decision LPN. Thus, by our assumption about the hardness of LPN and \cref{cor:decision lpn advantage} we have that $H_0^{(i)}$ is $o(1/T(m))$-indistinguishable from $H_0^{(i+1)}$ for each $0 \leq i < \ell$, which combined with condition (1) implies that $H_0^{(0)}$ is $o(1)$-indistinguishable from $H_0^{(\ell)}$. 
    
    We then conclude that $H_1^{(0)}$ and $H_1^{(\ell)}$ are $o(1)$-indistinguishable to any adversary running in time $\min\left(T(m), r^{k/2-2}\right)$, which proves the lemma.
\end{proof}

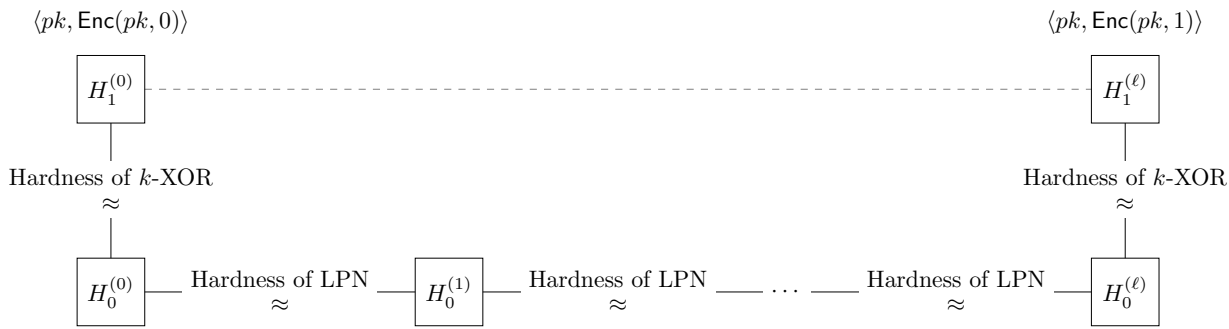
\begin{figure}
    \centering
    \resizebox{\textwidth}{!}{%
    \begin{tikzpicture}
        \node at (0,1) {$\langle pk, \Enc(pk,0)\rangle$};
        
        \draw[dashed,color=gray] (0.5,0) -- (14.5,0);
        
        \draw (-0.5,-0.5) rectangle (0.5,0.5);
        \node at (0,0) {$H_1^{(0)}$};

        \draw (0,-0.5) -- (0,-2.5);

        \draw (-0.5,-2.5) rectangle (0.5,-3.5);
        \node at (0,-3) {$H_0^{(0)}$};
        
        \draw (0.5,-3) -- (4.5,-3);
        
        \draw (4.5,-2.5) rectangle (5.5,-3.5);
        \node at (5,-3) {$H_0^{(1)}$};
        
        \draw (5.5,-3) -- (14.5,-3);
        
        \node[fill=white] at (10,-3) {$\cdots$};
        
        \draw (14.5,-2.5) rectangle (15.5,-3.5);
        \node at (15,-3) {$H_0^{(\ell)}$};

        \draw (15,-0.5) -- (15,-2.5);
        
        \node at (15,1) {$\langle pk, \Enc(pk,1) \rangle$};
        \draw (14.5,-0.5) rectangle (15.5,0.5);
        \node at (15,0) {$H_1^{(\ell)}$};
        
        \node[fill=white] at (0,-1.3) {Hardness of $k$-XOR};
        \node[fill=white] at (0,-1.7) {$\approx$};
        
        \node[fill=white] at (15,-1.3) {Hardness of $k$-XOR};
        \node[fill=white] at (15,-1.7) {$\approx$};
        
        \node[fill=white] at (2.5,-2.8) {Hardness of LPN};
        \node[fill=white] at (2.5,-3.2) {$\approx$};
        
        \node[fill=white] at (7.5,-2.8) {Hardness of LPN};
        \node[fill=white] at (7.5,-3.2) {$\approx$};
        
        \node[fill=white] at (12.5,-2.8) {Hardness of LPN};
        \node[fill=white] at (12.5,-3.2) {$\approx$};
    \end{tikzpicture}
    }
    \caption{Structure of the hybrid argument. The distribution $H_b^{(i)}$ generates a public key $pk$ using the distribution $D_b$, generates $i$ vectors of the form $s^\top pk + e^\top$ and $\ell-i$ random vectors from $\F_2^r$. The distribution $H_1^{(0)}$ thus corresponds to an encryption of 0, while $H_1^{(\ell)}$ is an encryption of 1.}
    \label{fig:pke hybrid}
\end{figure}
Using \cref{lemma:pke}, we can identify various tradeoffs between the hardness assumed for LPN and the densities at which planted search $k$-XOR is assumed to be hard.

\begin{theorem}[PKE from $k$-XOR and LPN]\label{thm:pke_from_lpn}
  There are values of $\eta=\Theta(1)$, $k=\mathcal{O}(\log r)$, $m=r^{o(1)}$, and $\ell = \poly(r)$ such that the cryptosystem $PKE_{\eta,k,\ell,m}$ is a (weak) public-key bit encryption scheme, with all operations running in time $r^{1+o(1)}$, if any of the following conditions are satisfied:
    \begin{enumerate}
        \item[(1)] Constant-noise LPN is $2^{m^{0.5}}$-hard; or,
        \item[(2)] Constant-noise LPN is $2^{m^c}$-hard (for any constant $c>0$) and planted search $k$-XOR is $r^{k/2-o(1)}$-hard at every density $\frac1{\polylog(r)}$; or,
        \item[(3)] Constant-noise LPN is $\poly(m)$-hard and planted search $k$-XOR is $r^{k/2-o(1)}$-hard at every density $\frac{1}{r^{o(1)}}$.
    \end{enumerate}
\end{theorem}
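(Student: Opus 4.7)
The plan is to exhibit, for each of the three cases, parameters $(\eta, k, m, \ell)$ as functions of $r$ for which the conditions of \cref{lemma:pke} (or a simple adaptation) hold, and then invoke \cref{lemma:pke} to conclude security. Throughout I fix $\eta = 1/2$ and let $k = k(r)$ be a slowly growing function with $k = \omega(1)$ and $k = o(\log r / \log\log r)$. By \cref{lemma:epsilon-correctness}, setting $\ell = 32 \cdot 2^{2k} \ln r = 2^{O(k)} \log r$ gives correctness with error $o(1)$; since $k = o(\log r)$, this yields $\ell = r^{o(1)}$, and the running times of KeyGen, Enc, and Dec are all $O((m + \ell) r) = r^{1+o(1)}$ as long as $m = r^{o(1)}$. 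The only remaining freedom is the choice of $m$ (and the precise rate of $k$), which I tune case by case so that the density $\Delta = k \log r / m$ lands in the regime where the stated hardness applies.

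\textbf{Cases (2) and (3).} For case (2) I set $m = \lceil \log^{2/c + 1} r \rceil$ and $k = \log r / \log^3 \log r$, so that $\Delta = \Theta(\log^{-2/c} r) = 1/\polylog(r)$, matching the density at which $k$-XOR is assumed hard. Let $T(m) = 2^{m^c/4}$. Then $T(m)^3 \cdot m \log m \le 2^{m^c}$ for large $r$, so the hypothesised $2^{m^c}$-hardness of LPN gives condition (2) of \cref{lemma:pke}; condition (3) is the $k$-XOR hypothesis; and condition (1) is immediate from our choice of $k$ together with $\ell = r^{o(1)} \ll T(m) = r^{\omega(1)}$. \cref{lemma:pke} then yields weak IND-CPA security against adversaries in time $\min(T(m), r^{k/2-2}) = r^{\omega(1)}$. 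Case (3) is analogous, with $m = 2^{\sqrt{\log r}}$, $k = \sqrt{\log r}/\log\log r$, and $T(m) = m^d$ for a sufficiently large constant $d$; the density becomes $\Delta = r^{-o(1)}$, matching the assumed $k$-XOR regime.

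\textbf{Case (1).} Because condition (1) of the theorem assumes only LPN hardness, I obtain indistinguishability of the public key \emph{statistically} rather than computationally. Take $k = \log r / \log^3 \log r$ and $m = \lfloor k \log r / 2 \rfloor$, putting the density at $\Delta = 2 > 1$. By \cref{thm:stat_indist}, the statistical distance $SD(D_0, D_1) = O(k^k / r^{k(1-1/\Delta)}) = O(k^k / r^{k/2}) = r^{-\omega(1)}$ for this choice of $k$, so the planted and uniform public-key distributions are statistically $o(1)$-close unconditionally. I then replay the hybrid argument in the proof of \cref{lemma:pke}, replacing each of the two endpoint invocations of \cref{cor:decision kxor advantage} by this statistical bound; the interior $\ell$ LPN-based hybrids are unchanged, and their cumulative advantage is bounded by $\ell \cdot o(1/T(m)) = o(1)$ via \cref{cor:decision lpn advantage} with $T(m) = 2^{m^{0.5}/4}$.

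\textbf{Main obstacle.} The primary technical challenge is the tight parameter balance in case (1): the joint constraints $m = r^{o(1)}$ (runtime budget), $k = o(\log r)$ (so that $\ell = r^{o(1)}$), and $\Delta > 1$ (statistical closeness) together force $m = O(k \log r) = o(\log^2 r)$, whence the resulting LPN time bound $T(m) = 2^{m^{0.5}} = 2^{o(\log r)}$ is only super-constant in $r$. One must verify that the cumulative hybrid advantage $\ell \cdot o(1/T(m))$ remains $o(1)$ — this is precisely the content of condition $\ell = o(T(m))$ in \cref{lemma:pke}, and it holds because $\ell$ is chosen to be a strictly smaller function of $r$ than $T(m)$. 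A secondary delicacy is that in cases (2) and (3) the hypothesis concerns \emph{search} $k$-XOR, while the hybrid argument needs the \emph{decision} variant; this conversion is handled by \cref{cor:decision kxor advantage}, which chains the search-to-decision reduction \cref{thm:search-decision-IN} with the hardness amplification \cref{cor:kxor-amp-1} developed earlier in the paper.
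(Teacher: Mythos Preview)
Your approach is essentially the paper's: in all three cases you instantiate \cref{lemma:pke} (or replay its hybrid) with suitable $(\eta,k,m,\ell)$, and for case~(1) you bypass the $k$-XOR assumption by working at density $>1$ and invoking the statistical closeness of \cref{thm:stat_indist}, exactly as the paper does. The concrete parameter choices differ (e.g.\ you take $\eta=1/2$ and $\Delta=2$ where the paper takes $\eta=1/3$ and $\Delta$ a constant just above $1$), but the structure is identical.

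One point worth tightening in case~(3): you fix a constant degree $d$ in $T(m)=m^d$, which with $m=2^{\sqrt{\log r}}$ gives $T(m)=r^{o(1)}$, so \cref{lemma:pke} only yields security against $r^{o(1)}$-time adversaries rather than all PPT adversaries. The paper sidesteps this by letting the degree $c=c(r)$ grow super-constantly (choosing $m$ so that $m^c=r^{k/2}$), which makes $T(m)$ superpolynomial in $r$; adopting the same growing-degree trick would close this gap in your argument.
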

\begin{proof}
Suppose we write the parameters to the cryptosystem as follows (where $\alpha,\beta,\epsilon > 0$ are parameters we will define later). 
\begin{align*}
    \eta = \frac13 &&
    k = \epsilon\,(\log r)^\alpha &&
    m = (\log r)^\beta &&
    \ell = 32\cdot1.5^{2k}\,(\log r)^2
\end{align*}
Here, we will always set $\alpha > 0$ so that $k$ is super-constant. The constraint $\ell = \textsf{exp}(k)\, \omega(\log r)$ is needed by \cref{lemma:epsilon-correctness} to ensure a vanishing decryption error which, moving forward is guaranteed by our choice of parameters. Note that in order for the ciphertext to be superconstant and of polynomial size, we need to have $1/\log \log r < \alpha \leq 1$. Also note that the density in our case is $\frac{k \log r}{m} = \epsilon\, (\log r)^{\alpha + 1 - \beta}$. 

Now to show conclusion (2), suppose LPN is $2^{m^c}$-hard for some constant $c>0$. Suppose we wish to instantiate the cryptosystem at constant density $>1$. In this case, since $k=\omega(1)$, it follows by \cref{thm:stat_indist} that $H_1^{(0)}$ (respectively, $H_1^{(\ell)}$) is, in fact, statistically indistinguishable from $H_0^{(0)}$ (respectively, $H_0^{(\ell)}$). Note that we have $T(m) \approx 2^{m^c/3} = 2^{(\log r)^{\beta c}/3}$ which must satisfy $\beta c > 1$ for $T(m)$ to be superpolynomial. Note that in order to have density $\epsilon$, we need that $\beta = \alpha + 1$ and thus the cryptosystem is only secure if $c \geq \frac12$. Setting $c = \frac12$, $\alpha=1-\delta$, and $\beta = 2-\delta$ for a small constant $\delta >0$ shows conclusion (1) of the theorem.

Suppose instead we allow the density to be sub-constant such that we may circumvent this lower bound. Again with $T(m) \approx 2^{m^c/3} = 2^{(\log r)^{\beta c}/3}$, set $\beta$ such that $T(m) = r^{k/2}$. That is, $(k)\log(r)/2 \approx m^c/3$. By \cref{lemma:pke}, the cryptosystem satisfies indistinguishability against $r^{k/2-2}$-time adversaries if we assume planted search-$k$-XOR is hard at density $\Delta =  \frac{k \log r}{m} \approx \frac{2 m^{c-1}}{3}$. Using the fact that $m = k \log(r) /\Delta$, this is:
$$
    \Delta = \left(\frac{2}{3 \left(k \log(r)\right)^{1-c}} \right)^{\frac1c} = \frac{1}{\polylog(r)},
$$
for any constant $\alpha$ and $\beta$ chosen to satisfy the above conditions, say $\alpha = 1$ and $\beta \approx 2/c$. Applying \cref{lemma:pke} with these parameters now gives conclusion (2). 

Finally, to show conclusion (3), suppose instead that $T(m) = m^{c}$ for some $c=\omega(1)$ but $c \ll m^{0.01}$. In this case, we pick $m$ such that $m^c = r^{k/2}$, and hence using $m=k \log(r)/\Delta$, we need hardness of $k$-XOR at some $\Delta \approx \frac{k \lg r}{r^{\frac{k}{2c}}}$. Now let $k = \min(\sqrt{c},\sqrt{\log{r}})$, then we can choose $m=r^{o(1)}$ such that $\Delta = \frac{k \lg r}{r^{o(1)}}$, for which again, we know planted search $k$-XOR is hard by assumption. Applying \cref{lemma:pke} now gives the conclusion.

\end{proof}

\noindent Observe that the size of the public key size is $m\cdot r$ bits. The encryption time is $\mathcal{O}(\ell\cdot m\cdot r)$, while decryption takes time $\mathcal{O}(\ell\cdot r)$. In all of the instantiations in \cref{thm:pke_from_lpn}, the above quantities are at most $r^{1+o(1)}$.

We note that \cref{thm:pke_from_lpn}(1) was also previously shown by Yu and Zhang \cite{lpn_pke}, who use a different construction to build public-key encryption from $2^{m^{0.5}}$-hardness of constant-noise LPN. Their cryptosystem has a public key of size of $r^{2}$, while the above cryptosystem has a public key of size $mr = r^{1+o(1)}$, which is substantially smaller for large $r$.

\iftrue
\section*{Acknowledgments}
Nikolaj thanks Ivan Damg{\aa}rd for discussions related to the public-key encryption scheme. We also thank Rachel Lin for encouraging us to think about the case of $k$-SUM with super-constant $k$, and Eldon Chung for helpful discussions.
\fi
\clearpage
\bibliographystyle{alpha}
\bibliography{refs}

\clearpage
\appendix
\section{PKE from LWE and $k$-SUM}
\label{sec:pke-lwe}

In this section, we show how to build a public key encryption scheme from hardness of vector $k$-SUM and the Learning With Errors (LWE) problem with super-constant modulus to noise ratio.
\subsection{Reduction from $k$-SUM to Vector $k$-SUM}
We first show via a reduction that the hardness of decision vector $k$-SUM may be based on the hardness of decision $k$-SUM.

\begin{restatable}{lemma}{vectorksum}
\label{reduction-vkSUM}
    If there is  an algorithm that solves decision {\vkSUM} at density $\Delta$ in time $T(r)$ with advantage $\eps$, then there is an algorithm that solves decision {\kSUM} at density $\Delta$ in time $O(T(r)\cdot k^m)$ with advantage $\eps$.
\end{restatable}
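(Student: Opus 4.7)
The plan is to reduce a decision \kSUM instance over $\Z_{q^m}$ to (at most) $k^m$ decision vector \kSUM instances over $\F_q^m$, by enumerating over possible carry patterns. Since $|\Z_{q^m}| = |\F_q^m|$, the two problems sit at the same density~$\Delta$. Given an instance $X = (X_1, \ldots, X_r) \in \Z_{q^m}^r$, I first turn it into a vector instance $Y \in (\F_q^m)^r$ by writing each $X_i$ in base~$q$: $X_i = \sum_l Y_{i,l}\, q^{l-1}$ with $Y_{i,l} \in \{0, \ldots, q-1\}$. If $X \sim D_0$, then $Y$ is uniform in $(\F_q^m)^r$, i.e., distributed as $D_0$ for the vector problem. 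The subtlety is that the integer identity $\sum_{i \in S} X_i \equiv 0 \pmod{q^m}$ does not translate to a vector identity $\sum_{i \in S} Y_i = 0$ in $\F_q^m$, because of carries across chunks.

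The key observation is that $\sum_{i \in S} X_i \equiv 0 \pmod{q^m}$ is equivalent to the existence of a carry sequence $(c_1, \ldots, c_{m-1}) \in \{0, \ldots, k-1\}^{m-1}$ (with the convention $c_0 = 0$) such that $\sum_{i \in S} Y_{i,l} \equiv -c_{l-1} \pmod q$ for every $l \in [m]$. There are at most $k^{m-1} = O(k^m)$ such carry sequences. For each guess $\vec c$, I will construct a modified vector instance $Y^{\vec c}$ in which a $k$-subset summing to the ``shifted target'' $\vec t^{\vec c} := (-c_{l-1})_l$ in $Y$ becomes a zero-sum $k$-subset in $Y^{\vec c}$. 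When $\gcd(k, q) = 1$ (in particular when $q$ is chosen coprime to $k$), this is achieved by adding $-\vec t^{\vec c}/k \in \F_q^m$ to every element of $Y$, a transformation that preserves uniformity; when $\gcd(k, q) > 1$, I will employ an alternative transformation (e.g., modifying a single designated element with its role absorbed into the carry enumeration) so that the $O(k^m)$ bound on the number of calls is maintained.

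Finally, I will invoke the vector \kSUM decision oracle on each $Y^{\vec c}$ and aggregate the $O(k^m)$ outputs into a single decision. The total runtime is $O(T(r) \cdot k^m)$ as required. The main technical obstacle is arguing that this aggregation preserves the oracle's advantage $\eps$ exactly. On $X \sim D_0$ each $Y^{\vec c}$ is uniform, so every query receives an identically distributed input. On $X \sim D_1$ with planted carries $\vec c^{\ast}$, only the query $\vec c = \vec c^{\ast}$ produces a vector $D_1$ instance, while the other $k^{m-1}-1$ queries produce ``mis-targeted'' instances whose marginal is close to vector $D_0$. I expect the bulk of the work to lie in either (i) first boosting the oracle to one-sided error via standard amplification (absorbed into the $O(k^m)$ factor) and then taking an OR across queries, or (ii) arguing directly that the joint distribution of the $k^m$ shifted instances is correlated enough that a majority-style combination transfers advantage $\eps$ faithfully from the vector problem to the integer problem.
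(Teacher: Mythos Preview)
Your reduction is exactly the paper's: write each element of $\Z_{q^m}$ in base $q$, enumerate the $k^m$ possible carry vectors $v$, and for each $v$ shift every column by $v/k$ (the paper takes $q$ prime and implicitly assumes $\gcd(k,q)=1$, which you make explicit) so that a planted $k$-sum in $\Z_{q^m}$ becomes a zero $k$-sum over $\F_q^m$ for the correct $v$. The paper then simply takes an OR over the oracle's answers across all $v$, returning $1$ if any call returns $1$.

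You are right to flag the aggregation as the delicate step; the paper does not justify it any more carefully than you do---it just asserts that the advantage $\eps$ transfers. Your option (i) (boost to one-sided error, then OR) is essentially what the paper writes down, but neither you nor the paper explains how to obtain one-sided error within the stated time budget. Your option (ii) is not obviously helpful, since only the single correct guess $v^\ast$ yields a genuine $D_1$ sample while the others are neither $D_0$ nor $D_1$. In short: your plan matches the paper's proof, and the concern you raise about advantage preservation is a genuine soft spot that the paper's own argument shares.
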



\begin{proof}
    Assume that we have an oracle access to an adversary $\cA$ that solves decision {\vkSUM}. It follows from definition of decision {\vkSUM} that $\cA$ takes as input $r$ vectors of length $m$ each of whose elements are members of $\Z_q$.

    For sake of simplicity, we assume that we are trying to solve decision {\kSUM} modulo $q^m$, and that $q$ is a prime. It is easy to verify that a density $\Delta$ instance consists of $r$ elements.

    Our first step involves expressing each element in base $q$. This transforms an element of $\Z_{q^m}$ into a vector of size $m$ with each element belonging to $\Z_q$. Observe that if $k$ elements added up to zero, each element of the sum of their image vectors must be within $[-(k-1), 0] \mod{q}$ since the carry at each position must be less than the number of summands. This suggests the following algorithm.

    \paragraph{Algorithm $\cB_{1}(x_1,x_2,\ldots,x_r)$}
    \begin{enumerate}
        \item Create matrix $Y$ of size $m\times r$ where $Y[i][j]$ is the $i$th digit of the base $q$ representation of $x_j$.
        \item For each vector $v\in[0, k-1]^m \mod{q}$:
        \begin{enumerate}
            \item[2.1.] Create a new matrix $Y_v$ where $Y_v[i][j] = Y[i][j] + v_i/k$
            \item[2.2.] Call $\cA$ on $Y_v$. Say it returns $b'$
            \item[2.3.] If $b' = 1$ return $1$
        \end{enumerate}
        \item Return $0$
    \end{enumerate}
    Let us assume that the input $X$ is a planted instance of decision {\kSUM} at density $\Delta$. Let $S$ be the set of planted indices. Observe that there exists some particular $v\in[0, k-1]^m \mod{q}$ such that $\sum_{j\in S}Y[i][j] + v[i] = 0 \mod{q}$ for all $i\in[m]$ ($v$ can be thought of as the carry vector). Since $|S|=k$, the above equation implies $\sum_{j\in S}Y_v[i][j]= 0 \mod{q} \hspace{4 pt}\forall i\in[m]$. Furthermore, it is easy to see that $Y_v$ is sampled from the planted distribution on $\Z_q^m$. Let us further assume that the input $X$ is a non-planted instance of decision {\kSUM} at density $\Delta$. This means that all the elements   $\{x_{i}\}_{i \in r]}$ are sampled uniformly at random $\implies Y_{v}$ is identical to being sampled uniformly. 

    By the above, we have shown that the distribution from which $Y_{v}$ is sampled is identical to the distribution from which an instance of decision {\kSUM} is sampled in both planted and non-planted cases respectively. Hence, if $\cA$ solves decision {\vkSUM} with advantage $\epsilon$ then $\cB_{1}$ solves decision {\kSUM} with success probability $\epsilon$.

    The runtime of this algorithm is $k^m$ times the runtime of $\cA$ which simplifies to $O\left(k^{m}\,T\right)$.
\end{proof}
\noindent
\textbf{Note}: At low densities ($\Delta \leq \frac{1}{r^{0.5+\epsilon}\cdot \log{q}}$) where $\epsilon \geq \frac{3}{k-3}$, decision {\vkSUM} can be solved in time $o(r^{\ceil{\frac{k}{2}}})$ by the algorithm described in Lemma \ref{thm:faster-algo-kxor}.

\begin{corollary}\label{cor:vk-SUM-hardness}
    If planted search {\kSUM} problem is hard at density $\Delta$, then there is no algorithm that runs in time $\mathcal{O}\left(r^{k/2\,(1-\Omega(1))} \, k^{-m}\right)$ and solves the decision {\vkSUM} problem at density $\Delta$ with advantage $o(1)$.
\end{corollary}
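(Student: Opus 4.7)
I would prove the contrapositive: assume an adversary $\cA$ for decision \vkSUM at density $\Delta$ that runs in time $T_v = \mathcal{O}\!\left(r^{k/2\,(1-\Omega(1))}\cdot k^{-m}\right)$ with non-vanishing advantage (interpreting the statement as ruling out advantage $\Omega(1)$; the case of subconstant but non-negligible advantage is handled identically), and derive an algorithm for planted search \kSUM at density $\Delta$ that contradicts the hardness hypothesis. The derivation chains together three results already established in the paper.

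First, I would apply Lemma~\ref{reduction-vkSUM} to $\cA$: that reduction converts a decision \vkSUM solver with advantage $\eps$ and runtime $T_v$ into a decision \kSUM solver with the same advantage $\eps$ and runtime $\mathcal{O}(T_v\cdot k^m)$. Plugging in $T_v$, we obtain an algorithm $\cB_1$ for decision \kSUM at density $\Delta$ with advantage $\Omega(1)$ and runtime $\mathcal{O}(r^{k/2\,(1-\Omega(1))})$. Second, I would apply the Impagliazzo--Naor search-to-decision reduction (Theorem~\ref{thm:search-decision-IN}) to $\cB_1$. Since $\cB_1$ has constant advantage (corresponding to decision success probability $1/2+\Omega(1)$), this yields an algorithm $\cB_2$ for planted search \kSUM at the same density with success probability at least $3/4$, at a multiplicative cost of $O(r\log r\cdot (k/\eps)^4)$. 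For constant $k$ and constant $\eps$ this is $\widetilde{O}(1)$, so $\cB_2$ still runs in time $\widetilde{O}(r^{k/2\,(1-\Omega(1))})$.

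Third, I would invoke the hardness amplification theorem for modular \kSUM (Theorem~\ref{thm: k-Msum success amplification}) on $\cB_2$. Because $\cB_2$ already has constant success probability $3/4 = \Omega(1/\polylog(r))$, amplification to success probability $1-o(1/\log r)$ incurs only a factor of $\widetilde{O}((k/\gamma^2)^k)$, which is polylogarithmic for constant $k$. The resulting algorithm $\cB_3$ runs in time $\widetilde{O}(r^{k/2-\Omega(1)})$ and solves planted search \kSUM at density $\Delta$ with success probability $1 - o(1/\log r)$, directly contradicting the assumed hardness of planted search \kSUM at density $\Delta$ in the sense of Definition~\ref{def:search-k-xor hardness}.

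The main obstacle is purely bookkeeping: one has to verify that the polynomial factor from the search-to-decision step and the polylogarithmic factor from the amplification step are both absorbed into the $r^{\Omega(1)}$ slack in the exponent, so that the final algorithm strictly beats the hardness threshold. This is clean for constant $k$, which is why the factor $k^{-m}$ (and not something tighter) appears in the runtime bound of the corollary; for super-constant $k$ the amplification overhead $(k/\gamma^2)^k$ would no longer be absorbable and the conclusion would have to be weakened accordingly.
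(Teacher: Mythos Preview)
Your proposal is correct and follows essentially the same approach as the paper: the paper's proof is a one-line note that the corollary follows from the hardness of planted search $k$-SUM, Theorem~\ref{thm:search-decision-IN}, and Lemma~\ref{reduction-vkSUM}. You chain these in the same order (reduction to decision $k$-SUM, then search-to-decision), and you add a third step---hardness amplification via Theorem~\ref{thm: k-Msum success amplification}---that the paper omits. That extra step is appropriate if ``hard'' is read according to Definition~\ref{def:search-k-xor hardness} (success probability $\leq 1 - \Omega(1/\log r)$), since the Impagliazzo--Naor reduction only delivers success probability $3/4$; under the more informal conjecture-style reading (no constant-probability algorithm in the stated time) the amplification is unnecessary but harmless. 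One small technical point: the $k$-SUM instance produced by Lemma~\ref{reduction-vkSUM} lives over $\mathbb{Z}_{q^m}$ with $q$ prime, not over the power-of-two modulus in the statement of Theorem~\ref{thm: k-Msum success amplification}, so strictly speaking you would fall back on the general-group amplification Theorem~\ref{thm: recovery algorithm success amplification} (which requires density slightly below $1$); the paper is equally loose here.
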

\noindent
Note that this follows from conjectured hardness of the planted $k$-SUM problem, \cref{thm:search-decision-IN}, and the above reduction.
\subsection{Reduction from Vector $k$-SUM to Targeted Vector $k$-SUM}
\label{sec:targeted-ksum}
\begin{definition} [Targeted Vector $k$-SUM]
  \label{def:targeted-kSUM}
  For any $r \in \N$, $k\in\Nat$, $q \in \N$, $\Delta \in \R$, an algorithm $\A$ is said to solve the \emph{{\tvkSUM}} with success probability $\eps$ if for both $b\in\{0,1\}$,
\begin{align*}
  \underset{X \sim {\sf{D}}_b^{(r)}}{\Pr}\left[\A(X) = b\right] \geq \eps.
\end{align*}
If $\eps = 1-o(1)$, we simply say that $\A$ solves the {\tvkSUM}.
\end{definition}
\noindent
where the distributions are defined below:
\paragraph{Distribution ${\sf{D}}_0^{(r)}$}
\begin{enumerate}
  \item Sample $r$ group elements $X_1, X_2, \ldots, X_{r}$ i.i.d. uniformly at random from   $\G_{\textsc{vector-$(q,k)$-SUM}}^{(\Delta)}$
  \item Return $(X_{1},X)$
\end{enumerate}
\paragraph{Distribution ${\sf{D}}_1^{(r)}$}
\begin{enumerate}
  \item Sample $r-1$ group elements $X_2, \ldots, X_{r}$ i.i.d. uniformly at random from $\G_{\textsc{vector-$(q,k)$-SUM}}^{(\Delta)}$
  \item Choose a random set $S \subseteq [2,r]$ with $|S|=k-1$.
  \item Compute $X_{1} := \sum_{{j \in S}} X_j$
  \item Return $(X_{1},~X)$
\end{enumerate}
\noindent
Note that {\tvkSUM} as defined above is a decision problem. We do not prepend the word 'decision' to it because we only use the decision version of  {\tvkSUM} in this paper and it's use is restricted to the appendix.
\begin{lemma}\label{reduction-tkSUM}
    If  there is an algorithm that solves the targeted vector $k$-SUM problem at density $\Delta$ in time $T$ with advantage $\epsilon$, then there exists an algorithm that solves decision vector {\kSUM} at density $\Delta$ in  expected time $r\Theta(T)$ with advantage $\epsilon$.
\end{lemma}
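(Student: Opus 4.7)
The plan is to reduce decision vector $k$-SUM to targeted vector $k$-SUM by exploiting the fact that the targeted distribution ${\sf D}_1$ is essentially the decision distribution $D_1$ conditioned on position $1$ belonging to the planted $k$-subset. Given an instance $Y = (Y_1, \ldots, Y_r)$ of decision vector $k$-SUM, the reduction $\cB$ randomizes over which position is treated as the "target": for each $i \in [r]$, construct a candidate targeted instance $Z^{(i)}$ by relocating $-Y_i$ (or $Y_i$, adjusted for the group's sign convention) to position $1$ and placing the remaining $Y_j$'s at positions $2$ through $r$, then invoke the targeted oracle $\cA$ on $Z^{(i)}$ to collect a bit $b_i$. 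This makes $r$ oracle calls, each requiring time $T$, matching the $r \, \Theta(T)$ runtime.

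The analysis naturally splits into two regimes. If $Y \sim D_0$, then since all entries of $Y$ are i.i.d.\ uniform, each $Z^{(i)}$ is distributed exactly as ${\sf D}_0$, so $\E[b_i] = p_0 = \Pr[\cA({\sf D}_0) = 1]$. If instead $Y \sim D_1$ with planted $k$-subset $K$, then for the $k$ indices $i \in K$ the rearranged instance $Z^{(i)}$ is distributed exactly as ${\sf D}_1$ (this is where the ``ignoring sign'' observation of the proof of \cref{reduction-vkSUM} is used, and is harmless over any abelian group since we can negate freely), while for the $r-k$ indices $i \notin K$ the instance $Z^{(i)}$ falls in a ``mixed'' distribution where position $1$ is uniform and independent but the planted $k$-tuple is confined to positions $2, \ldots, r$.

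To aggregate, I would output $1$ iff $\sum_i b_i$ exceeds a threshold; the expectation under $D_0$ is $r p_0$, and under $D_1$ it is $k p_1 + (r-k) p_{\mathrm{mix}}$, where $p_{\mathrm{mix}}$ is the oracle's acceptance probability on the mixed distribution. The main obstacle will be controlling $p_{\mathrm{mix}}$: the natural hybrid argument says that if $|p_{\mathrm{mix}} - p_0|$ is large, then the oracle already distinguishes a $D_1$-like instance on $r-1$ positions from uniform, yielding a direct decision algorithm of advantage at least $\epsilon$ with a single oracle call; otherwise, $p_{\mathrm{mix}}$ is close to $p_0$, so the gap $\E[\sum b_i \mid D_1] - \E[\sum b_i \mid D_0] \approx k(p_1 - p_0) = k\epsilon$, and thresholding recovers the distinguishing advantage.

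The subtle technical point is bounding the correlations between the $b_i$'s (they share the same underlying $Y$ and hence are not independent), which I would handle by conditioning on $Y$ and applying concentration of the oracle's internal randomness, combined with a symmetrization over the random choice of the planted set $K$. Alternatively, a cleaner route is an inductive argument on $r$: assume $\cB$ already achieves advantage $\epsilon$ on instances of size $r-1$ (with a trivial brute-force base case when $r \leq 2k$), and show that one step of the above randomized construction, combined with a recursive call on the sub-instance $(Y_2, \ldots, Y_r)$, preserves the advantage. The induction closes because $D_1^{(r)}$ splits as a convex combination of ${\sf D}_1^{(r)}$ (with weight $k/r$) and a distribution whose projection onto $[2,r]$ is exactly $D_1^{(r-1)}$ (with weight $1 - k/r$), letting the two cases feed into the targeted oracle and the recursive decision solver respectively.
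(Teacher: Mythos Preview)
Your core reduction---place (the negation of) one entry in the target slot and the remaining $r-1$ entries in the array, then invoke the targeted oracle---is exactly the paper's. The paper's algorithm $\cB_2$ just picks a fresh random permutation $\pi$, feeds $(x_{\pi(1)}; x_{\pi(2)},\ldots,x_{\pi(r)})$ to $\cA$, and ORs the answers over $r$ iterations; its entire justification is the one-line observation that the planted and targeted distributions coincide once the planted index is permuted to position~$1$, and that this occurs within $r$ iterations in expectation. The paper does \emph{not} analyze the ``mixed'' case you isolate (target uniform, array carrying the planted $k$-set), nor does it discuss how the OR rule interacts with $p_0$ and $p_{\mathrm{mix}}$.

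So you have gone further than the paper, not taken a different route. That said, neither of your proposed completions closes cleanly. For the thresholding approach, the expectation gap $k(p_1-p_0)+(r-k)(p_{\mathrm{mix}}-p_0)$ can vanish or flip sign for adversarial $p_{\mathrm{mix}}$, and the correlation among the $b_i$'s (all sharing $Y$) blocks a direct Chernoff step; your dichotomy on $|p_{\mathrm{mix}}-p_0|$ only shows that \emph{some} algorithm (possibly one that ignores the target and solves decision vkSUM on $r-1$ elements) has nontrivial advantage, not that a single fixed procedure achieves $\epsilon$. For the recursive route, the convex split $D_1^{(r)}=\tfrac{k}{r}\,{\sf D}_1^{(r)}+(1-\tfrac{k}{r})\,[\text{unif}\times D_1^{(r-1)}]$ is correct and clean, but you have not specified how to combine the bit from $\cA$ with the recursive answer so that the advantage stays at $\epsilon$ rather than decaying by a $(1-k/r)$ factor per level; the density also drifts along the recursion. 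These are exactly the points the paper's terse proof leaves unaddressed, so your proposal is at least as rigorous as what appears there, but you should be aware that the lemma as stated (preserving advantage $\epsilon$ exactly) does not follow from either sketch without further work.
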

\begin{proof}
Below, we give a reduction (algorithm $\cB_{2}$) from {\tvkSUM} to {\vkSUM}.
\paragraph{Algorithm $\cB_{2}(x_1,x_2,\ldots,x_r)$}
\begin{enumerate}
   \item Repeat atmost $r$ times:
        \begin{enumerate}
            \item Pick a permutation $\pi:[r] \xrightarrow{} [r]$ uniformly. Let $x' := [x_{\pi(1)},\ldots,x_{\pi(r)}]$.
        \item Call $\cA$ on the vector $[x'_{2},\ldots,x'_{r}]$ and the target $x'_{1}$. Say it returns value $b$.
        \item If $b = 1$ then return $1$ (Indicating a planted set)
        \end{enumerate}
        \item Return $0$
    \end{enumerate}
    We can re-interpret the decision {\vkSUM} problem defined at density $\Delta$ as follows:
\begin{enumerate}
    \item Pick the (planted) index $i \sampleU [r-k+1]$.
    \item Pick a set $S$ uniformly such that from $S \subseteq [r]\setminus [i]$ and $|S| = k-1$.
    \item Pick $a_{1},\ldots a_{r}$ for all indices uniformly at random except $i$.
    \item Let $x_{i} = -\sum_{j \in S}x_{j}$ (or) $x_{i} \sampleU \Z_{q}^{m}$
    \item Output: $[x_{1},\ldots,x_{r}]$.
\end{enumerate}
The {\tvkSUM} problem at density $\Delta$ can be re-interpreted as follows:
\begin{enumerate}
    \item Pick a set $S$ uniformly such that from $S \subseteq [2,r]$ and $|S| = k-1$.
    \item Pick $x_{2},\ldots x_{r}$ for all indices uniformly.
    \item Let $x_{1} = \sum_{j \in S}x_{j}$ (or) $x_{1} \sampleU \Z_{q}^{m}$
    \item Output: $([x_{2},\ldots,x_{n}], x_{1})$.
\end{enumerate}
Notice that the distributions of decision {\vkSUM} and {\tvkSUM} are equivalent at density $\Delta$, conditioned on the fact that (planted) index $i$ sampled in {\vkSUM} is equal to $1$. And by re-permuting the indices after every iteration in our reduction, we ensure that the planted index gets permuted to the location $1$ (this happens in expected no.of iterations = $r$).
\end{proof}
\begin{corollary}\label{cor:tvk-SUM-hardness}
    If planted search {\kSUM} is hard at density $\Delta$, then there is no algorithm that runs in time $\mathcal{O}\left(r^{k/2\,(1-\Omega(1))} \, k^{-m}\right)$ and solves the {\tvkSUM} problem at density $\Delta$ with advantage $o(1)$.
\end{corollary}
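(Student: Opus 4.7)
The plan is a simple chain: contrapose, apply the reduction of \cref{reduction-tkSUM}, and then appeal to \cref{cor:vk-SUM-hardness}. Concretely, suppose for contradiction that planted search $\kSUM$ is hard at density $\Delta$ but that there is an algorithm $\cA$ which solves {\tvkSUM} at density $\Delta$ in time $T = \mathcal{O}\bigl(r^{k/2\,(1-\Omega(1))}\, k^{-m}\bigr)$ with advantage $\eps(r)$ that is not $o(1)$ (that is, $\eps(r) = \Omega(1/\polylog(r))$ along some subsequence of $r$'s). We aim to derive an algorithm for decision {\vkSUM} with a runtime still of the form $\mathcal{O}(r^{k/2\,(1-\Omega(1))}\, k^{-m})$ and the same advantage, contradicting \cref{cor:vk-SUM-hardness}.

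The first step is to feed $\cA$ into the reduction $\cB_2$ from \cref{reduction-tkSUM}. This produces an algorithm for decision {\vkSUM} at density $\Delta$ with the same advantage $\eps(r)$, whose \emph{expected} running time is $r \cdot \Theta(T) = \mathcal{O}\bigl(r^{1+k/2\,(1-\Omega(1))}\, k^{-m}\bigr)$. The additive ``$+1$'' in the exponent is harmless: if the original bound reads $k/2 \cdot (1 - c)$ for some constant $c>0$, then for any sufficiently large $k$ we still have $1 + k/2 \cdot (1-c) \leq k/2 \cdot (1 - c')$ for a slightly smaller constant $c' > 0$, so the runtime is absorbed back into the class $\mathcal{O}(r^{k/2\,(1-\Omega(1))}\, k^{-m})$.

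The second step is to convert this expected-time algorithm into a worst-case algorithm. Since $\cB_2$'s runtime is dominated by the number of times it calls $\cA$, and the number of iterations needed before a random permutation places the planted index at position $1$ is geometrically distributed with mean $r$, we simply cap the outer loop of $\cB_2$ at $C r$ iterations for a sufficiently large constant $C$. A standard Markov argument shows that with probability at least, say, $1 - 1/r$ the uncapped algorithm would have halted within $Cr$ steps, so the capped algorithm still achieves advantage $\eps(r) - o(1)$, which remains not $o(1)$. Its worst-case runtime is now $\mathcal{O}\bigl(r^{k/2\,(1-\Omega(1))}\, k^{-m}\bigr)$.

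Putting these pieces together yields a worst-case algorithm for decision {\vkSUM} at density $\Delta$, with runtime $\mathcal{O}(r^{k/2\,(1-\Omega(1))}\, k^{-m})$ and advantage that is not $o(1)$. This directly contradicts \cref{cor:vk-SUM-hardness}, completing the proof. The only mildly subtle step is bookkeeping: one must be careful that the factor of $r$ picked up by \cref{reduction-tkSUM} and the loss from Markov-truncating the expected runtime can be re-absorbed into the expression $r^{k/2\,(1-\Omega(1))}$, which is why the exponent is stated with the flexible $\Omega(1)$ slack rather than an explicit constant.
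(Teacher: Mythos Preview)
Your proposal is correct and follows exactly the approach the paper takes: the paper's proof is the one-line remark that the corollary ``directly follows from \cref{cor:vk-SUM-hardness} and the above reduction,'' and you have simply spelled out the bookkeeping (absorbing the factor of $r$ into the $r^{k/2(1-\Omega(1))}$ slack, and converting expected time to worst case). One minor point: in the paper's reduction $\cB_2$ the loop is already capped at $r$ iterations, so the Markov-truncation step is not strictly needed; also, ``not $o(1)$'' means $\Omega(1)$ along a subsequence, not $\Omega(1/\polylog(r))$, though this does not affect the argument.
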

\noindent
Note that the \cref{cor:tvk-SUM-hardness} directly follows from \cref{cor:vk-SUM-hardness} and the above reduction.
\subsection{Construction of Public Key Encryption}

\paragraph{Preliminaries.} Below we define some preliminaries needed for our construction.

\begin{lemma}(Leftover Hash Lemma)\label{lhl}(\cite{HILL,goldwasser2010robustness})
Let $(X,Z) \in \mathcal{X} \times \mathcal{Z}$ be any joint random variable over $\Z_{q}^{n}$ with min-entropy $H_{\infty}(X|Z) \geq k$. For any $\epsilon > 0$ and $m \leq \frac{k-2\log(\frac{1}{\epsilon})-O(1)}{\log{q}}$, the joint distribution of ($Z,C,C \cdot s$) where $C \sampleU \Z_{q}^{m \times n}$ is uniformly random and $s \in \mathcal{X}$ is $\epsilon$-close to the uniform distribution over $(Z,\Z_{q}^{m \times n} \times \Z_{q}^{m})$. 
\end{lemma}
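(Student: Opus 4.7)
The plan is to invoke the standard collision-probability proof of the Leftover Hash Lemma applied to the natural linear hash family $\mathcal{H} = \{h_C : C \in \F_q^{m\times n}\}$ defined by $h_C(x) = Cx$. Assuming $q$ is prime (so that $\Z_q = \F_q$ is a field; the surrounding LWE-based PKE application indeed uses prime $q$), I would first verify that $\mathcal{H}$ is $2$-universal: for any $x \neq x' \in \F_q^n$, the vector $v = x - x'$ is nonzero, so $Cv$ is uniformly distributed over $\F_q^m$ when $C$ is uniform, giving $\Pr_C[h_C(x) = h_C(x')] = q^{-m}$.

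Next, I would relate statistical distance from uniform to collision probability via the Cauchy–Schwarz inequality: for any distribution $p$ on a set $S$, $\Delta(p, U_S)^2 \leq \tfrac{1}{4}(|S|\cdot CP(p) - 1)$, where $CP(p) = \sum_x p(x)^2$. Taking $S = \F_q^{m \times n} \times \F_q^m$ and computing the collision probability of $(C, CX)$ by sampling two independent copies of $(C,X)$,
$$CP(C, CX) \;=\; q^{-mn}\bigl(\Pr[X_1 = X_2] + \Pr[X_1 \neq X_2]\cdot q^{-m}\bigr) \;\leq\; q^{-mn}\bigl(2^{-H_\infty(X)} + q^{-m}\bigr),$$
which plugs into the Cauchy–Schwarz bound to give the unconditional estimate $\Delta((C, CX); (C, U_m)) \leq \tfrac{1}{2}\sqrt{q^m \cdot 2^{-H_\infty(X)}}$, where $U_m$ is uniform on $\F_q^m$ and independent of $C$.

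To add the side information $Z$, I would condition on each value $Z = z$, apply the above bound using $H_\infty(X \mid Z = z)$, and then average, using Jensen's inequality applied to $\sqrt{\cdot}$ and the definition of average conditional min-entropy:
$$\Delta\bigl((Z,C,CX);(Z,C,U_m)\bigr) \;\leq\; \E_{z}\Bigl[\tfrac{1}{2}\sqrt{q^m \cdot 2^{-H_\infty(X\mid Z=z)}}\Bigr] \;\leq\; \tfrac{1}{2}\sqrt{q^m \cdot \E_z[2^{-H_\infty(X \mid Z=z)}]} \;\leq\; \tfrac{1}{2}\sqrt{q^m \cdot 2^{-k}}.$$
Requiring the right-hand side to be at most $\varepsilon$ yields exactly $m \log q \leq k - 2\log(1/\varepsilon) - 2$, which matches the stated parameter regime up to the $O(1)$ slack.

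The only genuine subtlety is the $2$-universality step when $q$ is composite: there $Cv$ can fail to be uniform over $\F_q^m$ (e.g.\ when every entry of $v$ shares a nontrivial common factor with $q$), so the hash family is only almost-universal and one needs either a slightly weaker bound or to restrict to prime $q$ (or to a prime-power setting with appropriate arithmetic). Since the construction of Appendix~\ref{sec:pke-lwe} can be instantiated with prime $q$, I expect this to be a non-issue; the remaining steps are entirely routine and use only Cauchy–Schwarz, Jensen's inequality, and linearity.
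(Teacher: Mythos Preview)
The paper does not prove this lemma; it is stated as a cited result from \cite{HILL,goldwasser2010robustness} with no accompanying proof. Your proposal is the standard collision-probability/Cauchy--Schwarz proof of the Leftover Hash Lemma for the linear hash family, and it is correct (including your observation about needing prime $q$ for exact $2$-universality, which is consistent with the LWE setting of the appendix).
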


Regev \cite{regev05} defined a natural distribution over lattices called the discrete Gaussian distribution, parametrized by a scalar $\alpha > 0$. 
We additionally need a bound on samples drawn from the discrete Gaussian distribution.
 \begin{lemma}\cite{lyubashevsky2012lattice}
 \label{lem:dgsizebd}
Let $\chi_\sigma$ be the discrete Gaussian distribution on $\mathbb{Z}$ with parameter $\sigma$. Then, for any $k>0$, $\Pr[|z|> k\sigma; z\leftarrow \chi_{\sigma}] \leq 2e^{(- k^2/2)}$.
 \end{lemma}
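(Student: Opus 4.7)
The plan is to prove this classical discrete-Gaussian tail bound via a Chernoff-style moment generating function (MGF) argument. I will adopt the convention that $\chi_\sigma$ assigns probability proportional to $e^{-z^2/(2\sigma^2)}$ to each $z \in \bbZ$, so that $\sigma$ plays the role of the standard deviation; the alternative convention $\propto e^{-\pi z^2/\sigma^2}$ only differs by a constant rescaling and the same argument transfers with no essential change.

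The first and main step is to show that the MGF of $\chi_\sigma$ is dominated by that of a continuous Gaussian: for every $\lambda \in \bbR$,
\[
E_{z \sim \chi_\sigma}[e^{\lambda z}] \leq e^{\lambda^2 \sigma^2 / 2}.
\]
Completing the square in the exponent rewrites the left-hand side as
\[
E[e^{\lambda z}] = e^{\lambda^2 \sigma^2 / 2} \cdot \frac{\sum_{z \in \bbZ} e^{-(z - \lambda \sigma^2)^2 / (2\sigma^2)}}{\sum_{z \in \bbZ} e^{-z^2 / (2\sigma^2)}},
\]
so it suffices to prove that the shifted theta sum $\rho(c) := \sum_{z \in \bbZ} e^{-(z - c)^2 / (2\sigma^2)}$ is maximized at $c = 0$ over real shifts $c$. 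This is a one-line consequence of Poisson summation: the Fourier expansion $\rho(c) = \sigma\sqrt{2\pi}\,\sum_{n \in \bbZ} e^{-2\pi^2 n^2 \sigma^2} \cos(2\pi n c)$ has non-negative coefficients, each of which is maximized in absolute value at $c = 0$.

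With the MGF bound in hand, the remainder is routine. For any $\lambda > 0$, Markov's inequality applied to $e^{\lambda z}$ gives
\[
\Pr[z > k\sigma] \leq e^{-\lambda k \sigma} \cdot E[e^{\lambda z}] \leq e^{\lambda^2 \sigma^2 / 2 - \lambda k \sigma},
\]
and the choice $\lambda = k/\sigma$ minimizes the exponent and yields $\Pr[z > k\sigma] \leq e^{-k^2/2}$. By symmetry of $\chi_\sigma$ about $0$, the same bound holds on the left tail, and a union bound combines them into the stated inequality $\Pr[|z| > k\sigma] \leq 2 e^{-k^2/2}$.

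The only genuine obstacle is the MGF comparison in the first step -- that discretizing a Gaussian never inflates its moments. Poisson summation is the cleanest route; a more elementary fallback is to bound the tail sum $\sum_{|z| > k\sigma} e^{-z^2/(2\sigma^2)}$ directly by a Gaussian integral using monotonicity of the summand, and divide by the simple lower bound $\sum_{z \in \bbZ} e^{-z^2/(2\sigma^2)} \geq \sigma\sqrt{2\pi} - O(1)$. This alternative reproduces the same qualitative bound but introduces absolute constants that need to be tracked more carefully, particularly when $k$ or $\sigma$ is small, to recover the exact factor of $2$ in the final inequality.
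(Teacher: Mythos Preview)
The paper does not supply its own proof of this lemma; it is quoted verbatim from the cited reference \cite{lyubashevsky2012lattice} and used as a black box in the correctness analysis of the LWE-based PKE. Your proposal is a correct and standard proof of this discrete-Gaussian tail bound: the key step, that the shifted Gaussian lattice sum $\rho(c)=\sum_{z\in\bbZ}e^{-(z-c)^2/(2\sigma^2)}$ is maximized at $c=0$, is exactly the content of the well-known inequality $\rho_s(\Lambda+c)\le\rho_s(\Lambda)$ for any lattice $\Lambda$, and your Poisson-summation derivation of it is clean and complete. The remainder---optimizing the Chernoff bound and invoking symmetry---is routine and correctly executed.
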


 \begin{definition}[\lwe\ Problem]\label{def:lwe}
	Let $r$ be the security parameter, let $m=m(\secp)$, $n= n(\secp)$ and $q=q(\secp)>2$ be integers,
	and $\chi = \chi(\secp)$ be a distribution over $\Z_q$. The $\LWE(m,n,q,\chi)$ problem over $\Z_{q}$ is to distinguish between the following distributions:
 \begin{itemize}
     \item $( A, s^T A + e^T )$, where $A \gets \Z_q^{m\times n}$, $s \gets \Z_q^{m}$, and $e \gets \chi^n$
     \item $( A, v^T )$ where $A \gets \Z_q^{m\times n}$, and $v \gets \Z_q^n$
 \end{itemize}
We say that an algorithm $\A$ solves the $\lwe$ problem with advantage $\epsilon$ if,
$$
    \Big \lvert \Pr[\A( A, s^T A + e^T ) = 1] - \Pr[\A( A, v^T ) = 1] \Big \rvert \geq \epsilon,
$$
where the randomness is taken over the instance and the random coins used by $\A$.
We say that the $\LWE(m,n,q,\chi)$ problem is \emph{$T$-hard} if no adversary $\sfA$ running in time $T(\secp)$ can distinguish between the above distributions with advantage greater than $1/T(\secp)$.
 
 \end{definition}
 
 

It is known~\cite{regev05,BLPRS13} that if we set $\chi$ to be the discrete Gaussian distribution with parameter $\alpha q$, 
the $\LWE(m,n,q,\chi)$ problem is as hard as solving worst-case lattice problems
such as gapSVP and SIVP with approximation factor $p(m)/\alpha$ for some polynomial $p$.
\begin{theorem}\cite{regev05,Pei09,BLPRS13}\label{lwe-hardness-param}
For any $n = \poly(m)$, any modulus $q \leq 2^{\poly(m)}$, and any (discretized) gaussian
error distribution $\chi$ of parameter $\alpha q \geq 2\sqrt{m}$ where $0 < \alpha < 1$, solving the $\lwe(m,n,q,\chi)$ problem is at least as hard as solving $\sf{gapSVP}_{\gamma}$ on arbitrary $m$-dimensional lattices, for some
$\gamma = \tilde{O}(\frac{m}{\alpha})$
\end{theorem}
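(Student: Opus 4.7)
The result is the foundational worst-case to average-case reduction for LWE, originally due to Regev \cite{regev05} (quantum) and subsequently made classical by Peikert \cite{Pei09} and extended by BLPRS \cite{BLPRS13}. Since the statement is cited verbatim, my proof proposal mirrors the route taken in those works.

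The plan is to show that an LWE solver with parameters as specified yields a solver for $\mathsf{gapSVP}_\gamma$ on arbitrary $m$-dimensional lattices. I would first reduce $\mathsf{gapSVP}_\gamma$ to a \emph{Discrete Gaussian Sampling} (DGS) problem on the input lattice $L$: given the ability to sample from $D_{L,s}$ for $s$ somewhat above the smoothing parameter $\eta_\varepsilon(L)$, one can distinguish the two cases of $\mathsf{gapSVP}_\gamma$ by estimating the typical norm of samples (short samples indicate $\lambda_1(L)$ is small, samples concentrated at $s\sqrt m$ indicate $\lambda_1(L) \geq \gamma \cdot d$). For $\gamma = \tilde{O}(m/\alpha)$ one needs DGS samples at parameter $s \approx \sqrt{m}\,\lambda_n(L)/\alpha$, which is below the smoothing parameter and hence nontrivial.

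Next I would set up the iterative core of Regev's argument: show that an $\LWE(m,n,q,\chi)$ oracle for $\chi$ of width $\alpha q$ yields a transformation that, given polynomially many samples from $D_{L,s}$ for some $s$, produces samples from $D_{L,s'}$ for $s' \leq s/2$ (as long as $s$ stays above a certain threshold depending on $\alpha q$). The bridge between LWE and $L$ is to use the LWE oracle as a bounded-distance decoder for the \emph{dual} lattice $L^*$: an LWE instance $(A, A^\top \vecs + \vece)$ encodes a point close to $L^*$, and recovering $\vecs$ is equivalent to BDD on $L^*$. Composing this decoder with a (quantum) Fourier transform over $L^*$ gives a sampler from the narrower Gaussian on $L$. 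Starting the iteration from trivially sampleable wide Gaussians (parameter $\approx 2^{O(m)} \lambda_n(L)$, obtainable via LLL) and iterating $\mathrm{poly}(m)$ times drives the parameter down to the target $s$, at which point the DGS-based distinguisher for $\mathsf{gapSVP}_\gamma$ kicks in.

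The hardest step will be the iterative reduction itself: verifying that the Gaussian samples one produces are close (in statistical distance) to $D_{L,s'}$ and that the parameters degrade gracefully through each iteration, together with the quantum Fourier analysis step that converts a BDD decoder on $L^*$ into a sampler on $L$. The constraint $\alpha q \geq 2\sqrt m$ is precisely what is required for the error distribution to match the smoothing-related BDD radius handled by the oracle. For the classical variants of Peikert \cite{Pei09}, one avoids the quantum step at the cost of requiring exponentially large $q$, and BLPRS \cite{BLPRS13} removes this restriction via a modulus-switching argument that loses only a polynomial factor. Given the depth of these techniques, in the actual write-up I would, as is standard, simply invoke the cited results rather than reproduce the reduction here.
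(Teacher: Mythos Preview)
Your proposal is correct and matches the paper's approach: this theorem is stated in the paper as a cited result from \cite{regev05,Pei09,BLPRS13} with no proof given, and your final sentence—that one would simply invoke the cited results rather than reproduce the reduction—is exactly what the paper does. Your sketch of the Regev/Peikert/BLPRS reduction is accurate, but no such argument appears in the paper itself.
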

Since the best known algorithms for $2^k$-approximation of gapSVP and SIVP run in time
$2^{\tilde{O}(m/k)}$\cite{decade}, \cref{lwe-hardness-param} implies that the best known algorithms that solve $\lwe$ run in time $\Omega(2^{\frac{m}{\log{\frac{m}{\alpha}}}})$.

\paragraph{Construction.}\label{lwe-pke:construction}
 Our public key encryption scheme $\pke = (\keygen, \enc, \dec)$ for message space $\cM=\{0,1\}$, is described as follows: 
\begin{description}
    \item $\pke.\keygen(1^\secp)$: Upon input the unary representation of the security parameter $\secp$, do the following:
    \begin{enumerate}
        \item $A \sampleU \Z_{q}^{m \times n}$
        \item Sample $x \in \{0,1\}^{n}$ uniformly such that  $\norm{x}_{0} = k$
        \item Let $u := Ax$
        \item Output  $\pk := (u,A)$, $\sk := x$
    \end{enumerate}
    \item $\pke.\enc(\pk := (u,A), \mu \in \{0,1\})$: Upon input the public key $\pk$ and the message $\mu$, do the following:
    \begin{enumerate}
        \item Compute $c_{1} := A^{T}s + e $ where $s \sample \Z_{q}^{m},e \gets \chi^{n}$
        \item Compute $c_{2} := u^{T}s+e'+\floor{\frac{q}{2}}\mu$ where $e' \gets \chi$
        \item Output $\ct := (c_{1},c_{2})$
    \end{enumerate}
      \item $\pke.\dec(\pk:= (u, A), \sk := x, \ct := (c_{1},c_{2}))$: Upon input the public key $\pk$, the secret key $\sk$ and the ciphertext $\ct$, do the following:
    \begin{enumerate}
        \item Compute $\mu' := c_{2}-x^{T}c_{1}$.
        \item If $\norm{\mu'}_{2} \leq \frac{q}{4}$ then output $0$ else output $1$.
    \end{enumerate}
\end{description}

\begin{lemma}[Correctness]\label{lwe-pke-correctness}
For any $\secp$, given parameters of the $\pke$ construction are set as mentioned in (\cref{lwe-pke-param-1}). Then, the public key encryption scheme $\pke$ described above is correct.
\end{lemma}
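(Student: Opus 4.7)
The plan is to run the standard Regev-style correctness calculation, isolating the noise that decryption must absorb and then bounding it with the discrete Gaussian tail inequality already stated in \cref{lem:dgsizebd}.

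First I would unroll the decryption on a correctly formed ciphertext. Using $u = Ax$ and $c_1 = A^T s + e$, $c_2 = u^T s + e' + \lfloor q/2 \rfloor \mu$, we get
\[
  \mu' \;=\; c_2 - x^T c_1 \;=\; u^T s + e' + \lfloor q/2\rfloor \mu - x^T A^T s - x^T e \;=\; \lfloor q/2\rfloor \mu \;+\; \bigl(e' - x^T e\bigr),
\]
since $(Ax)^T s = x^T A^T s$. Thus correctness reduces to showing that the noise term $N := e' - x^T e \in \Z_q$ satisfies $\|N\|_2 < q/4$ (mod $q$) with the required probability: in that case, for $\mu = 0$ we have $\|\mu'\|_2 = \|N\|_2 < q/4$ and decryption outputs $0$, while for $\mu = 1$ we have $\|\mu'\|_2 \geq \lfloor q/2\rfloor - \|N\|_2 > q/4$ and decryption outputs $1$.

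Second I would bound $N$ coordinate-wise. Because $x \in \{0,1\}^n$ has Hamming weight exactly $k$, the inner product $x^T e$ is a sum of $k$ independent samples from the discrete Gaussian $\chi$ of parameter $\alpha q$ (there are no cancellations in the variance because the $x_i$'s are $0$ or $1$). Applying \cref{lem:dgsizebd} to each of the $k$ selected coordinates of $e$ and to the sample $e'$ and then taking a union bound, we obtain
\[
  |N| \;\leq\; |e'| + \sum_{i : x_i = 1} |e_i| \;\leq\; (k+1)\, t\, \alpha q
  \qquad \text{except with probability at most } 2(k+1)e^{-t^2/2}.
\]

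The main (and essentially only) obstacle is then a parameter check: we need to choose $t = \omega(\sqrt{\log \secp})$ so that $2(k+1)e^{-t^2/2}$ is negligible in $\secp$, while simultaneously satisfying $(k+1)\, t\, \alpha q < q/4$, i.e.\ $\alpha < 1/\bigl(4(k+1) t\bigr)$. I expect the parameter setting of \cref{lwe-pke-param-1} to enforce exactly this, by fixing the modulus-to-noise ratio $1/\alpha$ to be superpolynomial in $\secp$ and asymptotically larger than $k \cdot \omega(\sqrt{\log \secp})$ (which is harmless since $k$ is polynomially bounded in $\secp$ in our instantiation). With that choice the bound above gives $\Pr[\|\mu'\|_2 \geq q/4 \text{ when } \mu = 0] \leq \negl(\secp)$, and symmetrically for $\mu = 1$, yielding correctness as defined in \cref{def:pke}.
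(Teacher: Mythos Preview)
Your decryption calculation and the overall structure (isolate the noise term $e'-x^Te$ and bound it via the Gaussian tail inequality) match the paper's proof exactly.

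The gap is in your last paragraph, where you guess at the parameters. The setting in \cref{lwe-pke-param-1} fixes $1/\alpha = 4(k+1)$ \emph{on the nose}; it is not superpolynomial in $\secp$ (here $k = c\log\log\secp$ with $c=\omega(1)$, so $1/\alpha$ is only mildly super-logarithmic). Consequently the constraint $(k+1)\alpha q \le q/4$ holds with equality and there is no room for your slack factor $t=\omega(\sqrt{\log\secp})$. The paper instead bounds each Gaussian coordinate by $\alpha q$ directly (effectively $t=1$, or one can use that the sum of $k$ Gaussians of parameter $\alpha q$ has parameter $\sqrt{k}\,\alpha q$, so the target $k\alpha q$ corresponds to $t=\sqrt{k}$) and claims correctness only with probability $1-o(1)$, i.e.\ \emph{weak} correctness in the sense of \cref{def:pke}. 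Your argument would go through if you drop the expectation that $1/\alpha$ is superpolynomial and aim for the $1-o(1)$ guarantee instead; with the stated parameters you cannot conclude $1-\negl(\secp)$.
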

\begin{proof}
Follows from a straight-forward calculation.
\begin{align*}
    \mu' &= c_{2}-x^{T}c_{1}
    \\&= u^{T}s+e'+\floor{\frac{q}{2}}\mu-x^{T}A^{T}s-x^{T}e. 
    \\&= e'+\floor{\frac{q}{2}}\mu-x^{T}e.
    \\&=\floor{\frac{q}{2}}\mu+ (e'-x^{T}e).
\end{align*}
Hence, when $\mu = 0$, the value  is $e'-x^{T}e$, if we show that $\norm{e'-x^{T}e}_{2} \leq \frac{q}{4}$ then that would be sufficient.

Note that $\norm{x^{T}e}_{2} = k (\alpha q)$ with probability $1-o(1)$, since at most $k$ of the entries in the summation are from bounded distribution and the rest of them are zero, we have that $\abs{e'} \leq \alpha \cdot q$ with probability $1-o(1)$(\cref{lem:dgsizebd}), and so, 
$$
    \norm{e'+x^{T}e}_{2} \leq (k+1)\alpha q
$$
Therefore, the correctness holds for the parameters mentioned in \cref{lwe-pke-param-1,lwe-pke-param-2}.
\end{proof}
Next, we prove the security of our scheme under the hardness of LWE and $k$-SUM. 
\begin{lemma}[Security]\label{security:pke}
    Assuming the hardness of planted-search-{\kSUM} at density $\Delta$, and that $\lwe(m,n+1,q,\chi)$(\cref{lwe-hardness-param}) is $2^{\frac{m}{\log{\frac{m}{\alpha}}}}$-hard where the parameters are chosen as described in \cref{sec:pke-params}, the public key encryption scheme $\pke$ satisfies weak $\indcpa$ security (Definition \ref{def:pke}). 
\end{lemma}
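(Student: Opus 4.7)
The plan is a standard three-hybrid argument. Let $b \in \{0,1\}$ denote the challenge bit, and let $H_b^{(0)}$ denote the real distribution $(\pk, \ct)$ in the IND-CPA game, i.e., $\pk = (A, u = Ax)$ with $x \in \{0,1\}^n$ uniformly $k$-sparse and $\ct = (A^T s + e,\ u^T s + e' + \floor{q/2}\mu_b)$. I will show that $H_b^{(0)}$ is $o(1)$-indistinguishable from a distribution that is independent of $\mu_b$.

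\textbf{Hybrid $H_b^{(1)}$:} Replace $u = Ax$ in the public key by a uniformly random $u^* \sampleU \Z_q^m$, keeping everything else the same. The joint distribution $(A, Ax)$ is precisely the planted distribution of (an appropriately parameterized) targeted vector $k$-SUM: the $n$ columns of $A$ play the role of the uniformly sampled group elements $X_2, \ldots, X_{n+1}$, and $u = Ax$ is the target obtained as the sum of the columns indexed by the support of $x$. Thus, by \cref{reduction-vkSUM,reduction-tkSUM} and \cref{cor:tvk-SUM-hardness}, the hardness of planted-search $k$-SUM at density $\Delta$ implies that $(A, Ax)$ is computationally indistinguishable from $(A, u^*)$ for any adversary running in time $r^{k/2(1-\Omega(1))} k^{-m}$. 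By a standard reduction (the adversary simulates the rest of the ciphertext on its own), this step incurs only $o(1)$ distinguishing advantage.

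\textbf{Hybrid $H_b^{(2)}$:} Replace $(c_1, c_2)$ by $(c_1^*, c_2^*)$ with $c_1^* \sampleU \Z_q^n$ and $c_2^* \sampleU \Z_q$. In $H_b^{(1)}$, the matrix $[A \mid u^*] \in \Z_q^{m \times (n+1)}$ is uniformly random and independent of the noise $(e, e')$, and $(c_1, c_2) = ([A \mid u^*]^T s + (e, e')) + (0, \floor{q/2}\mu_b)$ is exactly a shifted $\lwe(m, n+1, q, \chi)$ sample. The assumed $2^{m/\log(m/\alpha)}$-hardness of this LWE instance therefore makes $H_b^{(1)}$ and $H_b^{(2)}$ indistinguishable. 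In $H_b^{(2)}$, the shift $\floor{q/2}\mu_b$ preserves uniformity, so $H_0^{(2)}$ and $H_1^{(2)}$ are identically distributed, and the adversary's advantage in this last hybrid is zero. Chaining the three steps through the triangle inequality gives $o(1)$ total advantage, establishing weak IND-CPA security.

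The main obstacle I anticipate is parameter matching rather than the conceptual structure. We must simultaneously ensure: (i) the induced targeted vector $k$-SUM instance $(A, Ax)$ sits at density $\Delta$ at which the $k$-SUM hardness is assumed, i.e., $\frac{k \log n}{m \log q} \approx \Delta$; (ii) the LWE noise distribution $\chi$ is narrow enough so that $|e' - x^T e| < q/4$ with high probability (needed for correctness in \cref{lwe-pke-correctness}), which, by \cref{lem:dgsizebd}, amounts to $k \alpha q = o(q)$, i.e., $\alpha = o(1/k)$; and (iii) the LWE modulus-to-noise ratio $1/\alpha$ is super-polynomial or large enough for the assumed $\lwe$ hardness from \cref{lwe-hardness-param} to hold. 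Threading these constraints so that the same choice of $(m, n, q, k, \alpha)$ satisfies all of (i)--(iii) simultaneously — in particular, keeping $n$ small enough that $k$-SUM provides a \emph{computational} leftover-hash style guarantee where the statistical LHL (\cref{lhl}) would have demanded $\log \binom{n}{k} \gtrsim m \log q$ — is the only real work.
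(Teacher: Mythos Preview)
Your proposal is correct and follows essentially the same approach as the paper: a three-hybrid argument where (i) the public-key component $u=Ax$ is swapped for uniform using hardness of targeted vector $k$-SUM (derived from planted search $k$-SUM via \cref{reduction-vkSUM,reduction-tkSUM,cor:tvk-SUM-hardness}), (ii) the ciphertext is swapped for uniform using $\lwe(m,n+1,q,\chi)$, and (iii) the final hybrid is message-independent. The paper handles the parameter constraints you flag in its separate parameters subsection rather than inside the proof, but your identification of the three simultaneous constraints is accurate.
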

\begin{proof}
We prove the theorem via a sequence of hybrids between the challenger and a PPT adversary $\cA$.
\begin{description}
    \item {\sf Hybrid 0}: This is the real world with challenge bit $0$. This is same as $\pkeGame^{0}_{\cA(r)}$
    \item {\sf Hybrid 1}: This world is same as {\sf Hybrid 0} except we sample public key $\pk$ as $(u',A) \sampleU (\Z_{q}^{n \times 1},\Z_{q}^{m\times n})$.
    \item {\sf Hybrid 2}: This world is same as {\sf Hybrid 1} except we sample the cipher-text, i.e, $\ct := (c_{1}',c_{2}')$ as $c_{1}'\sampleU \Z_{q}^{n},c_{2}'\sampleU \Z_{q}$ respectively.
\end{description}
In {\sf Hybrid 2}, the distribution seen by the adversary is independent of the challenge bit $b$ hence the advantage of the adversary in this world is negligible.
\paragraph{Indistinguishability of Hybrids.} We now show that consecutive hybrids are indistinguishable.
\begin{claim}\label{lwe-security-hybrid-1}
   Assume that planted search {\kSUM} is hard at density $\Delta$ (\cref{cor:tvk-SUM-hardness}) for the parameters described in  \cref{sec:pke-params}. Then, {\sf Hybrid 0} and {\sf Hybrid 1} are indistinguishable for any PPT adversary.
\end{claim}
\begin{proof}
Let $\cD$ be a PPT adversary that distinguishes {\sf{Hybrid} 0} from {\sf{Hybrid} 1} with advantage $\epsilon$. Let $\cC$ be the {\tvkSUM} challenger at density $\Delta$. Then we give a reduction $\cB$ that solves {\tvkSUM} with advantage $\epsilon$ as follows:
\begin{enumerate}
    \item $\cC$ outputs $(A,u^{*}) \in (\Z_{q}^{m \times n},\Z_{q}^{m})$.
    \item $\cB$ samples $\mu_{0},\mu_{1} \gets \cM, s \sampleU \Z_{q}^{m}$, $e \gets \chi^{n}$, $e' \gets \chi$
    \item $\cB$ then computes $(\mu_{0},\mu_{1},A,u^{*},c_{1} := A^{T}s + e,c_{2}:=u^{*T}\,s+e'+\floor{\frac{q}{2}}\mu)$
    \item If $\cD$ outputs $b$ then $\cB$ outputs $b$
\end{enumerate}
Notice that when $\cC$ outputs $u' \sampleU \Z_{q}^{m}$ then $\cB$ simulates {\sf Hybrid 1} else, it simulates {\sf Hybrid 0}. Hence, $\cB$ solves {\tvkSUM} at density $\Delta$ with probability equal to the advantage of $\cD$ i.e, $\epsilon$.

From \cref{cor:tvk-SUM-hardness}, $\epsilon = o(1)$. Therefore, by the hardness of {\tvkSUM} at density $\Delta$, we have {\sf Hybrid 0} $\approx_{c}$ {\sf Hybrid 1}
\end{proof}
\begin{claim}\label{lwe-security-hybrid-2}
     Assume that $\lwe(m,n+1,q,\chi)$ is $2^{\frac{m}{\log{\frac{m}{\alpha}}}}$ hard (Theorem \ref{lwe-hardness-param}) for the parameters described in Section \ref{sec:pke-params}. Then, {\sf Hybrid 1} and {\sf Hybrid 2} are indistinguishable for any PPT adversary.
\end{claim}
\begin{proof}
Let $\cD$ be a PPT adversary that distinguishes {\sf{Hybrid} 1} from {\sf{Hybrid} 2} with advantage $\epsilon$. Let $\cC$ be the $\lwe(m,n+1,q,\chi)$ challenger. Then we give a reduction $\cB$ that solves $\lwe(m,n+1,q,\chi)$ problem with advantage $\epsilon$ as follows:
\begin{enumerate}
    \item $\cC$ outputs $\begin{bmatrix}
        c_{1}^{*}\\c_{2}^{*}
    \end{bmatrix} \in \Z_{q}^{n+1}$ such that $c_{1}^{*} \in \Z_{q}^{n}$ and $c_{2}^{*} \in \Z_{q}$
    \item $\cB$ samples $\mu_{0},\mu_{1} \gets \cM, u' \sampleU \Z_{q}^{m}$, $e \gets \chi^{n}$, $e' \gets \chi$
    \item $\cB$ then computes $(\mu_{0},\mu_{1},A,u',c_{1}^{*},c_{2}^{*})$
    \item If $\cD$ outputs $b$ then $\cB$ outputs $b$
\end{enumerate}
Notice that when $\cC$ outputs such that $c_{1}^{*} \sampleU \Z_{q}^{n}$ and $c_{2}^{*} \sampleU \Z_{q}$ then $\cB$ simulates {\sf Hybrid 2} else, it simulates {\sf Hybrid 1}. Hence, $\cB$ solves $\lwe(m,n+1,q,\chi)$ with probability equal to the advantage of $\cD$ i.e, $\epsilon$.

By the hardness of $\lwe(m,n+1,q,\chi)$ (\cref{lwe-hardness-param}) where $\chi$ is parameterised by $\alpha q$, $\epsilon = \negl(\secp)$. 
\end{proof}
\noindent
From the claims (\cref{lwe-security-hybrid-1,lwe-security-hybrid-2}) we have that for any PPT adversary $\cA$: 
\begin{align*}     \big|\Pr[\pkeGame_{\cA(\secp)}^{0} = 1] - \Pr[\pkeGame_{\cA(\secp)}^{1} = 1]\big| \leq  o(1) && \qedhere
\end{align*}
\end{proof}
\paragraph{Parameters.}\label{sec:pke-params}
We now wish to give instantiations of parameters for the PKE construction in \cref{lwe-pke:construction}, and observe the improvements in these parameters due to replacing {\tvkSUM} as a computational analogue of ``LHL'' (Leftover Hash Lemma).  Replacing LHL with {\tvkSUM} is equivalent to assuming mild planted-search-$k$-SUM (\cref{cor:tvk-SUM-hardness}).

First, we enumerate the constraints that need to be satisfied for the correctness and security of PKE construction in \cref{lwe-pke:construction}. We then provide two instantiations of parameters along with the analysis of how assuming hardness of {\tvkSUM} assumption helps us improve the parameters of underlying PKE scheme. For correctness to hold, from Theorem \ref{lwe-pke-correctness} we need $(k+1)\alpha q \leq \frac{q}{4}$ which implies that,
$$
    \alpha \leq \frac{1}{4(k+1)}.
$$
For security wrt .\cref{lwe-security-hybrid-2}, we need that the minimum time taken by an adversary to solve $\lwe(m,n,q,\chi)$ to be super-polynomial in $\secp$, this implies that,
$$
    2^{\frac{m}{\log{\frac{m}{\alpha}}}} \geq r^{\omega(1)}.
$$
In order to use \cref{lwe-hardness-param} we also need to satisfy the condition $q\geq 2\sqrt{m}k$. For security wrt. \cref{lwe-security-hybrid-1}, we need that minimum time taken by an adversary to solve {\tvkSUM} at density $\Delta$ is super-polynomial in $\secp$. Thus, from Lemma \ref{cor:tvk-SUM-hardness}, we require that $n^{\frac{k-2}{2}}\,{k^{-m}} \geq {\secp}^{\omega(1)}$. 
\paragraph{Parameter Analysis.} 
In the construction of PKE in \cref{lwe-pke:construction}, the $|\pk| =n\,m\,\log{q}$, $|\sk| = m\,\log{q}$, encryption time is $m\,k\,\log{q}$, decryption time is $m\,\log{q}$. Below we provide parameters for two settings, one which emphasizes the improvement in the public key size and one which emphasizes the  improvement of approximation factor ($\alpha$) of $\lwe(m,n,q,\chi)$. Note that the first three constraints are to be satisfied irrespective of whether we assume the hardness of {\tvkSUM} (or) not. Therefore, we first set the parameters such that they satisfy the first three constraints then focus only on the constraint that assuming {\tvkSUM} (or) using LHL enforces to analyze the improvement. Note that the constraint enforced by LHL is $m \leq \frac{k\,\log{n}}{\log{q}}$ (\cref{lhl}).

\paragraph{Parameters for Reducing Public Key Size.}
\label{lwe-pke-param-1} We basically fix all the values except $n$ and see how it is affected by assuming {\tvkSUM} hardness in place of LHL. As the discussion is about efficiency in terms of $\pk$-size, $\sk$-size, encryption and decryption time, when  given a choice we picked smaller values of $m$ and $q$.
\begin{itemize}
    \item  $c = \omega(1)$ 
    \item $k = c\,(\log{\log{\secp}}) $
    \item $m = 2c\,\log{\secp}\,(\log{\log{\secp}}) $
    \item $q=8\sqrt{c\,(\log{\secp})\,(\log{\log{\secp}})}\,\,(k+1)$
    \item $\frac{1}{\alpha} =4\,(k+1)$
\end{itemize}
If we invoke LHL i.e, try to satisfy the constraint $m \leq \frac{k\,\log{n}}{\log{q}}$ to argue security then $n \geq \secp^{2}$. On the other hand if we assume hardness of {\tvkSUM} at density $\Delta$ then we  have to satisfy the constraint that $\Delta = \frac{k \log{n}}{m\,\log{q}}$ instead and, assuming $\Delta = \frac{1}{\polylog{(\secp)}}$ means  $n \geq (\secp)^{1/\log{\log{\secp}}}$. This means that the construction using {\tvkSUM} instead of LHL has considerable gain improvement in terms of $\pk$-size, i.e.,
\begin{table}[H]
    \centering
    \begin{tabular}{|c|c|}
    \hline
         Targeted vector $k$-SUM construction & LHL-based construction 
         \\
         \hline
          $|\pk| = \secp^{1/\log{\log{\secp}}}\,(2c\, \log{\secp} \, \log{\log{\secp}})$ & $|\pk| = \secp^{2}\,(2c \, \log{\secp} \, \log{\log{\secp}})$
          \\
          \hline
    \end{tabular}
\end{table}
\noindent
Similarly, we now try to reduce $k$ value. 

\paragraph{Parameters to Improve the Approximation Factor ($\alpha$) of $\lwe(m,n,q,\chi)$.}\label{lwe-pke-param-2} Setting $r=n$, would ease our analysis. As all the other parameters except $k$ are same, this assumption would not affect our qualitative analysis of how $k$ is affected by the computational LHL. Therefore, we retain the values of $q$ and $m$ from the above analysis and, the parameters are as follows:
\begin{itemize}
    \item  $c = \omega(1)$
    \item $m = 2c\,\log{\secp}\,(\log{\log{\secp}}) $
    \item $q=8\sqrt{c\,\log{\secp}\,(\log{\log{\secp}})}\,\,(k+1)$
    \item $\frac{1}{\alpha} =4\,(k+1)$
    \item $n=r$
\end{itemize}
Note that in LHL, one has to satisfy the condition of $m \leq \frac{k\log{n}}{ \log{q}} \implies k \geq 2\,c(\log{\log{r}})\,\log{q}$ whereas when we instantiate our scheme using {\tvkSUM} assumption, the condition to satisfy is $\Delta = \frac{k\, \log{r}}{m \log{q}}$. Notice that, in this case we can set $k$ as any super-constant because $\Delta = \frac{1}{\polylog(\secp)}$ which is fine. Thereby providing us with a small improvement in terms of the value of $k$. Note that, by trading off $n$ and $k$ value, one can improve both approximation factor, $|\pk|$ and encryption time.

\end{document}
